\documentclass{article}
\usepackage{fullpage}

\usepackage[hidelinks, colorlinks=true]{hyperref}
\hypersetup{
    colorlinks=true,
    linkcolor=blue,
    filecolor=magenta,      
    urlcolor=cyan,
}

\title{Quantum algorithms for path and cycle containment problems}
\date{August 5, 2026}

\usepackage{authblk}
\author[1]{Arjan Cornelissen}
\affil[1]{Simons Institute, University of California Berkeley, California, United States of America}
\author[2]{Amin Shiraz Gilani}
\affil[2]{QuICS, University of Maryland, United States of America}
\author[3]{Subhasree Patro}
\affil[3]{Eindhoven University of Technology, the Netherlands}

\usepackage{amsmath}

\usepackage{amsthm}
\usepackage{thmtools}

\usepackage{cleveref}

\newtheorem{theorem}{Theorem}[section]
\newtheorem{lemma}[theorem]{Lemma}
\newtheorem{proposition}[theorem]{Proposition}
\newtheorem{corollary}[theorem]{Corollary}
\newtheorem{remark}[theorem]{Remark}
\newtheorem{claim}[theorem]{Claim}
\newtheorem{observation}[theorem]{Observation}
\newtheorem{main-result}[theorem]{Main result}
\theoremstyle{definition}
\newtheorem{definition}[theorem]{Definition}

\crefname{observation}{observation}{Observations}
\Crefname{observation}{Observation}{Observations}
\crefname{proposition}{proposition}{propositions}
\Crefname{proposition}{Proposition}{Propositions}
\crefname{claim}{claim}{claims}
\Crefname{claim}{Claim}{Claims}

\usepackage{mathtools}
\usepackage{tikz}
\usepackage{algorithm}
\usepackage{algpseudocode}

\usepackage{amssymb}

\newcommand{\N}{\ensuremath{\mathbb{N}}}
\renewcommand{\P}{\ensuremath{\mathbb{P}}}
\newcommand{\R}{\ensuremath{\mathbb{R}}}

\newcommand{\Q}{\ensuremath{\mathsf{Q}}}

\newcommand{\ket}[1]{\ensuremath{\left|#1\right\rangle}}

\newcommand{\floor}[1]{\lfloor #1 \rfloor}
\newcommand{\ceil}[1]{\lceil #1 \rceil}

\newcommand{\DirCycleEqk}{\ensuremath{\mathtt{DirCycle}^{=k}}}
\newcommand{\DirCycleLeqk}{\ensuremath{\mathtt{DirCycle}^{\leq k}}}
\newcommand{\PromDirCycleEqk}{\ensuremath{\mathtt{PromDirCycle}^{= k}}}
\newcommand{\PromDirCycleLeqk}{\ensuremath{\mathtt{PromDirCycle}^{\leq k}}}
\newcommand{\CycleEqk}{\ensuremath{\mathtt{Cycle}^{=k}}}
\newcommand{\CycleLeqk}{\ensuremath{\mathtt{Cycle}^{\leq k}}}
\newcommand{\PromCycleEqk}{\ensuremath{\mathtt{PromCycle}^{=k}}}
\newcommand{\PromCycleLeqk}{\ensuremath{\mathtt{PromCycle}^{\leq k}}}

\newcommand{\DirCycleEqks}{\ensuremath{\mathtt{DirCycle}^{=k}_{s}}}
\newcommand{\DirCycleLeqks}{\ensuremath{\mathtt{DirCycle}^{\leq k}_{s}}}
\newcommand{\PromDirCycleEqks}{\ensuremath{\mathtt{PromDirCycle}^{= k}_{s}}}
\newcommand{\PromDirCycleLeqks}{\ensuremath{\mathtt{PromDirCycle}^{\leq k}_{s}}}
\newcommand{\CycleEqks}{\ensuremath{\mathtt{Cycle}^{=k}_{s}}}
\newcommand{\CycleLeqks}{\ensuremath{\mathtt{Cycle}^{\leq k}_{s}}}
\newcommand{\PromCycleEqks}{\ensuremath{\mathtt{PromCycle}^{=k}_{s}}}
\newcommand{\PromCycleLeqks}{\ensuremath{\mathtt{PromCycle}^{\leq k}_{s}}}

\begin{document}

\maketitle

    %TODO mandatory: add short abstract of the document
    \begin{abstract}
        The quantum query complexity of subgraph-containment problems, which ask whether a given subgraph $H$ is present in an input graph $G$, has been the subject of considerable study. This interest stems not only from the natural and well-motivated formulation of these problems, but also from a flurry of novel quantum algorithmic techniques that were developed specifically to solve them. Notably, even for relatively simple subgraphs, such as paths and cycles, a complete understanding of their query complexities remains elusive.
        
        % The quantum query complexity of constant-size subgraph containment problems has been a well-studied area of research. \Subha{I feel the first sentence and the transition to the second sentence sounds unexciting. How about something like 'The quantum query complexity of constant-size subgraph containment problems has been the subject of considerable study. This interest stems not only from the natural and well-motivated formulation of these problems, but also from a difficulty of establishing quantum lower bounds that surpass the certificate complexity barrier. Notably, even for relatively simple subgraphs, such as paths and cycles, a complete understanding remains elusive. In this work,}\Arjan{I agree, but I guess part of the motivation is also that trying to solve the subgraph-containment problem led the development of novel algorithmic techniques, so perhaps it's nice to stress that as well. ;)}
        
        In this work, we consider several variants of path- and cycle-containment problems in the adjacency matrix model, where we search for paths or cycles of constant length $k \in O(1)$. We compare the settings where the graphs are directed or undirected, where the goal is to detect or find the existence of a path/cycle, and where the path/cycle we're looking for has length exactly $k$, or at most $k$. We also consider several promise versions of these problems, where we know beforehand that the input graph has a certain structure. We characterize the relative difficulty of these variants of the path- and cycle-containment problems, by relating them to one another using randomized reductions, and grouping them into several equivalence classes.

        When we restrict our attention to path-containment problems, this implies a dichotomy result. Some of the path-containment problems can be solved using a linear number of queries, and all the others are equivalent to one another (and additionally to several cycle-containment problems) under randomized reductions and up to constant multiplicative overhead. For the latter equivalence class, we prove a novel quantum-walk-based algorithm that achieves query complexity $\widetilde{O}(n^{3/2-\alpha_k})$, where $\alpha_k \in \Theta(c^{-k})$ and $c = \sqrt{3+\sqrt{17}}/2 \approx 1.33$, beating the previous best upper bound $O(n^{3/2})$ on its query complexity. We also provide a conditional lower bound based on the graph-collision problem, which implies that this equivalence class does not admit linear-query quantum algorithms unless graph collision admits an $O(\sqrt{n})$ query algorithm.
    \end{abstract}

    \thispagestyle{empty}

    \newpage
    \tableofcontents
    \thispagestyle{empty}

    \newpage
    \setcounter{page}{1}
    \pagenumbering{arabic}

    \section{Introduction}

    The subgraph containment problem asks whether a given (undirected) subgraph $H$ is contained in an (undirected) input graph $G$. In this work, we consider subgraphs $H$ of constant size that are known ahead of time, and input graphs $G$ consisting of $n$ vertices that we can access through adjacency-matrix queries. Now, we wish to decide whether the subgraph $H$ is present in the input graph $G$, while making a minimal number of queries to the adjacency matrix of $G$. We denote this problem concisely by $\mathtt{Subgraph}^H$, and the minimal number of queries to the input required to solve it is known as its query complexity. We also consider a restricted version of this problem where we select several vertices $v_1, \dots, v_m$ from $H$ and additionally require that these vertices get mapped to predetermined positions $s_1, \dots, s_m$ in $G$. Under this restriction, we then ask whether $H$ is present in $G$, and we denote the resulting problem by $\mathtt{Subgraph}^H_{v_1, \dots, v_m}$.
    
    % additionally fix the positions of several vertices $v_1, \dots, v_m$ of $H$ in $G$ ahead of time, which we denote by $\mathtt{Subgraph}^H_{v_1, \dots, v_m}$. \Subha{it is not quite clear from this amount of information as to what fixing positions of several vertices mean here?}
    
    In the randomized setting, we decide which edges to query by a (classical) randomized algorithm, and we require the resulting algorithm to output the correct answer with probability at least $2/3$. The randomized query complexity of the unrestricted subgraph containment problem, denoted by $\mathsf{R}(\mathtt{Subgraph}^H)$, satisfies $\Theta(n^2)$ for any non-empty graph $H$.
    % i.e., where we can decide which edges to query by a (classical) randomized algorithm, the unrestricted subgraph containment problem for a non-empty graph $H$ has query complexity $\Theta(n^2)$.
    Moreover, it is not hard to show that the restricted problems are classified into three distinct classes, where the randomized query complexities $\mathsf{R}(\mathtt{Subgraph}^H_{v_1, \dots, v_m})$ are $\Theta(1)$, $\Theta(n)$ and $\Theta(n^2)$, respectively.
    
    In contrast, in the quantum setting, we can query the adjacency matrix coherently in superposition, and we aim to give a quantum algorithm whose final measurement outputs the right answer with probability at least $2/3$. The corresponding quantum query complexities, denoted by $\mathsf{Q}(\mathtt{Subgraph}^H)$ for the unrestricted problem, and $\mathsf{Q}(\mathtt{Subgraph}^H_{v_1, \dots, v_m})$ for the restricted problem, seem to have a much richer structure than their classical counterparts. This is the focus of this work.
    
    % Surprisingly, this problem family seems to have a much richer structure in the quantum setting, where we can query the adjacency matrix coherently in superposition, which is the setting we focus on in this work.
    
    The study of the constant-size subgraph-containment problem in the quantum setting was pioneered by Magniez, Szegedy and Santha~\cite{magniez2007quantum}, who built on a quantum walk algorithm by Ambainis~\cite{ambainis2007quantum} to show that the existence of any subgraph $H$ consisting of $k$ vertices can be detected using $\widetilde{O}(n^{2-2/k})$ queries. This result was subsequently independently improved by Zhu~\cite{zhu2011quantum}, and Lee, Magniez and Santha~\cite{lee2012learning}, resulting in the current state-of-the-art complexity for general subgraphs $H$ through the following learning-graph-based result.
    
    \begin{theorem}[{\cite[Theorem~4.4]{lee2012learning}}]
       \label{thm:general-subgraph-containment}
       Let $H$ be an undirected graph with $k \geq 3$ vertices, $m$ edges, and minimum vertex-degree $d \geq 1$, where $k,m,d \in O(1)$. Then, $\mathsf{Q}(\mathtt{Subgraph}^H) \in O\left(n^{2-2/k-t}\right)$, where
       \[t = \max\left\{\frac{k^2 - 2(m+1)}{k(k+1)(m+1)}, \frac{2k-d-3}{k(d+1)(m-d+2)}\right\}.\]
    \end{theorem}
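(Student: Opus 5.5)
This is the Lee--Magniez--Santha learning-graph bound, and I would prove it with the learning-graph framework, which I will assume here since it is standard. The approach is to construct a learning graph for $\mathtt{Subgraph}^H$ whose complexity is $O(n^{2-2/k-t})$ for each of the two expressions inside the maximum separately. The final bound then follows by taking the better of the two constructions. In both cases the learning graph loads edge sets of $G$ stage by stage, following the template of \cite{lee2012learning}. First it loads a random dense "background" subgraph on a random vertex subset of size $r$: for each of the $k$ vertex classes it picks a subset of size about $r$ and loads edges among the classes at a chosen density. Then, in a constant number of further stages, it adds the vertices of a hypothetical copy of $H$ together with their incident edges.

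\textbf{Cost accounting.} The complexity of a learning graph is the sum over stages of (length of the stage) $\times$ $\sqrt{\text{speciality}^{-1}}$, where the speciality measures the fraction of flow paths that respect the marked copy of $H$. The first stage loads $\Theta(r^2\cdot\text{density})$ edges. Its cost is $r^2 p$ with no speciality loss, since it is symmetric.

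\textbf{First expression.} Here I would use a uniform density $p$ and add the $k$ vertices of $H$ one at a time. Adding vertex $j$ costs $\sqrt{(n/r)^{j}\,p^{-e_j}}$ times the number of edges loaded, where $e_j$ counts the edges loaded so far among the fixed vertices. I would then optimize over $r$ and $p$ under $m$ edges. Balancing the setup cost $r^2 p$ against the last-vertex cost, with the exponents of $n$ chosen to be $\frac{k^2-2(m+1)}{k(k+1)(m+1)}$ below $2-2/k$, should give the first term.

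\textbf{Second expression.} Here the minimum-degree vertex is handled last. Loading its $d$ edges is cheap because its neighbourhood is sparse, so the free parameters give the $\frac{2k-d-3}{k(d+1)(m-d+2)}$ savings. For this I would reweight densities so that the edges incident to the degree-$d$ vertex are added in a final stage of length $O(d\,r)$ rather than $O(r^2)$. I would then solve the resulting linear program in the exponents of $r$ and $p$.

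The main obstacle is the speciality computation for the intermediate stages. One must show that the flow can be spread uniformly over the symmetric copies, so that each stage's cost is dominated by a clean monomial in $n,r,p$. Feasibility of the optimal exponents (with $r\le n$ and $p\le 1$) also needs checking, using $k\ge3$, $d\ge1$ and $m\ge d$.
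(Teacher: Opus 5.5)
The paper does not prove this statement; it imports it from \cite[Theorem~4.4]{lee2012learning}. Your route (one non-adaptive learning graph per term of the maximum, then take the better) is what that citation stands for. As written, though, the first construction has a genuine gap: the cost model you commit to is wrong and a key structural idea is missing, so ``balancing should give the first term'' does not follow.

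First, the arc that introduces the $j$-th special vertex must pick it among $\Theta(n)$ candidates, and only $a_1,\dots,a_{j-1}$ are confined to the $r$-sets. That stage therefore has speciality $\Theta(n^j/r^{j-1})$, not $(n/r)^j$. Your formula is too optimistic: with $p=1$ it balances $r^2$ against $r(n/r)^{k/2}$ at $n^{2k/(k+2)}$, i.e.\ $n^{6/5}$ for triangles. That contradicts the $\Omega(n^{9/7}/\sqrt{\log n})$ lower bound for non-adaptive learning graphs for triangle detection \cite{belovs2014power}.

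Second, once the speciality is corrected, paying for the edges of $H$ inside the vertex stages (your $p^{-e_j}$) no longer helps. For triangles, placing $a_3$ then costs $pr\sqrt{n^3r^{-2}p^{-3}}=n^{3/2}p^{-1/2}$. The vertex stages should load only the $\Theta(pr)$ random background edges of $a_j$. The $m$ edges of $H$ among $a_1,\dots,a_k$ are then loaded afterwards in $m$ stages of length one, the $\ell$-th with speciality $\Theta(n^kr^{2-k}p^{-(\ell-1)})$: all vertices are placed, $\ell-1$ specific edges are present, and there are $\Theta(r^2)$ choices for the new one. With stage costs $pr^2$, $pr\sqrt{n^j/r^{j-1}}$ and $\sqrt{n^kr^{2-k}p^{-(\ell-1)}}$, balancing gives $r=n^{k/(k+1)}$ and $p=r^{-1/(m+1)}$. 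The total is $pr^2=n^{k(2m+1)/((k+1)(m+1))}$, which equals $n^{2-2/k-t}$ for the first term of the maximum.

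Your second construction is not specified enough to evaluate; ``the free parameters give the savings'' is an assertion, not an argument. What works is the following.
\begin{enumerate}
\item Give the background edges at a minimum-degree vertex $v$ their own, smaller density $q\le p$.
\item Place the other $k-1$ vertices and load the $m-d$ edges of $H-v$.
\item Place $v$. This stage has length $\Theta(dqr)$ and speciality $\Theta(n^kr^{1-k}p^{-(m-d)})$.
\item Load the $d$ edges at $v$. The last of these stages has speciality $\Theta(n^kr^{2-k}p^{-(m-d)}q^{-(d-1)})$.
\end{enumerate}
Balancing $pr^2$ against the placement of $a_{k-1}$ gives $r=n^{(k-1)/k}$. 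Balancing it against the placement of $v$ and against the last edge stage fixes $q$ and gives $p=n^{-t_2}$ with $t_2=\frac{2k-d-3}{k(d+1)(m-d+2)}$. The total is then $n^{2-2/k-t_2}$. The saving comes from placing $v$ with few random edges, which costs only a factor $q^{-(d-1)}$ at the very end; this is why a minimum-degree vertex is singled out. These per-stage specialities, together with symmetric flows realising them, are the real content of the proof, and your sketch assumes them rather than establishing them.
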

    
    Even though this result gives a non-trivial upper bound on the query complexity for the subgraph containment problem for any subgraph $H$, it is not expected that this bound is optimal in general. Indeed, for several specific families of subgraphs, we know better bounds than those provided by Lee, Magniez and Santha. Examples of such subgraph families include the line graphs with $k$ edges ($L_k$) and cycle graphs with $k$ edges ($C_k$), for which we refer to the subgraph-containment problems as $\mathtt{Path}^{=k} := \mathtt{Subgraph}^{L_k}$ and $\mathtt{Cycle}^{=k} := \mathtt{Subgraph}^{C_k}$, respectively.

    %The path-finding problem in general graphs was first studied by Duur et al~\cite{durr2006graph}, who gave an algorithm for deciding if there is an arbitrary-length path between all pairs of vertices (even in directed graphs) in query complexity $O(n^{3/2})$.
    The path-containment problem was pioneered by Childs and Kothari~\cite{childs2012quantum}, who showed that it can be solved using $\widetilde{O}(n^{3/2-1/\lceil k/2\rceil})$ quantum queries. 
    %\Amin{The quantum query complexity of graph problems paper did consider the non-parametrized path finding problems first} 
    This was later improved by Belovs and Reichardt~\cite{belovs2012span} to $\mathsf{Q}(\mathtt{Path}^{=k}) \in \Theta(n)$, for all $k \in O(1)$.     Their algorithm involves reducing this problem to the promise problem, which we denote by $\mathtt{PromPath}^{\leq k}_{s,t}$, where two vertices $s$ and $t$ are fixed, and the task is to distinguish between the existence of a path between $s$ and $t$ of length at most $k$, and no path between $s$ and $t$ at all. They used a span-program approach to show that $\mathsf{Q}(\mathtt{PromPath}^{\leq k}_{s,t}) \in O(n\sqrt{k})$, even for non-constant $k$.
    %More generally \Amin{It might not be clear to a reader why this is more general}, Belovs and Reichardt also considered the promise problem where two vertices $s$ and $t$ are fixed, and where the task is to distinguish between the existence of a path between $s$ and $t$ of length at most $k$, and no path between $s$ and $t$ at all. We denote this problem by $\mathtt{PromPath}^{\leq k}_{s,t}$, and they showed using a span-program approach that $\mathsf{Q}(\mathtt{PromPath}^{\leq k}_{s,t}) \in O(n\sqrt{k})$, even for non-constant $k$. \Amin{Andrew told me that people tried to study the directed analogue of this problem and tried to give the same upper bound but their wasn't any progress. Our work explains why!}

    Despite Belovs's and Reichardt's remarkable progress on the path-containment problem, subsequent efforts to generalize their techniques were unsuccessful. Specifically, all attempts to solve the path-containment problem in the directed setting, or a version of their promise problem where one is required to not just detect the existence of a path, but actually output it, have not resulted in linear-query algorithms. In this work, we argue why these efforts are unlikely to work, with the following reasoning containing two parts. First, we show that all these directed and finding versions of the path-containment problem are equivalent up to randomized reductions (\Cref{sec:reductions}), and then we show a conditional quantum query lower bound for all these problems based on the graph-collision problem (\Cref{sec:lower-bounds}). Since it is considered unlikely that we can make progress on the graph-collision problem with the current techniques, we conclude it's unlikely that we can make progress on these generalizations of Belovs's and Reichardt's work with the current techniques as well.
    
    % These ingredients combined showcase why a linear-query algorithm for these problems is unlikely to exist.
    
    % \todo{Remove:} In this work we explain why, as all these problems turn out to be equivalent up to randomized reductions, and we prove a conditional lower bound based on the graph-collision problem, which suggests that a linear-query algorithm is unlikely to exist. \Subha{Nice motivation! Can we give a peep into what we show about these kind of results?}\Arjan{What would you suggest? ;)}
    
    % Inspired by these results, it is natural to ask how the quantum query complexity of these problems changes if we also ask an algorithm to output a path in case it exists or consider analogs of these problems for directed graphs or remove the promise in $\mathtt{PromPath}^{\leq k}_{s,t}$? (\Amin{May be give a reference for precise definitions?}). It is known from the results of Duur et al~\cite{durr2006graph} that all these problems admit an algorithm with complexity $O(n^{3/2})$ and are lower bounded by $\Omega(n)$. But can we show better bounds that potentially depend on $k$? Moreover, can we show (conditional) separations between finding and detecting versions of the same problem, or the same problem considered in directed or undirected graphs? In this work, we considered relations between these problems, improved the existing upper bounds and gave conditional lower bounds for path problems not known to have linear query algorithms. 
    On the topic of cycle-containment problems, the problem of detecting $3$-cycles (more commonly known as the triangle-detection problem) has played a central role in the literature on quantum query complexity. The study of this problem gave rise to many novel quantum algorithmic frameworks, and has been improved multiple times since it was first considered in \cite{buhrman01elementdistinctness}. Currently, the best-known algorithm is by Carette, Lauri\`ere and Magniez~\cite{carette2020extended}, who removed the log-factors from the algorithm by Le Gall~\cite{le2014improved}.

    \begin{theorem}[\cite{le2014improved,carette2020extended}]
        $\mathsf{Q}(\mathtt{Cycle}^{=3}) \in O(n^{5/4})$.
    \end{theorem}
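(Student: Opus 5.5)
The plan is to follow Le Gall's combinatorial approach~\cite{le2014improved}, achieving $\widetilde{O}(n^{5/4})$, and then to re-cast the resulting algorithm in the extended learning graph framework of Carette, Lauri\`ere and Magniez~\cite{carette2020extended} to remove the polylogarithmic factors.

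\textbf{Step 1: a random preprocessing set.} Sample a uniformly random set $X \subseteq V$ of size $\widetilde{\Theta}(n^{1/4})$ and classically query all edges between $X$ and $V$. This costs $\widetilde{O}(n^{5/4})$ queries and reveals the neighbourhoods $N(x)$ for all $x \in X$. Next, use Grover search over pairs $(u,v) \in N(x)\times N(x)$, for each $x \in X$, to test whether some triangle touches $X$. This costs $\widetilde{O}(n^{1/4}\cdot n) = \widetilde{O}(n^{5/4})$ queries in total. If no triangle touches $X$, then every pair $\{u,v\}$ with $u,v \in N(x)$ for some $x\in X$ is known to be a non-edge, so such pairs can be discarded from further consideration.

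\textbf{Step 2: the combinatorial lemma.} Show that with high probability the remaining ``dangerous'' pairs are sparse. Every vertex $w$ of high degree, namely $\deg(w) \gg n^{3/4}$, has a neighbour in $X$ with high probability. Consequently, for each such $w$, almost all pairs inside $N(w)$ are already excluded. Using a union bound over vertices, together with a Chernoff bound on $|X\cap N(w)|$, we should obtain the following: for every set $A$ of size $n^{3/4}$, the number of surviving candidate pairs inside $A$ that could close a triangle with a fixed vertex $w$ is on average only $O(n^{3/4})$ rather than $|A|^2$. This lemma is the main obstacle. One has to choose the right notion of ``surviving pair'' and make the bound hold simultaneously for all sets visited by the walk. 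To handle the latter, I would first prove the bound in expectation over a random $A$ and then use the fact that the walk's stationary distribution is uniform over subsets.

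\textbf{Step 3: a nested quantum walk on the Johnson graph.} Run a walk on subsets $A\subseteq V$ of size $r = n^{3/4}$. The data structure stores all edges inside $A$ that survive Step 1.
\begin{itemize}
\item The setup cost is the number of surviving pairs, about $\widetilde{O}(n^{5/4})$ by the lemma.
\item The update cost is about $\widetilde{O}(n^{1/2})$, obtained by Grover search over the surviving pairs incident to the new vertex.
\item The checking step asks whether some $w \in V$ forms a triangle with an edge in $A$. This is a graph-collision-type search: we search over $w$ (a factor $\sqrt{n}$), and for each $w$ we search over the $O(n^{3/4})$ surviving pairs. Using variable-time search to balance the non-uniform per-$w$ costs, the checking cost is $\widetilde{O}(\sqrt{n}\cdot n^{3/8})$.
\end{itemize}
With spectral gap $1/r$ and marked fraction $\varepsilon \approx (r/n)^2$, the MNRS bound $S + \frac{1}{\sqrt{\varepsilon}}\bigl(\sqrt{r}\,U + C\bigr)$ balances to $\widetilde{O}(n^{5/4})$. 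I would verify this balance with the explicit parameters. If it does not close, I would adjust $|X|$ and $r$ jointly.

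\textbf{Step 4: removing the logarithms.} Translate the random-$X$ preprocessing and the walk into an extended learning graph, as in~\cite{carette2020extended}. The costs are then additive complexities rather than success-amplified subroutines, which removes the logarithmic factors arising from Grover error reduction and the union bounds. The result is $O(n^{5/4})$.
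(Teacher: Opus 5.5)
The paper gives no proof of this statement. It imports it from Le Gall and from Carette, Lauri\`ere and Magniez, so your outline has to stand on its own, and Steps 2--3 do not.

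The inference in Step 2 is false. A neighbour $x\in X$ of $w$ certifies, when no triangle touches $X$, that the pairs in $\binom{N(x)}{2}$ are non-edges. It says nothing about pairs inside $N(w)$. A pair $\{u,v\}$ is excluded only if $u$ and $v$ have a common neighbour in $X$. For example, if $N(w)$ is independent and its pairs have no common neighbour other than $w$, then all of $\binom{N(w)}{2}$ survives unless $w\in X$. What a random $X$ does give is that, w.h.p., every surviving pair has only $\widetilde{O}(n/|X|)$ common neighbours. Writing $\Delta(X)$ for the set of surviving pairs, this yields
\[\sum_{w\in V}\Bigl|\Delta(X)\cap\tbinom{N(w)}{2}\Bigr| = \sum_{\{u,v\}\in\Delta(X)}|N(u)\cap N(v)| \in \widetilde{O}\bigl(n^3/|X|\bigr).\]
This is only an average over the apex $w$. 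With $|X|=n^{1/4}$ it allows $n^{7/4}$ pairs per $w$, i.e.\ about $n^{5/4}$ (not $n^{3/4}$) inside a random $A$ of size $n^{3/4}$. That value is attained on triangle-free inputs, for instance a random bipartite graph of edge density $n^{-1/8}$. Above all, the bound says nothing about $|\Delta(X)|$ itself.

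That is fatal for Step 3. Take $K_{n/2,n/2}$, a triangle-free input you must reject. Cross pairs have no common neighbours at all, so they all survive, and they are all edges. A typical $A$ of size $r=n^{3/4}$ therefore contains $\Theta(r^2)=\Theta(n^{3/2})$ surviving edges, and the setup alone costs order $n^{3/2}$. An update must learn $\Theta(r)$ new edges, which costs $\Theta(n^{3/4})$ rather than $n^{1/2}$: a Grover search that finds one marked pair does not maintain a structure that must contain all of them. Already the update term is $\frac{1}{\sqrt{\varepsilon}}\sqrt{r}\,U = n^{1/4}\cdot n^{3/8}\cdot n^{3/4} = n^{11/8}$. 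In the checking step you likewise cannot restrict the search to surviving pairs inside $N(w)$ without first learning $N(w)\cap A$, which costs up to $n^{3/4}$ per $w$.

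So this is not a matter of re-tuning $|X|$ and $r$. The only control the random set gives is over pairs lying inside some $N(w)$, averaged over $w$. The algorithm must therefore be organised around the apex $w$: it should query surviving pairs only relative to a $w$ whose neighbourhood is being explored, and it must absorb the resulting non-uniform per-$w$ costs. Le Gall does this with a considerably more elaborate nested MNRS construction, and Carette--Lauri\`ere--Magniez recast it as an extended learning graph.

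Step 4 is also too vague. The $\log n$ in $|X|=\Theta(n^{1/4}\log n)$ comes from your union bound, and no change of framework removes it by itself. You would need, for example, the expectation bound $\mathbb{E}_X\sum_{\{u,v\}\in\Delta(X)}|N(u)\cap N(v)|\le n^3/(2e|X|)$ with $|X|=n^{1/4}$, combined with Markov's inequality.
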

    
    The more general constant-length cycle-finding problem was also considered by Childs and Kothari~\cite{childs2012quantum}, who showed the following improved upper bounds for even values of $k$.
    
    \begin{theorem}[{\cite[Theorem~4.11,4.12]{childs2012quantum}}]
        Let $4 \leq k \in \N$ be even. Then, $\mathsf{Q}(\mathtt{Cycle}^{=k}) \in \widetilde{O}\left(n^{\frac32 - \frac{k-2}{k(k+2)}}\right)$. Moreover, $\mathsf{Q}(\mathtt{Cycle}^{=4}) \in \widetilde{O}(n^{5/4})$.
    \end{theorem}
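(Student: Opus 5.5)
I would follow the Childs--Kothari route: search for a $k$-cycle $v_0 v_1 \cdots v_{k-1} v_0$ with a quantum walk whose marked states are those whose stored subgraph already contains the cycle. Write $k = 2\ell$ and split the cycle at two antipodal vertices $v_0$ and $v_\ell$ into two internally disjoint paths of length $\ell$. The plan is nested quantum walk / Ambainis-style search over $k$ sets $S_0,\dots,S_{k-1}\subseteq[n]$ of sizes $r_0,\dots,r_{k-1}$, with colour class $i$ intended to contain $v_i$.

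\textbf{Step 1 (colour-coding).} Assign each vertex a uniformly random colour in $\{0,\dots,k-1\}$. With constant probability $1/k^k$ a fixed cycle is \emph{properly coloured}, meaning $v_i$ has colour $i$. Under this event the search only needs edges between consecutive colour classes, which makes distinctness of the vertices automatic. Repeating $O(1)$ times boosts the success probability.

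\textbf{Step 2 (state and costs).} The walk state stores the sets $S_i$ together with all edges queried between $S_i$ and $S_{i+1}$, for $i$ up to $\ell-1$ on one side and symmetrically on the other side. The three costs are as follows.
\begin{itemize}
\item The setup cost is $\sum_i r_i r_{i+1}$.
\item The update cost, when one element of $S_i$ is swapped, is $O(r_{i-1}+r_{i+1})$.
\item The marked fraction is $\varepsilon = \prod_i r_i/n$.
\end{itemize}
The checking step is replaced by an inner search over the remaining vertex, the ``closing'' edge, via Grover or a graph-collision subroutine on the stored data.

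The cleanest concrete choice follows the paper's cited result, so I would balance geometrically. Take $r_i = n^{a_i}$ with exponents increasing towards the middle of each half-path, and plug into the MNRS cost
\[
S + \tfrac{1}{\sqrt{\varepsilon}}\left(\tfrac{1}{\sqrt{\delta}}U + C\right).
\]
Optimizing the exponents should yield total cost $n^{3/2-\frac{k-2}{k(k+2)}}$ up to logarithmic factors. The term $\frac{k-2}{k(k+2)}$ comes from solving the linear system in which setup, walk and check are equalized. For $k=4$ this system degenerates, since the two half-paths have length $2$. In that case I would use a separate argument: store $S_0,S_2$ of size $r$ and, for each stored pair, detect a common neighbour via graph collision on neighbourhoods. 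With $r=n^{1/2}$ this gives $\widetilde O(n^{5/4})$, matching the triangle-style bound.

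\textbf{Main obstacle.} The hardest step is the checking/closing step. Once the two half-paths are stored, we still need the last vertex or edge to close the cycle. Naive Grover search over it costs too much unless the stored structure allows a collision-type search, and that search must run with bounded error inside the walk. I would handle it with the standard Ambainis--MNRS nesting, using amplification to suppress the inner error to $1/\mathrm{poly}(n)$, which accounts for the $\widetilde O$. I would then carefully verify that the update cost remains linear in neighbouring set sizes when sets are near-regular; if degrees vary, that would be controlled by the random colouring together with degree sampling.
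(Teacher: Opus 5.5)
There is a genuine gap, and it sits in the sentence ``optimizing the exponents should yield total cost $n^{3/2-\frac{k-2}{k(k+2)}}$''. Your walk stores queried edges between colour classes. Its setup therefore costs $\sum_i r_i r_{i+1}$, and each update costs $\Theta(r)$; on dense inputs these queries are unavoidable for this data structure. If you actually carry out the optimization, the exponent does not come out. For instance, take all $r_i=r$ and store the whole cycle, so $\mathsf{C}=0$. The MNRS cost is then about $r^2+(n/r)^{k/2}\sqrt{r}\cdot r$, which is minimized at $r=n^{k/(k+1)}$ with value $n^{2-2/(k+1)}\geq n^{8/5}$. Leaving one vertex to a Grover or graph-collision check keeps you in the same Magniez--Santha--Szegedy/Lee--Magniez--Santha regime (cf.\ \Cref{thm:general-subgraph-containment}). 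There the exponents are $2-\Theta(1/k)$ and tend to $2$, whereas the target exponent tends to $3/2$ from below. A telltale sign is that your argument never uses that $k$ is even: an odd cycle splits just as well into paths of lengths $\ell$ and $\ell+1$. If your argument worked, it would give $o(n^{3/2})$ for odd cycles too, which the paper explicitly leaves open.

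The missing idea is a density dichotomy coming from extremal graph theory, which is where evenness enters. By Bondy--Simonovits (\Cref{thm:dense_cycles}), every graph with more than $100\ell n^{1+1/\ell}$ edges contains $C_{2\ell}$. So one first runs \textsc{ApproxEdgeCount} (\Cref{cor:approx_counting}) with $\overline m=\Theta(n^{1+1/\ell})$, at sublinear cost, and accepts if the graph is dense. Otherwise $m=O(n^{1+1/\ell})$, and one uses the Childs--Kothari walk (\Cref{thm:vertex_cover}). That walk stores vertices together with their \emph{entire neighbourhoods}, each learned in $\widetilde O(\sqrt{n\deg v})$ queries by \Cref{lem:find_all_marked_elements}, with vertices bucketed by degree. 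A state is marked once a vertex cover of the cycle (here $\ell$ alternate vertices) is stored, so checking is free. This costs $\widetilde O(\sqrt{\overline m}\,n^{1-1/(\ell+1)})$. Since $\frac12\bigl(1+\frac1\ell\bigr)+1-\frac1{\ell+1}=\frac32-\frac{\ell-1}{2\ell(\ell+1)}=\frac32-\frac{k-2}{k(k+2)}$, this is exactly the even case of \Cref{thm:cycles<=k}.

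Your $k=4$ argument has the same defect. With $|S_0|=|S_2|=\sqrt n$ and nothing stored, $\varepsilon=1/n$. The check must find a pair in $S_0\times S_2$ with common neighbours in both $V_1$ and $V_3$, which is not a graph-collision instance, and it costs about $n$, giving $n^{3/2}$ in total. The $n^{5/4}$ bound again needs sparsity: a $C_4$-free graph has $O(n^{3/2})$ edges. In the sparse case, search for a vertex $v$ on a $4$-cycle. Processing $v$ (learn $N(v)$, then Grover for $w\neq v$ with two neighbours in $N(v)$) costs $\widetilde O(\sqrt{n\deg v})$. Degree-bucketed search then gives $\widetilde O(n+\sqrt{nm})=\widetilde O(n^{5/4})$. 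Note that the general formula only gives $n^{17/12}$ at $k=4$.
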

    
    %For the specific case where $k = 4$, they improved their construction to obtain the following result.
    
    % \begin{theorem}[{\cite[Theorem~4.12]{childs2012quantum}}]
    %     $\mathsf{Q}(\mathtt{Cycle}^{=4}) \in \widetilde{O}(n^{5/4})$.
    % \end{theorem}
    
    %Furthermore, the specific problem where $k = 3$ is known as the triangle-finding problem, which has obtained considerable attention in the literature. Subsequent to the general result by Lee, Magniez and Santha, an improved algorithm was given by \cite{le2014improved}, and the log-factors were later removed by \cite{carette2020extended}, which gives the following state-of-the-art result.
    
    % \begin{theorem}[\cite{le2014improved,carette2020extended}]
    %     $\mathsf{Q}(\mathtt{Cycle}^{=3}) \in O(n^{5/4})$.
    % \end{theorem}

    %\Amin{I don't see a point in mentioning this here; not sure it adds to the context.We can potentially mention it when we describe our contributions?}
    Finally, Cade, Montanaro and Belovs~\cite{cade2018time} considered a promise version of the cycle-containment problem, where they fix a vertex $s$ in the input graph, and then distinguish between graphs with a cycle of length at most $k$ passing through $s$ and graphs with no cycles at all. Much like the promise path problem, they present an algorithm making $O(n\sqrt{k})$ queries, even for non-constant values of $k$.

    As in the path-containment setting, all attempts to generalize the above results to the directed setting proved to be futile. Moreover, generalizing the promise version of the problem considered by Cade, Montanaro and Belovs to the finding setting, where one additionally has to output the cycle, proved elusive as well. In this work, we connect the cycle-containment problems rigorously to the path-containment problems through randomized reductions, explaining in a similar fashion why these generalization efforts are unlikely to succeed.
    
    From the lower-bound perspective, progress is hampered by the \textit{certificate barrier}~\cite{zhang2005power,vspalek2006all}, which states that lower bounds obtained through the non-negative-weighted adversary bound are upper bounded by the geometric average of the worst-case positive and negative certificate sizes. For all subgraph containment problems where the subgraph $H$ is of constant size, the positive certificate sizes are of constant size as well, and therefore the best possible lower bound on the quantum query complexity we can obtain through the non-negative-weighted adversary bound is, up to constants, equal to $\sqrt{n^2 \cdot 1} = n$.
    
    Even though there are other methods available for proving lower bounds on quantum query complexity, like the polynomial method \cite{beals98polynomial}, and the general adversary bound \cite{hoyer07adversary}, applying these techniques to the subgraph containment problem has so far not been fruitful. As such, proving any unconditional super-linear lower bound on the quantum query complexity for any subgraph containment problem is a major open problem in this area that has been open for about two decades.
    %since the inception of quantum algorithms. \Subha{any subgraph or constant-size subgraph? Also 'since inception of quantum algorithms' seems a bit too far stretched to me}
    %for about two decades.
    
    Despite (or perhaps because of) these barriers, alternative progress was made by Balodis and Iraids~\cite{balodis2016quantum} in showing a conditional lower bound for the triangle-detection problem, based on the hardness of the graph-collision problem. This problem is parametrized by a graph $G$ on $n$ vertices, which is completely known ahead of time. The input to the problem is a bit sting $x$ that we have query access to, where each bit of $x$ is associated with a distinct vertex of $G$. The graph collision problem on $G$ asks whether there exists an edge in $G$ for which both endpoints are labeled by a $1$ in $x$. We refer to this problem as $\mathtt{GC}_G$, and we write $\mathsf{Q}(\mathtt{GC}_n)$ for the maximum of $\mathsf{Q}(\mathtt{GC}_G)$, over all graphs $G$ on $n$ vertices.

    The best generic algorithm for this problem was given by Magniez, Szegedy and Santha~\cite[Theorem~3]{magniez2007quantum}, which is essentially based on the algorithm of Ambainis~\cite{ambainis2007quantum} and has query complexity $\mathsf{Q}(\mathtt{GC}_n) \in O(n^{2/3})$. This complexity is believed by some in the community to be optimal. However, the best-known lower bound is only $\mathsf{Q}(\mathtt{GC}_n) \in \Omega(\sqrt{n})$, where further progress is again impeded by the certificate barrier.
    
    Balodis and Iraids embedded the OR of $n$ instances of the graph-collision problem into a single instance of the triangle-detection problem on $3n$ vertices. In doing so, they prove a conditional lower bound on the family of cycle-containment problems.

    \begin{theorem}[{\cite[Theorem~1]{balodis2016quantum}}]
        $\mathsf{Q}(\mathtt{Cycle}^{=3}) \in \Omega(\sqrt{n} \cdot \mathsf{Q}(\mathtt{GC}_n))$.
    \end{theorem}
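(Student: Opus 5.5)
The plan is to embed the OR of $n$ independent instances of $\mathtt{GC}_G$ into a single triangle-detection instance. We then combine this with the composition property of quantum query complexity for OR, namely $\mathsf{Q}(\mathtt{OR}_n \circ f) \in \Theta(\sqrt{n}\,\mathsf{Q}(f))$, which follows from tightness of the general adversary bound under composition. Any triangle-detection algorithm then solves $\mathtt{OR}_n \circ \mathtt{GC}_G$ with the same number of queries, up to the constant blow-up in vertex count.

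Fix a worst-case graph $G$ on vertex set $[n]$ achieving $\mathsf{Q}(\mathtt{GC}_n)$. The input consists of $n$ strings $x^{(1)},\dots,x^{(n)} \in \{0,1\}^n$. We build a tripartite graph on $A \cup B \cup C$ with $A = \{a_1,\dots,a_n\}$, $B = \{b_1,\dots,b_n\}$ and $C = \{c_1,\dots,c_n\}$. The edges are:
\begin{itemize}
\item between $B$ and $C$, the fixed edges $\{b_u,c_v\}$ for each $\{u,v\} \in E(G)$, independent of the input;
\item edges $\{a_i,b_u\}$ present iff $x^{(i)}_u = 1$, and edges $\{a_i,c_v\}$ present iff $x^{(i)}_v = 1$;
\item no edges inside any part, and no $A$--$A$ edges.
\end{itemize}
Every entry of the adjacency matrix is either a constant or a single input bit, so one query to the matrix costs at most one query to the input. (If one insists on exactly one bit per edge, duplicating bits across the $A$--$B$ and $A$--$C$ copies only changes constants via a standard symmetric-input argument, or one can use $x^{(i)}$ on both sides as above.)

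Correctness rests on the structure of triangles. Since $A$ and $B\cup C$ have no internal edges except the $B$--$C$ ones, every triangle must have the form $a_i, b_u, c_v$ with $\{u,v\}\in E(G)$ and $x^{(i)}_u = x^{(i)}_v = 1$. That is exactly a graph collision in instance $i$. Conversely, any collision in instance $i$ yields such a triangle, where the possibility $u=v$ is excluded because $G$ has no loops. Hence $G'$ contains a triangle iff $\bigvee_i \mathtt{GC}_G(x^{(i)})=1$. Because $G'$ has $3n$ vertices, this gives $\mathsf{Q}(\mathtt{Cycle}^{=3})$ on $3n$ vertices $\geq \mathsf{Q}(\mathtt{OR}_n\circ \mathtt{GC}_G) \in \Omega(\sqrt n\, \mathsf{Q}(\mathtt{GC}_n))$. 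Rescaling $n$ by a constant then gives the claim, using that $\mathsf{Q}(\mathtt{GC}_n)$ is monotone in $n$, which holds by padding with isolated vertices.

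The main subtlety I expect is the composition step. It relies on the general adversary bound being tight, via Reichardt's result, and multiplicative under composition, via Høyer–Lee–Špalek and Lee et al., applied to $\mathtt{OR}_n$, whose adversary value is $\Theta(\sqrt n)$. The reduction itself is elementary. The only check needed is that it queries each adjacency entry via a single input bit, so the query count transfers without overhead.
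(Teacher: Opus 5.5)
Your reduction is correct and is the construction the paper attributes to Balodis and Iraids: the OR of $n$ graph-collision instances is embedded into a tripartite triangle instance on $3n$ vertices, and then multiplicativity of $\mathsf{Q}$ under composition with $\mathtt{OR}_n$ is applied. The paper cites this result rather than reproving it, and uses the same template in \Cref{prop:cycle-s-lb}. Reusing $x^{(i)}_u$ on both the $a_i$--$b_u$ and $a_i$--$c_u$ entries is harmless here. Every adjacency entry is a constant or a single input bit, so one simulated query costs at most one real query, and no symmetrization argument is needed.

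One step needs a different justification: the final rescaling. Set $n=\lfloor N/3\rfloor$. The reduction gives $\mathsf{Q}(\mathtt{Cycle}^{=3}_N)\geq \mathsf{Q}(\mathtt{Cycle}^{=3}_{3n}) \in \Omega(\sqrt{n}\,\mathsf{Q}(\mathtt{GC}_n))$; the first inequality uses monotonicity of triangle detection via padding. To replace $\mathsf{Q}(\mathtt{GC}_n)$ by $\mathsf{Q}(\mathtt{GC}_N)$ you need $\mathsf{Q}(\mathtt{GC}_N) \in O(\mathsf{Q}(\mathtt{GC}_{\lfloor N/3\rfloor}))$. Monotonicity of $\mathsf{Q}(\mathtt{GC}_n)$ in $n$ gives the opposite inequality, so it does not justify this step.

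The needed bound is easy to prove. Split the vertex set of an arbitrary $N$-vertex graph $H$ into a constant number of blocks, e.g.\ $8$ blocks of size at most $\lceil N/8\rceil$. Then the union of any two blocks has at most $\lfloor N/3\rfloor$ vertices for large $N$. Every edge lies inside such a union, so $\mathtt{GC}_H$ is the OR of $\binom{8}{2}$ graph-collision instances on the induced subgraphs. Each of these can be solved with $\mathsf{Q}(\mathtt{GC}_{\lfloor N/3\rfloor})$ queries; here monotonicity is used in the correct direction. Constant-factor error reduction then finishes the argument.
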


    This result implies a plausible path toward lower bounding the query complexity of the triangle-detection problem. Indeed, any lower bound $\omega(\sqrt{n})$ for the graph-collision problem would immediately imply a super-linear lower bound $\omega(n)$ for the triangle-detection problem.

    In this work, we provide similar graph-collision-based conditional lower bounds for the families of path- and cycle-containment problems. This solidifies the role of the graph-collision problem in the landscape of subgraph-containment problems, as a non-trivial lower bound for graph-collision would imply non-trivial lower bounds for both cycle- and path-containment problems.
    
    %Balodis's and Iraids's result now becomes the following statement.

    % \Amin{Should we introduce the various variants of the path finding and cycle finding problems that are natural to arise (and we consider)? Should we talk about the need for categorization of path finding and cycle finding problems and parametrized algorithms for other classes of path finding and cycle finding problems?}

    % \Amin{I think it's somehow better to naturally introduce some of the open questions in this section that we want to talk about in the next section. I commented on some of the places earlier where there is a possibility to do that}

    % \Amin{I also think that we should focus our story around that particular island of path finding and cycle finding problems that we can show have the same complexity and give the algorithm for. That is the most interesting aspect! In particular that all the non-known-to-be-linear path finding problems (and a few cycle finding problems) have the same complexity and is different from linear (assuming the hardness of GC) and have a parametrized algorithmic advantage for all constant k. (The rest of our work is an attempt for a characterization of islands which may not immediately impress the reviewers. Also, pretty much all our techniques are not very non-trivial so our main contribution is the above story that we get) Not sure what's the best way to do this though :(}

    \subsection{Contributions}

    In this work, we provide a rigorous study of the different variants of subgraph-containment problems, with a specific focus on path- and cycle-containment problems. We present a number of simpler \textit{observations} based on prior work that help paint the landscape of these problems, and we obtain three \textit{main results} that improve our understanding of their query complexities.
    
    For the general constant-size subgraph-containment problem, we consider relations between the undirected and the directed setting, as well as between their detection and finding versions. Even though it is not explicitly analyzed by Lee, Magniez and Santha, we make the simple observation that their learning-graph-based approach readily generalizes to the directed setting.
    
    \begin{observation}[Informal version of \Cref{prop:learning-graph-directed-subgraph-containment}]
        \Cref{thm:general-subgraph-containment} holds too whenever the subgraph $H$ and the input graph are directed.
    \end{observation}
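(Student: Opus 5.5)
The plan is to re-run the learning-graph construction of Lee, Magniez and Santha behind \Cref{thm:general-subgraph-containment}, and to check that each step only uses properties that survive when edges become ordered pairs. Recall the structure of their proof. A learning graph for $\mathtt{Subgraph}^H$ loads a random vertex set $S_1 \supseteq \dots$ (of prescribed sizes $r_1, r_2, \dots$), queries prescribed sets of edges among these vertices, and finally adds the edges of a copy of $H$ one at a time in a chosen order. Its complexity is bounded by summing, over the stages, $\sqrt{\text{(length)}\cdot\text{(speciality)}}$-type quantities: the number of edges loaded per stage, times the square root of the inverse fraction of vertices on which the flow is concentrated. These quantities depend on $H$ only through $k$, $m$, $d$ and the degree sequence used in the loading order. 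They are combinatorial counts of vertex subsets and edge subsets.

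For the directed version, I would take the input to be an $n\times n$ directed adjacency matrix with $n(n-1)$ ordered-pair variables, and $H$ a constant-size directed graph. Set $k,m,d$ to be the number of vertices, the number of arcs, and the minimum degree of the underlying undirected graph $\bar H$. I would construct the same learning graph, replacing every undirected edge $\{u,v\}$ queried by the undirected construction with the variable or variables corresponding to the orientation(s) appearing in $H$. Equivalently, queried sets of pairs become sets of ordered pairs, of size at most twice as large. A positive instance contains a copy $\phi(H)$ with all required arcs present, and the flow is defined through the same symmetry-based flow over placements of $\phi$. The key step is to verify that the flow's symmetry argument still goes through. The undirected analysis uses that the flow is uniform over embeddings, and that the number of loaded sets that are consistent with a given certificate depends only on the sizes. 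This is unchanged, because we permute vertices of $G$, not orientations, and the orientation of each certificate arc is fixed by $H$.

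The remaining step is to confirm that each stage's cost changes by at most a constant factor. The number of variables loaded is at most doubled, since any pair $\{u,v\}$ corresponds to at most two ordered pairs, so each stage's length grows by at most a factor $2$. The speciality of each stage is a ratio of counts of vertex placements, so it is identical. Hence the total complexity is $O(1)$ times the undirected bound, giving the same exponent $2-2/k-t$.

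The main obstacle I expect is the stage where the construction exploits $m < \binom{k}{2}$ or the minimum degree $d$, namely the second term in $t$. There one must make sure that ``degree'' is interpreted in $\bar H$ (in-degree plus out-degree, with antiparallel pairs counted once). One must also check that antiparallel arc pairs in $H$ cause no double counting in $m$. I would handle this by defining $m$ as the number of arcs and noting that the loading of two antiparallel arcs between the same two vertices can be merged into one step at the same cost. Finally, the finding variant, if needed, follows because learning graphs give detection, and a standard search-to-decision reduction costs only constant overhead for constant-size $H$.
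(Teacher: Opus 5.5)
Your argument is correct in substance, but you take a more hands-on route than the paper. The paper never opens up the construction of \cite{lee2012learning}. It works in four steps:
\begin{enumerate}
\item It regroups the $n(n-1)$ arc variables into $\binom{n}{2}$ variables over the alphabet $\{0,1\}^2$, one per unordered pair, recording both orientations.
\item Since both properties are monotone, the minimal positive certificates of the directed problem for $H$ and of the undirected problem for $H'$ occupy exactly the same index sets, namely the pairs of an embedded copy of $H'$.
\item Hence the two functions have the same certificate structure in the sense of \cite{belovs2014power}. By their theorem, non-adaptive learning-graph complexity depends only on the certificate structure.
\item Both learning graphs of Lee, Magniez and Santha are non-adaptive, so the bound of \Cref{thm:general-subgraph-containment} for $H'$ transfers verbatim to $H$.
\end{enumerate}

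Your stage-by-stage check is in effect a manual verification of that certificate-structure invariance for this particular construction. Its ingredients are the same loaded vertex sets, the same symmetric flow (because one permutes vertices rather than orientations), at most doubled length, and unchanged speciality.

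The paper's route buys several things. It is black-box, so there is no need to re-inspect the fairly intricate degree-ordered loading stages that you flag as the main obstacle. There is no factor-$2$ blowup at the learning-graph level; the constant appears only when a $\{0,1\}^2$-valued query is simulated by two bit queries. Antiparallel arcs are handled automatically, since they sit inside a single variable, so $m$ and $d$ are unambiguously those of $H'$.

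Your text wavers on this last point: you define $m$ as the number of arcs and then merge antiparallel pairs. Only the merged version gives exactly the paper's bound. The unmerged version is still a valid but weaker bound, since $t$ is decreasing in $m$.

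What your route buys is self-containment and staying with Boolean variables. The cost is that it relies on internal details of the Lee--Magniez--Santha construction that you only sketch.
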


    Next, we also observe that for non-promise subgraph-containment problems, detection and finding are equivalent up to randomized reductions and constant multiplicative overhead.

    \begin{observation}[Informal version of \Cref{lem:find-detect-subgraph}]
        \label{obs:detection-vs-finding}
        For any (directed/undirected) subgraph $H$, the quantum query complexities of the finding and detection versions of the non-promise (possibly restricted) subgraph-containment problem with subgraph $H$ are equal up to a multiplicative constant.
    \end{observation}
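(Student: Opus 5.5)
Finding trivially implies detection: run the finder, check the output with $O(1)$ queries (since $H$ has constant size), and report whether a valid copy was found. So the content is the converse, for which I would use a self-reduction by random vertex-set halving that costs only a geometric series in query complexity.

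\textbf{Setup and monotonicity.} Let $A$ be a bounded-error detector for the (possibly restricted) problem with subgraph $H$ on $k\in O(1)$ vertices and fixed positions $s_1,\dots,s_m$. Write $D(n)$ for its cost on $n$-vertex inputs. Any induced subgraph $G[S]$ with $s_1,\dots,s_m \in S$ can be simulated with no query overhead: either relabel it as an $|S|$-vertex instance, or pad it with isolated vertices, which cannot create a copy of $H$ when $H$ has no isolated vertices (with isolated vertices the problem is trivial). We may also assume $D(n')\le C\,D(n)$ for $n'\le n$ and that $D(n)\in\Omega(n^{c})$ for some $c>0$. I would justify the latter with an $\Omega(\sqrt n)$ lower bound from embedding OR, or simply restrict the statement to nontrivial cases; the $\Theta(1)$-complexity restricted instances are handled directly by classical search. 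Non-promise is essential here: the detector's answer on an induced subgraph is meaningful for every input.

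\textbf{Halving step.} Suppose $G$ contains a copy of $H$, say on vertex set $V_H$ with $|V_H|=k$. Pick a random subset $S$ that keeps each non-fixed vertex independently with probability $1/2$ and always keeps $s_1,\dots,s_m$. Then $V_H\subseteq S$ with probability at least $2^{-k}$, a constant. So I would repeat: sample $S$, run $A$ on $G[S]$ with error amplified, and stop at the first YES. After $O(2^k)$ expected trials, or a fixed $O(2^k)$ trials followed by giving up, we obtain $S$ with $|S|\le 0.6n$ w.h.p. that still contains a copy of $H$. Recurse on $G[S]$ until the instance has $O(k)$ vertices, then query everything, $O(1)$ queries, and output a copy of $H$.

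\textbf{Cost and error.} Level $i$ works on about $n(0.6)^i$ vertices and costs $O(2^k\cdot D(n 0.6^i)\cdot \log(\text{levels}))$. The main obstacle is the error amplification: there are $O(\log n)$ levels with $O(2^k)$ calls each, so a naive union bound costs an extra $\log\log n$ or $\log n$ factor. I would first try making the failure probability at level $i$ decrease geometrically, amplifying to error $2^{-(L-i)}$ near the bottom, where $L$ is the number of levels. The repetition count at level $i$ is then $O(L-i)$, and $\sum_i (L-i)\,D(n0.6^i)$ is dominated by the top levels because $D$ decays polynomially, giving $O(D(n))$ total. If a bad level occurs, we can verify the final answer and restart, which affects only the constant. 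Alternatively, the variable-time/expected-cost argument handles this the same way. Summing the geometric series $\sum_i D(n\,0.6^i)\in O(D(n))$ completes the argument: finding costs $O(\mathsf{Q}(\text{detection}))$. The same reasoning applies verbatim to directed $H$, since only vertex-subset restriction is used.
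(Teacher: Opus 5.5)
Your plan has the same architecture as the paper's proof of \Cref{lem:find-detect-subgraph}: restrict to a random vertex subset that still contains a copy of $H$ with constant probability, certify it with the detector, recurse, and sum the costs. However, the step that makes the sum collapse is missing.

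You claim $\sum_i D(n\,0.6^i)\in O(D(n))$ ``because $D$ decays polynomially''. Yet you have only established monotonicity and $D(n)\in\Omega(n^c)$. These are absolute bounds and say nothing about the ratio $D(n)/D(0.6n)$. A non-decreasing function with $n^c\le D(n)\le n^2$ can be constant on $[n^{c/2},n]$, i.e.\ across $\Theta(\log n)$ consecutive levels, and then your sum is $\Theta(D(n)\log n)$.

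What is needed is a scale-growth property $D(\ell n)\ge K\,D(n)$ for some constant $\ell$ and a constant $K$ exceeding the per-level overhead. The paper obtains this from composition. It places $\ell$ vertex-disjoint $n$-vertex instances, sharing the fixed vertices, into one instance on $m+\ell(n-m)\le \ell n$ vertices, which computes $\mathtt{OR}_\ell$ of the small ones. Since quantum query complexity multiplies under composition~\cite{reichardt2011reflections}, this gives $D(\ell n)\in\Omega(\sqrt{\ell}\,D(n))$, and choosing $\ell$ large makes $K$ as large as desired. You do mention embedding OR, but only to get the absolute bound $\Omega(\sqrt n)$; it is the composed, relative version that carries the argument. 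With it, you can either shrink by the factor $\ell$ directly (keep each free vertex with probability $1/\ell$; the survival probability $\ell^{-k}$ is still constant) or group your halving levels into blocks of $O(\log \ell)$ levels.

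Second, your amplification schedule points the wrong way. With level $0$ being the full graph, $O(L-i)$ repetitions at level $i$ put a $\Theta(\log n)$ factor on the most expensive level. So $\sum_i (L-i)\,D(n\,0.6^i)\ge L\,D(n)$ however fast $D$ decays. The heavy amplification must go where calls are cheap. Target a failure probability of about $2^{-(i+3)}$ at level $i$; this also covers the fact that a fixed $O(2^k)$ number of samples finds a good $S$ only with constant probability. That costs a $\mathrm{poly}(i)$ factor at level $i$, keeps the total failure probability at most $1/4$, and gives $\sum_i \mathrm{poly}(i)\,D(n\,0.6^i)\in O(D(n))$ once the geometric growth above is in place.

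The paper avoids per-level schedules altogether. Each shrinking step uses constant error, and the recursive finder is run three times with verification of its output. The resulting recurrence $F_{\ell n}\le C\,T_n+3F_n$ is then closed by the master theorem, which is exactly where the growth factor $K>3$ from the composition argument is used.
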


    We note that the above only holds for non-promise problems, i.e., where all graphs are valid inputs to the problem. That is, one can only have separations between the detection and finding versions of a subgraph-containment problem in a setting where the input satisfied some promise known a priori. Indeed, in such promise settings, there are several graph problems where the finding and detection versions seem to be very much distinct. For instance, there is a quantum algorithm that makes polylogarithmic queries to output the (first bit of the) \textsc{exit} vertex in the Welded tree graph \cite{childs03weldedtree}, while outputting an \textsc{entrance}-to-\textsc{exit} path in this graph takes polynomially many queries for a certain natural class of quantum algorithms \cite{childs23weldedtree}. Moreover, as we will see, the promise problem $\mathtt{PromFindPath}_{s,t}^{=k}$ would be polynomially harder than $\mathtt{PromPath}_{s,t}^{=k}$ assuming graph collision is harder than its trivial lower bound.

    We also show that there is a generic randomized reduction from the undirected version of a subgraph-containment problem to the directed one. 

    \begin{observation}[Informal version of \Cref{lem:undirected_to_directed}]
        Every directed subgraph-containment problem is at least as hard (in terms of randomized/quantum query complexity) as its undirected counterpart.
    \end{observation}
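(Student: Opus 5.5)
The plan is to reduce the undirected problem to the directed one by \emph{random orientation}. Given an undirected input graph $G$ on $n$ vertices, we simulate a directed graph $\vec G$ on the same vertex set, with each undirected edge given one or both orientations. The directed pattern $\vec H$ is itself a fixed orientation of $H$; when $\vec H$ is symmetric (both directions on every edge), the reduction is trivial: we put both arcs on every edge of $G$.

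\textbf{Construction.} We first fix, for each vertex $u$ of $G$, a uniformly random label $\sigma(u) \in V(H)$; this labelling is free, since it uses no queries. We then declare the arc $u \to v$ present in $\vec G$ if and only if two conditions hold: $\{u,v\}$ is an edge of $G$, and $(\sigma(u),\sigma(v))$ is an arc of $\vec H$. A single query to the adjacency matrix of $\vec G$ thus costs one query to $G$, so the query complexity of any directed algorithm transfers with no overhead.

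\textbf{Correctness.} Soundness is the easy direction: every arc of $\vec G$ lies over an edge of $G$, so a copy of $\vec H$ in $\vec G$ projects to a copy of $H$ in $G$ (after forgetting orientations, and possibly keeping only a subset of the arcs). For completeness, suppose $G$ contains a copy of $H$ via an injection $\phi$. With probability $k^{-k}$, which is $\Omega(1)$ for $k\in O(1)$, every vertex satisfies $\sigma(\phi(h)) = h$. In that event every arc $h \to h'$ of $\vec H$ appears as $\phi(h)\to\phi(h')$ in $\vec G$, so $\vec H$ is contained in $\vec G$.

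\textbf{Amplification.} We repeat the construction $O(k^k)$ times with fresh randomness and output YES if any run does, reducing the error of each run beforehand by a constant number of repetitions. This gives total overhead $O(1)$.

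\textbf{Restricted problems.} For the restricted versions $\mathtt{Subgraph}^H_{v_1,\dots,v_m}$, we set $\sigma(s_i)=v_i$ deterministically, and the same argument goes through.

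The main obstacle is making sure the colour-coding step does not break the correspondence when $\vec H$ contains both orientations of some edges, or when the arcs of $\vec H$ do not determine the undirected $H$. I will define the directed counterpart so that its underlying undirected graph is exactly $H$, which makes the projection argument valid. For the finding version, a found copy of $\vec H$ directly yields a copy of $H$, which can be verified with $O(1)$ queries.
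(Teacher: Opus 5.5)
Your proof is correct, but it takes a different route from the paper.

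\textbf{The paper's route.} The paper orients every edge of $G$ independently and uniformly at random. Soundness is the same projection argument you give: every arc lies over an edge of $G$. A fixed copy of $H$ then receives exactly the orientation of $\vec H$ with probability $2^{-|E_H|}$, and \Cref{lem:1-sided_error} finishes the job. The restricted versions need no change at all, since orienting edges does not interact with the pinned vertices.

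\textbf{Your route.} You color-code with a random labelling $\sigma : V \to V_H$ and keep $u \to v$ only when $(\sigma(u),\sigma(v))$ is an arc of $\vec H$. Your success probability is therefore $|V_H|^{-|V_H|}$. This is a worse constant than $2^{-|E_H|}$ for sparse patterns such as paths and cycles, though still $\Omega(1)$. You also have to pin $\sigma(s_i)=v_i$ in the restricted case.

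\textbf{What each buys.} In exchange for the weaker constant, your construction handles every $\vec H$ whose underlying undirected graph is $H$, including patterns with antiparallel arc pairs. Independent single orientations can never produce such pairs; the paper's formal lemma sidesteps this by assuming each edge of $H$ is oriented in exactly one direction. Your construction also needs only $O(n \log |V_H|)$ random bits rather than one per vertex pair. Finally, it outputs a graph that maps homomorphically onto $\vec H$, which is the same layered structure the paper's later color-coding reductions exploit. The paper's version, by contrast, is the simpler one-line argument with the better constant for the sparse patterns it actually uses.
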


    Next, we focus specifically on the path- and cycle-containment problems. For these problems, we consider the restricted and unrestricted versions of the problem, where in the restricted versions, we fix the start and end vertices $s$ and $t$ for the path-containment problem, and a single vertex $s$ on the cycle for the cycle-finding problem. We also consider the directed versions of these problems, and the finding versions where we ask to output the path or cycle. Finally, we consider versions where we search for a path or cycle of length exactly $k$, or at most $k$. We denote the directed and finding versions with the keywords $\mathtt{Dir}$ and $\mathtt{Find}$, we attach a superscript $=k$ or $\leq k$ to indicate the length constraint, and we supply a subscript $s$ or $s,t$ for the restricted versions. Thus, for instance, the problem $\mathtt{FindDirPath}_{s,t}^{=k}$ is the finding version of the directed path-finding problem between $s$ and $t$, of length exactly $k$.

    Akin to Belovs and Reichardt~\cite{belovs2012span}, and Cade et al.~\cite{cade2018time}, we also consider promise versions of this problem. In the restricted path problem, i.e., where we fix the start and end vertex $s$ and $t$, the promise version satisfies that a path of the required length exists between $s$ and $t$, or $s$ and $t$ are not connected at all. Similarly, in the restricted cycle-finding problem through a vertex $s$, either a cycle through $s$ exists with the required length, or no cycle through $s$ exists at all. We note that this promise is weaker than the promise considered in Cade et al.~\cite{cade2018time}, and so our algorithmic results for this problem are stronger. Finally, for the unrestricted cycle-finding problem, we also consider a promise version, where either a cycle exists of the required length, or there exists no cycle at all (i.e., the graph is a forest). These problems are indicated with the $\mathtt{Prom}$ keyword, i.e., $\mathtt{PromDirCycle}_s^{\leq k}$ is the promise problem of detecting a directed cycle through $s$ of length at most $k$.\footnote{Note that we don't consider a promise version of the unrestricted path-containment problem, because we don't see a meaningful way of defining it.}
    
    Our first observation in this direction is that every $\mathtt{PromFind}$-version of these problems requires the same number of queries as the corresponding $\mathtt{Find}$-version.

    \begin{observation}[Informal version of \Cref{Thm:PromReducesNonProm,lem:PromFindForCycle}]
        \label{obs:find-vs-promfind}
        Let $\mathtt{P}$ be a (restricted/unrestricted, directed/undirected) path- or cycle-containment problem, with length constraint $=k$ or $\leq k$ with $k \in O(1)$. Then, the finding version of $\mathtt{P}$ is equivalent to the promise-finding version of $\mathtt{P}$, up to randomized reductions and multiplicative constants.
    \end{observation}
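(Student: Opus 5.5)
One direction is immediate. The promise-finding version of $\mathtt{P}$ restricts the finding version of $\mathtt{P}$ to a subset of inputs, so any algorithm for $\mathtt{Find}\mathtt{P}$ also solves $\mathtt{PromFind}\mathtt{P}$ with the same number of queries. The work is in the converse: solving $\mathtt{Find}\mathtt{P}$ on an arbitrary graph $G$ using a $\mathtt{PromFind}\mathtt{P}$ algorithm. The key fact is that a finding algorithm outputs a candidate object we can verify with $O(k) = O(1)$ queries. So we never need the promise to hold on the whole input, only on the inputs where a solution exists. My plan is to randomly sparsify $G$ so that, with constant probability, the resulting graph $G'$ satisfies the promise and still contains a solution. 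We then run the promise-finding algorithm on $G'$, verify its output in $G$, and repeat $O(1)$ times.

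Concretely, for the restricted path problem $\mathtt{PromFindPath}^{=k}_{s,t}$ (and the $\leq k$ and directed versions), I use a random colour-coding reduction.
\begin{itemize}
\item Assign every vertex other than $s,t$ an independent uniformly random colour in $\{1,\dots,k-1\}$.
\item Build $G'$ by keeping only edges from $s$ to colour $1$, from colour $i$ to colour $i+1$, and from colour $k-1$ to $t$.
\item Answer each adjacency query to $G'$ with one query to $G$ plus the known colouring.
\end{itemize}
Every $s$--$t$ path in $G'$ now has length exactly $k$, since the colour strictly increases along it. Hence $G'$ satisfies the promise: either such a path exists or $s$ and $t$ are disconnected. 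In the undirected case I will make sure the layering also forbids back-edges within layers, so that shortcut paths cannot appear. If $G$ contains a length-$k$ path, its internal vertices receive the correct colours with probability $(k-1)^{-(k-1)} = \Omega(1)$, and then $G'$ contains it. For the $\leq k$ versions, I guess the length $\ell \leq k$ uniformly at random, or simply loop over all $\ell$, at constant cost. In the $\leq k$ setting I also need to reinterpret the output, noting that any path returned is a valid witness in $G$. For the unrestricted path problem there is no promise version, so nothing needs to be done.

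For cycles, $\mathtt{PromFindCycle}_s$ and $\mathtt{PromFindCycle}$ (the latter with the forest promise), the same idea applies. Colour the vertices with $\{0,\dots,k-1\}$, forcing $s$ to colour $0$ in the restricted case, and keep only edges between consecutive colours $i \to i+1 \bmod k$. In the directed case all cycles in $G'$ then have length a multiple of $k$.

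This cycle step is the main obstacle. Cycles of length $2k, 3k, \dots$ may survive, and in the undirected unrestricted case $G'$ need not be a forest when there is no $k$-cycle. One fix is to instead cut the cycle into a path. For each colour-$0$ vertex $v$, split it into a source copy and a sink copy. Then colours $1,\dots,k-1$ only go forward from the source copy of colour $0$ to the sink copy of colour $0$, with $k-1$ layers. This creates a layered DAG, whose underlying graph I would further make acyclic apart from the desired witness.

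The cleanest route is to fix $s$, reducing to the restricted problem. Splitting $s$ into $s_{\text{out}}$ and $s_{\text{in}}$ yields a layered $s_{\text{out}}$--$s_{\text{in}}$ path instance. The promise on the restricted cycle problem then holds in the form required, because the only way to return to $s$ is through exactly $k$ layers. For the unrestricted promise-cycle problem, I would instead try sampling each vertex independently with a suitable constant-probability scheme so that, with constant probability, the surviving graph is a forest plus the planted $k$-cycle. If that fails, I would reduce via the path equivalences proved elsewhere in \Cref{sec:reductions}, since the informal statement cites separate lemmas for the cycle case.
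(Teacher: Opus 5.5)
There is a genuine gap, and it comes from misreading what the promise requires. Every promise here has the form ``if any $s$--$t$ path (resp.\ cycle, cycle through $s$) exists, then one of length exactly/at most $k$ exists.'' That statement holds automatically in any graph that actually contains a witness of the required length. So every \textsc{yes}-instance of $\mathtt{P}$ already satisfies the promise of $\mathtt{PromP}$, and no sparsification is needed. The paper simply runs the $\mathtt{PromFindP}$ algorithm on $G$ itself and verifies the returned path or cycle with $O(k)=O(1)$ queries. On \textsc{yes}-instances the promise holds, so a valid witness is returned with probability at least $2/3$. On \textsc{no}-instances the output may be arbitrary, but no valid witness exists, so verification always rejects. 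You state exactly the right insight (the promise only matters when a solution exists), but you do not notice that it already finishes the proof.

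Because of this, the cycle part of your proposal never reaches a proof, and several claims along the way are wrong.
\begin{itemize}
\item The ``main obstacle'' is not one. Once the planted $k$-cycle survives in $G'$, extra cycles of length $2k,3k,\dots$ do not violate the promise. Whether $G'$ is a forest when it has no $k$-cycle only concerns runs (\textsc{no}-instances or failed colourings) that verification and $O(1)$ repetitions already handle.
\item Your proposed remedies would not work anyway. Independent vertex sampling at a constant rate leaves a dense graph full of cycles, while at rate about $1/n$ the planted cycle survives only with probability $n^{-\Theta(k)}$.
\item ``Fixing $s$'' in the unrestricted problem is not a constant-overhead reduction: guessing $s$ succeeds only with probability $O(k/n)$.
\item No reduction in the paper lets you fall back on the path equivalences for $\mathtt{PromFindCycle}^{=k}$ or $\mathtt{PromFindCycle}^{\leq k}$.
\item In the undirected path case, the claim that every $s$--$t$ path in $G'$ has length exactly $k$ is false. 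A path may zigzag between adjacent layers, giving lengths $k+2,k+4,\dots$, and back-edges cannot be forbidden in an undirected graph. That claim is also unnecessary, for the same reason as above.
\end{itemize}
Your colour-coding reduction can be made to work in all cases once you use the correct reading of the promise, but at that point it is superfluous.
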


    \Cref{obs:detection-vs-finding,obs:find-vs-promfind} show that for all path- and cycle-containment problems $\mathtt{P}$, we have that $\mathtt{P}$, $\mathtt{FindP}$ and $\mathtt{PromFindP}$ are equivalent up to randomized reductions and multiplicative constants. As such, these three problems fall in the same equivalence class, and it suffices to restrict our attention to $\mathtt{P}$ and $\mathtt{PromP}$ in what follows.

    Now, we arrive at the first main result of this work. We classify all remaining path- and cycle-containment problems into several equivalence classes via randomized reductions between them. We aid the reader by providing a graphical overview of these reductions in \Cref{fig:islands}, and we obtain the following classification.

    % Taking these equivalences into account, the remaining number of distinct path and cycle problems that we consider is $26$. For all these problems, we provide randomized reductions between them, and we show that all of them are equivalent to one of $10$\todo{Update!} problems displayed in~\Cref{fig:islands}.

    \begin{main-result}
        \label{ref:classification}
        The unrestricted and undirected path-containment problem, and the undirected promise versions of the path- and cycle-containment problems admit linear-query quantum algorithms. All other path-containment problems are equivalent, and all other cycle-containment problems can be classified into at most 5 equivalence classes. See \Cref{fig:islands}.
    \end{main-result}

    \begin{figure}[!ht]
        \centering
        \begin{tikzpicture}
            
            % The path picture
            \begin{scope}[shift={(0,-6.5)}]
                \node[teal] (path) at (3,-1.5) {$\mathtt{Path}^{=(k-2)}$};
                \node[teal, left, align = right] at (path.west) {\small Lem~\ref{lem:subgraph-characterization}-3 \\[-.2em] $\Omega(n)\ni$};
                \node[red] (dirpath) at (5.5,0) {$\mathtt{DirPath}^{=(k-2)}$};
                \node[red, below=-.3em, align = center] at (dirpath.south) {\rotatebox{90}{$\ni$} \small Cor~\ref{cor:dir-path-detection} \\[-.2em] $\widetilde{O}(n^{\frac32 - \alpha_{k-2}})$};

                \node[red] (pathst) at (3,0) {$\mathtt{Path}^{=k}_{s,t}$};
                \node[red] (dirpathst) at (3.5,1.5) {$\mathtt{DirPath}^{=k}_{s,t}$};
                \node[red] (pathstle) at (0,0) {$\mathtt{Path}^{\leq k}_{s,t}$};
                \node[red] (dirpathstle) at (0,1.5) {$\mathtt{DirPath}^{\leq k}_{s,t}$};
                \node[teal] (prompathst) at (3,-3) {$\mathtt{PromPath}^{=k}_{s,t}$};
                \node[red] (promdirpathst) at (3.5,3) {$\mathtt{PromDirPath}^{=k}_{s,t}$};
                
                \node[teal] (prompathstle) at (0,-3) {$\mathtt{PromPath}^{\leq k}_{s,t}$};
                \node[teal, above=-.5em, align = center] at (prompathstle.north) {$\Theta(n)$ \\[-.2em] \rotatebox{90}{$\in$}~\small\cite{belovs2012span}};
                
                \node[red] (promdirpathstle) at (0,3) {$\mathtt{PromDirPath}^{\leq k}_{s,t}$};

                \draw[->] ([shift={(-.15,0)}]dirpathst.north east) arc (180:-90:.15) node[midway, above, align = center, font = \small] {Lem~\ref{lem:path_monotonicity} \\[-.2em] $k \mapsto k+1$};

                \draw[bend left = 5, ->] (dirpath) to node[below left=-.3em, align = right, font = \small] {Prop \\[-.2em] \ref{thm:dir_st_blank_paths}-\ref{item:dirpath-dirpathst}} (dirpathst);
                \draw[bend left = 5, ->] (dirpathst) to node[above right=-.3em, align = center, font = \small] {Prop \\[-.2em] \ref{thm:dir_st_blank_paths}-\ref{item:dirpathst-dirpath}} (dirpath);
                
                \draw[->] (pathst) to node[above, align = center, font = \small] {Prop~\ref{thm:dir_undir_st_paths}-\ref{item:pathst-pathstle}} (pathstle);
                \draw[->] (pathstle) to node[left, align = right, font = \small] {Prop \\[-.2em] \ref{thm:dir_undir_st_paths}-\ref{item:pathstle-dirpathstle}} (dirpathstle);
                \draw[->] (dirpathstle) to node[above, align = center, font = \small] {Prop~\ref{thm:dir_undir_st_paths}-\ref{item:dirpathstle-dirpathst}} (dirpathst);
                \draw[->] (dirpathst) to node[left, align = right, font = \small] {Prop \\[-.2em] \ref{thm:dir_undir_st_paths}-\ref{item:dirpathst-pathst}} (pathst);
            
                \draw[->] (promdirpathst) to node[above, align = center, font = \small] {Prop \\[-.2em] \ref{thm:prom_nonprom_dir_st_paths}-\ref{item:promdirpathst-promdirpathstle}} (promdirpathstle);
                \draw[->] (promdirpathstle) to node[left, align = right, font = \small] {Prop \\[-.2em] \ref{thm:prom_nonprom_dir_st_paths}-\ref{item:promdirpathstle-dirpathstle}} (dirpathstle);
                \draw[->] (dirpathst) to node[left, align = right, font = \small] {Prop \\[-.2em] \ref{thm:prom_nonprom_dir_st_paths}-\ref{item:dirpathst-promdirpathst}} (promdirpathst);

                \draw[bend left = 5, ->] (prompathst) to node[below, align = center, font = \small] {\cite{belovs2012span}} (prompathstle);
                \draw[bend left = 5, ->] (prompathstle) to node[above, align = center, font = \small] {\cite{belovs2012span}} (prompathst);
                \draw[->] (path) to node[right, align = left, font = \small] {\cite{belovs2012span}} (prompathst);
            \end{scope}

            % The unrestricted cycle subgraph picture
            \begin{scope}[shift={(3.5,0)}]
                \node[purple] (cycle) at (-.5,1.5) {$\mathtt{Cycle}^{=k}$};
                \node[purple, left, align = right] at (cycle.west) {$\widetilde{O}(n^{\frac32 - \frac{k-2}{k(k+2)}}) \ni$ \\[-.2em] \small $k$ even, \cite[Thm~4.11]{childs2012quantum}};
                
                \node[blue] (dircycle) at (-.5,0) {$\mathtt{DirCycle}^{=k}$};
                \node[blue, left, align = right] at (dircycle.west) {$O(n^{2 - \frac{3}{k+1} + \frac{1}{(k+1)^2}}) \ni$ \\[-.2em] \small\cite[Thm~9]{lee2012learning}};
                
                \node[violet] (cyclele) at (3,1.5) {$\mathtt{Cycle}^{\leq k}$};
                \node[violet, right, align = left] at (cyclele.east) {$\in \widetilde{O}(n^{\frac32 - \frac{k-3}{(k-1)(k+1)}})$ \\[-.2em] \small $k$ odd, Thm~\ref{thm:cycles<=k}};
                
                \node[blue] (dircyclele) at (3,0) {$\mathtt{DirCycle}^{\leq k}$};
                \node[teal] (promcycle) at (7.5,3) {$\mathtt{PromCycle}^{=k}$};
                \node[orange] (promdircycle) at (7.5,-1.5) {$\mathtt{PromDirCycle}^{=k}$};
                \node[teal] (promcyclele) at (3,3) {$\mathtt{PromCycle}^{\leq k}$};
                \node[orange] (promdircyclele) at (3,-1.5) {$\mathtt{PromDirCycle}^{\leq k}$};

                \draw[->] ([shift={(-.2,0)}]promdircycle.south) arc (-180:0:.2) node[midway, below, align = center, font = \small] {$k \mapsto k+1$ \\[-.2em] Lem \ref{thm:MonotonicityCycles}-\ref{item:PromMonotonicityDirCycle}};
                \draw[->] ([shift={(-.2,0)}]dircycle.south) arc (-180:0:.2) node[midway, below, align = center, font = \small] {$k \mapsto k+1$ \\[-.2em] Lem \ref{thm:MonotonicityCycles}-\ref{item:MonotonicityDirCycle}};
                \draw[dashed, ->] ([shift={(.15,0)}]cyclele.north west) arc (0:270:.15) node[midway, above, align = center, font = \small] {Lem \ref{thm:MonotonicityCycles}-\ref{item:MonotonicityCycleLeqK} \\[-.2em] $k-1 \mapsto k$};

                \node[teal, above=-.5em, align = center] at (promcycle.north) {$\Theta(n)$ \\[-.2em] \rotatebox{90}{$\in$}~\small\cite{belovs2012span}};

                \node[gray] (orgc) at (-.5,3) {$\mathtt{OR}_n \circ \mathtt{GC}_G$};
                \draw[->] (orgc) to node[left] {$k=3$} node[right, align = left, font = \small] {\cite{balodis2016quantum}} (cycle);
                \node[gray, left] at (orgc.west) {$\Theta(\sqrt{n} \cdot \mathsf{Q}(\mathtt{GC}_G)) \ni$};

                \draw[bend left = 5, ->] (dircycle) to node[above, align = center, font = \small] {Prop~\ref{Cycle-Equivalences-Part1}-\ref{item:DirCycleEqLeqk}} (dircyclele);
                \draw[bend left = 5, ->] (dircyclele) to node[below, align = center, font = \small] {Prop~\ref{Cycle-Equivalences-Part1}-\ref{item:DirCycleEqLeqk}} (dircycle);
                
                \draw[bend left = 4, ->] (promdircycle) to node[below, align = center, font = \small] {Prop~\ref{Cycle-Equivalences-Part1}-\ref{item:PromDirCycleEqLeqk}} (promdircyclele);
                \draw[bend left = 4, ->] (promdircyclele) to node[above, align = center, font = \small] {Prop~\ref{Cycle-Equivalences-Part1}-\ref{item:PromDirCycleEqLeqk}} (promdircycle);
                
                \draw[->] (cyclele) to node[below, align = center, font = \small] {Prop~\ref{Cycle-Equivalences-Part1}-\ref{item:CycleEqLeqk}} (cycle);
                
                \draw[bend left = 4, ->] (promcycle) to node[below, align = center, font = \small] {Prop~\ref{Cycle-Equivalences-Part1}-\ref{Item:PromCycleEqLeqk}} (promcyclele);
                \draw[bend left = 4, ->] (promcyclele) to node[above, align = center, font = \small] {Prop~\ref{Cycle-Equivalences-Part1}-\ref{Item:PromCycleEqLeqk}} (promcycle);
                
                \draw[bend right = 10, dashed, ->] (dircycle) to node[right, align = left, font = \small] {Prop \\[-.2em] \ref{Cycle-Equivalences-Part1}-\ref{Item:DirCycleLeq}} (cycle);
                \draw[bend right = 10, ->] (cycle) to node[left, align = right, font = \small] {Prop \\[-.2em] \ref{Cycle-Equivalences-Part1}-\ref{item:ClEqKdir}} (dircycle);
                
                \draw[->] (promcyclele) to node[right, align = left, font = \small] {Prop \\[-.2em] \ref{Cycle-Equivalences-Part1}-\ref{item:promCleqkNonprom}} (cyclele);
                \draw[->] (promdircyclele) to node[right, align = left, font = \small] {Prop \\[-.2em] \ref{Cycle-Equivalences-Part1}-\ref{item:promDirCleqkNonprom}} (dircyclele);
                \draw[->] (promcycle) to node[right, align = left, font = \small] {Prop \\[-.2em] \ref{Cycle-Equivalences-Part1}-\ref{item:PromCEqKPromDir}} (promdircycle);
                \draw[->] (cyclele) to node[right, align = left, font = \small] {Prop \\[-.2em] \ref{Cycle-Equivalences-Part1}-\ref{item:CEqKDir}} (dircyclele);
            \end{scope}

            % The restricted cycle subgraph picture
            \begin{scope}[shift={(8,-6.5)}]
                % s-problems
                \node[red] (cycles) at (0,0) {$\mathtt{Cycle}^{=k}_s$};
                \node[red] (dircycles) at (-.5,1.5) {$\mathtt{DirCycle}^{=k}_s$};
                \node[red] (cyclesle) at (3,0) {$\mathtt{Cycle}^{\leq k}_s$};
                \node[red] (dircyclesle) at (3,1.5) {$\mathtt{DirCycle}^{\leq k}_s$};
                \node[teal] (promcycles) at (-.5,-3) {$\mathtt{PromCycle}^{=k}_s$};
                \node[red] (promdircycles) at (-1,3) {$\mathtt{PromDirCycle}^{=k}_s$};
                \node[teal] (promcyclesle) at (3,-3) {$\mathtt{PromCycle}^{\leq k}_s$};
                \node[red] (promdircyclesle) at (3,3) {$\mathtt{PromDirCycle}^{\leq k}_s$};

                \node[gray] (gcor) at (0,-1.75) {$\mathtt{GC}_G \circ \mathtt{OR}_n$};
                \draw[->] (gcor) to node[left] {$k = 5$} node[right, align = left, font = \small] {Prop~\ref{prop:cycle-s-lb}} (cycles);
                \node[gray, left] at (gcor.west) {$\Theta(\sqrt{n} \cdot \mathsf{Q}(\mathtt{GC}_G)) \ni$};

                \draw[bend left = 5, ->] (dircycles) to node[above, align = center, font = \small] {Prop~\ref{Cycle-Equivalences-Part2}-\ref{item:DirCycleEqLeqks}} (dircyclesle);
                \draw[bend left = 5, ->] (dircyclesle) to node[below, align = center, font = \small] {Prop~\ref{Cycle-Equivalences-Part2}-\ref{item:DirCycleEqLeqks}} (dircycles);

                \draw[bend left = 4, ->] (promdircycles) to node[above, align = center, font = \small] {Prop~\ref{Cycle-Equivalences-Part2}-\ref{item:PromDirCycleEqLeqks}} (promdircyclesle);
                \draw[bend left = 4, ->] (promdircyclesle) to node[below, align = center, font = \small] {Prop~\ref{Cycle-Equivalences-Part2}-\ref{item:PromDirCycleEqLeqks}} (promdircycles);

                \draw[bend left = 5, ->] (cyclesle) to node[below, align = center, font = \small] {Prop~\ref{Cycle-Equivalences-Part2}-\ref{item:CycleEqLeqksPartA}} (cycles);
                \draw[bend left = 5, ->] (cycles) to node[above, align = center, font = \small] {Prop~\ref{Cycle-Equivalences-Part2}-\ref{item:CycleEqLeqksPartB}} (cyclesle);

                \draw[bend left = 5, ->] (promcycles) to node[above, align = center, font = \small] {Prop~\ref{Cycle-Equivalences-Part2}-\ref{Item:PromCycleEqLeqks}} (promcyclesle);
                \draw[bend left = 5, ->] (promcyclesle) to node[below, align = center, font = \small] {Prop~\ref{Cycle-Equivalences-Part2}-\ref{Item:PromCycleEqLeqks}} (promcycles);
                
                \draw[bend left = 10, ->] (promdircycles) to node[right, align = left, font = \small] {Prop \\[-.2em] \ref{Cycle-Equivalences-Part2}-\ref{item:PromDirCycleEqksDirCycleEqks}} (dircycles);
                \draw[bend left = 10, ->] (dircycles) to node[left, align = right, font = \small] {Prop \\[-.2em] \ref{Cycle-Equivalences-Part2}-\ref{item:PromDirCycleEqksDirCycleEqks}} (promdircycles);
                \draw[->] (promcyclesle) to node[right, align = left, font = \small] {Prop \\[-.2em] \ref{Cycle-Equivalences-Part2}-\ref{item:promCleqkNonproms}} (cyclesle);
                \draw[->] (promdircyclesle) to node[right, align = left, font = \small] {Prop \\[-.2em] \ref{Cycle-Equivalences-Part2}-\ref{item:promDirCleqkNonproms}} (dircyclesle);

                %\draw[->] (promcycles) to node[right] {\ref{Cycle-Equivalences-Part2}-\ref{item:PromCEqKPromDirs}} (promdircycles);

                \draw[->] (cyclesle) to node[right, align = left, font = \small] {Prop \\[-.2em] \ref{Cycle-Equivalences-Part2}-\ref{item:CEqKDirs}} (dircyclesle);
                
                \draw[bend left = 10, ->] (dircycles) to node[right, align = left, font = \small] {Prop \\[-.2em] \ref{Cycle-Equivalences-Part2}-\ref{item:dirCycleEqkCycleEqks}} (cycles);
                \draw[bend left = 10, ->] (cycles) to node[left, align = left, font = \small] {Prop \\[-.2em] \ref{Cycle-Equivalences-Part2}-\ref{item:ClEqKdirs}} (dircycles);
            \end{scope}

            % \draw[->] (dircycles) to node[left] {\ref{thm:CrossArrow-DirCycleEqksToDirCycleEqk}} (dircycle);
            \draw[->] (promdircycles) to node[near start, above left=-.2em, font = \small, align = center] {Prop~\ref{thm:CrossArrow-DirCycleEqksToDirCycleEqk} \\[-.2em] $k+1 \mapsto k$} (promdircycle);
            
            \draw[->] (dircycles) to node[above, align = center, font = \small] {Prop~\ref{thm:CrossArrow-DirCycleEqksToDirPathstEqk}} (dirpathst);
            \draw[->] (promdirpathst) to node[above, align = center, font = \small] {Prop \\[-.2em] \ref{thm:crossArrow-promDirPathEqkstToPromDirCycleEqks}} (promdircycles);
            \draw[bend left = 5, ->] (promcycles) to node[below, align = center, font = \small] {Prop~\ref{thm:crossArrow-promCycleEqksToPromPathEqkst}} (prompathst);
            \draw[bend left = 5, ->] (prompathst) to node[above, align = center, font = \small] {Prop \ref{thm:crossArrow-PromPathstToPromCycles}} (promcycles);
        \end{tikzpicture}
        \caption{Randomized reductions between problems for $k \geq 3$. The dashed connections hold only for odd values of $k$. For the graph-collision problems, we can take any graph $G$ with $n$ vertices. All problems' randomized query complexities are $\Theta(n^2)$. All complexity bounds stated in the figure are upper and lower bounds on the quantum query complexity. In the upper bound for $\mathsf{Q}(\mathtt{DirPath}^{=(k-2)})$, $\alpha_k \in \Theta(c^{-k})$, with $c \approx 1.33$. If two problems have the same color, they have the same quantum query complexity up to constants.}
        \label{fig:islands}
    \end{figure}

    % \todo{Remove this paragraph...}
    % The predominant technique used in these randomized reductions is the color-coding technique, introduced by Alon, Yuster and Zwick~\cite{alon1995color}. The idea is to assign distinct colors to the vertices of the path/cycle graph $L_k$/$C_k$, and then randomly assign these colors to the vertices of the input graph $G$. Next, we only keep an edge in $G$ if the colors of the two vertices it connects are associated to two vertices in the path/cycle graph that are connected by an edge too. It is easy to show that a quantum query to the resulting graph can be made using a single query to the old graph, and that the new graph contains the path/cycle with constant probability if and only if the original graph contained a path/cycle as well. The resulting color-coded graph often has more structure than the original input graph, which can subsequently be exploited to prove connections between different problems through randomized reductions.

    % \todo{... and replace with this one.}
    For some reductions we use the color-coding technique, introduced by Alon, Yuster and Zwick~\cite{alon1995color}. The idea is to assign distinct colors to the vertices of the path/cycle graph $L_k$/$C_k$, and then to randomly assign these colors to the vertices of the input graph $G$. Next, we only keep an edge in $G$ if the colors of the two vertices it connects are associated to two vertices in the path/cycle graph that are connected by an edge too. The resulting color-coded graph becomes a layered graph in the path-containment case, and a layered cycle graph in the cycle-containment case, as displayed in \Cref{fig:layers}. With at least constant probability, a path/cycle in the original graph gets mapped to a path/cycle in the layered graph that straddles across all the layers.

    \begin{figure}[!ht]
        \centering
        \begin{tikzpicture}[vertex/.style={fill,rounded corners = .15em, inner sep = .15em}, scale = .5]
            \begin{scope}
                \draw (0,0) ellipse (.5 and 1);
                \draw (2,0) ellipse (.5 and 1);
                \draw (4,0) ellipse (.5 and 1);
                \draw (7,0) ellipse (.5 and 1);
                \draw (9,0) ellipse (.5 and 1);
    
                \node[vertex] (v11) at (0,-.2) {};
                \node[vertex] (v12) at (0,.5) {};
                
                \node[vertex] (v21) at (2,.1) {};
                \node[vertex] (v22) at (2,.3) {};
                \node[vertex] (v23) at (2,-.3) {};
    
                \draw (v11) to (v21);
                \draw (v12) to (v22);
                \draw (v12) to (v23);
    
                \node[vertex] (v31) at (4,-.5) {};
                \node[vertex] (v32) at (4,-.3) {};
                \node[vertex] (v33) at (4,.2) {};
    
                \draw (v21) to (v31);
                \draw (v22) to (v32);
                \draw (v22) to (v33);
    
                \node at (5.5,0) {$\cdots$};
    
                \node[vertex] (v41) at (7,.3) {};
                \node[vertex] (v42) at (7,-.4) {};
                \node[vertex] (v51) at (9,.2) {};
                \node[vertex] (v52) at (9,-.3) {};
    
                \draw (v41) to (v51);
                \draw (v42) to (v52);
            \end{scope}

            \begin{scope}[shift={(15,0)}]
                \foreach \i in {-2,...,2} {
                    \draw[rotate={-\i*60}] (0,-2) ellipse (.5 and 1);
                }
                \node at (0,2) {$\cdots$};
                
                % \draw (2,0) ellipse (.5 and 1);
                % \draw (4,0) ellipse (.5 and 1);
                % \draw (6,0) ellipse (.5 and 1);
                % \draw (8,0) ellipse (.5 and 1);
    
                \node[vertex] (v11) at (0,-2.2) {};
                \node[vertex] (v12) at (0,-1.5) {};
                
                \node[vertex] (v21) at ({-1.9*sin(-60)},{-1.9*cos(-60)}) {};
                \node[vertex] (v22) at ({-1.7*sin(-60)},{-1.7*cos(-60)}) {};
                \node[vertex] (v23) at ({-2.3*sin(-60)},{-2.3*cos(-60)}) {};
    
                \draw (v11) to (v21);
                \draw (v12) to (v22);
                \draw (v12) to (v23);
    
                \node[vertex] (v31) at ({-2.5*sin(-120)},{-2.5*cos(120)}) {};
                \node[vertex] (v32) at ({-2.3*sin(-120)},{-2.3*cos(120)}) {};
                \node[vertex] (v33) at ({-1.8*sin(-120)},{-1.8*cos(120)}) {};
    
                \draw (v21) to (v31);
                \draw (v22) to (v32);
                \draw (v22) to (v33);
    
                \node[vertex] (v41) at ({-1.7*sin(-240)},{-1.7*cos(-240)}) {};
                \node[vertex] (v42) at ({-2.4*sin(-240)},{-2.4*cos(-240)}) {};
                
                \node[vertex] (v51) at ({-1.8*sin(-300)},{-1.8*cos(-300)}) {};
                \node[vertex] (v52) at ({-2.3*sin(-300)},{-2.3*cos(-300)}) {};
    
                \draw (v41) to (v51);
                \draw (v42) to (v52);

                \draw (v51) to (v11);
                \draw (v52) to (v11);
                \draw (v51) to (v12);
                \draw (v52) to (v12);
            \end{scope}
        \end{tikzpicture}
        \caption{A layered graph (left) and a layered cycle graph (right). These can be obtained from general graphs using the color-coding technique, where a path of length $k$ (resp.\ cycle of length $k$) is preserved with constant probability.}
        \label{fig:layers}
    \end{figure}
    
    To show relations between the various subgraph-containment problems, we cleverly make subtle modifications to the input graphs, oftentimes combined with the aforementioned color-coding technique. For instance, we can insert layers into a layered graph to show that one can find short paths using an algorithm that finds longer paths. Similarly, we can merge the vertices $s$ and $t$ into a single vertex $s$ to convert an $st$-path containment problem into a cycle-containment problem through the vertex $s$, and similarly we can separate a vertex $s$ into two vertices $s$ and $t$ to prove a reduction in the reverse direction.
    
    All these conversions, however, are subject to subtle caveats. For instance, in a layered cycle graph with $k$ layers, a cycle of length $k$ is guaranteed to go through all layers whenever $k$ is odd, whereas when $k$ is even it can exist between two consecutive layers. Similarly, when we merge two vertices $s$ and $t$ into a single vertex $s$, a cycle through $s$ in the resulting graph does not necessarily imply that the original graph contains an $st$-path. We obtain our classification, i.e., \Cref{ref:classification}, by carefully checking which reductions can be made to work despite these caveats. Since we don't expect \Cref{fig:islands} to collapse to a single island, we don't expect these subtleties to be artifacts of the techniques used, but rather highlight the fundamental relations between these problems.

    The overview of reductions presented in \Cref{fig:islands} reveals some interesting relations. For instance, it shows that one can reduce the $\mathtt{DirCycle}_s^{\leq k}$-problem to the $\mathtt{Cycle}_s^{\leq k}$-problem, through a series of reductions that involve the $\mathtt{DirCycle}_s^{=k}$- and $\mathtt{Cycle}_s^{=k}$-problems. We remark here that making this connection without going through the intermediate steps is not immediate, since it's not clear how one can remove directions in a layered graph while preventing the introduction of new cycles. This highlights the benefit of this classification effort, as it uncovers relations between problems that would otherwise not be easily obtained.

    Next, the classification of path- and cycle-finding problems displayed in \Cref{fig:islands} motivates the search for new quantum algorithms and query lower bounds for problems in these equivalence classes specifically. For instance, the $\mathtt{DirPath}^{=k}$-problem is part of an equivalence class containing 12 other path- and cycle-containment problems, hence an improved algorithm for this problem immediately provides an algorithm for all the others as well. Moreover, this equivalence class includes all directed versions of the path problems, as well as the finding versions of the promise-path problems, i.e., exactly the problems for which generalizing Belovs's and Reichardt's span-program-based construction failed.
    
    Our main algorithmic result is a novel quantum-walk-based algorithm for this $\mathtt{DirPath}^{=k}$-problem, and hence for all problems in the equivalence class highlighted in red in \Cref{fig:islands}. The best previously-known algorithm for any of the problems in this equivalence class is an algorithm that finds the shortest path between $s$ and $t$, if it exists, and thereby solves the $\mathtt{Path}_{s,t}^{\leq k}$-problem. This algorithm was originally developed by D\"urr et al.~\cite{durr2006graph}, making $\widetilde{O}(n^{3/2})$ queries, and the log-factors were subsequently removed by a series of works culminating in a paper by Lin and Lin~\cite[Theorem~13]{lin2015upper}. We note that Beigi and Taghavi also developed $O(n^{3/2})$-query algorithms for the $\mathtt{Cycle}_s^{=k}$ and $\mathtt{Cycle}_s^{\leq k}$-problems~\cite[Proposition~9.iv-v]{beigi2020quantum}. Our result provides an improved algorithm for all these problems.
    
    %, and the query complexity is also shown in \Cref{fig:islands}.\Arjan{I don't see how D\"urr et al.'s results hold in the $=k$-setting? As far as I understand, they just give algorithms to decide connectivity in general, but then this doesn't imply anything immediately about the length of the path. Am I missing something here?} \Arjan{D\"urr et al.\ solves the $\mathtt{Path}^{\leq k}_{s,t}$-problem in $O(n^{3/2})$, so we can claim that the best-known algorithm for any of the problems in the red equivalence class was this one.}

    \begin{main-result}[\Cref{cor:dir-path-detection}]
        \label{res:quantum_algo}
        Let $5 \leq k \in \N$. Then, $\mathsf{Q}(\mathtt{DirPath}^{=k}) \in \widetilde{O}(n^{3/2 - \alpha_k})$, where $\alpha_k \in \Theta(c^{-k})$ as $k \to \infty$, with $c = \sqrt{3+\sqrt{17}}/2 \approx 1.33$.
    \end{main-result}

    The main algorithmic idea is to reduce the directed-path-detection problem of length $k$ to the same problem with length $k-2$. To that end, we employ the MNRS-framework~\cite{magniez2007search} to search for vertices that have at least one incoming edge, and those that have at least one outgoing edge. Then, it remains to find a pair of such vertices that has a path of length $k-2$ between them, which we then solve recursively. This results in a nested quantum-walk construction, where we optimize the parameters of the quantum-walk framework to minimize the resulting query complexity. The resulting minimization problem involves solving a recurrence relation, whose solution yields an exponent that satisfies $3/2-\Theta(c^{-k})$ for large values of $k$.
    
    We additionally provide an improved algorithm for the $\mathtt{Cycle}^{\leq k}$-problem, which is also displayed in \Cref{fig:islands}. A slight modification of Childs and Kothari~\cite{childs2012quantum} yields the following result.

    \begin{observation}[\Cref{thm:cycles<=k}]
        Let $5 \leq k \in \N$ be an odd integer. Then, $\mathsf{Q}(\mathtt{Cycle}^{\leq k}) \in \widetilde{O}(n^{\frac32 - \frac{k-5}{(k-1)(k+3)}})$.
    \end{observation}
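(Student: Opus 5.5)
Because $k$ is odd, the plan is to reduce $\mathtt{Cycle}^{\leq k}$ to detecting even cycles of length at most $k-1$, together with a direct check for the single odd length $k$, and to handle the even part with the Childs--Kothari machinery. The starting observation is the one behind \cite[Theorem~4.11]{childs2012quantum}: every graph with $m \geq c\, n^{1+1/\ell}$ edges contains a cycle of length $2\ell$ for a suitable constant $c$ (Bondy--Simonovits). For the $\leq$ version we only need \emph{some} cycle of length at most $k$, and the Moore-type bound gives something stronger: a graph with no cycle of length at most $2\ell+1$ has $O(n^{1+1/\ell})$ edges. So set $\ell = (k-1)/2$.

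\textbf{Dense case.} If $m \geq C n^{1+1/\ell}$, output ``yes''. This costs no queries beyond estimating $m$, which takes $\widetilde{O}(\sqrt{n^2/m}\cdot n)$ queries by quantum counting and is negligible in this regime.

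\textbf{Sparse case.} Now $m \in O(n^{1+2/(k-1)})$. We first learn the edges of a carefully chosen part of the graph and then search for the remaining structure. Following Childs--Kothari's even-cycle algorithm, I would do the following.

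\begin{itemize}
\item Use Grover search to find all $\Theta(m)$ edges. This costs $\widetilde{O}(\sqrt{n^2 m}) = \widetilde{O}(n^{3/2+1/(k-1)})$ queries, which is too much, so it cannot be done wholesale.
\item Instead, pick a degree threshold $d$. For a candidate vertex, learn its neighborhood up to radius roughly $\ell$ within the low-degree part. This costs about $\sqrt{n d}$ queries per vertex, and the learned ball has size about $d^{\ell}$.
\item Close the cycle with a final quantum search over the remaining unknown vertex pairs.
\item Wrap everything in amplitude amplification over the starting vertex, which contributes an extra factor $\sqrt{n}$.
\item Handle high-degree vertices, of which there are at most $2m/d$, separately by an analogous search whose cost is governed by their number.
\end{itemize}

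Balancing the costs of these pieces against $m \leq n^{1+2/(k-1)}$ should give the claimed exponent.

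The point of restricting to odd $k$ is that any cycle of length $k$ or less can be closed either as an even cycle of length at most $k-1$ or as a $k$-cycle. Both are captured by an exploration of radius $(k-1)/2$ from a start vertex followed by a single edge query between two frontier vertices. That last step is a graph-collision/element-distinctness style search over the ball, which the MNRS framework handles.

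The main obstacle is the parameter optimization in the sparse case. Matching the exact exponent $\frac{k-5}{(k-1)(k+3)}$ requires choosing the threshold $d$ and the exploration radius jointly, and the $k+3$ in the denominator suggests that the balancing equations are those of Childs--Kothari with $\ell$ replaced by $(k-1)/2$, but with the sparsity bound loosened to account for the odd case. I would first write down the cost as a sum of the terms above, then set the dominant terms equal and solve for $d$ as a power of $n$. Where the exponents do not come out right, I would adjust the radius by one.
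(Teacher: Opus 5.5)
\textbf{There is a genuine gap: the sparse case, where the whole bound comes from, has no algorithm with a computed cost, and the one you sketch does not reach the bound.}

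Your dense case is fine. The Moore-type bound (no cycle of length at most $2\ell$ forces $m=O(n^{1+1/\ell})$) gives the same threshold $m=O(n^{1+2/(k-1)})$ that the paper gets from Bondy--Simonovits for $C_{2\ell}$. One small slip: approximate counting costs $\widetilde O(\sqrt{n^2/\overline m})$, not that times $n$.

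Take the average degree $d\approx n^{2/(k-1)}$ as your threshold. Learning a radius-$\ell$ ball then costs about $d^{\ell-1}\sqrt{nd}=n^{3/2-1/(k-1)}$. With the extra $\sqrt n$ from amplitude amplification over the start vertex, you are at $n^{2-1/(k-1)}$, far above $n^{3/2}$. Lowering $d$ enough to get below $n^{3/2}$ requires $d<n^{1/(k-2)}$, which is below the average degree. At that threshold every vertex can be high-degree, for example in a near-regular graph of degree $n^{2/(k-1)}$. The ``analogous search'' you leave unspecified would then be the entire problem. So this is not a matter of tuning $d$ or adjusting the radius by one; a different idea is needed.

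Separately, your claim that every cycle of length at most $k$ is either even of length at most $k-1$ or of length exactly $k$ is false. A triangle with $k\ge 5$ is a counterexample, so the reduction in your first sentence is invalid as stated.

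The missing ingredient is Childs--Kothari's sparse-graph subgraph theorem (\Cref{thm:vertex_cover}, specialized in \Cref{cor:sparse_cycle_alg}). It decides ``more than $\overline m$ edges, or contains $C_j$'' with $\widetilde O(\sqrt{\overline m}\,n^{1-1/(\lceil j/2\rceil+1)})$ queries. It is not a BFS. It is a Johnson-graph quantum walk that stores sets of vertices, grouped by degree class and one set per vertex of a minimum vertex cover of $C_j$, together with their full neighbourhoods.

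The paper's proof has exactly this shape. It runs the density check, then a sparse-graph cycle detector for every $j\in\{3,\dots,k\}$, so all lengths including short odd ones are covered. The $j=k$ run dominates. With $\overline m=\Theta(n^{1+2/(k-1)})$, the unmodified Childs--Kothari detector already gives exponent
\[
\tfrac12+\tfrac1{k-1}+1-\tfrac{2}{k+3}=\tfrac32-\tfrac{k-5}{(k-1)(k+3)}.
\]
The $k-1$ comes from the density threshold, and the $k+3$ from $\mathrm{vc}(C_k)+1=\tfrac{k+3}{2}$, not from a loosened sparsity bound.

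The paper actually uses its modified walk (\Cref{thm:sparse_cycle}). That walk stores only $\lfloor k/2\rfloor$ vertex sets and closes the cycle with an $O(n)$ Grover check, which is affordable because $\overline m\in\Omega(n^{1+1/(\ell+1)})$. It obtains the stronger exponent $\tfrac{k-3}{(k-1)(k+1)}$, which implies the stated bound.
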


    Finally, we focus on the lower bounds. Our main result in this direction is relating the cycle-finding problem through a particular vertex $s$ to the graph-collision problem. In particular, through the reductions highlighted in \Cref{fig:islands}, we emphasize that this result also provides a conditional lower bound on the $\mathtt{DirPath}^{=k}$-problem, and consequently on all path-containment problems that are in the same equivalence class.

    \begin{main-result}[\Cref{prop:cycle-s-lb}]
        Let $5 \leq k \in \N$ be an odd integer. Then, $\mathsf{Q}(\mathtt{Cycle}_s^{\leq k}) \in \Omega(\sqrt{n} \cdot \mathsf{Q}(\mathtt{GC}_n))$.
    \end{main-result}

    Even though the resulting lower bound is identical to the lower bound derived by Balodis and Iraids for the triangle-detection problem, the underlying construction is profoundly different. Whereas Balodis and Iraids embed $n$ instances of the graph-collision problem into a single instance of triangle-detection, we embed a single instance of the graph-collision problem composed with the OR-function into a single instance of the $5$-cycle-detection problem through a fixed vertex $s$. In other words, Balodis and Iraids prove $\mathsf{Q}(\mathtt{Cycle}^{=3}) \in \Omega(\mathsf{Q}(\mathtt{OR}_n \circ \mathtt{GC}_n))$, whereas we prove that $\mathsf{Q}(\mathtt{Cycle}^{=5}_s) \in \Omega(\mathsf{Q}(\mathtt{GC}_n \circ \mathtt{OR}_n))$, where $\mathtt{OR}_n$ is the boolean functions encoding the OR-function on $n$ bits.

    % Notice that in \Cref{fig:islands}, $\mathtt{Cycle}_s^{\leq k}$ reduces to $\mathtt{DirPath}^{=k-2}$, which means that the above conditional lower bound also holds for the $\mathtt{DirPath}^{=k-2}$ problem. In particular, this result and the formal version of \Cref{res:quantum_algo} (see \Cref{cor:dir-path-detection} and \Cref{tbl:dir-path-detection}) combined with the assumption that $\mathsf{Q}(\mathtt{GC}_n) \in \tilde{\Theta}(n^{2/3})$ implies that the quantum query complexity of $\mathtt{DirPath}^{=3}$ is $\tilde{\Theta}(n^{7/6})$.

    Finally, we remark that combining all our results provides conditional separations between the $\mathtt{Path}^{=k}$ and $\mathtt{DirPath}^{=k}$ problems, the $\mathtt{PromDirPath}_{s,t}^{=k}$ and $\mathtt{PromPath}_{s,t}^{=k}$ problems, and the $\mathtt{PromFindPath}_{s,t}^{=k}$ and $\mathtt{PromPath}_{s,t}^{=k}$ problems. Indeed, any non-trivial lower bound on the graph-collision problem implies a query separation between each pair of these problems. This in particular suggests that Belovs's and Reichardt's span-program-based approach can likely not be generalized to either the directed or finding settings, and, as such, that the undirected promise problems are indeed likely to be much easier than their directed and non-promise counterparts.

    \subsection{Organization}

    The document is structured as follows. We provide the preliminaries in \Cref{sec:preliminaries}, where we fix notation, and provide several results from prior work that we will use throughout the text. Then, we dedicate one section to each main result separately. That is, in \Cref{sec:reductions}, we provide all the randomized reductions between the path- and cycle-containment problems, in \Cref{sec:dir-path-algo}, we develop the novel quantum algorithm for the $\mathtt{DirPath}^{=k}$-problem, and \Cref{sec:cycle-leq-k-algo}, we develop an improved algorithm for the $\mathtt{Cycle}^{\leq k}$-problem. Finally, in \Cref{sec:lower-bounds}, we provide the novel lower-bound proofs based on the graph-collision problem.

    \section{Preliminaries}
    \label{sec:preliminaries}

    \subsection{Notation}

    Let $\N = \{1, 2, \ldots\}$ be the set of all positive integers, and $\N_0 = \N \cup \{0\}$. For any integer $n \in \N$, we define $[n] \coloneqq \{1,\ldots, n\}$ and $[n]_0 \coloneqq \{0,1,\ldots, n\}$.
    
    Let $g : \R \supseteq D_g \to \R_{\geq 0}$. We define $O(g)$ to be the set of functions $f : D_g \supseteq D_f \to \mathbb{C}$ such that there exist $C,M > 0$ such that for all $x \in D_f$, $x > M \Rightarrow |f(x)| \leq C \cdot g(x)$. Similarly, we define $\widetilde{O}(g)$ to be the set of functions $f : D_g \supseteq D_f \to \mathbb{C}$ such that there exist $C,M,k > 0$ such that for all $x \in D_f$, $x > M \Rightarrow |f(x)| \leq C \cdot g(x) \cdot \log^k(g(x))$. We define $\Omega(g)$ and $\widetilde{\Omega}(g)$ analogously, but with the conditions being $|f(x)| \geq C \cdot g(x)$ and $|f(x)| \geq C \cdot g(x) / \log^k(g(x))$, respectively. Finally, we write $\Theta(g) := O(g) \cap \Omega(g)$, and $\widetilde{\Theta}(g) := \widetilde{O}(g) \cap \widetilde{\Omega}(g)$.

    We denote undirected graphs by $G = (V,E)$, where every edge $\{v,w\} = e \in E$ is a set of two vertices, with $v,w \in V$. Similarly, we denote directed graphs by $G = (V,A)$, where every arc $(v,w) = a \in A$ is an ordered pair of vertices, with $v,w \in V$. If $V' \subseteq V$, then $G - V'$ denotes the graph $G$ with all vertices from $V'$ and their adjacent edges removed. All (undirected/directed) graphs considered in this work are simple, i.e., there are no duplicate edges and arcs. However, in the directed setting, we do allow for both arcs in opposite directions to exist simultaneously between two vertices.

    We consider the adjacency-matrix model, where any algorithm can make queries to the entries of the adjacency matrix of an input graph $G$. Equivalently, for an undirected graph, the algorithm can supply a set of two vertices to an oracle, that outputs whether an edge exists between the two, and similarly in the directed setting the algorithm can ask whether an arc exists between two vertices from an ordered pair. In the randomized setting, the algorithm can employ randomness to select which vertices to query. In the quantum setting, the vertex pairs can be queried coherently in superposition, e.g., for an undirected graph $G = (V,E)$, the oracle behaves as a unitary $O_G$ that acts as
    \[O_G : \ket{v,w}\ket{b} \mapsto \ket{v,w}\ket{b \oplus [\{v,w\} \in E]}, \qquad \forall v,w \in V, b \in \{0,1\}.\]
    We refer the reader to a quantum computing textbook for more details on the quantum computational model, e.g.,~\cite{nielsen2010quantum}. The randomized and quantum query complexities of a given problem are the minimal numbers of queries required to solve the problem in these respective computational models.

    Finally, we use regular expressions to denote sets of problems. That end, for two expressions $\mathtt{A}$ and $\mathtt{B}$, we write $\mathtt{(A|B)}$ to denote the set of expressions $\{\mathtt{A},\mathtt{B}\}$. We use this in combination with concatenation, i.e., $\mathtt{A(B|C)}$ denotes the set $\{\mathtt{AB},\mathtt{AC}\}$. In particular, some of the expressions can be empty, i.e., $\mathtt{(|A)B}$ denotes the set $\{\mathtt{B},\mathtt{AB}\}$.

    \subsection{Subroutines}

    We begin with a quantum subroutine that approximates the hamming weight of a Boolean string with a an error parameter $\epsilon$.

    \begin{theorem}[{Approximate quantum counting~\cite[Theorem~15]{brassard00}}] \label{thm:approxcounting}
        Let $f:\{0,1\}^n \to \{0,1\}$ and $\epsilon \in (0, 1]$. Let $t = |f^{-1}(1)|$. Then, there is a quantum algorithm that outputs $\tilde{t}$ such that $|t-\tilde{t}| \leq \epsilon t$ with probability at least $1 - \delta$ using $O\left(\frac{1}{\epsilon} \sqrt{n/t} \log \frac{1}{\delta}\right)$ queries to $f$.   
    \end{theorem}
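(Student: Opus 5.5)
This is approximate counting, a standard result of Brassard, H{\o}yer, Mosca and Tapp. I would prove it by amplitude estimation followed by a median-of-means style boosting step.

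\textbf{Setup.} Let $A$ be the uniform superposition $\frac{1}{\sqrt n}\sum_x \ket{x}$, prepared without queries. Let $Q = -A S_0 A^{-1} S_f$ be the Grover iterate, which uses one query to $f$ through the phase oracle $S_f$. Write $a = t/n = \sin^2\theta$ with $\theta \in [0,\pi/2]$. The standard invariant two-dimensional subspace analysis shows that $A\ket{0}$ decomposes into eigenvectors of $Q$ with eigenvalues $e^{\pm 2i\theta}$.

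\textbf{Step 1: one run of phase estimation.} Run phase estimation on $Q$ with $M$ controlled applications. This uses $O(M)$ queries and outputs an estimate $\tilde\theta$ satisfying $|\tilde\theta - \theta| \le \pi/M$ with probability at least $8/\pi^2$. Set $\tilde t = n\sin^2\tilde\theta$. Using $|\sin^2 x - \sin^2 y| \le |\sin(x-y)|\,(|\sin(x-y)| + 2\sqrt{a})$, one gets
\[
|\tilde t - t| \le 2\pi\frac{\sqrt{t(n-t)}}{M} + \pi^2\frac{n}{M^2}.
\]

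\textbf{Step 2: choose $M$.} We want the right-hand side to be at most $\epsilon t$, which requires $M \gtrsim \frac{1}{\epsilon}\sqrt{n/t}$. The difficulty is that $t$ is unknown. I would handle this in two stages.

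\emph{Stage (i): coarse estimate.} Run phase estimation with exponentially increasing $M = 2^\ell$ until the estimate is nonzero. This yields an estimate $\hat t$ that is within a constant factor of $t$ using $O(\sqrt{n/t})$ queries. The point is that the estimate is typically zero until $M$ reaches order $\sqrt{n/t}$. If $t = 0$, the loop stops at $M \sim \sqrt n$ and outputs $0$; this is consistent with the statement, where the bound degenerates.

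\emph{Stage (ii): fine estimate.} Set $M = \lceil c\,\epsilon^{-1}\sqrt{n/\hat t}\,\rceil$ and run phase estimation once. The two error terms then become $O(\epsilon t)$ and $O(\epsilon^2 t)$. Choosing the constant $c$ appropriately makes the total error at most $\epsilon t$ with probability at least $8/\pi^2 > 1/2$.

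\textbf{Step 3: boost the success probability.} Repeat the whole procedure $O(\log(1/\delta))$ times and output the median of the estimates. A Chernoff bound on the number of runs whose estimate lies in $[t - \epsilon t,\, t + \epsilon t]$ shows the median lies in that interval with probability at least $1-\delta$. The total query cost is $O\bigl(\frac{1}{\epsilon}\sqrt{n/t}\log\frac1\delta\bigr)$.

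\textbf{Main obstacle.} The delicate step is Stage (i). The doubling search must stop at $M = \Theta(\sqrt{n/t})$ with constant probability, so that it neither stops too early with a poor estimate nor too late with too many queries. Its expected cost must also be bounded by a geometric sum, and the stopping time must be turned into a worst-case bound by truncation and Markov's inequality. I would use the same exponential-search analysis as in the Boyer--Brassard--H{\o}yer--Tapp analysis of Grover search with an unknown number of solutions.
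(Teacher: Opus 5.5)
The paper gives no proof of this statement; it quotes it from Brassard, H{\o}yer, Mosca and Tapp. Your outline is essentially their approximate-counting argument plus a median over $O(\log(1/\delta))$ repetitions for the $\delta$-dependence: amplitude estimation of $\sin^2\theta = t/n$, a doubling loop over $M$ until a nonzero outcome, then one final run at $M \approx \epsilon^{-1}\sqrt{n/t}$. (You correctly read the domain as having $n$ elements; taking $f:\{0,1\}^n\to\{0,1\}$ literally, the bound would involve $2^n$.) Two steps need repair, though neither changes the approach.

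\textbf{Setting the final $M$ from $\hat t$.} In Stage (ii) you choose $M$ from the coarse estimate $\hat t$, which requires $\hat t \le C t$. The cost side is fine. With the usual discretization $\tilde\theta=\pi y/M$, a nonzero outcome at $M=2^\ell$ forces $\hat t \ge n\sin^2(\pi/2^\ell)\ge 4n/4^\ell$, so $\sqrt{n/\hat t}\le 2^{\ell-1}$. The upper bound $\hat t \le Ct$, however, is not supplied by the Boyer--Brassard--H{\o}yer--Tapp stopping-time analysis you invoke. That analysis controls \emph{where} the loop stops, not the value of the nonzero outcome. At $M\theta=\Theta(1)$ that outcome is heavy-tailed: phase estimation only guarantees $\Pr[|\tilde\theta-\theta|>k\pi/M]\le 1/(2(k-1))$. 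You can close this by summing that tail bound over the levels beyond the scale $1/\theta$ and taking $C$ large.

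The original proof sidesteps the issue by taking the final $M$ to be a constant multiple of $2^\ell/\epsilon$, using the stopping level itself. It then needs only the one-sided fact that the loop is unlikely to stop before $2^\ell$ reaches a constant times $1/\theta$. This gives $M\gtrsim 1/(\epsilon\theta)\gtrsim \epsilon^{-1}\sqrt{n/t}$, since $\theta\le\tfrac{\pi}{2}\sin\theta$.

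\textbf{Worst-case query count.} You cannot truncate at $C\sqrt{n/t}$ because $t$ is unknown, and Markov's inequality alone gives only constant confidence. The cited theorem actually bounds the \emph{expected} number of queries. For the with-probability-$1-\delta$ form, use the following fact. Once $2^\ell\sin\theta\ge 2$, each further level outputs $0$ with probability at most $(2^\ell\sin\theta)^{-2}\le 1/4$. So each run's cost, normalized by $\epsilon^{-1}\sqrt{n/t}$, has super-geometric tails. A Chernoff-type bound then shows the total over $O(\log(1/\delta))$ independent runs is $O(\epsilon^{-1}\sqrt{n/t}\log(1/\delta))$ except with probability $\delta$.

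\textbf{Minor slip.} Your inequality with $2\sqrt a$ yields $2\pi\sqrt{tn}/M$, not $2\pi\sqrt{t(n-t)}/M$; the latter needs $2\sqrt{a(1-a)}$. The weaker bound still suffices.
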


    We will mainly need to use the approximate counting algorithm for approximating the number of edges in a graph and approximating the degree of a vertex. For that reason, we will need the following corollaries.

    \begin{corollary}[{\cite[Corollary~4.1]{childs2012quantum}}]
    \label{cor:approx_counting} 
        Let $A: \{0,1\}^{{n}\choose{2}} \to \{0,1\}$ be an adjacency matrix representation of an undirected graph $G$, $\epsilon \in (0, 1]$ be a constant and $\overline{m}$ be an integer. Then, there is a quantum algorithm \textsc{ApproxEdgeCount} that accepts $G$ if the number of edges $|E(G)| = |A|$ in $G$ are at least $(1+\epsilon) \overline{m}$ and rejects $G$ if $|E(G)|$ is at most $(1-\epsilon) \overline{m}$ with probability at least $1 - \delta$ using $O\left(\frac{1}{\epsilon} \sqrt{n^2/\overline{m}} \log \frac{1}{\delta}\right)$ queries to $A$.  
    \end{corollary}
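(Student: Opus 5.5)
The corollary follows from \Cref{thm:approxcounting} applied to the function $f$ given by the adjacency matrix $A$ itself. The input has $N = \binom{n}{2} \le n^2$ bits, and $t = |f^{-1}(1)| = |E(G)| = m$. The algorithm \textsc{ApproxEdgeCount} runs approximate counting with relative error $\epsilon' = \epsilon/3$ (any constant fraction of $\epsilon$ works). It obtains an estimate $\tilde m$ with $|m - \tilde m| \le \epsilon' m$, and it accepts iff $\tilde m \ge \overline{m}$.

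\textbf{Correctness.} Suppose $m \ge (1+\epsilon)\overline{m}$. Then
\[\tilde m \ge (1-\epsilon')m \ge (1-\epsilon/3)(1+\epsilon)\overline{m} \ge \overline{m},\]
since $(1-\epsilon/3)(1+\epsilon) = 1 + \tfrac{2\epsilon}{3} - \tfrac{\epsilon^2}{3} \ge 1$ for $\epsilon \in (0,1]$. Suppose instead $m \le (1-\epsilon)\overline{m}$. Then
\[\tilde m \le (1+\epsilon/3)(1-\epsilon)\overline{m} < \overline{m}.\]
The case $m = 0$ gives $\tilde m = 0 < \overline m$ as long as $\overline m \ge 1$.

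\textbf{Query cost: the main obstacle.} \Cref{thm:approxcounting} bounds the cost by $O(\frac{1}{\epsilon}\sqrt{N/t}\log\frac1\delta)$, which depends on the unknown $t = m$. When $m$ is much smaller than $\overline m$, this could exceed the target bound. We handle this by running the counting procedure with a hard cutoff of $C\frac{1}{\epsilon}\sqrt{n^2/\overline{m}}\log\frac1\delta$ queries, for a suitable constant $C$. If the cutoff is hit, the algorithm rejects.

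\emph{Accepting case.} Here $m \ge \overline m$, so $\sqrt{N/m} \le \sqrt{n^2/\overline m}$. The procedure completes within the cutoff with the stated success probability, and the guarantee above applies.

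\emph{Rejecting case.} Either the procedure finishes and outputs $\tilde m < \overline m$, or it is cut off and we reject. Either way the answer is correct. The one subtlety is that the truncated run must not wrongly output a large estimate. The standard phase-estimation implementation of counting behaves monotonically in this respect: truncating it with fewer iterations only gives coarser precision. So with probability $1-\delta$ the estimate is at most $(1+\epsilon')m < \overline m$.

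Alternatively, one can use the version of \Cref{thm:approxcounting} parametrized by a lower bound on $t$: running with precision relative to $\overline m$ yields additive error $\epsilon'\overline m$ at cost $O(\frac1\epsilon\sqrt{N/\overline m}\log\frac1\delta)$. This additive error suffices for both cases above by the same calculations, and it cleanly gives the claimed query bound.

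Finally, since $\epsilon$ is constant, the choice $\epsilon' = \epsilon/3$ costs only a constant factor. The total is therefore $O(\frac1\epsilon\sqrt{n^2/\overline m}\log\frac1\delta)$ queries, with success probability at least $1-\delta$.
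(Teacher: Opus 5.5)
Your argument is correct and follows the intended route. The paper gives no proof of its own; it cites Childs and Kothari and presents the statement as a consequence of \Cref{thm:approxcounting}, obtained by thresholding the count estimate and controlling the $t$-dependent cost, which is what you do.

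One step in the rejecting case needs a different justification. A run that hits the cutoff produces no estimate, since you simply reject, so the appeal to ``monotonicity'' of phase estimation is not needed. As stated it is also unjustified. Running with fewer iterations $M$ gives only an additive error of order $\sqrt{Nm}/M + N/M^2$, not the relative bound $\tilde m \le (1+\epsilon')m$ you claim. The right argument is a coupling: the truncated procedure accepts only if the untruncated one finishes within the budget and outputs $\tilde m \ge \overline{m} > (1+\epsilon')m$, and that event has probability at most $\delta$.

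In your alternative, fixed-precision amplitude estimation with $M \in \Theta(\frac{1}{\epsilon}\sqrt{N/\overline{m}})$ gives error $O(\epsilon\max\{m,\overline{m}\})$, not a uniform $\epsilon'\overline{m}$. This still suffices in both cases.

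Finally, if you use the original form of the counting theorem, where the query bound holds only in expectation, the accepting case also needs a Markov bound before applying the cutoff. That extra failure probability is absorbed by the $\log\frac{1}{\delta}$ repetitions.
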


    \begin{corollary}
    \label{cor:approx_degree_counting} 
        Let $A: \{0,1\}^{{n}\choose{2}} \to \{0,1\}$ be an adjacency matrix representation of an undirected graph $G$, $v$ be a vertex in $G$, $\epsilon \in (0, 1]$ be a constant and $\overline{d}$ be an integer. Then, there is a quantum algorithm \textsc{ApproxDegreeCount} that accepts $(G,v)$ if the degree $\deg(v)$ of $v$ in $G$ are at least $(1+\epsilon) \overline{d}$ and rejects $G$ if $\deg(v)$ is at most $(1-\epsilon) \overline{d}$ with probability at least $1 - \delta$ using $O\left(\frac{1}{\epsilon} \sqrt{n/\overline{d}} \log \frac{1}{\delta}\right)$ queries to $A$.  
    \end{corollary}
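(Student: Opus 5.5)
This follows directly from \Cref{thm:approxcounting}, applied to the function that encodes the neighbourhood of $v$. Define $f_v : \{0,1\}^{n-1} \to \{0,1\}$, indexed by the vertices $w \neq v$, by $f_v(w) = [\{v,w\} \in E(G)]$. Then $|f_v^{-1}(1)| = \deg(v)$, and a single query to $A$ at the pair $\{v,w\}$ evaluates $f_v(w)$. Coherently, the oracle $O_G$ restricted to inputs $\ket{v,w}$ with $v$ fixed is exactly a quantum oracle for $f_v$. So any quantum algorithm making $q$ queries to $f_v$ becomes one making $q$ queries to $A$.

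The decision procedure is as follows.
\begin{enumerate}
\item Run approximate counting on $f_v$ with constant relative precision $\epsilon' = \epsilon/3$ to obtain an estimate $\tilde d$.
\item Accept if $\tilde d \geq \overline{d}$, and reject otherwise.
\end{enumerate}
Correctness in the two promised cases:
\begin{itemize}
\item If $\deg(v) \geq (1+\epsilon)\overline{d}$, then $\tilde d \geq (1-\epsilon')(1+\epsilon)\overline{d} \geq \overline{d}$ for $\epsilon \le 1$.
\item If $\deg(v) \leq (1-\epsilon)\overline{d}$, then $\tilde d \leq (1+\epsilon')(1-\epsilon)\overline{d} < \overline{d}$.
\end{itemize}
A quick check confirms that $\epsilon' = \epsilon/3$ suffices in both inequalities.

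The main obstacle is the query bound. \Cref{thm:approxcounting} costs $O(\frac{1}{\epsilon}\sqrt{n/t}\log\frac1\delta)$, which depends on the unknown $t = \deg(v)$ and blows up when $t$ is small, for instance in the reject case or when $t = 0$. I would handle this exactly as in the proof of \Cref{cor:approx_counting} in \cite{childs2012quantum}: truncate the counting procedure after $O(\frac1\epsilon\sqrt{n/\overline{d}}\log\frac1\delta)$ queries and reject if it has not halted.

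This truncation is correct in both cases:
\begin{itemize}
\item In the accept case, $t \geq \overline{d}$, so the untruncated algorithm already finishes within this budget with probability at least $1-\delta$.
\item In the reject case, either the run is cut off, giving a rejection, or it halts with an estimate, and I must argue that this estimate is still below $\overline{d}$.
\end{itemize}
For the second case, the cleaner route is to work directly with the amplitude-estimation form underlying \cite{brassard00}. With $M = \Theta(\frac1\epsilon\sqrt{n/\overline{d}})$ Grover iterations, amplitude estimation returns $\tilde t$ with additive error $O(\sqrt{tn}/M + n/M^2)$. For every $t \leq 2\overline{d}$ this error is at most $\epsilon'\overline{d}$, and it is monotone enough in the region $t \ge 2\overline d$ to keep the accept decision correct. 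A constant success probability is then boosted to $1-\delta$ by a median of $O(\log\frac1\delta)$ repetitions.

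This gives the claimed $O(\frac1\epsilon\sqrt{n/\overline{d}}\log\frac1\delta)$ queries to $A$. The argument is verbatim the one for edge counting, with $\binom n2$ replaced by $n-1$.
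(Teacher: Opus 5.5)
Your proposal is correct and matches the paper's implicit argument: the paper gives no separate proof, treating this statement as the direct analogue of \Cref{cor:approx_counting}, obtained by applying \Cref{thm:approxcounting} to the indicator function of $v$'s neighbourhood, exactly as you do. Your handling of the unknown $t=\deg(v)$ via fixed-iteration amplitude estimation is fine. The reject case of the plain truncation route is in fact already immediate: cutting runs off only converts outputs into rejections, while the untruncated estimate is accurate with probability at least $1-\delta$.
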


    We will also need a subroutine that retrieves all the marked elements of a function.

    \begin{lemma}[{\cite[Lemma~4.1]{childs2012quantum}}] \label{lem:find_all_marked_elements}
        Let $f:[n] \mapsto \{0,1\}$ and $t = |f^{-1}(1)|$. Then, there is a quantum algorithm that computes $f^{-1}(1)$ using $O(\sqrt{n(t+1)})$ queries to $f$.
    \end{lemma}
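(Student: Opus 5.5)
The plan is to find the marked elements one at a time by repeated Grover search. After each success we remove the found element from the search space, so that when $r$ marked elements remain, one more of them can be found with $O(\sqrt{n/r})$ queries. Summing over $r = t, t-1, \dots, 1$ gives
\[
\sum_{r=1}^{t} \sqrt{n/r} \in O(\sqrt{nt}).
\]
A final emptiness check with $O(\sqrt{n})$ queries then brings the total to $O(\sqrt{n(t+1)})$. The whole difficulty lies in doing this without knowing $r$ in advance and without a logarithmic overhead from error reduction.

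\textbf{Step 1: estimate $t$.} We run approximate counting (\Cref{thm:approxcounting}) with a small constant $\epsilon$ and constant failure probability. This costs $O(\sqrt{n/t})$ queries if $t \geq 1$. To cover $t = 0$ within budget, we first run a single Grover detection with $O(\sqrt{n})$ queries. The outcome is an estimate $\tilde t$ with $t \in [\tilde t/(1+\epsilon), \tilde t/(1-\epsilon)]$. The cost of this step is within $O(\sqrt{n(t+1)})$.

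\textbf{Step 2: peel off marked elements.} We use the Boyer--Brassard--H{\o}yer--Tapp variant of Grover search with an unknown number of solutions. Its output is verified with one query, so it never returns an unmarked element. When $r \geq 1$ marked elements remain, its expected cost is $O(\sqrt{n/r})$. Iterating it and excluding previously found elements, the expected total cost to collect all $t$ elements is $O(\sqrt{nt})$. By Markov's inequality, if we abort once a constant multiple of this expected budget is exceeded, we still succeed with high constant probability. The required budget is $C\sqrt{n\tilde t}$, which is computable from Step 1.

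\textbf{Step 3: decide when to stop.} This is the step I expect to be the main obstacle, because a search call never terminates on its own when nothing is left. I would first use the estimate $\tilde t$ and stop searching after the $m$-th success once $m \geq \tilde t/(1-\epsilon)$, since then certainly $m \geq t$. In the window $m \in [\tilde t/(1+\epsilon), \tilde t/(1-\epsilon)]$, at most $O(\epsilon t)+1$ further search attempts are ambiguous. For each of these we run a Grover search capped at $O(\sqrt{n/(\epsilon t)})$ queries, or at $O(\sqrt{n})$ once $\epsilon t < 1$. A failed attempt in this window is treated as proof that no marked elements remain.

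These ambiguous attempts cost $O(\epsilon t \cdot \sqrt{n/(\epsilon t)}) = O(\sqrt{\epsilon n t})$ queries in total. One caveat remains: when very few marked elements are left but the cap is tuned for $\epsilon t$ of them, an attempt may fail wrongly. To rule this out, we end with a single $O(\sqrt{n})$-query Grover check for any remaining marked element, amplified to a small constant error. If the check finds one, we add it and repeat the check, which happens only $O(1)$ times in expectation. Choosing the constants in Steps 1--3 so that the overall error is at most $1/3$ completes the proof with total cost $O(\sqrt{n(t+1)})$.
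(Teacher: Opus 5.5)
\textbf{Gap in Step 3.} Steps 1 and 2 are fine. The gap is the claim in Step 3 that the final $O(\sqrt n)$-query check is repeated only $O(1)$ times in expectation; for constant $\epsilon$ this is false. You noticed that a capped window attempt can fail wrongly, but your repair assumes that only $O(1)$ marked elements remain after such a failure.

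If $r \geq 1$ marked elements remain, Grover with at most $T = c\sqrt{n/(\epsilon t)}$ iterations succeeds with probability at most $((2T+1)\theta)^2 = O(c^2 r/(\epsilon t))$, where $\sin^2\theta = r/n$. The guarantee $|t-\tilde t| \leq \epsilon t$ from Step 1 lets $r = t - \lceil \tilde t/(1+\epsilon)\rceil$ take any value between $0$ and roughly $2\epsilon t$ at the start of the window. For example, take $r = \epsilon t/(100c^2)$. Then the very first window attempt fails with probability close to $1$, while $\Theta(\epsilon t)$ marked elements are still unfound.

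Your cleanup loop recovers these one per check, at $\Theta(\sqrt n)$ queries each. That is $\Theta(\epsilon t\sqrt n)$ queries in total, which is not $O(\sqrt{n(t+1)})$ once $t \gg 1/\epsilon^2$; for $t = \Theta(n)$ it gives $\Theta(n^{3/2})$ instead of $O(n)$. Correctness also breaks. Over $\Theta(\epsilon t)$ checks, each with constant error, the probability that none wrongly reports ``nothing left'' tends to $0$. Avoiding this requires amplifying each check to error $O(1/(\epsilon t))$, which costs a logarithmic factor.

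\textbf{How to fix it.} Remove the ambiguity rather than cleaning up after it: learn $t$ exactly.
\begin{enumerate}
\item Take the constant-factor estimate from Step 1 with, say, $\epsilon = 1/2$, so that $t/2 \leq \tilde t \leq 3t/2$.
\item Run \Cref{thm:approxcounting} once more with precision $\epsilon' = 1/(5\tilde t)$. Then $\epsilon' t \leq 2/5 < 1/2$, so rounding the new estimate gives $t$ exactly. The cost is $O(\tilde t\sqrt{n/t}) = O(\sqrt{nt})$.
\end{enumerate}
This is exact quantum counting in the sense of Brassard, H{\o}yer, Mosca and Tapp, and it is the standard route to this lemma.

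With $t$ known, Step 2 alone suffices: stop after exactly $t$ successful searches. The expected cost is $\sum_{r=1}^{t} O(\sqrt{n/r}) = O(\sqrt{nt})$, and Markov's inequality with the now-known budget $C\sqrt{nt}$ turns this into a worst-case bound. Together with the $O(\sqrt n)$ emptiness test for $t=0$, this gives $O(\sqrt{n(t+1)})$, and Step 3 is no longer needed.
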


    The following lemma relates the query complexity of problems that can be related by a randomized reduction with one-sided error.

    \begin{lemma} \label{lem:1-sided_error}
        Let $\mathtt{P}$ and $\mathtt{P}'$ be arbitrary decision problems. Let $G$ be an instance of $\mathtt{P}$ and let $G'$ be obtained from $G$ via an efficient randomized reduction, i.e., each query to $G'$ can be made using $O(1)$ queries to $G$. Suppose that if $G$ is a \textsc{yes} instance of $\mathtt{P}$, then $G'$ is a \textsc{yes} instance of $\mathtt{P}'$ with probability at least $p \in \Omega(1)$, and if $G$ is a \textsc{no} instance of $\mathtt{P}$, then $G'$ is a \textsc{no} instance of $\mathtt{P}'$ with probability $1$. Then, for $\mathsf{M} \in \{\mathsf{R},\mathsf{Q}\}$, we have $\mathsf{M}(\mathtt{P}) \in O(\mathsf{M}(\mathtt{P}'))$.
    \end{lemma}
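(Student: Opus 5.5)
The plan is to take any algorithm $\mathcal{A}'$ for $\mathtt{P}'$ and build an algorithm for $\mathtt{P}$ from it, arguing with the same template for $\mathsf{M} = \mathsf{R}$ and $\mathsf{M} = \mathsf{Q}$. Let $\mathcal{A}'$ have error at most $1/3$ and use $T = \mathsf{M}(\mathtt{P}')$ queries. First amplify it to error at most $\epsilon$, a small constant chosen in terms of $p$, by running it $O(\log(1/\epsilon))$ times and taking the majority vote. Since $p \in \Omega(1)$, we get $\epsilon \in \Omega(1)$, so this costs $O(T)$ queries.

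One round of the algorithm for $\mathtt{P}$ on input $G$ goes as follows.
\begin{enumerate}
\item Sample the randomness $r$ of the reduction classically. This fixes an instance $G'_r$ whose queries can each be simulated with $O(1)$ queries to $G$.
\item Run the amplified $\mathcal{A}'$ on $G'_r$.
\item Output its answer.
\end{enumerate}
In the quantum case, each query to $G'_r$ is implemented by a unitary built from $O(1)$ queries to $O_G$, since $r$ is a fixed classical string. Any garbage registers are uncomputed with another $O(1)$ queries, so the query unitary for $G'_r$ is simulated exactly. One round therefore costs $O(T)$ queries.

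Next, analyse the error of one round.
\begin{itemize}
\item If $G$ is a \textsc{no} instance, then $G'_r$ is a \textsc{no} instance for every $r$, so the round accepts with probability at most $\epsilon$.
\item If $G$ is a \textsc{yes} instance, then with probability at least $p$ over $r$ the instance $G'_r$ is a \textsc{yes} instance. In that case $\mathcal{A}'$ accepts with probability at least $1-\epsilon$, so the round accepts with probability at least $p(1-\epsilon)$.
\end{itemize}
Choose $\epsilon \le p/4$, which is a constant. Then the acceptance probabilities in the two cases are separated by a constant gap: at most $\epsilon$ versus at least $p(1-\epsilon) \ge 3p/4 - p^2/4 \ge p/2 \ge 2\epsilon$.

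Finally, repeat the round $N = O(1/p) = O(1)$ times, each time with fresh randomness $r$. Accept if the fraction of accepting rounds exceeds the threshold $(\epsilon + p/2)/2$. By a Chernoff bound, a constant $N$ suffices to push the total error below $1/3$. Alternatively, accept if any round accepts, with $N$ chosen so that $1-(1-p/2)^N \ge 2/3$ and $\epsilon$ small enough that $N\epsilon \le 1/3$. The total query count is $O(1)\cdot O(T) = O(\mathsf{M}(\mathtt{P}'))$.

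The only delicate point is that the quantum query simulation must be clean. The reduction's randomness must be sampled classically beforehand rather than held in superposition, and the simulation of $G'_r$ must not leave entangled garbage behind. Both are handled by fixing $r$ classically and uncomputing after each simulated query. The constants $\epsilon$ and $N$ depend only on $p$, which is why the hypothesis $p \in \Omega(1)$ is needed.
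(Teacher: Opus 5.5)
Your proposal is correct and follows essentially the same route as the paper. Both amplify the $\mathtt{P}'$-algorithm to error $O(p)$ by majority vote, run it on a reduced instance to get acceptance probability at least $p(1-\epsilon)$ versus at most $\epsilon$, and repeat $O(1/p)$ times, for $O(1)$ calls in total. Your remarks on sampling $r$ classically and uncomputing garbage in the quantum query simulation just make explicit what the paper leaves implicit.
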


    \begin{proof}
        Let $\mathcal{A}$ be the optimal (randomized/quantum) algorithm for the problem $\mathtt{P}'$. We first construct a new algorithm $\mathcal{B}$ with majority voting, where we run $\mathcal{A}$ on the same input a total of $\Theta(\log(1/p))$
        times, so that it decides between \textsc{yes} and \textsc{no} inputs with probability at least $1-p/4$.
        
        Now, we run the algorithm $\mathcal{B}$ on an instance $G'$ generated through the randomized reduction. Then, if $G$ is a positive input, this procedure will succeed with probability at least $p(1-p/4) > 3p/4$, whereas if $G$ is a negative input, this procedure will succeed with probability at most $p/4$. Thus, by running this procedure a total of $\Theta(1/p)$ times, we can distinguish between these two cases with high probability. As such, the total number of calls to $\mathcal{A}$ is $\Theta((1/p)\log(1/p))$, which is constant as $p \in \Omega(1)$. Thus, $\mathsf{M}(\mathtt{P}) \in O(\mathsf{M}(\mathtt{P}'))$, for $\mathsf{M} \in \{\mathsf{Q},\mathsf{R}\}$.
    \end{proof}

    %\todo{Add a little remark that the oracle query must also be efficient for this result to work -- but in the graph setting this is usually the case, as long as the existence of an edge in input $G'$ only depends on constantly many edges in $G$. In our reductions, we will express the new edge set using the original edge set such that checking if there is a specific edge in the new edge set will need querying $O(1)$ queries to the original edge set}

    \begin{remark}
        Note that, to be able to argue sub-quadratic quantum query bounds for graph problems on $n$ vertices through randomized reductions, these reductions must be query-efficient. In the reductions we use throughout this work, we express the edge set of the new graph in terms of the edge set of the original graph, such that checking a specific edge in the new graph requires $O(1)$ queries to the edge set of the original graph, unless explicitly stated otherwise.
    \end{remark}

    \subsection{Quantum walks}

    Throughout this paper, we will be making use of the MNRS-framework for quantum walks, developed by Magniez, Nayak, Rolland and Santha~\cite{magniez2007search}. We recall their result here.

    \begin{theorem}[{\cite[Theorem~3]{magniez2007search}}]
        \label{thm:mnrs}
        Let $P$ be a reversible ergodic Markov chain over a state space $V$, with spectral gap $\delta$. Let $M \subseteq V$ be a marked subset, such that if $M \neq \varnothing$, then $\P_{v \sim \pi}[v \in M] \geq \varepsilon > 0$, where $\pi$ is the stationary distribution of $P$. Let $\mathsf{S}$ be such that the number of queries required to prepare a q-sample of $\ket{\pi}$ up to norm-error at most $\delta$ is $\widetilde{O}(\mathsf{S}\mathrm{polylog}(1/\delta))$. Similarly, let $\mathsf{U}$ be such that the number of queries required to prepare a q-sample over the neighbors of a given vertex $v \in V$ with norm-error at most $\delta$ is $\widetilde{O}(\mathsf{U}\mathrm{polylog}(1/\delta))$. Finally, let $\mathsf{C}$ be the number of queries required to check whether a given state $v \in V$ is marked, with high probability. Then, we can decide whether $M$ is empty or non-empty with a total number of queries that satisfies
        \[\widetilde{O}\left(\mathsf{S} + \frac{1}{\sqrt{\varepsilon}}\left(\frac{1}{\sqrt{\delta}}\mathsf{U} + \mathsf{C}\right)\right).\]
    \end{theorem}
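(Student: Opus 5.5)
The plan is to follow the Magniez--Nayak--Roland--Santha construction: run a Grover-type search whose ``diffusion'' step is an approximate reflection about the stationary state, implemented through the Szegedy walk operator associated with $P$. We will work on the doubled register $\ket{v}\ket{w}$, $v,w \in V$, and write $\ket{p_v} = \sum_w \sqrt{P_{vw}}\ket{w}$ and $\ket{\pi} = \sum_v \sqrt{\pi_v}\ket{v}\ket{p_v}$. The update cost $\mathsf{U}$ is exactly what is needed to prepare $\ket{p_v}$ from $\ket{v}\ket{0}$ up to small norm-error, and hence to implement the two reflections $\mathrm{ref}_A$, $\mathrm{ref}_B$ about $\mathrm{span}\{\ket{v}\ket{p_v}\}$ and about its swapped counterpart $\mathrm{span}\{\ket{p_w}\ket{w}\}$. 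The walk operator is then $W = \mathrm{ref}_B\,\mathrm{ref}_A$.

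First, I will invoke Szegedy's spectral lemma. By reversibility, the discriminant matrix $D_{vw} = \sqrt{P_{vw}P_{wv}}$ is symmetric with the same spectrum as $P$. On the span of the two subspaces, $W$ has eigenvalues $e^{\pm 2i\theta_j}$ with $\cos\theta_j = \lambda_j$. The state $\ket{\pi}$ is the unique $1$-eigenvector in this span, because ergodicity makes the top eigenvalue simple. A spectral gap $\delta$ yields a phase gap $\Delta \geq 2\theta_{\min} \geq 2\sqrt{2\delta} \in \Omega(\sqrt{\delta})$.

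Second, I will build the approximate reflection $R$ about $\ket{\pi}$. To do this, run phase estimation on $W$ with precision $\Theta(\sqrt{\delta})$, flip the sign whenever the estimated phase is non-zero, and uncompute. A single phase estimation has constant error on the non-$\pi$ eigenvectors. I will therefore run $s = O(\log(1/\varepsilon))$ independent copies and flip only when any copy reports a non-zero phase; this leaves $\ket{\pi}$ nearly invariant and acts as $-I$ up to error $2^{-\Omega(s)}$ on its orthogonal complement within the walk space. The cost of $R$ is $\widetilde{O}(\mathsf{U}/\sqrt{\delta})$, where the log factor comes from $s$ and from the polylog precision needed in the $\mathsf{U}$ subroutine.

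Third, I will perform the search. Prepare $\ket{\pi}$ at cost $\mathsf{S}$, then alternate the checking reflection about the marked states (cost $\mathsf{C}$) with $R$. In the two-dimensional space spanned by the normalized marked projection of $\ket{\pi}$ and its complement, this is exactly amplitude amplification. The angle satisfies $\sin^2 \geq \varepsilon$, so $O(1/\sqrt{\varepsilon})$ iterations suffice, giving the claimed total of $\widetilde{O}(\mathsf{S} + \varepsilon^{-1/2}(\delta^{-1/2}\mathsf{U} + \mathsf{C}))$. Since only a lower bound $\varepsilon$ on the marked fraction is known, I will choose the iteration count uniformly at random from $[0, c/\sqrt{\varepsilon}]$, or use the standard exponential-guessing schedule, and then measure and verify with $\mathsf{C}$. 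If $M = \varnothing$, verification never succeeds, which gives one-sided error.

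The main obstacle is the error analysis of $R$. The approximate reflection fixes only $\ket{\pi}$ exactly and must act as $-I$ on the relevant components of the marked state, which is \emph{not} in the walk's eigenbasis in a controlled way. I will bound $\|(R + I)\ket{\phi}\| \leq 2^{1-s}$ for every $\ket{\phi}$ orthogonal to $\ket{\pi}$, using only the phase gap. Summing these errors over the $O(1/\sqrt{\varepsilon})$ iterations forces $s \in \Theta(\log(1/\varepsilon))$. The sampling and update errors $\mathrm{polylog}(1/\delta')$ must likewise be set to $\delta' = \mathrm{poly}(\varepsilon\delta)$ so that their accumulation stays below a constant. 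These choices are exactly what the $\widetilde{O}$ absorbs.
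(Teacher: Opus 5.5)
Your proposal is correct and is essentially the original Magniez--Nayak--Roland--Santha argument: the Szegedy walk with phase gap $\Omega(\sqrt{\delta})$, a phase-estimation-based approximate reflection about $\ket{\pi}$ built from $O(\log(1/\varepsilon))$ parallel copies, and amplitude amplification with a randomized iteration count; the paper gives no proof of its own and simply imports the theorem from that work. Two minor precisions: the bound $\Delta\geq 2\sqrt{2\delta}$ holds only when $\delta$ is the absolute gap $1-\max_{j\geq 2}|\lambda_j|$ (equivalently, parametrize by $\cos\theta_j=|\lambda_j|$), or after first making the chain lazy, since with $\cos\theta_j=\lambda_j$ an eigenvalue near $-1$ gives an eigenphase $2\theta_j$ near $2\pi$; and the estimate $\|(R+I)\ket{\phi}\|\leq 2^{1-s}$ holds only for $\ket{\phi}\in\mathcal{A}=\mathrm{span}\{\ket{v}\ket{p_v}\}$ orthogonal to $\ket{\pi}$ (on $\mathcal{A}^{\perp}\cap\mathcal{B}^{\perp}$ the walk acts as the identity), which suffices because the marked-state reflection preserves $\mathcal{A}$.
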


    On several occasions, we will recursively use this construction, which is known as a nested quantum walk. This has been considered ample times in previous works, for instance~\cite{childs2012quantum,jeffery2013nested}. These constructions typically suffer from polylogarithmic overhead, but recent work has shown that this can be avoided in certain settings~\cite{jeffery2022quantum,cornelissen2025quantum}. We leave using these more recent techniques to remove the polylogarithmic overhead from our algorithms for future work.

    \section{Reductions between subgraph containment problems}
    \label{sec:reductions}

    We start by formally introducing the different versions of the subgraph-containment problems we consider in this work.

    \begin{definition}[Subgraph containment]
        We define several computational problems.
        \begin{enumerate}
            \item Let $H = (V_H,E_H)$ be a known undirected graph, and suppose we have access to an undirected graph $G = (V,E)$ through adjacency-matrix queries. Let $m \in \N_0$, $v_1, \dots, v_m \in V_H$ and $s_1, \dots, s_m \in V$ all distinct. The \textit{subgraph containment problem} asks whether there exists an injection $\varphi : V_H \to V$, such that $\varphi : v_j \mapsto s_j$, for all $j \in [m]$, and such that the induced map $\varphi' : \binom{V_H}{2} \to \binom{V}{2}$ acting as $\{v,w\} \mapsto \{\varphi(v),\varphi(w)\}$ satisfies $\varphi'(E_H) \subseteq E$. The resulting boolean function, that evaluates to $1$ if and only if such an injection exists, is denoted by
            \[\mathtt{Subgraph}^{H,v_1, \dots, v_m}_{V,s_1, \dots, s_m} : \{0,1\}^{\binom{V}{2}} \to \{0,1\},\]
            and we use the following shorthand notation for the sequence of boolean functions
            \[\mathtt{Subgraph}^H_{v_1, \dots, v_m} := (\mathtt{Subgraph}^{H,v_1, \dots, v_m}_{[n],1,\dots, m})_{n=1}^{\infty}.\]
            Whenever $m = 0$, we refer to the subgraph containment problem as \textit{unrestricted}, in which case the boolean function is simply denoted by $\mathtt{Subgraph}^H_V$, and the sequence of boolean functions by $\mathtt{Subgraph}^H$. On the other hand, when $m \geq 1$, we say that the problem is \textit{restricted}.
            % \item The \textit{(unrestricted) subgraph containment problem} asks whether there exists an injective map $\varphi : V_H \to V$ such that the induced map $\varphi' : \binom{V_H}{2} \to \binom{V}{2}$ satisfies $\varphi'(E_H) \subseteq E$. We write $\mathtt{Subgraph}_V^H : \{0,1\}^{\binom{V}{2}} \to \{0,1\}$ for the resulting boolean function, and we let $\mathtt{Subgraph}^H := (\mathtt{Subgraph}_{[n]}^H)_{n=1}^{\infty}$.
            % \item In the \textit{restricted subgraph containment problem}, we additionally require that $\varphi : v_j \mapsto s_j$, for all $j \in [m]$. We denote the resulting boolean function as $\mathtt{Subgraph}^{H,(v_1, \dots, v_m)}_{V,(s_1, \dots, s_m)} : \{0,1\}^{\binom{V}{2}} \to \{0,1\}$, and we let $\mathtt{Subgraph}^H_{v_1, \dots, v_m} = (\mathtt{Subgraph}^{H,(v_1, \dots, v_m)}_{[n],(1,\dots,m)})_{n=1}^{\infty}$.
            \item We define the restricted and unrestricted subgraph containment problems analogously for directed graphs, i.e., if $H = (V_H,A_H)$ is a directed graph and we have adjacency-matrix-query access to a directed graph $G = (V,A)$, we define the boolean function $\mathtt{Subgraph}^H_V : \{0,1\}^{\{(v,w) : v,w \in V : v \neq w\}} \to \{0,1\}$, and similarly in the restricted setting.
            \item Finally, we define finding versions of these problems, which ask to output $0$ if the subgraph $H$ is not contained in $G$, and the injective map $\varphi$ otherwise. Formally, in the undirected unrestricted setting, this is a relation $\mathtt{FindSubgraph}_V^H \subseteq \{0,1\}^{\binom{V}{2}} \times (\{0\} \cup \{\varphi : V_H \to V \text{ injective}\})$, and similarly for the restricted and directed settings.
        \end{enumerate}
    \end{definition}

    We can now phrase many graph problems formally as subgraph containment problems. For instance, if we let $H$ be an undirected cycle on three vertices, then the sequence of boolean functions $\mathtt{Subgraph}^H$ represents the triangle finding problem, where we have the bounds $\mathsf{Q}(\mathtt{Subgraph}^H) \in O(n^{5/4}) \cap \Omega(n)$~\cite{le2014improved,carette2020extended}.

    In this work, we restrict our attention to subgraph-containment problems where the subgraph is of constant size. In this setting, we can prove several elementary structural properties of these problems.

    \begin{lemma}
        \label{lem:subgraph-wlog-connected}
        Let $H$ be a graph of constant size, $m \in \N_0$ and $v_1, \dots, v_m \in V_H$. Suppose that $H_1, \dots, H_{\ell}$ are the connected components of $H$, and we write $i^{\ell'}_1, \dots, i^{\ell'}_{m_{\ell'}} \in [m]$ such that $v_{i^{\ell'}_1}, \dots, v_{i^{\ell'}_{m_{\ell'}}}$ are the vertices from $v_1, \dots, v_m$ that are contained in $H_{\ell'}$, for all $\ell' \in [\ell]$. Then, for $\mathsf{M} \in \{\mathsf{Q},\mathsf{R}\}$, we have
        \[\mathsf{M}\left(\mathtt{Subgraph}^H_{v_1, \dots, v_m}\right) \in \Theta\left(\max_{\ell' \in [\ell]}\left\{\mathsf{M}\left(\mathtt{Subgraph}^{H_{\ell'}}_{v_{i^{\ell'}_1}, \dots, v_{i^{\ell'}_{m_{\ell'}}}}\right)\right\}\right).\]
    \end{lemma}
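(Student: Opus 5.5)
The plan is to prove the two directions of the $\Theta$ separately. The upper bound uses a color-coding reduction to the individual components. The lower bound embeds an instance for a single component into an instance for the whole of $H$ by planting the remaining components on fresh vertices.

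\emph{Upper bound.} The only obstacle to simply running the algorithms for the components $H_1,\dots,H_\ell$ is that an embedding of $H$ must map distinct components to \emph{disjoint} vertex sets. To enforce this, I would color every vertex of $G$ independently and uniformly with a color in $[\ell]$. The fixed vertex $s_{i^{\ell'}_j}$ is forced to receive color $\ell'$. Let $G_{\ell'}$ be the subgraph induced on the vertices of color $\ell'$, padded with isolated vertices back to $n$ vertices, with the fixed vertices relabelled to the positions $1,\dots,m_{\ell'}$. A query to $G_{\ell'}$ costs at most one query to $G$.
\begin{itemize}
\item If $G$ contains $H$, then with probability at least $\ell^{-|V_H|}\in\Omega(1)$ every non-fixed vertex of the embedding receives the color of its component. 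In that event, every $G_{\ell'}$ is a yes-instance of the restricted problem for $H_{\ell'}$.
\item Conversely, if all $G_{\ell'}$ are yes-instances, the embeddings live on disjoint color classes. Their union is therefore an embedding of $H$ into $G$, so the error is one-sided.
\end{itemize}
I would then run the optimal algorithm for each $H_{\ell'}$ on $G_{\ell'}$, amplified to error $1/(10\ell)$ with $O(\log \ell)=O(1)$ repetitions, and accept iff all of them accept. Repeating the coloring $O(\ell^{|V_H|})=O(1)$ times and taking the OR gives cost $O(\sum_{\ell'}\mathsf{M}(\cdot))=O(\max_{\ell'}\mathsf{M}(\cdot))$, since $\ell\in O(1)$. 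This is essentially the argument of \Cref{lem:1-sided_error}, with an AND of constantly many subproblems in place of a single one.

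\emph{Lower bound.} Fix $\ell'$ and take an instance $G$ for $H_{\ell'}$ with its fixed vertices. I would form $G'=G\sqcup P$, where $P$ is a planted copy of $H-V_{H_{\ell'}}$ on $|V_H|-|V_{H_{\ell'}}|=O(1)$ fresh vertices. The fixed vertices of the other components are placed at their copies inside $P$. Queries to $G'$ either are answered by one query to $G$ or are known in advance, and $n$ only grows by a constant. If $G$ contains $H_{\ell'}$ (respecting its fixed vertices), then $G'$ contains $H$. The converse is the step I expect to be the main obstacle, because an embedding of $H$ into $G'$ could map $H_{\ell'}$ into $P$ and push some other component into $G$.

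My first attempt to handle this is a counting argument. Every component is connected, so each component lands entirely in $G$ or entirely in $P$. The set $S$ of components mapped into $P$ embeds injectively into $P\cong\bigsqcup_{j\neq\ell'}H_j$. Suppose we choose $\ell'$, among the components attaining the maximum, to be minimal with respect to a suitable order (e.g.\ by number of vertices, then edges, then containing the most fixed vertices). Then a vertex/edge count shows that the components mapped into $G$ must include one that contains a copy of $H_{\ell'}$ compatible with the fixed vertices; in the worst case this is $H_{\ell'}$ itself. Indeed, components containing fixed vertices are forced to their location, and an injective self-embedding of a finite graph is an isomorphism. Hence if everything in $P$ is used up exactly, the complement mapped into $G$ is isomorphic to $H_{\ell'}$. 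If this order-based choice does not directly yield $\mathsf{M}(\mathtt{Subgraph}^{H_{\ell'}})\in O(\mathsf{M}(\mathtt{Subgraph}^H))$ for the maximizing $\ell'$, I would fall back on the monotonicity that a component containing $H_{\ell'}$ is at least as hard to detect as $H_{\ell'}$ itself.
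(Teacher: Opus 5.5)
Your two constructions are the paper's. The upper bound splits the free vertices at random among the $\ell$ components and runs the component algorithms as an AND of $O(1)$ subproblems, amplified as in \Cref{lem:1-sided_error}. The lower bound pads an $H_{\ell'}$-instance with a planted copy of the remaining components on $O(1)$ fresh vertices. The paper simply asserts that the padded graph contains $H$ if and only if $G$ contains $H_{\ell'}$. You are right that the ``only if'' direction needs an argument, but your proposal does not actually supply one:
\begin{itemize}
\item the order-based choice of $\ell'$ is not what makes the count work;
\item the ``$P$ used up exactly'' remark covers only a special case;
\item the fallback, that a component containing $H_{\ell'}$ is at least as hard to detect as $H_{\ell'}$, is an unjustified claim about query complexity. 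It does not follow from any generic reduction, since you cannot add the missing edges of the larger pattern among vertices of $G$ without knowing where the smaller copy sits. Leaning on it would leave the lower bound open.
\end{itemize}

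The missing step is purely combinatorial and holds for every $\ell'$, so no special choice of $\ell'$ is needed. Let $\varphi$ embed $H$ into $G\sqcup P$ with $P=\bigsqcup_{j\neq\ell'}H_j^P$; each component lands entirely inside $G$ or inside a single planted copy $H_j^P$.
\begin{itemize}
\item If $H_{\ell'}$ contains a fixed vertex, connectivity forces $\varphi(H_{\ell'})\subseteq G$, and you are done.
\item Otherwise, let $U=\{i\in[\ell] : H_{\ell'} \text{ is isomorphic to a subgraph of } H_i\}$, which contains $\ell'$. If some $H_i$ with $i\in U$ lands in $G$, then $G$ contains a copy of $H_i$. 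By transitivity of containment, $G$ then contains $H_{\ell'}$, which is all the reduction needs; no complexity monotonicity is involved.
\item If instead every $H_i$ with $i\in U$ lands in $P$, each lies inside some $H_j^P$ with $j\neq\ell'$. Since $H_{\ell'}\subseteq H_i\subseteq H_j$ up to isomorphism, this forces $j\in U\setminus\{\ell'\}$. Injectivity of $\varphi$ then gives $\sum_{i\in U}|V(H_i)|\le\sum_{j\in U\setminus\{\ell'\}}|V(H_j)|$, i.e.\ $|V(H_{\ell'})|\le 0$, a contradiction.
\end{itemize}
Components carrying fixed vertices are pinned to their own planted copies, which is consistent with this count. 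With this in place, your proposal is a complete proof along the paper's lines.
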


    \begin{proof}
        For the upper bound, we randomly divide the free vertices $[n] \setminus \{1, \dots, m\}$ into $\ell$ groups $I_1, \dots, I_{\ell}$ of size $\Theta(n/\ell)$, and we define the subgraphs $G_{\ell'}$ as the input graph restricted to $I_{\ell'} \cup \{i^{\ell'}_1, \dots, i^{\ell'}_{m_{\ell'}}\}$, for all $\ell' \in [\ell]$. If $H$ was not present in $G$, then there must be an $\ell' \in [\ell]$ such that $H_{\ell'}$ is not present in $G_{\ell'}$. On the other hand, if $H$ was present in $G$, then with constant probability, $H_{\ell'}$ will be contained in $G_{\ell'}$ for all $\ell' \in [\ell]$. Thus, we now use \Cref{lem:1-sided_error} to argue that
        \[\mathsf{M}\left(\mathtt{Subgraph}^{H,v_1, \dots, v_m}_{[n],1,\dots,m}\right) \in O\left(\mathsf{M}\left(\bigwedge_{\ell'=1}^{\ell} \mathtt{Subgraph}^{H_{\ell'},v_{i_1^{\ell'}}, \dots, v_{i_{m_{\ell'}}^{\ell'}}}_{I_{\ell'} \cup \{i^{\ell'}_1, \dots, i^{\ell'}_{m_{\ell'}}\}, i_1^{\ell'}, \dots, i_{m_{\ell'}}^{\ell'}}\right)\right).\]
        Finally, we use that $\ell$ is constant, and $\mathsf{M}(\mathtt{P}_1 \land \mathtt{P}_2) \in \Theta(\max\{\mathsf{M}(\mathtt{P}_1), \mathsf{M}(\mathtt{P}_2)\})$, for total problems $\mathtt{P}_1$ and $\mathtt{P}_2$.

        For the lower bound, suppose that $\ell' \in [\ell]$ maximizes the expression on the right-hand side. Suppose we have an input on $n$ vertices for the $\mathtt{Subgraph}^{H_{\ell'}}_{[n],v_{i_1^{\ell'}}, \dots, v_{i_{m_{\ell'}}^{\ell'}}}$ problem. Then, we add a constant number of vertices $n_0 \in \Theta(1)$ containing the subgraphs $H_{\ell''}$ for all $\ell'' \in [\ell] \setminus \{\ell'\}$. Now, the new input contains $H$ if and only if the original input contains $H_{\ell'}$, and so $\mathsf{M}(\mathtt{Subgraph}^H_{[n+n_0],v_1,\dots,v_m}) \geq \mathsf{M}(\mathtt{Subgraph}_{[n],v_{i_1^{\ell'}}, \dots, v_{i_{m_{\ell'}}^{\ell'}}})$.
    \end{proof}

    The above lemma informs us that we can without loss of generality focus on the case where $H$ is connected. Next, we provide a full classification of these problems in the randomized setting, and a similar partial classification in the quantum setting. This lemma is folklore (see for instance \cite[Proposition~4]{belovs2012span} for the third item), but we provide a formal statement and proof here for completeness.

    \begin{lemma}
        \label{lem:subgraph-characterization}
        Let $H$ be a connected non-empty graph of constant size, and let $m \in \N_0$ with $v_1, \dots, v_m \in V_H$. Then, 
        \begin{enumerate}
            \item If $\{v_1, \dots, v_m\} = V_H$, then $\mathsf{Q}(\mathtt{Subgraph}^H_{v_1, \dots, v_m}) \in \Theta(1)$ and $\mathsf{R}(\mathtt{Subgraph}^H_{v_1, \dots, v_m}) \in \Theta(1)$.
            \item If $H - \{v_1, \dots, v_m\}$ contains just isolated vertices, then $\mathsf{Q}(\mathtt{Subgraph}^H_{v_1, \dots, v_m}) \in \Theta(\sqrt{n})$ and $\mathsf{R}(\mathtt{Subgraph}^H_{v_1, \dots, v_m}) \in \Theta(n)$.
            \item If $H - \{v_1, \dots, v_m\}$ contains an edge, then $\mathsf{Q}(\mathsf{Subgraph}^H_{v_1, \dots, v_m}) \in \Omega(n)$ and we have $\mathsf{R}(\mathtt{Subgraph}^H_{v_1, \dots, v_m}) \in \Theta(n^2)$.
        \end{enumerate}
    \end{lemma}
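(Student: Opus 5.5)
We treat the three cases separately. For each one we give an upper bound by an explicit algorithm and a lower bound by embedding a standard function (OR or a parity-type function) into the subgraph-containment problem. Throughout, $H$ has constantly many vertices, so every enumeration over injections of free vertices costs only constant factors. We write $U = V_H \setminus \{v_1,\dots,v_m\}$ for the set of free vertices of $H$.

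\textbf{Case 1: $U = \varnothing$.} Here the injection $\varphi$ is completely fixed. The algorithm queries the $|E_H| \in O(1)$ pairs $\{s_i,s_j\}$ with $\{v_i,v_j\}\in E_H$ and accepts if and only if all of them are edges. This is deterministic, so it gives the $O(1)$ upper bound in both models. For the lower bound, $H$ is non-empty, so the answer depends on at least one input bit, and at least one query is needed.

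\textbf{Case 2: $H - \{v_1,\dots,v_m\}$ contains only isolated vertices.} Since $H$ is connected and non-empty, every free vertex $u \in U$ has at least one neighbour, and that neighbour must be a fixed vertex $v_i$.

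\emph{Upper bound.} We randomly partition $[n]\setminus[m]$ into $|U|$ groups $I_u$ of size $\Theta(n)$, one group per free vertex $u$. For each $u$, we search for a $w \in I_u$ such that $\{s_i,w\} \in E$ for all $i$ with $\{v_i,u\}\in E_H$. This check costs $O(1)$ queries per candidate $w$, so Grover search finds such a $w$ with $O(\sqrt n)$ queries, and classical search does it with $O(n)$ queries. If $H$ is present, the true image of each free vertex lands in its own group with constant probability. If $H$ is absent, the reduced test can never succeed, because all the required edges go only to fixed vertices and the groups are disjoint, which makes the chosen map injective. \Cref{lem:1-sided_error} (or, more directly, constantly many repetitions) then gives the bound.

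\emph{Lower bound.} We pick a free vertex $u$ and a fixed neighbour $v_i$ of it. We embed $\mathtt{OR}_{n'}$ as follows:
\begin{enumerate}
\item Fix a gadget on $O(1)$ additional vertices that realises all of $H$ except the single edge $\{v_i,u\}$.
\item Let $x$ be the input to $\mathtt{OR}_{n'}$. For each of $n' = \Theta(n)$ fresh vertices $w$, set the edges from $w$ to the fixed vertices adjacent to $u$ in $H$ to be present exactly when $x_w = 1$.
\end{enumerate}
The key difficulty is to ensure that no alternative embedding exists when $x = 0$. We handle this by taking the gadget to be exactly $H$ minus the vertex $u$ and placing nothing else, so that $u$ can only map to one of the fresh vertices. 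The known bounds $\Omega(\sqrt n)$ for quantum and $\Omega(n)$ for randomized query complexity of $\mathtt{OR}_{n'}$ then transfer to $\mathtt{Subgraph}^H_{v_1,\dots,v_m}$.

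\textbf{Case 3: $H - \{v_1,\dots,v_m\}$ contains an edge $\{a,b\}$.}

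\emph{Upper bounds.} Querying all $\binom n2$ pairs gives $O(n^2)$ in both models.

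\emph{Lower bounds.} We first build a base instance:
\begin{enumerate}
\item Embed $H$ minus the edge $\{a,b\}$ using fresh blocks: place $\Theta(n)$ candidate copies $A$ for the vertex $a$ and $\Theta(n)$ candidate copies $B$ for the vertex $b$.
\item Connect every candidate in $A$ and every candidate in $B$ to the (fixed) images of the other neighbours of $a$ and $b$ respectively.
\item Make the edges between $A$ and $B$ the input bits.
\end{enumerate}
Then $H$ is present if and only if some bit in this $A \times B$ block is $1$. This gives an embedding of $\mathtt{OR}$ on $\Theta(n^2)$ bits, and hence the randomized bound $\Omega(n^2)$ and the quantum bound $\Omega(n)$.

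The main obstacle in this case is the same as before. Other neighbours of $a$ and $b$ may be free vertices rather than fixed ones, and copies may share structure, so we must make sure the auxiliary edges do not create spurious copies of $H$. To control this, we keep all other free vertices as single fixed copies, and we rely on the fact that any copy of $H$ in the constructed graph must use an edge of $A\times B$. We verify this by counting edges: the fixed part of the construction has one fewer edge than $H$ on the relevant vertex set. Alternatively, we could cite \cite[Proposition~4]{belovs2012span} for the quantum $\Omega(n)$ bound.
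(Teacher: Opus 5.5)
Cases 1 and 2 are fine and match the paper. In case 2, your final version with the gadget equal to $H-u$ is the one that works: for $x=0$ the fresh vertices are isolated and $H$ has none, so any copy would have to fit inside the $(|V_H|-1)$-vertex gadget.

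\textbf{The gap is in the case-3 lower bound.} The edges from $A$ and $B$ to the single images of the other neighbours of $a$ and $b$ are present for every input, and they can form copies of $H$ by themselves.
\begin{enumerate}
\item Take $H=L_2$ with $m=0$, centre $c$, leaves $l_1,l_2$, and $\{a,b\}=\{c,l_1\}$. The image of $l_2$ is adjacent to all $n$ candidates for $c$. So the graph contains $L_2$ whatever the $A\times B$ bits are, and your reduction computes a constant function. The other orientation of the edge fails the same way.
\item This is not only a bridge phenomenon. For $H=K_{2,3}$ and any edge $\{a,b\}$, the two images of the $3$-side vertices other than $b$ are adjacent to all of $A$. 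This gives a $K_{2,3}$ using three vertices of $A$.
\end{enumerate}
Your edge count does not exclude this. A spurious copy may use several vertices of $A$ (or of $B$), so it is not confined to one copy of $H$ minus the edge $\{a,b\}$. Keeping the other free vertices as single copies is exactly what turns them into high-degree hubs. The paper's own sketch uses the same blow-up and has the same hole. Your fallback citation is offered only for the quantum bound, so $\mathsf{R}\in\Omega(n^2)$ would still be unproved.

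\textbf{One way to repair it.}
\begin{enumerate}
\item Let $U=V_H\setminus\{v_1,\dots,v_m\}$ and make every fixed vertex $s_j$ adjacent to all other vertices. Then a copy of $H$ with $v_j\mapsto s_j$ exists if and only if the graph on the $N=n-m$ free vertices contains $H[U]$, unrestricted and possibly disconnected.
\item On these $N$ vertices take a maximal $H[U]$-free graph $G_0$; it exists because the empty graph is $H[U]$-free.
\item Since $|U|\geq 2$ and $H[U]\subseteq K_{|U|}$, $G_0$ is $K_{|U|}$-free. By Tur\'an's theorem it therefore misses at least $\frac{N^2}{2(|U|-1)}-\frac N2\in\Omega(n^2)$ pairs.
\item By maximality, adding any single missing pair creates $H[U]$, and by monotonicity adding more pairs keeps it.
\item Use these missing pairs as the input bits. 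Each adjacency query is answered with at most one query to $x$. This embeds $\mathtt{OR}$ on $\Omega(n^2)$ bits and gives $\Omega(n)$ quantum and $\Omega(n^2)$ randomized query lower bounds.
\end{enumerate}
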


    \begin{proof}
        In the first case, there is only one possibility for the injective map, and so we simply (deterministically) check whether all the required edges exist in the input.

        In the second case, let $V_H \setminus \{v_1, \dots, v_m\} = \{w_1, \dots, w_{\ell}\}$. Similar to the proof of \Cref{lem:subgraph-wlog-connected}, the query complexity is upper bounded by searching for an image of each of the vertices $w_{\ell'}$ individually, for $\ell' \in [\ell]$. As for each $\ell' \in [\ell]$, checking whether a given vertex plays the role of $w_{\ell'}$ takes $O(1)$ (deterministic) queries, the total query complexity is $O(\sqrt{n})$ in the quantum setting and $O(n)$ in the randomized setting.

        In the third case, we have the trivial upper bound of $O(n^2)$ in the randomized setting, because we can simply learn the entire input with $\binom{n}{2} \in \Theta(n^2)$ queries.

        For the lower bounds, observe in the second case that we can consider an input that contains all edges except for the ones that are connected to $w_1$. Now, we can reduce a search problem of size $n$ to this subgraph containment problem, and so we find that $\mathsf{Q}(\mathtt{Subgraph}^H_{v_1, \dots, v_m}) \in \Omega(\sqrt{n})$ and $\mathsf{R}(\mathtt{Subgraph}^H_{v_1, \dots, v_m}) \in \Omega(n)$.

        Finally, in the third case, we can embed all of $H$ except for the connections between $v$ and $w$, with $\{v,w\} \in E_{H - \{v_1, \dots, v_m\}}$. Now, we add $n$ candidate vertices for $v$ and $n$ candidate vertices for $w$, and we connect them up to their respective neighbors in $H$. Now, we can embed a search problem on $\binom{n}{2}$ items into the bipartite graph between the candidate vertices for $v$ and $w$. Hence, we find that $\mathsf{Q}(\mathtt{Subgraph}^H_{v_1, \dots, v_m}) \in \Omega(\sqrt{\binom{n}{2}}) = \Omega(n)$ and $\mathsf{R}(\mathtt{Subgraph}^H_{v_1, \dots, v_m}) \in \Omega(\binom{n}{2}) = \Omega(n^2)$.
    \end{proof}

    Note that the above lemma completely characterizes the randomized query complexity of all detection versions of undirected subgraph containment problems. The simplicity of this characterization is in stark contrast with the quantum setting, where such a characterization seems elusive.

    Next, we provide some generic reductions between the detection and finding versions of the subgraph containment problems, and between the undirected and directed versions.
    
    \begin{lemma}
        \label{lem:find-detect-subgraph}
        Let $H$ be a connected non-empty graph of constant size, $m \in \N_0$, and $v_1, \dots, v_m \in V_H$. Then, for $\mathsf{M} \in \{\mathsf{Q},\mathsf{R}\}$, we have the characterization $\mathsf{M}(\mathtt{Subgraph}^H_{v_1, \dots, v_m}) \in \Theta(\mathsf{M}(\mathtt{FindSubgraph}^H_{v_1, \dots, v_m}))$.
    \end{lemma}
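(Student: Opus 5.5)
The direction $\mathsf{M}(\mathtt{Subgraph}^H_{v_1,\dots,v_m}) \in O(\mathsf{M}(\mathtt{FindSubgraph}^H_{v_1,\dots,v_m}))$ is the easy one. I would run the finding algorithm once, obtaining either $0$ or a candidate injection $\varphi$. If it outputs $\varphi$, I verify it by querying the $|E_H| \in O(1)$ edges $\{\varphi(v),\varphi(w)\}$ for $\{v,w\} \in E_H$, and I also check that $\varphi(v_j) = s_j$. I output $1$ exactly when the check succeeds. This errs only when the finder errs, so the success probability stays at least $2/3$, and the overhead is an additive constant.

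For the converse I would use a self-reduction by random halving, driven by the detection algorithm $\mathcal{A}$. First amplify $\mathcal{A}$ to error probability at most a small constant $\eta$ (constant overhead). Maintain a current vertex set $W \subseteq V$ that always contains $s_1,\dots,s_m$, starting from $W = V$. Run $\mathcal{A}$ on $G[W]$; if it rejects, output $0$. Otherwise repeat the following round. Split the free vertices $W \setminus \{s_1,\dots,s_m\}$ uniformly at random into $c$ parts $W_1,\dots,W_c$, and for every set $S \subseteq [c]$ of size at most $k := |V_H|$, test $\mathcal{A}$ on $G[\{s_1,\dots,s_m\} \cup \bigcup_{i\in S} W_i]$. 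Choose $c = 2k$, say. Any copy of $H$ touches at most $k$ parts, so some union of at most $k$ parts contains a copy whenever $G[W]$ does. Moreover, that union contains at most $k/c = 1/2$ of the free vertices. I would recurse on the first accepted union. Queries to an induced subgraph cost one query each, since we just relabel vertices. Once $|W| \le m + k$, or some constant threshold, I learn $G[W]$ classically with $O(1)$ queries and read off $\varphi$.

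The main obstacle is controlling error accumulation and cost. There are $O(\log n)$ rounds, so naive amplification to error $1/\log n$ would cost a $\log\log n$ factor. To avoid this, I would exploit the geometric decrease in size. Since $H$ has $|E_H| \ge 1$ and $G[W]$ has at most $|W|$ vertices, the detection complexity on $|W|$ vertices is $\mathsf{M}(\mathtt{Subgraph}^H)$ evaluated at $|W|$. I will argue that this is at least $|W|^{1/2}$-ish polynomial; by \Cref{lem:subgraph-characterization} it is $\Omega(\sqrt{|W|})$ whenever $H - \{v_1,\dots,v_m\}$ is nonempty, and the case $\{v_1,\dots,v_m\}=V_H$ is trivial. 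In round $r$, with $|W| \le n 2^{-r}$, I amplify to error $2^{-r}/\mathrm{const}$, which costs a factor $O(r)$. Assuming the complexity function $T$ is polynomially growing, so that $T(n2^{-r}) \le T(n)\,2^{-\gamma r}$ for some $\gamma > 0$ (which holds, for example, if $T(n) = n^{\gamma'}$-like and monotone), the total cost is $\sum_r O(r)\,\binom{c}{\le k}\,T(n 2^{-r}) \in O(T(n))$. The union bound over rounds gives total error $\sum_r 2^{-r}/\mathrm{const} \le 1/3$.

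I would also verify each accepted branch before recursing; a false rejection of every subset can be handled by falling back to output $0$, which keeps the error one-sided enough. The step I would scrutinize most is this regularity assumption on $T$. If it cannot be justified directly, I would instead use the monotonicity $T(n') \le T(n)$ for $n' \le n$ together with the $\Omega(\sqrt{n})$ lower bound. Another option is to replace the per-round amplification with a search-with-verification argument, restarting from scratch whenever a verification of the final $\varphi$ fails. That gives an expected constant number of restarts under the same geometric-sum bound.
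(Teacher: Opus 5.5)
\paragraph{Verdict: genuine gap in the cost bound.}
Your easy direction is fine. The halving self-reduction is also a sound skeleton: each round makes $O(1)$ detection calls on induced subgraphs, and graded amplification with a union bound controls the error. For $\mathsf{R}$, and for the quantum cases 1--2 of \Cref{lem:subgraph-characterization}, the decay $T(n2^{-r})\le 2^{-\gamma r}T(n)$ you need follows from the tight bounds there.

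The gap is the quantum case where $H-\{v_1,\dots,v_m\}$ contains an edge. There you only know $T(n)\in\Omega(n)\cap O(n^2)$, and your whole cost bound rests on the geometric decay of $T$, which you never prove. Your fallbacks do not supply it:
\begin{itemize}
\item \emph{Monotonicity plus a lower bound.} This is perfectly consistent with $T$ being flat on a range such as $[n^{2/3},n]$, e.g.\ $T\equiv n$ there. Then $\sum_r O(r)\,T(n2^{-r})=\Theta(T(n)\log^2 n)$, so your argument only yields $\mathsf{Q}(\mathtt{FindSubgraph}^H_{v_1,\dots,v_m})\in O(\mathsf{Q}(\mathtt{Subgraph}^H_{v_1,\dots,v_m})\log^2 n)$, not $\Theta$.
\item \emph{The restart variant.} This is worse. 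Without graded amplification, each of the $\Theta(\log n)$ rounds errs with constant probability. A single pass then succeeds only with probability $n^{-\Omega(1)}$, so the expected number of restarts is polynomial, and the variant still relies on the same geometric sum.
\end{itemize}

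\paragraph{The missing idea.}
The paper closes exactly this step by deriving the growth of $T$ from the problem itself.
\begin{itemize}
\item Embed an $\mathtt{OR}$ of $\ell$ instances, placed on vertex-disjoint blocks, into a single instance on at most $\ell n$ vertices.
\item Quantum query complexity is multiplicative under composition, $\mathsf{Q}(\mathtt{OR}_\ell\circ f)\in\Theta(\sqrt{\ell}\,\mathsf{Q}(f))$ \cite{reichardt2011reflections}. This gives $T(\ell n)\ge K\,T(n)$ with $K>1$ for a large enough constant $\ell$.
\item Caveat for $m\ge 1$: the blocks share $s_1,\dots,s_m$, so you must also ensure no copy of $H$ can be assembled from pieces in different blocks. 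This is automatic when $H-\{v_1,\dots,v_m\}$ is connected, and otherwise needs extra care.
\item Combined with monotonicity, this gives $T(n2^{-r})\le K\,2^{-\gamma r}\,T(n)$ with $\gamma=\log_2 K/\log_2\ell$, and your sum then closes.
\end{itemize}

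The paper's recursion differs slightly from yours. With $O(T_n)$ detection queries it locates a random $n$-subset of the $\ell n$ vertices that contains $H$. It then runs the size-$n$ finder three times with certificate checking, giving $F_{\ell n}\le C\,T_n+3F_n$. The master theorem turns this into $F\in O(T)$ once $K>3$. Your graded-amplification halving would need only $K>1$, but neither version goes through without this composition-based growth bound.
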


    \begin{proof}
        It is clear that $\mathsf{M}(\mathtt{Subgraph}^H_{v1,\dots,v_m}) \in O(\mathsf{M}(\mathtt{FindSubgraph}^H_{v_1, \dots, v_m}))$, since we can run an algorithm for the finding problem, and then output $1$ if its output was non-zero. Thus, it remains to prove the reverse direction. In the randomized setting, it is clear that the finding versions of the subgraph containment problems follow the same classification as in \Cref{lem:subgraph-characterization}, since in the second case, we can learn the entire neighborhoods for all the vertices $v_1, \dots, v_m$ with $O(n)$ queries, and that is sufficient to find the subgraph, if it exists. Thus it remains to focus on the quantum setting.
        
        To that end, observe that if we are in the first case of \Cref{lem:subgraph-characterization}, there is only one possible place where $H$ could be embedded in the input, and so the finding version of the subgraph-containment problem also has constant query complexity as well. thus without loss of generality we are in the second or third case in \Cref{lem:subgraph-characterization}. Now, we describe a randomized reduction that solves the finding problem, using the detection algorithm as a black box. To that end, suppose that for every $n \in \N$, the optimal algorithm makes $T_n$ quantum queries to the input to solve the $\mathsf{Subgraph}_{[n],1, \dots, m}^{H,v_1, \dots, v_m}$-problem. It is immediate that $(T_n)_{n=1}^{\infty}$ is increasing, since we can always trivially embed an instance of $n$ vertices into an instance with $n+1$ vertices by ignoring one of the vertices in the input.
        
        Then, by considering an integer $\ell > 1$, we can embed an OR of $\ell$ disjoint instances of the boolean function $\mathtt{Subgraph}^{H,v_1, \dots, v_m}_{[n],1, \dots, m}$ into $\mathrm{Subgraph}^{H,v_1, \dots, v_m}_{[m+\ell(n-m)],1,\dots,m}$, and so
        \begin{align*}
            \mathsf{Q}(\mathtt{Subgraph}^{H,v_1, \dots, v_m}_{[m + \ell (n-m)],1,\dots,m}) &\geq \mathsf{Q}(\mathtt{OR}_{\ell} \circ \mathtt{Subgraph}^{H,v_1, \dots, v_m}_{[n],1,\dots,m}) \\
            &\in \Theta(\sqrt{\ell} \cdot \mathsf{Q}(\mathtt{Subgraph}^{H,v_1, \dots, v_m}_{[n],1, \dots, m})),
        \end{align*}
        where in the last step we use that quantum query complexity is multiplicative under composition up to constants~\cite{reichardt2011reflections}. Thus, $T_{m + \ell(n-m)} \in \Omega(\sqrt{\ell} \cdot T_n)$ and so by fixing $\ell \in \N$ large enough, we obtain that $T_{\ell n} \geq T_{m + \ell(n-m)} \geq KT_n$ for large enough $n$, for some constant $K > 1$. We will without loss of generality also assume that $\ell > 9$.

        Now, we consider the finding problem of size $\ell n$. We can randomly choose $n$ of the $\ell n$ available vertices, and remove all the others. Then, if $H$ was present in the original input, it will still be present in the new input with probability at least $\Omega(\ell^{-|V_H|}) = \Omega(1)$, where we use that $n$ is large enough. On the other hand, if $H$ was not present in the original input, it will still not be present in the new input. Thus, using the generic amplification procedure, \Cref{lem:1-sided_error}, we can with $O(T_n)$ queries build a procedure that outputs $0$ if $H$ is not present in the graph, and that outputs a subset of size $n$ that contains $H$ if it is, with probability at least $5/6$. Then, in the latter case we can subsequently run the finding routine with size $n$ on this subset, with three runs and certificate checking to have a failure probability of at most $1/6$ in this step as well. The total cost of the finding routine, which we denote by $F_{\ell n}$, then satisfies for some constant $C > 0$,
        \[F_{\ell n} \leq C \cdot T_n + 3 F_n.\]
        Now since $T_n$ satisfies the regularity condition, as we argued earlier, and since $\log_\ell(3) < 1/2$ and $T_n \in \Omega(\sqrt{n})$ by \Cref{lem:subgraph-characterization}, we can use the master theorem \cite{bentley1980general} to conclude that $F_n \in O(T_n)$, and so $\mathsf{Q}(\mathtt{FindSubgraph}^H_{v_1, \dots, v_m}) \in O(\mathsf{Q}(\mathtt{Subgraph}^H_{v_1, \dots, v_m}))$.
    \end{proof}
    
    \begin{lemma}
        \label{lem:undirected_to_directed}
        Let $H$ be a connected non-empty undirected graph of constant size, $m \in \N_0$, and $v_1, \dots, v_m \in V_H$. Let $\vec{H}$ be the same graph as $H$, but then with arcs pointed in arbitrary directions. Then, for $\mathsf{M} \in \{\mathsf{Q},\mathsf{R}\}$, we have $\mathsf{M}(\mathtt{Subgraph}^H_{v_1, \dots, v_m}) \in O(\mathsf{M}(\mathtt{Subgraph}^{\vec{H}}_{v_1, \dots v_m}))$.
    \end{lemma}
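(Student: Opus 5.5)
We reduce the undirected problem to the directed one by a color-coding argument, and then conclude with \Cref{lem:1-sided_error}.

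\textbf{Construction.} Let $G=(V,E)$ be the undirected input graph on $n$ vertices. We want a random directed graph $G'$ on the same vertex set with three properties:
\begin{itemize}
\item[(i)] each arc query to $G'$ costs $O(1)$ edge queries to $G$;
\item[(ii)] if $H$ embeds in $G$ with $v_j\mapsto s_j$, then $\vec H$ embeds in $G'$ with the same constraints with probability $\Omega(1)$;
\item[(iii)] if $\vec H$ embeds in $G'$, then $H$ embeds in $G$ with probability $1$.
\end{itemize}
Write $V_H=\{u_1,\dots,u_h\}$ with $h=|V_H|\in O(1)$.

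Assign each vertex of $V$ a color $c(w)\in[h]$. The fixed vertices get $c(s_j)=i$, where $u_i=v_j$. Every other vertex gets an independent uniformly random color. Then define the arcs
\[A'=\{(x,y): \{x,y\}\in E,\ (u_{c(x)},u_{c(y)})\in A_{\vec H}\}.\]
Querying $(x,y)\in A'$ needs one query to $\{x,y\}$ in $G$, plus a lookup of the classical random coloring. This gives (i).

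\textbf{Soundness (iii).} Suppose $\psi$ embeds $\vec H$ into $G'$. Every arc of $G'$ is an edge of $G$, and forgetting directions turns $A_{\vec H}$ back into $E_H$. So $\psi$ is an injection $V_H\to V$ that respects the fixed vertices $v_j\mapsto s_j$ and sends $E_H$ into $E$. Hence $G$ is a \textsc{yes}-instance for $H$. Note that we do not need $\psi$ to respect the coloring; we only use $A'\subseteq E$ after forgetting orientations.

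\textbf{Completeness (ii).} Suppose $\varphi$ embeds $H$ into $G$. With probability at least $h^{-(h-m)}\in\Omega(1)$, every non-fixed vertex $\varphi(u_i)$ receives color $i$. The fixed vertices already satisfy this by construction. In that event, for each arc $(u_a,u_b)\in A_{\vec H}$ we have $\{\varphi(u_a),\varphi(u_b)\}\in E$ and $(u_{c(\varphi(u_a))},u_{c(\varphi(u_b))})=(u_a,u_b)\in A_{\vec H}$. So $(\varphi(u_a),\varphi(u_b))\in A'$, and $\varphi$ embeds $\vec H$ into $G'$ with the same restrictions.

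\textbf{Conclusion and the delicate step.} Plugging (i)–(iii) into \Cref{lem:1-sided_error} gives $\mathsf{M}(\mathtt{Subgraph}^H_{v_1,\dots,v_m})\in O(\mathsf{M}(\mathtt{Subgraph}^{\vec H}_{v_1,\dots,v_m}))$ for $\mathsf{M}\in\{\mathsf{Q},\mathsf{R}\}$.

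The only delicate point is choosing the arc rule. The naive choice of including both orientations of every edge fails soundness: for example, a directed cycle could then appear from a single undirected edge traversed back and forth. The coloring removes this problem, and the soundness argument above shows we never need the coloring to be consistent with $\psi$. We should also check that $G'$ is simple, which holds because we add at most the two arcs $(x,y)$ and $(y,x)$ per edge, and that is allowed.
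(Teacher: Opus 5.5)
Your proof is correct, but it takes a different route from the paper. The paper does not color vertices at all. It orients every edge of $G$ independently and uniformly at random. Soundness rests on the same observation you use: every arc of $G'$ is an edge of $G$, so an injective embedding of $\vec H$ into $G'$ is an embedding of $H$ into $G$. Completeness holds with probability at least $2^{-|E_H|}$, since each edge of a fixed copy of $H$ independently receives the orientation prescribed by $\vec H$. Your color-coding gets completeness probability $h^{-(h-m)}$ instead, and it also prunes every edge whose color pair is not an arc of $\vec H$. That extra pruning is harmless but unnecessary, and it makes your construction depend on $\vec H$, whereas the paper's random orientation is the same for every orientation of $H$.

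Your closing remark is mistaken, although it is not load-bearing. Including both orientations of every edge does not break soundness. Embeddings in the definition of $\mathtt{Subgraph}^{\vec H}_{v_1,\dots,v_m}$ are injective, so no edge can be ``traversed back and forth''. Take the symmetric digraph with $A'=\{(x,y):\{x,y\}\in E\}$, which the paper explicitly allows as an input. An injection $\varphi$ with $\varphi(v_j)=s_j$ maps $A_{\vec H}$ into $A'$ if and only if it maps $E_H$ into $E$. This is a deterministic, exact reduction that answers each arc query with a single edge query. So neither your coloring nor the paper's random orientation is actually what makes the lemma work; the randomness can be dropped entirely.
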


    \begin{proof}
        We provide a randomized reduction. Indeed, suppose that we have an instance to the undirected version of the subgraph containment problem. We now generate an input to the directed version of the problem by randomly associating a direction to each edge. We observe that if $H$ was not present in the undirected input, then $\vec{H}$ will definitely also not be present in the new input. On the other hand, if $H$ was present in the original input, now $\vec{H}$ will be contained in the new input with probability at least $2^{-|E_H|} \in \Omega(1)$. Thus, we can now use the generic amplification procedure \Cref{lem:1-sided_error} to solve the undirected version of the problem, with constant overhead.
    \end{proof}
    
    We note that the above lemmas also completely characterize the randomized query complexity for all directed and finding subgraph containment problems. We spend the rest of the document investigating the quantum counterparts of these results in more detail.

    Finally, we recall that \Cref{thm:general-subgraph-containment}, as proved by Lee, Magniez and Santha, strictly speaking only holds for undirected subgraph-containment problems. However, we argue that their learning-graph-based algorithm also works in the directed setting, with the same complexity.

    \begin{proposition}
        \label{prop:learning-graph-directed-subgraph-containment}
        Let $H = (V,A)$ be a directed graph, and let $H' = (V,E')$ be its undirected counterpart, i.e., $\{v,w\} \in E'$ if and only if $(v,w) \in A$ or $(w,v) \in A$. Then, $\mathsf{Q}(\mathtt{Subgraph}^H)$ is upper bounded by the expression in \Cref{thm:general-subgraph-containment} applied to $H'$.
    \end{proposition}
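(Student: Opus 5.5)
The plan is to reopen the learning-graph construction of Lee, Magniez and Santha behind \Cref{thm:general-subgraph-containment} and check that each ingredient survives the passage from undirected to directed graphs. Their learning graph is defined purely in terms of the undirected shape $H'$. Its vertices are sets of queried pairs, and it proceeds in stages: first load a random vertex set of a prescribed size, then load edges with prescribed degree patterns, then add the edges of the copy of $H'$ one at a time. The complexity is computed from the number of vertices and arcs at each stage and the flow through each stage. So the core of the proof is that one can take the same learning graph, with each undirected pair $\{v,w\}$ of the input replaced by the ordered pair(s) realising the orientation dictated by $H$.

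Concretely, I would first fix a labelling of $V_H$ and, for each edge $\{a,b\} \in E'$, record which of $(a,b)$, $(b,a)$ lie in $A$. There may be one or two such arcs; if both are present, the element $\{a,b\}$ becomes two queried arcs, which costs only a factor $2$. The directed input is a string indexed by ordered pairs $(u,w)$ with $u \neq w$. This string has $n(n-1)$ entries, which is within a factor $2$ of $\binom{n}{2}$.

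Next I would define the directed learning graph. Its vertices are the same combinatorial objects as in the undirected construction, namely sets of (vertex-set, pair-set) data. An undirected pair $\{u,w\}$ that is to play the role of the image of the edge $\{a,b\}$ is replaced by the ordered pairs $(u,w)$ and/or $(w,u)$. This is chosen according to the role assignment of $u,w$ to $a,b$ in the flow, and all the loading stages are defined in exactly the same way.

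For a positive input, the certificate is an embedding $\varphi$ of $H$, and this is also an embedding of $H'$ into the underlying graph. I would route the flow exactly as in the undirected proof, using the symmetrised flow over images of $\varphi$. On each transition, the queried element is the arc (or arcs) corresponding to the edge of $H'$ being added. As a result, the flow values, and hence the positive complexity, match the undirected ones up to constants.

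The negative complexity depends only on the counts of learning-graph vertices and arcs per stage. These counts change by at most constant factors: a factor $2^{|E'|}$ for orientation choices, plus the factor from $n(n-1)$ versus $\binom{n}{2}$. The resulting complexity is therefore $O(n^{2-2/k-t})$ with $t$ computed from $H'$.

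The main obstacle is verifying that the flow remains valid, i.e.\ that its conservation and the symmetry arguments Lee--Magniez--Santha use to bound the energy do not rely on edges being unordered. In particular, their analysis averages over automorphisms and relabellings of vertices. I would handle this by symmetrising only over permutations of $[n]$, which act compatibly on ordered pairs. This avoids symmetrising over automorphisms of $H'$ that fail to preserve orientations, and the energy bounds in their proof use only the $[n]$-symmetry, so they go through with constant overhead.
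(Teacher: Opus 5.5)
Your approach is viable, but it takes a genuinely different route from the paper. You propose to open up the Lee--Magniez--Santha learning graphs and re-check, stage by stage, that the counts, the flows and the symmetry-based energy bounds survive reorientation.

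The paper never inspects those constructions beyond noting that they are non-adaptive. It encodes a directed input as a string over the alphabet $\{0,1\}^2$, indexed by unordered pairs $\{v,w\}$, where the two bits record the arcs $(v,w)$ and $(w,v)$. Under this encoding, $\mathtt{Subgraph}^H$ and $\mathtt{Subgraph}^{H'}$ have the same index set and the same certificate structure: in both cases the minimal $1$-certificates are the pair sets $\varphi'(E_{H'})$ of injections $\varphi$. Belovs--Rosmanis then gives that their non-adaptive learning-graph complexities coincide, so the LMS bound transfers.

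What the paper's route buys is generality, since any non-adaptive learning-graph bound for $H'$ transfers. It also removes exactly the step you flag as the main obstacle: the LMS learning graph and its flows are reused verbatim, so no symmetrization argument has to be rechecked.

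Your version needs care on one point you leave ambiguous. Which arc(s) a learning-graph transition loads must be fixed by the learning graph itself, not by ``the role assignment in the flow'', because the flow depends on the input. There are two ways to make this work. The first is to make role labels part of each learning-graph vertex, for example a partition of the loaded vertices into blocks indexed by $V_H$, so that pairs between blocks $i$ and $j$ are oriented by the arcs of $H$ between $i$ and $j$. The second is simply to load both arcs of every pair, your ``and/or'' option. That is precisely the paper's encoding, and it makes your concern about automorphisms of $H'$ moot.

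In return, your approach yields an explicit directed learning graph. The cost is a verification that, as sketched, rests on unchecked claims about LMS's energy analysis.
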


    \begin{proof}
        As $\mathtt{Subgraph}^{H'}$ is a boolean function on $|V|(|V|-1)$ bits, we can also interpret it as a function on $\binom{V}{2}$ $2$-tuples, where the $2$-tuple labeled by $\{v,w\}$ contains the information whether the arcs $(v,w)$ and $(w,v)$ are present in the input. Now, we have
        \[\mathtt{Subgraph}^{H'} : \{0,1\}^{\binom{V}{2}} \to \{0,1\}, \qquad \text{and} \qquad \mathtt{Subgraph}^H : (\{0,1\}^2)^{\binom{V}{2}} \to \{0,1\}.\]

        In particular, we observe that the index labels into the input are now the same. We argue that on top of that, the certificate structures for both functions, as defined in \cite[Definition~1]{belovs2014power}, are also the same. To that end, observe that both graph properties are monotone, and so all the minimal positive certificates contain exactly a single copy of $H'$ and $H$, respectively. As such, we conclude that the (non-adaptive) learning-graph complexities for both problems are also equal, due to \cite[Theorem~2]{belovs2014power}.

        Finally, we observe that the two learning-graph algorithms presented in \cite{lee2012learning} are both non-adaptive. As such, the upper bound stated in \Cref{thm:general-subgraph-containment} is in fact an upper bound on the non-adaptive learning graph complexity of the undirected subgraph-containment problem with subgraph $H'$. This means it is also an upper bound on the non-adaptive learning graph complexity for the directed subgraph-containment problem with subgraph $H$, which in turn upper bounds $\mathsf{Q}(\mathtt{Subgraph}^H)$.
    \end{proof}

    \subsection{Path-containment reductions}
    \label{sec:PathContainmentReductions}

    We introduce shorthand notations for the subgraph-containment problems where the subgraph is a line graph. We refer to these as path-containment problems.

    \begin{definition}[Path-containment problems]
        Let $k \in \N$.
        \begin{enumerate}
            \item Let $L_k$ be the undirected line graph with $k$ edges, with the endpoint vertices labeled by $s$ and $t$. We define $\mathtt{Path}^{=k} := \mathtt{Subgraph}^{L_k}$, and $\mathtt{Path}^{=k}_{s,t} := \mathtt{Subgraph}^{L_k}_{s,t}$.
            \item Let $L_k'$ be the directed version of $L_k$, where the edges are all pointing from $s$ to $t$. We define $\mathtt{DirPath}^{=k} := \mathtt{Subgraph}^{L_k'}$, and $\mathtt{DirPath}^{=k}_{s,t} := \mathtt{Subgraph}^{L_k'}_{s,t}$.
        \end{enumerate}    
        Finally, for $\mathtt{P} \in \mathtt{(|Dir)Path}_{(|s,t)}$, we write
        \[\mathtt{P}^{\leq k} := \bigvee_{\ell=1}^k \mathtt{P}^{=\ell}.\]
        We also consider the finding versions of these problems, denoted by $\mathtt{FindP}$ for all $\mathtt{P} \in \mathtt{(|Dir)Path}_{(|s,t)}^{(=k|\leq k)}$. In the $(\leq k)$-problems, it suffices to output a path for any of the subproblems $\mathtt{P}^{=\ell}$, with $\ell \in \{1, \dots, k\}$.
    \end{definition}

    We also consider promise versions of the path-containment problems.

    \begin{definition}[Promise path-containment problems]
        Let $k \in \N$.
        \begin{enumerate}
            \item The $\mathtt{PromPath}^{(=k|\leq k)}_{s,t}$-problem is the $\mathtt{Path}^{(=k|\leq k)}_{s,t}$-problem, restricted to inputs that satisfy the promise that if $s$ and $t$ are connected, then there exists a path of length exactly, or at most, $k$ between $s$ and $t$, respectively.
            \item The $\mathtt{PromDirPath}^{(=k|\leq k)}_{s,t}$-problem is the $\mathtt{DirPath}^{(=k|\leq k)}_{s,t}$-problem, restricted to inputs that satisfy the promise that if $s$ and $t$ are connected by a directed path from $s$ to $t$, then there exists a directed path of length exactly, or at most, $k$ from $s$ to $t$, respectively.
        \end{enumerate}
        The finding versions of these problems, $\mathtt{PromFindPath}_{s,t}^{(=k|\leq k)}$ and $\mathtt{PromFindDirPath}^{(=k|\leq k)}_{s,t}$ respectively, are the finding problems under the same restriction.
    \end{definition}

    Formally speaking the promise path-containment problems are also boolean functions, but on a restricted domain. As such, they are partial booelan functions, rather than total ones.

    We start by showing some elementary reductions between the promise and non-promise versions of the path-containment problems.

    \begin{lemma}
        \label{Thm:PromReducesNonProm}
        Let $\mathtt{P} \in \mathtt{(|Dir)Path}_{s,t}^{(=k|\leq k)}$.
        \begin{enumerate}
            \item We have $\mathsf{Q}(\mathtt{PromP}) \in O(\mathsf{Q}(\mathtt{P}))$.
            \item We have $\mathsf{Q}(\mathtt{FindP}) \in \Theta(\mathsf{Q}(\mathtt{PromFindP}))$.
        \end{enumerate}
    \end{lemma}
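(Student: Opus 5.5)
The first item is immediate from the fact that $\mathtt{PromP}$ is $\mathtt{P}$ restricted to a subset of its domain. Any algorithm computing $\mathtt{P}$ on all inputs computes it, in particular, on the inputs satisfying the promise. Hence $\mathsf{Q}(\mathtt{PromP}) \leq \mathsf{Q}(\mathtt{P})$. By the same argument, $\mathsf{Q}(\mathtt{PromFindP}) \leq \mathsf{Q}(\mathtt{FindP})$, which is one direction of the second item.

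For the reverse direction of the second item, I will run the optimal $\mathtt{PromFindP}$ algorithm $\mathcal{A}$ on an arbitrary, possibly non-promise, input. Since the output of $\mathcal{A}$ is either $0$ or a candidate path, I can verify any candidate path classically with $O(k) = O(1)$ queries, by checking that its edges or arcs exist, that it runs from $s$ to $t$, that its vertices are distinct, and that it has the correct length. The output is then accepted or replaced by $0$.

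It remains to argue that on a non-promise input which does contain the sought path, the algorithm still finds one. Such an input is not a \textsc{no}-instance of the promise problem; it violates the promise only in the sense of containing other, longer $s$-$t$ paths as well. My plan is to convert the input into a promise instance using a query-efficient randomized transformation, following the color-coding idea described in the introduction.

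Concretely, for the $=k$ version, I color every vertex other than $s,t$ uniformly at random with a color in $\{1,\dots,k-1\}$, assign $s$ color $0$ and $t$ color $k$, and keep only edges between consecutive colors (directed from color $i$ to color $i+1$ in the directed case). Each query to the new graph costs one query to the old graph. In the new graph, every $s$-$t$ path is forced to climb through the layers. In the directed case every $s$-$t$ path has length exactly $k$, so the promise holds. In the undirected case a path may go back and forth between layers, which is exactly the caveat I expect to be the main obstacle. To handle it, I will instead orient the layered graph implicitly: in the undirected case I keep the layered graph, and any $s$-$t$ walk must still contain a length-$k$ monotone subpath. However, to guarantee the promise I restrict further to the directed-style layering, keeping only vertices of color $i$ that lie on monotone layer paths. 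If this restriction is too costly to compute, the fallback is to use that $\mathtt{FindP}$ only needs \emph{some} path, with the verification step giving one-sided error.

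For the $\leq k$ version, I guess the length $\ell \in [k]$ (constantly many choices), apply the same coloring with $\ell$ layers, and call the $=\ell$ promise finder, noting that the $\leq k$ promise finder also solves $=\ell$ instances in which all $s$-$t$ paths have length $\ell$.

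Any actual path found in the colored graph is a genuine path of the right length in the original graph, and verification rejects spurious outputs. If the original graph contains a suitable path, its vertices receive the correct colors with probability at least $(k-1)^{-(k-1)} \in \Omega(1)$. Constantly many repetitions, together with the amplification argument of \Cref{lem:1-sided_error} adapted to finding with verification, therefore give $\mathsf{Q}(\mathtt{FindP}) \in O(\mathsf{Q}(\mathtt{PromFindP}))$.
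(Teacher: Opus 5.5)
Your treatment of item 1 and of the direction $\mathsf{Q}(\mathtt{PromFindP}) \leq \mathsf{Q}(\mathtt{FindP})$ matches the paper. The reverse direction, however, has a gap that comes from a misreading of the promise. You assert that an input containing the sought path can still violate the promise ``in the sense of containing other, longer $s$-$t$ paths as well.'' The promise is the implication ``if $s$ and $t$ are connected, then an $s$-$t$ path of length exactly (resp.\ at most) $k$ exists,'' and its conclusion is precisely the \textsc{yes}-condition of $\mathtt{P}$. So every \textsc{yes}-instance of $\mathtt{P}$ satisfies the promise automatically, however many longer paths it has. Only \textsc{no}-instances can violate it, for example when $s$ and $t$ are joined only by long paths.

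That observation is the missing idea, and with it the lemma is immediate. This is exactly the paper's proof: run the $\mathtt{PromFindP}$ algorithm directly on the input and verify its output with $O(k)$ queries. On \textsc{yes}-instances it outputs a valid path with probability at least $2/3$, and verification keeps it. On \textsc{no}-instances verification always returns $0$. No color-coding is needed.

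Because you missed this, you set yourself the harder task of transforming the input so that every $s$-$t$ path has the target length, and your argument fails in the undirected case. Keeping only vertices ``that lie on monotone layer paths'' is not query-efficient: deciding this for a single vertex is itself an $s$-$t$ path problem, not $O(1)$ queries per query to the new graph. It also does not remove zig-zag paths. For instance, take two monotone paths $s, v_1, v_2, \dots, t$ and $s, v_1', v_2', \dots, t$ plus the edge $\{v_2, v_1'\}$. This yields the path $s, v_1, v_2, v_1', v_2', \dots, t$ of length $k+2$, and every vertex on it lies on a monotone path.

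Your fallback does not repair this under your own reading. If a \textsc{yes}-instance could violate the promise, $\mathcal{A}$ might output garbage, verification would discard it, and you would wrongly answer $0$. Your final procedure (color-code, call $\mathcal{A}$, verify, repeat) does in fact work. But it works only because a colored graph in which the path survives is a \textsc{yes}-instance and hence satisfies the promise, by the observation above. That same observation is why the coloring is superfluous in the first place.
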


    \begin{proof}
        It is clear that $\mathsf{Q}(\mathtt{PromP}) \in O(\mathsf{Q}(\mathtt{P}))$, because the function that is to be computed in the promise version of the problem is a restriction of the one in the non-promise problem. The same follows for $\mathsf{Q}(\mathtt{PromFindP}) \in O(\mathtt{FindP})$ by comparing the relations to one another. Thus, it remains to show the reverse direction in the finding case.

        To that end, suppose that we have an instance of the $\mathtt{FindP}$-problem. Any accepting instance will also be a valid instance for the promise version as well, and so in that case the algorithm for $\mathtt{PromFindP}$ should output a valid path with high probability. In the rejecting case, though, the promise might not be satisfied, so the promise problem could output anything. However, we can simply check whether the path it outputs actually exists in the graph, and reject if not. Since in the rejecting case no path exists, this procedure always rejects, and it adds only constant additive overhead to the query complexity.
    \end{proof}

    % \begin{claim}
    %     For all path and cycle-containment problems, $\mathtt{P} \in \mathtt{(|Dir)(Path_{s,t}|Cycle_{(|s)})^{(=k|\leq k)}}$, we have $\Q(\mathtt{PromP})) \in O(\Q(\mathtt{P}))$.
    % \end{claim}
    
    % \begin{proof}
    %     These relations are straightforward because we are reducing a promise version of the problem to the non-promise version of itself.
    % \end{proof}

    Through \Cref{Thm:PromReducesNonProm,lem:find-detect-subgraph}, we conclude that the $\mathtt{Find}$ and $\mathtt{PromFind}$ versions of the path-containment problems considered above all have the same query complexity up to constants as the non-promise detection version of the problem. Thus, to understand the relative hardness of the path-containment problems introduced above, we will ignore the $\mathtt{Find}$ and $\mathtt{PromFind}$ versions in what follows. We provide a graphical overview of the remaining problems in \Cref{fig:path-problems}, with arrows representing the reductions we prove in the remainder of this section.
    
    \begin{figure}[!ht]
        \centering
        \begin{tikzpicture}
            % The path picture
            \begin{scope}
                % Unrestricted problems
                \node[teal] (path) at (-4,-1.5) {$\mathtt{Path}^{=(k-2)}$};
                \node[red] (dirpath) at (-4,1) {$\mathtt{DirPath}^{=(k-2)}$};
            
                % st-problems
                \node[red] (pathst) at (0,-.5) {$\mathtt{Path}^{=k}_{s,t}$};
                \node[red] (dirpathst) at (0,1) {$\mathtt{DirPath}^{=k}_{s,t}$};
                \node[red] (pathstle) at (4,-.5) {$\mathtt{Path}^{\leq k}_{s,t}$};
                \node[red] (dirpathstle) at (4,1) {$\mathtt{DirPath}^{\leq k}_{s,t}$};
                \node[teal] (prompathst) at (0,-1.5) {$\mathtt{PromPath}^{=k}_{s,t}$};
                \node[red] (promdirpathst) at (0,2.5) {$\mathtt{PromDirPath}^{=k}_{s,t}$};
                \node[teal] (prompathstle) at (4,-1.5) {$\mathtt{PromPath}^{\leq k}_{s,t}$};
                \node[red] (promdirpathstle) at (4,2.5) {$\mathtt{PromDirPath}^{\leq k}_{s,t}$};

                \node[teal, right, align = left] at (prompathstle.east) {$\in \Theta(n)$ \\[-.2em] \small\cite{belovs2012span}};
    
                % Non-linear equivalence class
                \draw[->] (pathst) to node[above, align = center, font = \small] {Prop~\ref{thm:dir_undir_st_paths}-1} (pathstle);
                \draw[->] (pathstle) to node[right, align = left, font = \small] {Prop \\[-.2em] \ref{thm:dir_undir_st_paths}-2} (dirpathstle);
                \draw[->] (dirpathstle) to node[above, align = center, font = \small] {Prop~\ref{thm:dir_undir_st_paths}-3} (dirpathst);
                \draw[->] (dirpathst) to node[right, align = left, font = \small] {Prop \\[-.2em] \ref{thm:dir_undir_st_paths}-4} (pathst);
                \draw[bend left=5, ->] (dirpath) to node[above, align = center, font = \small] {Prop~\ref{thm:dir_st_blank_paths}-1} (dirpathst);
                \draw[bend left=5, ->] (dirpathst) to node[below, align = center, font = \small] {Prop~\ref{thm:dir_st_blank_paths}-2} (dirpath);
                \draw[->] (promdirpathst) to node[above, align = center, font = \small] {Prop~\ref{thm:prom_nonprom_dir_st_paths}-1} (promdirpathstle);
                \draw[->] (promdirpathstle) to node[right, align = left, font = \small] {Prop \\[-.2em] \ref{thm:prom_nonprom_dir_st_paths}-2} (dirpathstle);
                \draw[->] (dirpathst) to node[right, align = left, font = \small] {Prop \\[-.2em] \ref{thm:prom_nonprom_dir_st_paths}-4} (promdirpathst);

                \draw[->] (path) to node[above, align = center, font = \small] {\cite{belovs2012span}} (prompathst);
                \draw[bend left=5, ->] (prompathst) to node[above, align = center, font = \small] {\cite{belovs2012span}} (prompathstle);
                \draw[bend left=5, ->] (prompathstle) to node[below, align = center, font = \small] {\cite{belovs2012span}} (prompathst);
            \end{scope}
        \end{tikzpicture}
        \caption{Randomized reductions between path-containment problems. All the green problems can be solved in linear number of queries, and all the red problems are equivalent under randomized reductions and constant multiplicative overhead.}
        \label{fig:path-problems}
    \end{figure}

    If we set $k = 1$, or $k = 2$ then the problems $\mathtt{(|Dir)Path}_{s,t}^{(=k|\leq k)}$ are instances of the first and second classes derived in \Cref{lem:subgraph-characterization}, respectively. Thus, whenever we set $k = 1$ or $k = 2$, we already understand the exact query complexity of the problems in \Cref{fig:path-problems}, and so we assume without loss of generality that $k \geq 3$.

    For the $\mathtt{PromPath}_{s,t}^{\leq k}$-problem, Belovs and Reichardt~\cite{belovs2012span} proved that it can be solved using $O(n\sqrt{k})$ queries, which is simply $O(n)$ when $k$ is constant. They also realize that the $\mathtt{PromPath}^{=k}_{s,t}$-problem has a stronger promise, and so there is a generic reduction from this problem to the $\leq k$-version. Moreover, they provide a randomized reduction from the $\mathtt{Path}^{=(k-2)}$-problem to the $\mathtt{PromPath}^{=k}_{s,t}$-problem~\cite[Page 2]{belovs2012span}. This means that these three problems have linear query complexity, i.e., their query complexity is $\Theta(n)$ for $k \geq 3$.

    We now prove that all the other path-containment problems form an equivalence class, i.e., they have the same query complexity up to a constant. We then prove a new quantum algorithm that solves all the problems from this equivalence class in \Cref{sec:dir-path-algo}.

    We begin with showing monotonicity of $\mathsf{Q}(\mathtt{DirPath}^{\leq k}_{s,t})$ in terms of $k$. From our results later in this section, this monotonicity result also follows for all the problems in the same equivalence class as $\mathtt{DirPath}^{\leq k}_{s,t}$.

    \begin{lemma} \label{lem:path_monotonicity}
        Let $3 \leq \ell \leq k \in \N$. Then, $\mathsf{Q}(\mathtt{DirPath}^{= \ell}_{s,t}) \in O(\mathsf{Q}(\mathtt{DirPath}^{= k}_{s,t}))$.
    \end{lemma}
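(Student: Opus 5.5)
It suffices to prove the single-step reduction $\mathsf{Q}(\mathtt{DirPath}^{=\ell}_{s,t}) \in O(\mathsf{Q}(\mathtt{DirPath}^{=\ell+1}_{s,t}))$ for every $\ell \geq 3$. Since $k-\ell$ is a constant, iterating this step $k-\ell$ times only costs a constant factor. The idea is to lengthen every $s\to t$ path by exactly one arc without creating new paths. The obvious attempt is to add a fresh vertex $t'$ with arc $(t,t')$ and relabel $t'$ as the target. This fails, however, because a path of length $\ell+1$ from $s$ to $t'$ could reach $t$ in $\ell$ steps by a walk that already passes through $t$. That would be an issue only if the walk were not simple, and a simple path that ends at $t'$ must visit $t$ last and never before. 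So the naive construction may in fact already be correct. I will nonetheless verify it carefully, and keep a color-coding fallback in case the bookkeeping breaks.

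\textbf{Construction.} Given an input $G=(V,A)$ on $n$ vertices, build $G'$ on $n+1$ vertices $V\cup\{t'\}$ as follows. Keep every arc of $G$ except those leaving $t$. Add the single arc $(t,t')$, and put no other arcs at $t'$. The source stays $s$ and the new target is $t'$. Each adjacency query to $G'$ costs at most one query to $G$: arcs involving $t'$ are fixed, arcs out of $t$ are declared absent, and all other arcs are copied from $G$.

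\textbf{Correctness.} Suppose $G$ has a simple directed $s\to t$ path of length $\ell$. It never uses an arc leaving $t$, since $t$ is its endpoint, so the path survives in $G'$. Appending $(t,t')$ gives a simple $s\to t'$ path of length $\ell+1$. Conversely, any simple $s\to t'$ path in $G'$ must enter $t'$ through its only in-arc $(t,t')$. Its prefix is therefore a simple $s\to t$ path of length $\ell$ that uses only arcs of $G$. The reduction is thus deterministic and exact, so Lemma~\ref{lem:1-sided_error} is not even needed, and the instance size grows from $n$ to $n+1$.

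\textbf{Main obstacle.} The only delicate point is that $\mathsf{Q}$ is defined as a sequence indexed by $n$, so the reduction gives $\mathsf{Q}_n(\mathtt{DirPath}^{=\ell}_{s,t}) \leq \mathsf{Q}_{n+1}(\mathtt{DirPath}^{=\ell+1}_{s,t})$. To turn this into an $O(\cdot)$ bound at size $n$, I would remove one non-$s,t$ vertex uniformly at random from $G$ before applying the construction. A fixed path of length $\ell$ survives this deletion with probability $1-(\ell-1)/(n-2)\in\Omega(1)$, and no new paths can appear. Lemma~\ref{lem:1-sided_error} then gives the bound at equal instance size. Alternatively, I could use the monotonicity of $T_n$ in $n$ together with a polynomial growth bound, as in the proof of Lemma~\ref{lem:find-detect-subgraph}.
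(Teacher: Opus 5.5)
Your proof is correct, and it takes a genuinely different and simpler route than the paper.

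\textbf{The paper's route.} The paper color-codes $G$ into $\ell+1$ layers, with $s$ and $t$ in the extreme layers. It then replaces every vertex of one chosen middle color $i^*$ by a directed chain of $k-\ell$ copies. In the resulting layered graph every $s$--$t$ path has length exactly $k$. A length-$\ell$ path of $G$ survives with probability at least $(\ell-1)^{-(\ell-1)}$, and \Cref{lem:1-sided_error} finishes the argument.

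\textbf{Your route.} You attach a pendant target $t'$ via the single arc $(t,t')$ and iterate this $k-\ell$ times; a chain of $k-\ell$ pendant vertices would do it in one shot. This gives an exact, deterministic bijection between simple length-$\ell$ $s\to t$ paths in $G$ and simple length-$(\ell+1)$ $s\to t'$ paths in $G'$. Deleting the arcs out of $t$ is harmless but unnecessary, since a simple path ending at $t$ never uses them.

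\textbf{What each approach buys.} Your construction has the following advantages:
\begin{itemize}
\item It incurs no color-coding loss.
\item It adds only one vertex per step. The paper's construction adds $\Theta(n)$ vertices and does not address the resulting change of instance size, whereas your random-deletion step handles this explicitly and correctly.
\item It transfers verbatim to undirected $\mathtt{Path}^{=k}_{s,t}$, using a pendant edge $\{t,t'\}$.
\item It also transfers to the promise problems $\mathtt{Prom(|Dir)Path}^{=k}_{s,t}$, because $s$ reaches $t'$ in $G'$ if and only if $s$ reaches $t$ in $G$. Together with your deletion step, this gives one-sided behaviour exactly as \Cref{lem:1-sided_error} requires.
\end{itemize}
The paper's layered construction, on the other hand, is the template it reuses for the cycle problems (\Cref{def:layerInsertionCycle}). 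There, no endpoint exists on which to hang a pendant vertex, and one must subdivide inside the cycle.

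\textbf{One small correction.} Your fallback of invoking regularity from the proof of \Cref{lem:find-detect-subgraph} does not work as stated. That proof only yields a lower growth bound $T_{\ell n}\geq K\,T_n$, not an upper bound of the form $T_{n+1}\leq C\,T_n$. The latter is, however, exactly what your random-deletion argument establishes, so your main route stands.
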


    \begin{proof}
        %First, we argue that for any $\ell \in [k]$, $\mathsf{Q}(\mathtt{DirPath}^{= \ell}_{s,t})) \in O(\mathsf{Q}(\mathtt{DirPath}^{= k}_{s,t}))$. 
        %The $\ell = 1$ case is trivial so suppose that $\ell > 1$. 
        Given a graph instance $G = (V,E)$ of $\mathtt{DirPath}^{= \ell}_{s,t}$, we construct a graph instance $G' =(V,E')$ of $\mathtt{DirPath}^{= k}_{s,t}$ using the color-coding technique as follows.
        Color $s$ with color $0$, color $t$ with color $k$, and for each vertex $v$ in $V \setminus \{s,t\}$, color $v$ uniformly at random from $[\ell-1]$ colors. For each $v \in V$, let $c(v)$ denote the color of $v$. Choose a color $i^* \in [\ell-1]$ and make $k-\ell$ copies of each vertex colored $i^*$. For each $j \in [k-\ell+1]$, let $v_j$ denote the $j$th copy of the vertex $v$ with $c(v) = i^*$. For all vertices $v$ with $c(v) = i^*$ and $j \in [k-\ell]$, add edges $(v_j,v_{j+1})$ in $E'$. For each edge $(u,v) \in E$ with $c(v) = c(u) + 1$, add an edge $(u,v)$ if $c(u),c(v) \neq i^*$, add an edge $(u_{k-\ell+1},v)$ in $E'$ if $c(u) = i^*$ and add an edge $(u,v_1)$ if $c(v) = i^*$. It is easy to see that if $G$ has a directed $s-t$ path of length $=\ell$, $G'$ will have a directed $s-t$ path of length $=k$ with probability at least $1/(\ell-1)^{\ell-1} \in \Omega(1)$, and if $G$ did not have a directed $s-t$ path of length $=\ell$, then $G'$ will not have a directed $s-t$ path of length $=k$. Invoking \Cref{lem:1-sided_error} proves the desired claim.
    \end{proof}

    \begin{proposition} \label{thm:dir_undir_st_paths}
        Let $3 \leq k \in \N$. Then,
        \begin{enumerate} 
            \item\label{item:pathst-pathstle} $\mathsf{Q}(\mathtt{Path}^{=k}_{s,t}) \in O(\mathsf{Q}(\mathtt{Path}^{\leq k}_{s,t})) $;
            \item\label{item:pathstle-dirpathstle} $\mathsf{Q}(\mathtt{Path}^{\leq k}_{s,t}) \in O(\mathsf{Q}(\mathtt{DirPath}^{\leq k}_{s,t})) $;
            \item\label{item:dirpathstle-dirpathst} $\mathsf{Q}(\mathtt{DirPath}^{\leq k}_{s,t})) \in O(\mathsf{Q}(\mathtt{DirPath}^{= k}_{s,t})) $;
            \item\label{item:dirpathst-pathst} $\mathsf{Q}(\mathtt{DirPath}^{= k}_{s,t})) \in O(\mathsf{Q}(\mathtt{Path}^{= k}_{s,t}))$.
        \end{enumerate}
    \end{proposition}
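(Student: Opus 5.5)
The plan is to prove each of the four items by a separate reduction. Items~1 and~4 use color-coding followed by \Cref{lem:1-sided_error}. Item~2 uses a deterministic embedding. Item~3 combines \Cref{lem:path_monotonicity} with the trivial small cases of \Cref{lem:subgraph-characterization}.

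For item~1, take an instance $G$ of $\mathtt{Path}^{=k}_{s,t}$. Color $s$ with $0$ and $t$ with $k$, and color every other vertex uniformly at random from $[k-1]$. Let $G'$ keep exactly those edges $\{u,v\}$ of $G$ with $|c(u)-c(v)|=1$; each query to $G'$ costs one query to $G$.
\begin{itemize}
\item \emph{Soundness.} Along any path in $G'$ the color changes by exactly $\pm1$ per step. An $s$--$t$ path must raise the color from $0$ to $k$, so it has length at least $k$. Hence any $s$--$t$ path of length at most $k$ in $G'$ has length exactly $k$ and is monotone in color. It is then a path of length $k$ in $G$, so a \textsc{no} instance maps to a \textsc{no} instance of $\mathtt{Path}^{\leq k}_{s,t}$ with certainty.
\item \emph{Completeness.} If $G$ has an $s$--$t$ path of length $k$, its internal vertices receive the colors $1,\dots,k-1$ in order with probability $(k-1)^{-(k-1)}\in\Omega(1)$. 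In that event the path survives in $G'$.
\end{itemize}
\Cref{lem:1-sided_error} then gives the claim. I expect this monotonicity argument to be the only step needing real care, specifically that no shorter or non-monotone path can appear. It is settled by the $\pm1$ color-change observation.

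For item~2, replace every undirected edge $\{u,v\}$ by both arcs $(u,v)$ and $(v,u)$. A query to an arc costs one query to $G$. Directed $s\to t$ paths of length $\ell$ in the new graph correspond exactly to undirected $s$--$t$ paths of length $\ell$ in $G$, so the reduction is deterministic.

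For item~3, write $\mathtt{DirPath}^{\leq k}_{s,t}=\bigvee_{\ell=1}^k \mathtt{DirPath}^{=\ell}_{s,t}$ and solve each of these constantly many total problems separately, which costs at most a constant factor over their maximum.
\begin{itemize}
\item For $\ell=1$ the cost is $O(1)$, and for $\ell=2$ it is $O(\sqrt n)$, by \Cref{lem:subgraph-characterization}. The directed version follows by the same argument.
\item Both of these are $O(\mathsf{Q}(\mathtt{DirPath}^{=k}_{s,t}))$, since for $k\geq 3$ the latter is $\Omega(n)$. This holds by \Cref{lem:subgraph-characterization} combined with \Cref{lem:undirected_to_directed}.
\item For $3\leq\ell\leq k$, \Cref{lem:path_monotonicity} gives $\mathsf{Q}(\mathtt{DirPath}^{=\ell}_{s,t})\in O(\mathsf{Q}(\mathtt{DirPath}^{=k}_{s,t}))$.
\end{itemize}

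For item~4, color-code as in item~1, with $s$ colored $0$, $t$ colored $k$, and the rest colored randomly from $[k-1]$. Build an undirected graph $G'$ in which $\{u,v\}$ is an edge if and only if $c(v)=c(u)+1$ and the arc $(u,v)$ is present in $G$. Since the colors are known, deciding an edge of $G'$ takes a single arc query.
\begin{itemize}
\item \emph{Soundness.} By the same $\pm1$ argument as in item~1, any undirected $s$--$t$ path of length exactly $k$ in $G'$ must increase the color at every step. Each of its edges is therefore an arc oriented forward in $G$, giving a directed $s\to t$ path of length $k$ in $G$.
\item \emph{Completeness.} A directed path of length $k$ in $G$ survives with constant probability.
\end{itemize}
\Cref{lem:1-sided_error} finishes the proof.
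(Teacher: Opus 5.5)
Your proposal is correct and matches the paper's proof item by item: color-coding with $s$ and $t$ fixed to colors $0$ and $k$, followed by \Cref{lem:1-sided_error}, for items~1 and~4, and \Cref{lem:path_monotonicity} for item~3. The only deviations are small improvements: in item~2 you use the deterministic bidirected embedding (valid, since the paper allows antiparallel arcs) instead of the paper's random orientation, and in item~3 you explicitly handle $\ell\in\{1,2\}$ via \Cref{lem:subgraph-characterization}, which the paper's loop over $3\leq\ell\leq k$ silently skips.
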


    \begin{proof}
        We will prove all these statements one by one via reductions.
        \begin{enumerate}
            \item \label{pt:1} 
            %We want to decide if a given graph $G = (V,E)$ has an $s-t$ path of length $=k$ given an algorithm for deciding if there is an $s-t$ path of length $\leq k$. We construct a graph $G' =(V,E')$ from $G$ using the color-coding technique as follows.
            Given a graph instance $G = (V,E)$ of $\mathtt{Path}^{=k}_{s,t}$, we construct a graph instance $G' =(V,E')$ of $\mathtt{Path}^{\leq k}_{s,t})$ using the color-coding technique as follows.
            Color $s$ with color $0$, color $t$ with color $k$, and for each vertex $v$ in $V \setminus \{s,t\}$, color $v$ uniformly at random from $[k-1]$ colors, and for each edge ${u,v} \in E$, add $\{u,v\}$ in $E'$ if $u$ and $v$ have consecutive colors. If $G$ had an $s-t$ path of length $=k$, $G'$ will have an an $s-t$ path of length $=k$ with probability at least $1/(k-1)^{k-1} \in \Omega(1)$, and if $G$ did not have an $s-t$ path of length $=k$, then $G'$ will not have it either. Moreover, any $s-t$ path in $G'$ will be of length at least $k$. 
            Invoking \Cref{lem:1-sided_error} proves the desired statement.
            %It follows that running $\mathcal{A}$ on $G'$ will output $\textsc{yes}$ with probability $\Omega(1)$ if $G$ is has an $s-t$ path of length $=k$ and $\textsc{no}$ with probability $1$ if $G$ has no such path. Therefore, repeating this process $O(1)$ times suffices.
            
            \item This follows from a reduction similar to \Cref{lem:undirected_to_directed} and invoking \Cref{lem:1-sided_error}. 
            %We want to decide if a given graph $G = (V,E)$ has an $s-t$ path of length $\leq k$ given an algorithm $\mathcal{A}$ for deciding if there is a directed $s-t$ path of length $\leq k$. We construct a graph $G' = (V,E')$ from $G$ as follows. For each edge $\{u,v\} \in E$, assign a direction uniformly at random (i.e. either have an edge directed from $u$ to $v$ or an edge directed from $v$ to $u$ with probability $1/2$ each) and add it in $E'$. If $G$ has an $s-t$ path of length $\leq k$, then $G'$ has a directed $s-t$ path of length $\leq k$ with probability at least $(1/2)^{k} = \Omega(1)$, and if $G$ did not have an $s-t$ path of length $\leq k$, then $G'$ will not have it either. It follows that running $\mathcal{A}$ on $G'$ will output $\textsc{yes}$ with probability $\Omega(1)$ if $G$ is has an $s-t$ path of length $\leq k$ and $\textsc{no}$ with probability $1$ if $G$ has no such path. Therefore, repeating this process $O(1)$ times suffices.

            \item \label{pt:3} %We want to decide if a given directed graph $G = (V,E)$ has a directed $s-t$ path of length $\leq k$ given an algorithm $\mathcal{A}$ for deciding if there is a directed $s-t$ path of length $=k$. 
            %First, we argue that $\mathcal{A}$ can be used to detect if a given graph $G$ has a directed $s-t$ path of length $=\ell$ for any $\ell \in [2,k]$. We construct a graph $G' = (V,E')$ from $G$ using the color-coding technique as follows. 
            
            %It follows that running $\mathcal{A}$ on $G'$ will output $\textsc{yes}$ with probability $\Omega(1)$ if $G$ is has an $s-t$ path of length $=\ell$ and $\textsc{no}$ with probability $1$ if $G$ has no such path. Therefore, repeating this process $O(1)$ times can boost the success probability in the $\textsc{yes}$ case to $1-O(1/k^2)$. 
            %, and only keep an edge ${u,v} \in E$ in $G'$ if $c(v) = c(u) + 1$ where $c(v)$ denotes the color of $v$.
            
            From \Cref{lem:path_monotonicity}, we have that $\mathsf{Q}(\mathtt{DirPath}^{= \ell}_{s,t})) \in O(\mathsf{Q}(\mathtt{DirPath}^{= k}_{s,t}))$ for all $3 \leq \ell \leq k$.    
            To decide if $G$ has a directed $s-t$ path of length $\leq k$, we loop over all $3 \leq \ell \leq k$ and run the above procedure to decide if $G$ has a directed $s-t$ path of length $=\ell$, and output \textsc{yes} if it outputs \textsc{yes} for any $\ell \in [k]$.

            \item \label{pt:4} 
            %We want to decide if a given directed graph $G = (V,E)$ has a directed $s-t$ path of length $= k$ given an algorithm $\mathcal{A}$ for deciding if there is an $s-t$ path of length $=k$. We construct an undirected graph $G' = (V,E')$ from $G$ using the color-coding technique similar to \cref{pt:1} (except that the undirected version of an edge in $\{u,v\} \in E$ is in $E'$ if $c(v) = c(u) + 1$). 
            Given a graph instance $G = (V,E)$ of $\mathtt{DirPath}^{= k}_{s,t}$, we construct a graph instance $G' =(V,E')$ of $\mathtt{Path}^{= k}_{s,t}$ using the color-coding technique as follows.
            Color $s$ with color $0$, color $t$ with color $k$, and for each vertex $v$ in $V \setminus \{s,t\}$, color $v$ uniformly at random from $[k-1]$ colors, and for each edge $(u,v) \in E$, add $\{u,v\}$ in $E'$ if $c(v) = c(u) + 1$.
            Now, note that if $G$ has a directed $s-t$ path of length $=k$, $G'$ will have an $s-t$ path of length $=k$ with probability at least $1/(k-1)^{k-1} \in \Omega(1)$, and if $G$ did not have a directed $s-t$ path of length $=k$, then $G'$ will not have an $s-t$ path of length $=k$. 
            Invoking \Cref{lem:1-sided_error} proves the desired statement.\qedhere
            %It follows that running $\mathcal{A}$ on $G'$ will output $\textsc{yes}$ with probability $\Omega(1)$ if $G$ is has a directed $s-t$ path of length $=k$ and $\textsc{no}$ with probability $1$ if $G$ has no such path. Therefore, repeating this process $O(1)$ times suffices.
        \end{enumerate}
    \end{proof}

    An immediate corollary of the above theorem is as follows.

    \begin{corollary}
        Let $3 \leq k \in \N$. Then, all the problems in $\mathtt{(|Dir)Path_{s,t}^{(=k|\leq k)}}$
        %$\mathtt{Path}^{=k}_{s,t}$, $\mathtt{Path}^{\leq k}_{s,t}$, $\mathtt{DirPath}^{=k}_{s,t}$, $\mathtt{DirPath}^{\leq k}_{s,t}$ 
        have the same, up to constants, quantum query complexity.
    \end{corollary}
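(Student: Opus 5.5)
The corollary follows by chaining the four reductions of \Cref{thm:dir_undir_st_paths} into a cycle. Each item states $\mathsf{Q}(\mathtt{A}) \in O(\mathsf{Q}(\mathtt{B}))$ for consecutive problems in the cyclic order
\[\mathtt{Path}^{=k}_{s,t} \to \mathtt{Path}^{\leq k}_{s,t} \to \mathtt{DirPath}^{\leq k}_{s,t} \to \mathtt{DirPath}^{=k}_{s,t} \to \mathtt{Path}^{=k}_{s,t}.\]
Since $O(\cdot)$-relations compose transitively, and composing a constant number of them (here at most three) multiplies only constants, I will follow the cycle from any problem to any other. This gives $\mathsf{Q}(\mathtt{A}) \in O(\mathsf{Q}(\mathtt{B}))$ for every ordered pair $\mathtt{A},\mathtt{B}$ in $\mathtt{(|Dir)Path_{s,t}^{(=k|\leq k)}}$, and therefore $\mathsf{Q}(\mathtt{A}) \in \Theta(\mathsf{Q}(\mathtt{B}))$.

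Concretely, I will fix two of the four problems and read off the path along the cycle between them in both directions. For each edge of that path I will invoke the corresponding item of \Cref{thm:dir_undir_st_paths}. Each item relies on \Cref{lem:1-sided_error} (and item 3 also on \Cref{lem:path_monotonicity}), and each comes with a constant that depends only on $k \in O(1)$. The product of at most three such constants is again a constant, so the overall overhead is constant.

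There is no real obstacle here. The only point to check is that all constants are uniform in $n$: the reductions use color-coding success probabilities such as $(k-1)^{-(k-1)}$, and item 3 loops over $k$ values of $\ell$, and both of these depend only on the fixed constant $k$. I also need $k \geq 3$ throughout so that every item of \Cref{thm:dir_undir_st_paths} applies, and this is exactly the hypothesis of the corollary.
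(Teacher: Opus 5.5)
Your proposal is correct and matches the paper, which states this corollary as an immediate consequence of \Cref{thm:dir_undir_st_paths}: its four items form the cycle $\mathtt{Path}^{=k}_{s,t} \to \mathtt{Path}^{\leq k}_{s,t} \to \mathtt{DirPath}^{\leq k}_{s,t} \to \mathtt{DirPath}^{=k}_{s,t} \to \mathtt{Path}^{=k}_{s,t}$, and composing at most three such $O(\cdot)$-relations, each with a constant depending only on $k$, gives pairwise $\Theta$-equivalence. One small precision about your description of item 3: it only loops over $3 \leq \ell \leq k$ via \Cref{lem:path_monotonicity}, not over all $k$ lengths; the lengths $\ell \in \{1,2\}$ can be decided directly with $O(\sqrt{n})$ queries, and this cost is absorbed by the $\Omega(n)$ lower bound that holds for $k \geq 3$.
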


    \begin{proposition}
        \label{thm:dir_st_blank_paths}
        Let $k \in \N$. Then,
        \begin{enumerate} 
            \item\label{item:dirpath-dirpathst} $\mathsf{Q}(\mathtt{DirPath}^{=k}) \in O(\mathsf{Q}(\mathtt{DirPath}^{= k+2}_{s,t})) $;
            \item\label{item:dirpathst-dirpath} For $k \geq 3$, $\mathsf{Q}(\mathtt{DirPath}^{=k}_{s,t}) \in O(\mathsf{Q}(\mathtt{DirPath}^{= k-2}))$.
        \end{enumerate}
    \end{proposition}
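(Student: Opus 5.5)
Both items are proved by direct randomized reductions, followed by \Cref{lem:1-sided_error}.

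\emph{Item 1.} Given an instance $G=(V,A)$ of $\mathtt{DirPath}^{=k}$ on $n$ vertices, I build $G'$ on $V\cup\{s,t\}$ with two fresh vertices. The arcs of $G'$ are all of $A$, plus an arc $(s,v)$ for every $v\in V$, plus an arc $(v,t)$ for every $v\in V$. No queries to $G$ are needed for the new arcs.

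However, $G'$ could contain an $s$--$t$ path of length $k+2$ that enters $V$ at $u$, leaves at $w$, and uses a directed $u$--$w$ path of length exactly $k$ in $G$. That is the right condition in one direction, but I must make sure spurious paths do not arise. Any $s$--$t$ path in $G'$ has the form $s\to u\to\cdots\to w\to t$. Its interior is a simple directed path in $G$ of length exactly $k$, so a length-$(k+2)$ $s$--$t$ path exists iff $G$ contains a directed path of length $k$. This is even deterministic, so no color-coding is needed. Queries cost at most one query to $G$, which gives $\mathsf{Q}(\mathtt{DirPath}^{=k})\le \mathsf{Q}(\mathtt{DirPath}^{=k+2}_{s,t})$ on $n+2$ vertices. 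Monotonicity of complexity in $n$ (padding with isolated vertices) then finishes this item.

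\emph{Item 2.} Given an instance $G$ of $\mathtt{DirPath}^{=k}_{s,t}$, I apply color-coding as in \Cref{lem:path_monotonicity}. I color $s$ with $0$ and $t$ with $k$, and color every other vertex uniformly from $[k-1]$. I keep only arcs $(u,v)$ with $c(v)=c(u)+1$ among vertices of colors $1,\dots,k-1$. I then restrict the endpoints: a vertex $u$ of color $1$ is kept only if $(s,u)\in A$, and a vertex $w$ of color $k-1$ is kept only if $(w,t)\in A$. I delete $s$ and $t$ themselves.

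The resulting graph $G''$ is layered with $k-1$ layers. Every directed path of length $k-2$ in it must go from layer $1$ to layer $k-1$, since arcs only increase the color by one. Such a path extends, via the retained arcs to $s$ and $t$, to a simple $s$--$t$ path of length $k$ in $G$; simplicity holds because the colors are distinct. Conversely, if $G$ has such a path, it is correctly colored with probability at least $(k-1)^{-(k-1)}\in\Omega(1)$ and then survives in $G''$.

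The main obstacle is the cost of a query to $G''$. Checking an arc $(u,v)$ now also requires checking the vertex-validity conditions $(s,u)\in A$ and $(v,t)\in A$, which takes $O(1)$ queries since only layer-$1$ and layer-$(k-1)$ vertices need these checks. Because the vertex set must be fixed in advance, I keep all vertices and simply drop arcs out of invalid layer-$1$ vertices and arcs into invalid layer-$(k-1)$ vertices. Each query to $G''$ therefore costs at most $3$ queries to $G$. The condition $k\ge3$ guarantees $k-2\ge1$, so the target problem is nontrivial. Invoking \Cref{lem:1-sided_error} concludes the proof.
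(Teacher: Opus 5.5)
Your proof is correct. For item~2 it is essentially the paper's reduction: color the interior with $[k-1]$, drop $s$ and $t$, and require adjacency to $s$ (resp.\ $t$) for arcs leaving layer $1$ (resp.\ entering layer $k-1$). Your conjunctive formulation is actually a bit more careful than the paper's disjunctive condition. Read literally, the paper's condition keeps spurious arcs when $k=3$, where both endpoint checks must apply to the same arc.

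For item~1 you take a genuinely simpler route. The paper color-codes $V$ with $k+1$ colors and keeps only color-increasing arcs. It attaches $s$ only to color-$1$ vertices and $t$ only to color-$(k+1)$ vertices, then amplifies the resulting one-sided randomized reduction with \Cref{lem:1-sided_error}. You instead attach $s$ and $t$ to all of $V$. Since $s$ has no in-arcs and $t$ no out-arcs, length-$(k+2)$ $s$--$t$ paths in $G'$ correspond exactly to length-$k$ directed paths in $G$. This gives a deterministic reduction costing one query per query. What the paper's color-coding buys is a layered $G'$ in which every $s$--$t$ path has length exactly $k+2$, so its instance even lands in $\mathtt{PromDirPath}^{=k+2}_{s,t}$. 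Your $G'$ typically has $s$--$t$ paths of many lengths, but that is irrelevant for the non-promise statement.

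The one step to fix is the last sentence of item~1: the monotonicity points the wrong way. Write $T(n)$ for the complexity of $\mathtt{DirPath}^{=k+2}_{s,t}$ on $n$ vertices. You have shown that $\mathtt{DirPath}^{=k}$ on $n$ vertices costs at most $T(n+2)$, so you need $T(n+2)\in O(T(n))$. Padding with isolated vertices only gives $T(n)\le T(n+2)$.

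There are two easy repairs:
\begin{itemize}
\item Prove the needed bound by deleting two uniformly random non-terminal vertices. A fixed path survives with probability $1-O(1/n)$, so \Cref{lem:1-sided_error} applies.
\item Avoid new vertices altogether: discard the arcs of two uniformly random vertices of $G$ and let them play the roles of $s$ and $t$.
\end{itemize}
The paper's construction also lives on $V\cup\{s,t\}$ and passes over this point silently. So this is a shared technicality rather than a flaw in your approach.
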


    \begin{proof}
        We will prove both these statements via reductions.
        \begin{enumerate}
            \item Given a graph instance $G = (V,E)$ of $\mathtt{DirPath}^{= k}$, we construct a graph instance $G' =(V \cup \{s,t\},E')$ of $\mathtt{DirPath}^{= k+2}_{s,t}$ using the color-coding technique as follows.
            %We want to decide if a given directed graph $G = (V,E)$ has a directed path of length $=k$ given an algorithm $\mathcal{A}$ for deciding if there is a directed $s-t$ path of length $=k+2$. We construct an undirected graph $G' = (V \cup \{s,t\},E')$ from $G$ using the color-coding technique. 
            For each vertex $v$ in $V$, color $v$ uniformly at random from $[k+1]$ colors, and for each edge $(u,v) \in E$, add it in $E'$ if $c(v) = c(u) + 1$. Also add edges $(s,v)$ for all vertices $v$ with color $1$ and edges $(v,t)$ for all vertices $v$ with color $k+1$. Now, note that if $G$ has a directed path of length $=k$, $G'$ will have a directed $s-t$ path of length $=k+2$ with probability at least $1/(k+1)^{k+1} \in \Omega(1)$, and if $G$ did not have a directed path of length $=k$, then $G'$ will not have a directed $s-t$ path of length $=k+2$. 
            Invoking \Cref{lem:1-sided_error} proves the desired statement.
            %It follows that running $\mathcal{A}$ on $G'$ will output $\textsc{yes}$ with probability $\Omega(1)$ if $G$ is has a directed path of length $=k$ and $\textsc{no}$ with probability $1$ if $G$ has no such path. Therefore, repeating this process $O(1)$ times suffices. 

            \item Given a graph instance $G = (V,E)$ of $\mathtt{DirPath}^{= k}_{s,t}$, we construct a graph instance $G' =(V \setminus \{s,t\},E')$ of $\mathtt{DirPath}^{= k-2}$ using the color-coding technique as follows.
            %We want to decide if a given directed graph $G = (V,E)$ has a directed $s-t$ path of length $=k$ given an algorithm $\mathcal{A}$ for deciding if there is a directed path of length $=k-2$. We construct an undirected graph $G' = (V \setminus \{s,t\},E')$ from $G$ using the color-coding technique.
            %Color $s$ with color $0$, color $t$ with color $k$, and 
            For each vertex $v$ in $V \setminus \{s,t\}$, color $v$ uniformly at random from $[k-1]$ colors, and for each edge $(u,v) \in E$ with $u,v \notin \{s,t\}$, add $(u,v)$ in $E'$ if $c(v) = c(u) + 1$ and either $c(u) \neq 1$ and $c(v) \neq k-1$ or $c(u) = 1$ and $(s,u) \in E$ or $c(v) = k-1$ and $(v,t) \in E$.
            %Let $S = \mathcal{N}_s$ and $T = \mathcal{N}_t$. Color all vertices in $S$ color $1$, color all vertices in $T$ color $k-1$, and for each vertex $v$ in $V \setminus (S \cup T)$, color $v$ uniformly at random from colors $[2,k-2]$. For each edge ${u,v} \in E$, add it in $E'$ if $s \neq u$, $t \neq v$ and $c(v) = c(u) + 1$. 
            Now, note that if $G$ has a directed $s-t$ path of length $=k$, $G'$ will have a directed path of length $=k-2$ with probability at least $1/(k-1)^{k-1} \in \Omega(1)$, and if $G$ did not have a directed path of length $=k$, then $G'$ will not have a directed $s-t$ path of length $=k-2$. 
            Invoking \Cref{lem:1-sided_error} proves the desired statement.\qedhere
            %It follows that running $\mathcal{A}$ on $G'$ will output $\textsc{yes}$ with probability $\Omega(1)$ if $G$ is has a directed path of length $=k$ and $\textsc{no}$ with probability $1$ if $G$ has no such path. Therefore, repeating this process $O(1)$ times suffices. 
        \end{enumerate}
    \end{proof}

    \begin{corollary}
        Let $3 \leq k \in \N$. Then, the problems in $\mathtt{DirPath}_{s,t}^{=k}$ and $\mathtt{DirPath}^{=k-2}$
        %$\mathtt{Path}^{=k}_{s,t}$, $\mathtt{Path}^{\leq k}_{s,t}$, $\mathtt{DirPath}^{=k}_{s,t}$, $\mathtt{DirPath}^{\leq k}_{s,t}$ 
        have the same, up to constants, quantum query complexity.
    \end{corollary}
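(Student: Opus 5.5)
The corollary follows by combining the two directions of \Cref{thm:dir_st_blank_paths} with the monotonicity in \Cref{lem:path_monotonicity}. The only subtlety is that the two reductions in \Cref{thm:dir_st_blank_paths} are not index-aligned in the way the corollary needs, so I will check the indices carefully.

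\textbf{Upper bound.} Item~\ref{item:dirpathst-dirpath} of \Cref{thm:dir_st_blank_paths} gives $\mathsf{Q}(\mathtt{DirPath}^{=k}_{s,t}) \in O(\mathsf{Q}(\mathtt{DirPath}^{=k-2}))$ for all $k \geq 3$. This is exactly one direction of the claim.

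\textbf{Lower bound.} For the reverse direction, apply item~\ref{item:dirpath-dirpathst} with $k-2$ in place of $k$. This gives
\[\mathsf{Q}(\mathtt{DirPath}^{=k-2}) \in O(\mathsf{Q}(\mathtt{DirPath}^{=k}_{s,t})).\]
The statement of that item is phrased for $k \in \N$, so this substitution requires $k - 2 \geq 1$, i.e.\ $k \geq 3$, which matches the hypothesis. Its proof uses only color-coding with $k-1$ colors plus two added vertices $s$ and $t$, which is valid for any $k-2 \geq 1$. No monotonicity argument is needed.

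\textbf{Edge case $k = 3$.} Here $\mathtt{DirPath}^{=1}$ asks for a single arc, which has query complexity $\Theta(n)$ by Grover search over $n^2$ pairs. Meanwhile $\mathtt{DirPath}^{=3}_{s,t}$ is $\Omega(n)$ by the third case of \Cref{lem:subgraph-characterization} (via \Cref{lem:undirected_to_directed}). Both reductions above still apply verbatim, so no special treatment is required.

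The main thing to verify is that the index shift in item~\ref{item:dirpath-dirpathst} is exactly $+2$, so the two reductions close the loop without needing \Cref{lem:path_monotonicity}. Since both reductions are query-efficient and one-sided, \Cref{lem:1-sided_error} has already been applied inside \Cref{thm:dir_st_blank_paths}. Combining the two $O(\cdot)$ bounds yields $\mathsf{Q}(\mathtt{DirPath}^{=k}_{s,t}) \in \Theta(\mathsf{Q}(\mathtt{DirPath}^{=k-2}))$.
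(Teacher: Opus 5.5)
Your proof is correct and matches the paper's argument: the corollary follows directly from the two items of \Cref{thm:dir_st_blank_paths}, with item~\ref{item:dirpath-dirpathst} applied at $k-2 \geq 1$ to give the reverse direction. Your opening sentence says \Cref{lem:path_monotonicity} is needed and that the indices are misaligned; as you conclude yourself later, neither is true, so you should drop that sentence.
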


    \begin{proposition} \label{thm:prom_nonprom_dir_st_paths}
        Let $3 \leq k \in \N$. Then,
        \begin{enumerate} 
            \item\label{item:promdirpathst-promdirpathstle} $\mathsf{Q}(\mathtt{PromDirPath}^{=k}_{s,t}) \in O(\mathsf{Q}(\mathtt{PromDirPath}^{\leq k}_{s,t})) $;
            \item\label{item:promdirpathstle-dirpathstle} $\mathsf{Q}(\mathtt{PromDirPath}^{\leq k}_{s,t}) \in O(\mathsf{Q}(\mathtt{DirPath}^{\leq k}_{s,t})) $;
            \item $\mathsf{Q}(\mathtt{DirPath}^{\leq k}_{s,t})) \in O(\mathsf{Q}(\mathtt{DirPath}^{= k}_{s,t})) $;
            \item\label{item:dirpathst-promdirpathst} $\mathsf{Q}(\mathtt{DirPath}^{= k}_{s,t})) \in O(\mathsf{Q}(\mathtt{PromDirPath}^{= k}_{s,t}))$.
        \end{enumerate}
    \end{proposition}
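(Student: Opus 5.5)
We prove the four items separately. Items 2 and 3 are immediate, item 1 needs a short colour-coding argument, and item 4 is the only one that requires real work.

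\textbf{Item 3.} This is exactly \Cref{thm:dir_undir_st_paths}-\ref{item:dirpathstle-dirpathst}, so nothing new is needed.

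\textbf{Item 2.} The $\mathtt{PromDirPath}^{\leq k}_{s,t}$-problem is the $\mathtt{DirPath}^{\leq k}_{s,t}$-problem restricted to a smaller domain. Running any algorithm for the latter therefore solves the former, exactly as in the first part of \Cref{Thm:PromReducesNonProm}.

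\textbf{Item 1.} I would not use inclusion of promises directly. The inputs satisfying the $=k$ promise are not a subset of those satisfying the $\leq k$ promise. More seriously, a \textsc{no} instance of the $=k$ problem may contain a shorter $s$--$t$ path (the promise only forbids connectivity without any length-$k$ path), so an algorithm for the $\leq k$ problem would wrongly accept. Instead I would colour-code as in \Cref{thm:dir_undir_st_paths}-\ref{item:pathst-pathstle}:
\begin{itemize}
\item colour $s$ with $0$, $t$ with $k$, and every other vertex uniformly at random from $[k-1]$;
\item keep an arc $(u,v)$ only if $c(v)=c(u)+1$.
\end{itemize}
In the resulting layered graph $G'$, every directed $s$--$t$ path has length exactly $k$. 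Hence $G'$ always satisfies the $\leq k$ promise: it either contains an $s$--$t$ path, necessarily of length $k$, or none at all. The two directions are then as follows.
\begin{itemize}
\item If $G$ is a \textsc{no} instance, it has no $s$--$t$ path of length $k$, so $G'$ has no $s$--$t$ path, and the reduction rejects with certainty.
\item If $G$ is a \textsc{yes} instance, its length-$k$ path survives with probability at least $(k-1)^{-(k-1)}$.
\end{itemize}
\Cref{lem:1-sided_error} then gives the bound. One caveat is that that lemma is stated for decision problems generally. The only requirement is that $G'$ lies in the promise domain, which we just verified.

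\textbf{Item 4.} This is the main content, since the reduction goes from a total problem to a promise problem. Given an instance $G$ of $\mathtt{DirPath}^{=k}_{s,t}$, I would build the same layered graph $G'$ as in item 1.
\begin{itemize}
\item In $G'$, any directed $s$--$t$ path has length exactly $k$, so $G'$ always satisfies the promise of $\mathtt{PromDirPath}^{=k}_{s,t}$.
\item Every path in $G'$ is a path in $G$, so if $G$ is a \textsc{no} instance then so is $G'$, with certainty.
\item If $G$ is a \textsc{yes} instance, $G'$ is a \textsc{yes} instance with probability at least $(k-1)^{-(k-1)}\in\Omega(1)$.
\end{itemize}
Each query to $G'$ costs one query to $G$, because the colours are classical randomness. \Cref{lem:1-sided_error} then finishes the proof.

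The subtle step in both items 1 and 4 is checking that the reduced instance satisfies the promise on every input, including \textsc{no} instances, since the promise algorithm's behaviour is unconstrained otherwise. The layered structure guarantees this, so I expect no real obstacle.
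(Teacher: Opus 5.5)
Your items 2, 3 and 4 match the paper's proof. Item 2 follows by restricting the domain, item 3 by citing \Cref{thm:dir_undir_st_paths}, and item 4 uses the same colour-coding with $s \mapsto 0$ and $t \mapsto k$. That colouring forces every surviving $s$--$t$ path to have length exactly $k$, so $G'$ always satisfies the promise.

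For item 1 you take a different route from the paper, and your stated reason for doing so is wrong. The paper simply observes that the $=k$ promise implies the $\leq k$ promise. Consider a \textsc{no} instance of $\mathtt{PromDirPath}^{=k}_{s,t}$: it has no $s$--$t$ path of length $k$. By the contrapositive of the promise, $s$ and $t$ are then not connected at all. So such an instance cannot contain a shorter $s$--$t$ path, and it is a promise-satisfying \textsc{no} instance of $\mathtt{PromDirPath}^{\leq k}_{s,t}$. A \textsc{yes} instance contains a path of length $k \le k$, so it is a \textsc{yes} instance of the latter problem. Hence the inputs satisfying the $=k$ promise \emph{are} a subset of those satisfying the $\leq k$ promise, the two functions agree there, and any algorithm for the $\leq k$ problem solves the $=k$ problem with no overhead. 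It looks like you conflated \textsc{no} instances of the promise problem with \textsc{no} instances of the total problem $\mathtt{DirPath}^{=k}_{s,t}$; only the latter can contain shorter paths.

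Your colour-coding argument for item 1 is nevertheless valid. It never uses the promise on $G$, so it proves the stronger statement $\mathsf{Q}(\mathtt{DirPath}^{=k}_{s,t}) \in O(\mathsf{Q}(\mathtt{PromDirPath}^{\leq k}_{s,t}))$. In fact it is just your item 4 reduction composed with the trivial inclusion. So the proof goes through, but the direct inclusion is simpler, and you should drop your two false claims about the promises.
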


    \begin{proof}
        We will prove all these statements one by one via reductions.
        \begin{enumerate}
            \item Note that the promise of $\mathtt{PromDirPath}^{=k}_{s,t}$ is stronger than the promise of $\mathtt{PromDirPath}^{\leq k}_{s,t})$ since the \textsc{yes} and \textsc{no} instances of the former are also \textsc{yes} and \textsc{no} instances of the latter respectively. Thus, an algorithm for the latter will also be an algorithm for the former.  
            
            \item This follows directly by the definitions of $\mathtt{PromDirPath}^{\leq k}_{s,t}$ and $\mathtt{DirPath}^{\leq k}_{s,t}$.
            
            \item This follows from \cref{pt:3} of \Cref{thm:dir_undir_st_paths}.
            
            \item Given a graph instance $G = (V,E)$ of $\mathtt{DirPath}^{= k}_{s,t}$, we construct a graph instance $G' =(V,E')$ of $\mathtt{PromDirPath}^{= k}_{s,t}$ using the color-coding technique as follows.
            Color $s$ with color $0$, color $t$ with color $k$, and for each vertex $v$ in $V \setminus \{s,t\}$, color $v$ uniformly at random from $[k-1]$ colors, and for each edge $(u,v) \in E$, add $(u,v)$ in $E'$ if $c(v) = c(u) + 1$.
            Now, note that any directed $s-t$ path in $G$ of length $\neq k$ will not be preserved in $G'$. Thus, $G'$ is a valid instance of $\mathtt{PromDirPath}^{= k}_{s,t}$. Also, any directed $s-t$ path of length $=k$ in $G$ is preserved in $G'$ with probability at least $1/(k-1)^{k-1} \in \Omega(1)$. Therefore, invoking \Cref{lem:1-sided_error} proves the desired statement.\qedhere
        \end{enumerate}
    \end{proof}

    An immediate corollary of the above theorem is as follows.

    \begin{corollary}
        \label{cor:path-island}
        Let $3 \leq k \in \N$. Then, all the problems in $\mathtt{(|Prom)DirPath}_{s,t}^{(=k|\leq k)}$
        %$\mathtt{Path}^{=k}_{s,t}$, $\mathtt{Path}^{\leq k}_{s,t}$, $\mathtt{DirPath}^{=k}_{s,t}$, $\mathtt{DirPath}^{\leq k}_{s,t}$ 
        have the same, up to constants, quantum query complexity.
    \end{corollary}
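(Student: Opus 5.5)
This corollary follows by chaining the reductions of \Cref{thm:prom_nonprom_dir_st_paths}, so the plan is to read off a cycle of $O(\cdot)$-relations through all four problems in $\mathtt{(|Prom)DirPath}_{s,t}^{(=k|\leq k)}$. Fix $3 \leq k \in \N$. Items 1 through 4 of \Cref{thm:prom_nonprom_dir_st_paths} give, in order,
\[
\mathsf{Q}(\mathtt{PromDirPath}^{=k}_{s,t}) \lesssim \mathsf{Q}(\mathtt{PromDirPath}^{\leq k}_{s,t}) \lesssim \mathsf{Q}(\mathtt{DirPath}^{\leq k}_{s,t}) \lesssim \mathsf{Q}(\mathtt{DirPath}^{=k}_{s,t}) \lesssim \mathsf{Q}(\mathtt{PromDirPath}^{=k}_{s,t}),
\]
where $\lesssim$ means membership in $O(\cdot)$.

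The chain is closed: it starts and ends at the same problem. Since composing a constant number of $O(\cdot)$-relations with $k$-dependent constants still gives an $O(\cdot)$-relation (as $k \in O(1)$), each of the four quantities is upper bounded by each other one up to constants. All four complexities are therefore in $\Theta$ of one another.

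The only points to check are the following.
\begin{itemize}
\item Item 3 of \Cref{thm:prom_nonprom_dir_st_paths} relies on \Cref{lem:path_monotonicity}, which requires $\ell \geq 3$. The lengths $\ell \in \{1,2\}$ have to be handled separately inside the $\leq k$ disjunction. By \Cref{lem:subgraph-characterization}, these cost only $O(\sqrt{n})$ queries. This is dominated by $\mathsf{Q}(\mathtt{DirPath}^{=k}_{s,t}) \in \Omega(n)$, which follows from case 3 of that lemma applied to the directed version, or from \Cref{lem:undirected_to_directed}.
\item The randomized reductions use \Cref{lem:1-sided_error}, which is stated for decision problems. In item 4 it is applied with a promise problem as the target. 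This is fine because the constructed instance $G'$ always satisfies the promise, so the black-box algorithm is only ever run on valid inputs.
\end{itemize}

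I expect no real obstacle here; the most delicate point is the small-$\ell$ bookkeeping in item 3.
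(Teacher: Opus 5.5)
Your proof is correct and matches the paper's: the paper also treats this as an immediate consequence of the closed chain of reductions in \Cref{thm:prom_nonprom_dir_st_paths}. Both of your side remarks are sound. First, handling lengths $\ell \in \{1,2\}$ separately, at cost $O(\sqrt{n})$ absorbed by the $\Omega(n)$ lower bound, fills a small gap left by the paper's proof of item~3 of \Cref{thm:dir_undir_st_paths}, which only loops over $3 \leq \ell \leq k$. Second, in item~4 the color-coded $G'$ does always satisfy the promise, since every directed $s$--$t$ path in it has length exactly $k$.
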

    
    \subsection{Cycle-containment reductions}
    \label{sec:CycleContainmentReductions}

    We now introduce cycle-containment problems.

    \begin{definition}[Cycle-containment problems]
        Let $3 \leq k \in \N$.
        \begin{enumerate}
            \item Let $C_k$ be the undirected cycle graph with $k$ edges, and let $s$ be any vertex on the cycle. Then, $\mathtt{Cycle}^{=k} := \mathtt{Subgraph}^{C_k}$ and $\mathtt{Cycle}^{=k}_s := \mathtt{Subgraph}^{C_k}_s$.
            \item Let $C_k'$ be the directed cycle graph with $k$ arcs all pointing in the same direction, and let $s$ be any vertex on the cycle. Then, $\mathtt{DirCycle}^{=k} := \mathtt{Subgraph}^{C_k'}$ and $\mathtt{DirCycle}^{=k}_s := \mathtt{Subgraph}^{C_k'}_s$.
        \end{enumerate}
        We also consider versions of the problem where we are looking for cycles of length at most $k$, i.e., for all $\mathtt{P} \in \mathtt{(|Dir)Cycle}_{(|s)}$, we write
        \[\mathtt{P}^{\leq k} = \bigvee_{\ell=3}^k \mathtt{P}^{=k}.\]
        Finally, we also consider the finding versions of these problems. That is, for all $\mathtt{P} \in \mathtt{(|Dir)Cycle}_{(|s)}^{(=k|\leq k)}$, we consider $\mathtt{FindP}$ where the task is to output the cycle. For the $(\leq k)$-problem, outputting a cycle of any length $3 \leq \ell \leq k$ suffices.
    \end{definition}

    For simplicity, we only consider cycles of length at least $3$. Technically, one can also have a ``cycle'' of length $2$ in the directed setting, if two arcs in opposite directions are present between two vertices, but we will refrain from calling these cycles in this work.

    We also consider promise problems of these problems.

    \begin{definition}[Promise cycle-containment problems]
        Let $3 \leq k \in \N$.
        \begin{enumerate}
            \item The $\mathtt{PromCycle}^{(=k|\leq k)}$ problem is the $\mathtt{Cycle}^{(=k|\leq k)}$-problem, restricted to inputs where if there exists any cycle in the graph, there must be at least one cycle of length exactly, or at most, $k$, respectively.
            \item The $\mathtt{PromDirCycle}^{(=k|\leq k)}$ problem is the $\mathtt{DirCycle}^{(=k|\leq k)}$-problem, restricted to inputs where if there exists any directed cycle in the graph, there must be at least one directed cycle of length exactly, or at most, $k$, respectively.
            \item The $\mathtt{PromCycle}_s^{(=k|\leq k)}$ problem is the $\mathtt{Cycle}_s^{(=k|\leq k)}$-problem, restricted to inputs where if there exists any cycle in the graph that passes through $s$, there must be at least one cycle of length exactly, or at most, $k$, respectively, that passes through $s$.
            \item The $\mathtt{PromDirCycle}_s^{(=k|\leq k)}$ problem is the $\mathtt{DirCycle}_s^{(=k|\leq k)}$-problem, restricted to inputs where if there exists any directed cycle in the graph that passes through $s$, there must be at least one directed cycle of length exactly, or at most, $k$, respectively, that passes through $s$.
        \end{enumerate}
        We consider the finding versions of these problems as well, which are defined to be the finding problems under the same restrictions on the input, and denoted by $\mathtt{PromFind(|Dir)Cycle}_{(|s)}^{(=k|\leq k)}$.
    \end{definition}

    We note here that the same reductions hold between the promise and non-promise versions of the cycle-containment problems as in the path case.

    \begin{lemma}
    \label{lem:PromFindForCycle}
        Let $3 \leq k \in \N$, and $\mathtt{P} \in \mathtt{(|Dir)Cycle}_{(|s)}^{(=k|\leq k)}$.
        \begin{enumerate}
            \item We have $\mathsf{Q}(\mathtt{PromP}) \in O(\mathsf{Q}(\mathtt{P}))$.
            \item We have $\mathsf{Q}(\mathtt{FindP}) \in \Theta(\mathsf{Q}(\mathtt{PromFindP}))$.
        \end{enumerate}
    \end{lemma}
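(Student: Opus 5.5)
The plan is to mirror the proof of \Cref{Thm:PromReducesNonProm} almost verbatim, since that argument used nothing specific to paths beyond two facts. First, the promise versions are restrictions of the non-promise problems to a subset of inputs. Second, a candidate solution of constant size can be verified with $O(1)$ queries.

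For the first item, we observe that $\mathtt{PromP}$ is, by definition, the partial boolean function obtained by restricting $\mathtt{P}$ to the inputs satisfying the relevant promise. Concretely, the promise reads ``if any (directed) cycle exists, or any one through $s$ in the restricted case, then one of length exactly $k$, respectively at most $k$, exists.'' Any algorithm that computes $\mathtt{P}$ with success probability at least $2/3$ on every input therefore also does so on every promise-satisfying input. Hence $\mathsf{Q}(\mathtt{PromP}) \leq \mathsf{Q}(\mathtt{P})$. The same comparison of relations gives $\mathsf{Q}(\mathtt{PromFindP}) \in O(\mathsf{Q}(\mathtt{FindP}))$.

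For the reverse direction of the second item, take an instance of $\mathtt{FindP}$ and run the optimal $\mathtt{PromFindP}$ algorithm on it, amplified to constant error by a constant number of repetitions. Suppose the input is a yes-instance, i.e.\ it contains a (directed) cycle of length $=k$, respectively $\leq k$, through $s$ where relevant. Then it automatically satisfies the promise, because the required short cycle exists. The promise algorithm thus outputs a valid cycle with high probability. Suppose instead the input is a no-instance. Then the promise may fail and the output is arbitrary.

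In both cases we therefore verify the output before returning it. We check that it is an injective map of $C_k$ (or $C_\ell$ with $3 \leq \ell \leq k$, or the directed counterpart $C'_k$ or $C'_\ell$). We check that it maps the distinguished vertex to $s$ in the restricted setting. Finally, we query all its at most $k \in O(1)$ edges or arcs. If any check fails we output $0$. In a no-instance no valid cycle exists, so verification always fails and we correctly output $0$. In a yes-instance the verified output is correct with high probability. The overhead is an additive $O(1)$ queries, so $\mathsf{Q}(\mathtt{FindP}) \in O(\mathsf{Q}(\mathtt{PromFindP}))$.

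There is no real obstacle. The only point needing care is the observation that every yes-instance of the non-promise problem satisfies the promise. This holds for all four promise variants, since the existence of a (directed) cycle of the required length, through $s$ if applicable, immediately witnesses the promise's conclusion.
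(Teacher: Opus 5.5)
Your proposal is correct and follows essentially the same argument as the paper, which simply defers to the proof of \Cref{Thm:PromReducesNonProm}. That argument has two parts: the promise problems are restrictions of the non-promise ones, and for the reverse finding direction one runs the $\mathtt{PromFind}$ algorithm and verifies its constant-size output with $O(1)$ queries, rejecting if the check fails. Your explicit observation that every yes-instance of $\mathtt{P}$ automatically satisfies the promise is exactly the point the paper uses implicitly.
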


    \begin{proof}
        The proof is identical to \Cref{Thm:PromReducesNonProm}.
    \end{proof}

    Similar to the path-containment setting, the above lemma informs us that we don't need to consider the finding versions of the cycle-containment problems, since they are always equivalent to the non-promise detection versions up to randomized reductions. Thus, for all $\mathtt{P} \in \mathtt{(|Dir)Cycle}_{(|s)}^{(=k|\leq k)}$, we only consider the problems $\mathtt{P}$ and $\mathtt{PromP}$ in the remainder of this section.

    We now prove several randomized reductions between these problems, and we give a concise graphical overview of them in \Cref{fig:cycle-problems}.

    \begin{figure}[!ht]
        \centering
        \begin{tikzpicture}
            % The unrestricted cycle subgraph picture
            \begin{scope}[shift={(4,0)}]
                \node[purple] (cycle) at (6.25,2) {$\mathtt{Cycle}^{=k}$};
                \node[blue] (dircycle) at (6.25,0) {$\mathtt{DirCycle}^{=k}$};
                \node[violet] (cyclele) at (3,2) {$\mathtt{Cycle}^{\leq k}$};
                
                \node[blue] (dircyclele) at (3,0) {$\mathtt{DirCycle}^{\leq k}$};
                \node[teal] (promcycle) at (7,4) {$\mathtt{PromCycle}^{=k}$};
                \node[orange] (promdircycle) at (7,-2) {$\mathtt{PromDirCycle}^{=k}$};
                \node[teal] (promcyclele) at (3,4) {$\mathtt{PromCycle}^{\leq k}$};
                \node[orange] (promdircyclele) at (3,-2) {$\mathtt{PromDirCycle}^{\leq k}$};

                \draw[->] ([shift={(-.2,0)}]promdircycle.south) arc (-180:0:.2) node[midway, below, align = center, font = \small] {$k \mapsto k+1$ \\[-.2em] Lem \ref{thm:MonotonicityCycles}-\ref{item:PromMonotonicityDirCycle}};
                \draw[->] ([shift={(-.2,0)}]dircycle.south) arc (-180:0:.2) node[midway, below, align = center, font = \small] {$k \mapsto k+1$ \\[-.2em] Lem \ref{thm:MonotonicityCycles}-\ref{item:MonotonicityDirCycle}};
                \draw[dashed, ->] ([shift={(-.15,0)}]cyclele.north east) arc (180:-90:.15) node[midway, above, align = center, font = \small] {Lem \ref{thm:MonotonicityCycles}-\ref{item:MonotonicityCycleLeqK} \\[-.2em] $k-1 \mapsto k$};

                \draw[bend left = 5, ->] (dircycle) to node[below, align = center, font = \small] {Prop \\[-.2em] \ref{Cycle-Equivalences-Part1}-\ref{item:DirCycleEqLeqk}} (dircyclele);
                \draw[bend left = 5, ->] (dircyclele) to node[above, align = center, font = \small] {Prop \\[-.2em] \ref{Cycle-Equivalences-Part1}-\ref{item:DirCycleEqLeqk}} (dircycle);
                
                \draw[bend left = 4, ->] (promdircycle) to node[below, align = center, font = \small] {Prop \\[-.2em] \ref{Cycle-Equivalences-Part1}-\ref{item:PromDirCycleEqLeqk}} (promdircyclele);
                \draw[bend left = 4, ->] (promdircyclele) to node[above, align = center, font = \small] {Prop \\[-.2em] \ref{Cycle-Equivalences-Part1}-\ref{item:PromDirCycleEqLeqk}} (promdircycle);
                
                \draw[->] (cyclele) to node[below, align = center, font = \small] {Prop \\[-.2em] \ref{Cycle-Equivalences-Part1}-\ref{item:CycleEqLeqk}} (cycle);
                
                \draw[bend left = 4, ->] (promcycle) to node[below, align = center, font = \small] {Prop \\[-.2em] \ref{Cycle-Equivalences-Part1}-\ref{Item:PromCycleEqLeqk}} (promcyclele);
                \draw[bend left = 4, ->] (promcyclele) to node[above, align = center, font = \small] {Prop \\[-.2em] \ref{Cycle-Equivalences-Part1}-\ref{Item:PromCycleEqLeqk}} (promcycle);
                
                \draw[bend right = 10, dashed, ->] (dircycle) to node[right, align = left, font = \small] {Prop \\[-.2em] \ref{Cycle-Equivalences-Part1}-\ref{Item:DirCycleLeq}} (cycle);
                \draw[bend right = 10, ->] (cycle) to node[left, align = right, font = \small] {Prop \\[-.2em] \ref{Cycle-Equivalences-Part1}-\ref{item:ClEqKdir}} (dircycle);
                
                \draw[->] (promcyclele) to node[left, align = left, font = \small] {Prop \\[-.2em] \ref{Cycle-Equivalences-Part1}-\ref{item:promCleqkNonprom}} (cyclele);
                \draw[->] (promdircyclele) to node[left, align = left, font = \small] {Prop \\[-.2em] \ref{Cycle-Equivalences-Part1}-\ref{item:promDirCleqkNonprom}} (dircyclele);
                \draw[bend left = 5, ->] ([shift={(.35,0)}]promcycle.south) to node[right, align = left, font = \small] {Prop \\[-.2em] \ref{Cycle-Equivalences-Part1}-\ref{item:PromCEqKPromDir}} ([shift={(.35,0)}]promdircycle.north);
                \draw[->] (cyclele) to node[left, align = left, font = \small] {Prop \\[-.2em] \ref{Cycle-Equivalences-Part1}-\ref{item:CEqKDir}} (dircyclele);
            \end{scope}

            % The restricted cycle subgraph picture
            \begin{scope}[shift={(15.5,0)}]
                % s-problems
                \node[red] (cycles) at (-1,2) {$\mathtt{Cycle}^{=k}_s$};
                \node[red] (dircycles) at (-1,0) {$\mathtt{DirCycle}^{=k}_s$};
                \node[red] (cyclesle) at (2.5,2) {$\mathtt{Cycle}^{\leq k}_s$};
                \node[red] (dircyclesle) at (2.5,0) {$\mathtt{DirCycle}^{\leq k}_s$};
                \node[teal] (promcycles) at (-1,4) {$\mathtt{PromCycle}^{=k}_s$};
                \node[red] (promdircycles) at (-1,-2) {$\mathtt{PromDirCycle}^{=k}_s$};
                \node[teal] (promcyclesle) at (2.5,4) {$\mathtt{PromCycle}^{\leq k}_s$};
                \node[red] (promdircyclesle) at (2.5,-2) {$\mathtt{PromDirCycle}^{\leq k}_s$};

                \draw[bend left = 5, ->] (dircycles) to node[above, align = center, font = \small] {Prop \\[-.2em] \ref{Cycle-Equivalences-Part2}-\ref{item:DirCycleEqLeqks}} (dircyclesle);
                \draw[bend left = 5, ->] (dircyclesle) to node[below, align = center, font = \small] {Prop \\[-.2em] \ref{Cycle-Equivalences-Part2}-\ref{item:DirCycleEqLeqks}} (dircycles);

                \draw[bend left = 4, ->] (promdircycles) to node[above, align = center, font = \small] {Prop \\[-.2em] \ref{Cycle-Equivalences-Part2}-\ref{item:PromDirCycleEqLeqks}} (promdircyclesle);
                \draw[bend left = 4, ->] (promdircyclesle) to node[below, align = center, font = \small] {Prop \\[-.2em] \ref{Cycle-Equivalences-Part2}-\ref{item:PromDirCycleEqLeqks}} (promdircycles);

                \draw[bend left = 5, ->] (cyclesle) to node[below, align = center, font = \small] {Prop \\[-.2em] \ref{Cycle-Equivalences-Part2}-\ref{item:CycleEqLeqksPartA}} (cycles);
                \draw[very thick, bend left = 5, ->] (cycles) to node[above, align = center, font = \small] {Prop \\[-.2em] \ref{Cycle-Equivalences-Part2}-\ref{item:CycleEqLeqksPartB}} (cyclesle);

                \draw[bend left = 5, ->] (promcycles) to node[above, align = center, font = \small] {Prop \\[-.2em] \ref{Cycle-Equivalences-Part2}-\ref{Item:PromCycleEqLeqks}} (promcyclesle);
                \draw[bend left = 5, ->] (promcyclesle) to node[below, align = center, font = \small] {Prop \\[-.2em] \ref{Cycle-Equivalences-Part2}-\ref{Item:PromCycleEqLeqks}} (promcycles);
                
                \draw[bend left = 10, ->] (promdircycles) to node[left, align = right, font = \small] {Prop \\[-.2em] \ref{Cycle-Equivalences-Part2}-\ref{item:PromDirCycleEqksDirCycleEqks}} (dircycles);
                \draw[very thick, bend left = 10, ->] (dircycles) to node[right, align = left, font = \small] {Prop \\[-.2em] \ref{Cycle-Equivalences-Part2}-\ref{item:PromDirCycleEqksDirCycleEqks}} (promdircycles);
                
                \draw[->] (promcyclesle) to node[right, align = left, font = \small] {Prop \\[-.2em] \ref{Cycle-Equivalences-Part2}-\ref{item:promCleqkNonproms}} (cyclesle);
                \draw[->] (promdircyclesle) to node[right, align = left, font = \small] {Prop \\[-.2em] \ref{Cycle-Equivalences-Part2}-\ref{item:promDirCleqkNonproms}} (dircyclesle);

                %\draw[->] (promcycles) to node[right] {\ref{Cycle-Equivalences-Part2}-\ref{item:PromCEqKPromDirs}} (promdircycles);

                \draw[->] (cyclesle) to node[right, align = left, font = \small] {Prop \\[-.2em] \ref{Cycle-Equivalences-Part2}-\ref{item:CEqKDirs}} (dircyclesle);
                
                \draw[very thick, bend left = 10, ->] (dircycles) to node[left, align = right, font = \small] {Prop \\[-.2em] \ref{Cycle-Equivalences-Part2}-\ref{item:dirCycleEqkCycleEqks}} (cycles);
                \draw[bend left = 10, ->] (cycles) to node[right, align = left, font = \small] {Prop \\[-.2em] \ref{Cycle-Equivalences-Part2}-\ref{item:ClEqKdirs}} (dircycles);
            \end{scope}
            
            \draw[->] (promdircycles) to node[above, align = center, font = \small] {Prop \\[-.2em] \ref{thm:CrossArrow-DirCycleEqksToDirCycleEqk}} node[below] {$k+1 \mapsto k$} (promdircycle);
        \end{tikzpicture}
        \caption{Pictorial representation of the randomized reductions we prove in this section, for $k \geq 3$. The dashed connections only hold for odd values of $k$. Problems depicted in the same color have the same quantum query complexity up to constants. Note that the connections represented in bold only exist in the restricted cycle-containment version of the problem, i.e., the corresponding result cannot be obtained in the left-hand side of the picture using the same techniques.}
        \label{fig:cycle-problems}
    \end{figure}

    Before we present the remaining cycle-containment problems and the reductions between them, we define some techniques similar to ones used to prove the path-containment reductions in \Cref{sec:PathContainmentReductions}, e.g., color-coding, layer insertion, but tailored towards cycle subgraph containment problems. 
 
    \begin{definition}[$\ell$-color cyclic construction for graphs ($|$through $s$)]
    \label{def:ColorCodingTechniqueForCycles} Let $\ell \in \mathbb{N}$. Let $G=(V,E)$ be a (directed or an undirected) graph. An $\ell$-color cyclic construction of $G$ ($|$through a given vertex $s$), let us denote by $G'=(V',E')$ a simple graph constructed as follows,
    \begin{enumerate} 
        \item set $V'=V$;
        \item for every $v \in V'$, assign a color from $[\ell-1]_0$ picked uniformly at random; if the $\ell$-color cyclic construction has to be through a specific vertex $s \in V'$ then color $s$ with $0$ and, for every $v \in V'\setminus \{s\}$ assign a color from $[\ell-1]$ uniformly at random;
        \item if $G$ is directed, then for every edge $(u,v) \in E$ with $c(v)=(c(u)+1) \mod \ell$, keep edge $(u,v)$ in $E'$; and, if $G$ is undirected, then for every edge $\{u,v\} \in E$ with $|c(v)-c(u)|\in \{1,\ell-1\}$ keep edge $\{u,v\}$ in $E'$. This construction additionally ensures that $G'$ is a directed graph if and only if $G$ is a directed graph.
    \end{enumerate}
    \end{definition}

    \begin{definition}[$(\ell,t)$-color cyclic construction for graphs]
    \label{def:layerInsertionCycle}
    Let $G=(V,E)$ be a (directed or an undirected) graph. Let $G'=(V',E')$ be a $\ell$-color cyclic graph constructed from $G$ ($|$through $s$) using the random construction as stated in $\Cref{def:ColorCodingTechniqueForCycles}$. The $(\ell,t)$-color cyclic construction of $G$ ($|$ through $s$), let us denote by $G''=(V'',E'')$, is a simple graph constructed (from $G'$) in the following way,
    \begin{enumerate}
        \item set $V''=V'$ and for every $v \in V''$ assign the same color that $v$ gets in $V'$;
        \item set $E''=E'$ to begin with;
        \item pick a color $i^* \in [\ell-1]$ and make $t$ copies of each vertex colored $i^*$;
        \item for each $j \in [t]$, let $v_j$ denote the $j$th copy of vertex $v$ with $c(v)=i^*$. If $G'$ is directed (undirected), then for all vertices with $c(v)=i^*$ add $(v,v_1)$ ($\{v,v_{1}\}$) in $E''$ and additionally for all $j\in [t-1]$ add edges $(v_j,v_{j+1})$ ($\{v_j,v_{j+1}\}$) in $E''$. Additionally, for all edges $(v,w) \in E''$ with $c(v)=i^*$ and $c(w)=(i^*+1) \mod \ell$, remove the edge $(v,w)$ ($\{v,w\}$) from $E''$ and instead add the edge $(v_t,w)$ ($\{v_t,w\}$) to $E''$. This construction additionally ensures that $G''$ is a directed graph if and only if $G$ (and by construction $G'$) is a directed graph.
    \end{enumerate}
    \end{definition}

    For the following problems we can show that the quantum query complexity is monotonically increasing in $k$.

    \begin{lemma}
    \label{thm:MonotonicityCycles}
    Let $3 \leq \ell \leq k \in \N$. Then, 
    \begin{enumerate}
        \item \label{item:MonotonicityDirCycle} $\Q(\mathtt{DirCycle}^{=\ell}) \in O(\Q(\DirCycleEqk))$;
        \item \label{item:PromMonotonicityDirCycle} $\Q(\mathtt{PromDirCycle}^{=\ell})\in O(\Q(\PromDirCycleEqk)$;
        \item \label{item:MonotonicityCycleLeqK} $\Q(\mathtt{Cycle}^{\leq(k-1)}) \in O(\Q(\CycleLeqk))$ whenever $k$ is an odd integer.
    \end{enumerate}
    Moreover, same monotonicity relations hold for $\mathtt{P}_{s}$ where $\mathtt{P} \in \{\mathtt{DirCycle},\mathtt{PromDirCycle},\mathtt{Cycle}\}$.
    \end{lemma}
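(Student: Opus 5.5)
The plan is to prove all three items through a single mechanism: the $(\ell,k-\ell)$-color cyclic construction of \Cref{def:layerInsertionCycle}, with chain length $t=k-\ell$, combined with \Cref{lem:1-sided_error}. Every edge of $G''$ is either a chain edge $(v,v_1)$ or $(v_j,v_{j+1})$, which costs no queries, or corresponds to a single edge of $G$ plus a color check, which costs one query. So the reduction is query-efficient.

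For item~\ref{item:MonotonicityDirCycle}, fix a directed $\ell$-cycle in $G$. With probability at least $\ell\cdot\ell^{-\ell}\in\Omega(1)$ its vertices receive the colors $0,1,\dots,\ell-1$ in cyclic order, and then it survives in $G'$. In $G''$ the vertex of color $i^*$ is extended by its chain of $t$ copies, so the cycle becomes a directed cycle of length $\ell+t=k$.

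For soundness, I use that every arc of $G''$ increases the color by one modulo $\ell$, with copies treated as carrying color $i^*$ and chain arcs as staying in the chain. Hence every directed cycle in $G''$ winds around the color classes $w\ge 1$ times, and its length is $w k$. Moreover, each pass from color $i^*$ to color $i^*+1$ must traverse a full chain. Contracting the chains maps a simple cycle of $G''$ injectively onto original vertices, giving a directed cycle of length $w\ell$ in $G'\subseteq G$. A cycle of length exactly $k$ therefore has $w=1$ and yields an $\ell$-cycle in $G$. Together with the completeness bound, \Cref{lem:1-sided_error} gives the claim.

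For item~\ref{item:PromMonotonicityDirCycle}, I reuse the same reduction. In a NO-instance the promise says $G$ has no directed cycle at all; then $G''$ is acyclic, which is a valid NO-instance. In a YES-instance, with constant probability the $\ell$-cycle survives, so $G''$ contains a $k$-cycle and is a valid YES-instance. Otherwise $G''$ may violate the promise, and the algorithm may answer arbitrarily. This is harmless because the amplification in \Cref{lem:1-sided_error} only needs a lower bound on the acceptance probability for YES-instances and perfect rejection for NO-instances.

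For item~\ref{item:MonotonicityCycleLeqK}, with $k$ odd, I would run, for each $\ell\in\{3,\dots,k-1\}$, the undirected $(\ell,k-\ell)$ construction, feed the result to the $\mathtt{Cycle}^{\le k}$ algorithm, and output the OR over $\ell$. This costs a constant factor since $k\in O(1)$. Completeness is as before: a surviving $\ell$-cycle becomes a $k$-cycle in $G''$.

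Soundness is the main obstacle, because undirected layered cycle graphs also contain non-winding cycles. The argument I would give has three steps.
\begin{enumerate}
\item Chain vertices have degree~$2$. So a simple cycle in $G''$ either avoids all chains, or traverses some chains entirely; contracting the chains yields a cycle in $G$ of strictly smaller length if at least one chain is used.
\item If a cycle avoids all chains, it never crosses between color $i^*$ and color $i^*+1$. It therefore lifts to an integer-colored walk in which every step changes the color by $\pm1$, so its length is even. Hence its length is at most $k-1$, since $k$ is odd.
\end{enumerate}
In either case, a cycle of length at most $k$ in $G''$ yields a cycle of length at most $k-1$ in $G$. This is exactly where oddness of $k$ is needed.

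For the restricted versions $\mathtt{P}_s$, I would use the constructions ``through $s$'': color $s$ with $0$ and choose $i^*\neq 0$. Then $s$ is never duplicated, and every cycle in the arguments above that passes through $s$ maps to a cycle through $s$, so the same proofs go through verbatim.
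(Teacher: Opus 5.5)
Your reduction is the same as the paper's: the $(\ell,k-\ell)$-color cyclic construction with \Cref{lem:1-sided_error}, the same tolerance of promise violations in item 2, an OR over $\ell$ in item 3, and coloring $s$ with $0$ for the restricted versions. Items 1 and 2 are correct, and your winding argument for item 1 is more explicit than the paper's.

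The gap is step 1 of your soundness argument for item 3. In the undirected construction only $v_1,\dots,v_{t-1}$ have degree $2$. The last copy $v_t$ inherits every edge from $v$ to color $i^*+1$, and $v$ keeps all its edges to color $i^*-1$. So a simple cycle of $G''$ can pass through $v_t$ via two outer neighbors, or visit both $v$ and $v_t$ without using any chain edge. Such a cycle neither avoids the chains nor traverses one, so your case split is not exhaustive. Moreover, its contraction can revisit $v$, so it need not be a cycle.

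For example, take $k=7$, $\ell=3$, $t=4$, $i^*=1$. Suppose $G$ has $c(v)=1$, $c(a)=c(a')=0$, $c(b)=c(b')=2$ and edges $va,va',vb,vb',ab,a'b'$. Then $v\,a\,b\,v_4\,b'\,a'\,v$ is a $6$-cycle in $G''$, and its contraction $v\,a\,b\,v\,b'\,a'\,v$ passes through $v$ twice. The statement survives here because $G$ has triangles, but your argument does not cover this case.

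The repair is short. Contracting each chain back to $v$ maps the non-chain edges of $G''$ injectively to edges of $G$. So a simple cycle of length $L$ using $c$ chain edges becomes a nonempty closed trail of length $L-c$ in $G$. Its edge set has all degrees even, so it contains a cycle of length at most $L-c$. Decomposing that edge set into edge-disjoint cycles gives one through $s$ whenever the trail visits $s$, which covers the restricted version.

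If $L\le k-1$ you are done. If $L=k$, lift the cycle to the $k$-layer cyclic order: colors $0,\dots,i^*$, then the $t$ copy layers, then colors $i^*+1,\dots,\ell-1$ shifted by $t$. Because $k$ is odd, the winding number is odd, hence nonzero. The cycle must therefore cross between layers $i^*$ and $i^*+1$, and only chain edges $\{v,v_1\}$ do that. Hence $c\ge 1$ and $L-c\le k-1$.
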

    \begin{proof}
    We prove all these statements one by one via reductions.
    \begin{enumerate}
        \item We will prove that, for any $\ell \in [k]\setminus \{1,2\}$, $\Q(\mathtt{DirCycle}^{=\ell})\in O(\Q(\DirCycleEqk))$. Let $G = (V,E)$ be the input of $\mathtt{DirCycle}^{= \ell}$, we construct an $(\ell,k-\ell)$-color cyclic construction corresponding to $G$ as sketched in \Cref{def:layerInsertionCycle}; let us denote the resultant directed graph by $G'' =(V'',E'')$. 
        If $G$ has a directed cycle of length $=\ell$ then $G''$ will have a directed cycle of length $=k$ with probability at least $1/\ell^{\ell} \in \Omega(1)$. In any other case, $G''$ will not have any directed cycle of length $=k$. Invoking \Cref{lem:1-sided_error} proves that for any $\ell \in [k]\setminus \{1,2\}$ we have $\Q(\mathtt{DirCycle}^{=\ell}) \in O(\Q(\DirCycleEqk))$.

        \item The argument for showing $\Q(\mathtt{PromDirCycle}^{=\ell}) \in O(\Q(\PromDirCycleEqk))$ for any $\ell \in [k]\setminus \{1,2\}$ is via the same reduction that is stated in the proof of \Cref{item:MonotonicityDirCycle}. What remains to be established is that the promise for $\PromDirCycleEqk$ is not ``badly'' violated for any of the reduced instances. Let $G$ be the input instance of $\mathtt{PromDirCycle}^{=\ell}$. Let $G''$ denote a $(\ell,k-\ell)$-color cyclic construction corresponding to $G$. 
        \begin{enumerate}
            \item \textbf{$G$ is a \textsc{no} instance  of $\mathtt{PromDirCycle}^{=\ell}$.} This means there is no directed cycle of \textit{any} length in $G$ and, note that the construction of $G''$ doesn't create any new cycles. This means that $G''$ constructed from such a $G$ will also not have any cycle. Hence, a \textsc{no} instance of $\mathtt{PromDirCycle}^{=\ell}$ maps to a well-defined promise-satisfying \textsc{no} instance of $\PromDirCycleEqk$.
            \item \textbf{$G$ is a \textsc{yes} instance  of $\mathtt{PromDirCycle}^{=\ell}$.} First observe that, When $G$ is a \textsc{yes} instance of $\mathtt{PromDirCycle}^{=\ell}$ then with probability $\Omega(1)$ the constructed $G''$ will also be a promise-satisfying \textsc{yes} instance of $\PromDirCycleEqk$. For the remaining fraction, the $\ell$-length cycle from $G$ is not preserved due to the random color coding but other length $>\ell$ cycles could be present. Which would then translate to no cycle of length $=k$ to be present in $G''$ but length $>k$ cycles being present in $G''$, and this is \textit{not} in the promise of $\PromDirCycleEqk$. 
        \end{enumerate}
        In spite of the occasional violation of promise in the \textsc{yes} case, the reduction can be made to work. When $G$ is a \textsc{no} instance the constructed $G''$ fully satisfies the promise for $\PromDirCycleEqk$ while being a \textsc{no} instance itself. And, $G''$ satisfies the promise for $\PromDirCycleEqk$ while being a \textsc{yes} instance with probability $\Omega(1)$ when $G$ is a \textsc{yes} instance. Therefore, running an algorithm for $\PromDirCycleEqk$ on $G''$ for $O(1)$ many times is sufficient to deduce whether or not $G$ is a \textsc{yes} or a \textsc{no} instance with high probability. This concludes the argument.
        %\footnote{\Subha{The algorithms in the quantum setting are bounded-error. This needs to accounted for especially for the promise settings like this one. Check other sections as well to make sure that it is addressed.}\Arjan{I updated the statement of \Cref{lem:1-sided_error} a bit, so that it says the following: if there is a randomized reduction that always maps \textsc{NO} to \textsc{NO}, and with probability $\Omega(1)$ maps \textsc{YES} to \textsc{YES}, then the bounded-error algo can be used as a black box. Notably, it doesn't assume that in the \text{YES}-case, the reduction always yields a valid input to the decision problem we reduce to. I think this should address the concern. ;)}\Subha{Thanks.}}
        \item Let $G=(V,E)$ be an undirected graph that is input to $\mathtt{Cycle}^{\leq(k-1)}$. Let $\mathcal{A}$ denote an algorithm for $\CycleLeqk$. We loop over $\ell \in [k-1]$ and run $\mathcal{A}$ on $G''(\ell)$ where $G''(\ell)$ denotes a random $(\ell,k-\ell)$-cyclic construction of $G$. If $\mathcal{A}$ outputs \textsc{no} for $G''(\ell)$ for all $\ell \in [k-1]$ (more precisely, for constant many constructions of $G''(\ell)$ for each $\ell$) then we know that $G$ has cycles only of length $>k-1$ and therefore we output \textsc{no}. On the other hand, if $\mathcal{A}$ outputs \textsc{yes} on any of the above runs then that is because either there was a $\leq (k-1)$ cycle in $G$ that got encoded in one of the $G''(\ell)$ with probability $> 1/\ell^{\ell} \in \Omega(1)$ for at least one $\ell \in [k-1]$ or there was length-$k$ cycle in the original graph $G$ that survived in $G''$. Fortunately, enforcing $k$ is odd ensures that any valid length-$k$ cycle in $G''$ passes through the color-coded layers and as well as the inserted dummy layers which then corresponds to some length-$\ell$ (where $\ell < k$) cycle of $G$.  Without enforcing that $k$ is odd we cannot get rid of the original length-$k$ cycles of $G$. Finally, invoking \Cref{lem:1-sided_error} concludes the proof. 
    \end{enumerate}

    The argument for monotonicity for $\mathtt{P}_{s}$ where $\mathtt{P} \in \{\mathtt{DirCycle},\mathtt{PromDirCycle},\mathtt{Cycle}\}$, works in the same way as above except for the difference that while constructing $(\ell,k-\ell)$-color cyclic construction, the color coding construction goes through $s$ as described in \Cref{def:ColorCodingTechniqueForCycles}. Note that, this difference changes the exact probability of success for mapping the \textsc{yes} instances to \textsc{yes} instances but is still $\Omega(1)$.
    \end{proof}

    Using these definitions and lemmas we will now prove a few cycle-containment reductions as illustrated in \Cref{fig:cycle-problems}. 

    \begin{proposition}\label{Cycle-Equivalences-Part1}
    Let $3 \leq k \in \N$ be a constant. Then,
    \begin{enumerate}
        \item \label{item:DirCycleEqLeqk} $\Q(\DirCycleEqk) \in \Theta(\Q(\DirCycleLeqk))$;
        \item \label{item:PromDirCycleEqLeqk} $\Q(\PromDirCycleLeqk) \in \Theta(\Q(\PromDirCycleEqk))$;
        \item \label{item:CycleEqLeqk} $\Q(\CycleLeqk) \in O(\Q(\CycleEqk))$; 
        \item \label{Item:PromCycleEqLeqk} $\Q(\PromCycleEqk) \in \Theta(\Q(\PromCycleLeqk))$;
        \item \label{Item:DirCycleLeq} $\Q(\DirCycleEqk) \in O(\Q(\CycleEqk))$ whenever $k$ is an odd integer;
        %\item $\Q(\PromDirCycleEqk=O(\DirCycleEqk)$; \Subha{This one can be omitted from here and may be i can put it in a corollary.}
        \item \label{item:promCleqkNonprom} $\Q(\PromCycleLeqk) \in O(\Q(\CycleLeqk))$;
        \item \label{item:promDirCleqkNonprom} $\Q(\PromDirCycleLeqk) \in O(\Q(\DirCycleLeqk)))$;
        
        \item \label{item:PromCEqKPromDir} $\Q(\PromCycleEqk) \in O(\Q(\PromDirCycleEqk))$;
        \item \label{item:CEqKDir} $\Q(\CycleLeqk) \in O(\Q(\DirCycleLeqk))$;
        \item \label{item:ClEqKdir} $\Q(\CycleEqk) \in O(\Q(\DirCycleEqk))$.
        
    \end{enumerate}   
    \end{proposition}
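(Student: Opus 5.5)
I will prove the ten items one at a time, each by a randomized reduction combined with \Cref{lem:1-sided_error}. Most items rely on the $k$-color cyclic construction of \Cref{def:ColorCodingTechniqueForCycles}, on the monotonicity results of \Cref{thm:MonotonicityCycles}, or on the random orientation trick of \Cref{lem:undirected_to_directed}. A few items hold almost by definition.

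\textbf{Trivial restrictions.} Items~\ref{item:promCleqkNonprom} and~\ref{item:promDirCleqkNonprom} hold because a promise problem is a restriction of its non-promise version, exactly as in \Cref{lem:PromFindForCycle}. In item~\ref{Item:PromCycleEqLeqk} and in the ``$=k\to\leq k$'' direction of item~\ref{item:PromDirCycleEqLeqk}, the promise for $=k$ is stronger. Hence a promise-satisfying instance of $\PromCycleEqk$ is also one of $\PromCycleLeqk$, with the same answer: a \textsc{no} instance is acyclic, and a \textsc{yes} instance contains a $k$-cycle. This gives $\Q(\PromCycleEqk)\in O(\Q(\PromCycleLeqk))$, and likewise in the directed case.

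\textbf{Reverse directions via monotonicity.} For the ``$\leq k\to=k$'' directions of items~\ref{item:DirCycleEqLeqk} and~\ref{item:PromDirCycleEqLeqk}, I loop over $\ell\in\{3,\dots,k\}$. For each $\ell$, I apply the $(\ell,k-\ell)$-color cyclic construction and run the $=k$ algorithm a constant number of times, accepting if any run accepts. In the directed setting every arc goes from color $c$ to $c+1 \bmod \ell$, so every directed cycle of $G''$ winds around the colors. Its length is therefore a multiple of $k$, and a $k$-cycle in $G''$ corresponds to an $\ell$-cycle in $G$; so \textsc{no} instances stay \textsc{no}. In the promise case, an acyclic $G$ produces an acyclic $G''$, which is a valid \textsc{no} instance. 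By \Cref{thm:MonotonicityCycles}, each $\ell$ costs $O(\Q(\DirCycleEqk))$, and there are only constantly many values of $\ell$.

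The undirected case $\PromCycleLeqk\to\PromCycleEqk$ is the one I expect to be the main obstacle. In the undirected layered cycle, a cycle can bounce back and forth between two adjacent layers without winding around. This creates spurious even cycles of length $k$, and it can also break the promise. My first attempt is to orient edges randomly as well, so that only winding cycles survive. However, the target problem is undirected, so the orientation would have to be erased again. I would instead apply the $k$-color construction to a length-$\ell$ cycle after inserting $k-\ell$ dummy layers, as in \Cref{thm:MonotonicityCycles}. Then I would argue that the only cycles in $G''$ of length exactly $k$ with no bouncing are the winding ones. For bounced cycles, I would use the forest promise: an acyclic $G$ gives an acyclic $G''$, because the construction only deletes edges and subdivides the edges entering the duplicated color class. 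Hence \textsc{no} instances are preserved. A \textsc{yes} instance of $\PromCycleLeqk$ contains some $\ell$-cycle, which survives as a $k$-cycle with constant probability. Since $\PromCycleEqk$ only needs correctness on acyclic graphs versus graphs with a $k$-cycle, this suffices via \Cref{lem:1-sided_error}.

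\textbf{Remaining items.} For item~\ref{item:CycleEqLeqk}, I take the $k$-color cyclic construction of an undirected $G$. Every cycle in it must either wind around the $k$ colors or bounce between adjacent layers, and either way its length is at least $k$ when no chord exists. More precisely, I restrict to edges between cyclically consecutive colors. A cycle of length less than $k$ cannot wind, so it has length at least $4$ and lives on fewer than $k$ consecutive colors. To exclude such cycles I will need an additional trick, and I plan to handle this case carefully. Items~\ref{item:CEqKDir} and~\ref{item:ClEqKdir} use random orientations, exactly as in \Cref{lem:undirected_to_directed}. For item~\ref{Item:DirCycleLeq}, I apply the $k$-color cyclic construction to the directed $G$ and then forget the directions. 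Since $k$ is odd, an undirected $k$-cycle must wind once around the color classes, and its edges then come from arcs that are consistently oriented, so it is a directed $k$-cycle of $G$. Finally, for item~\ref{item:PromCEqKPromDir}, a random orientation maps a forest to an acyclic digraph and preserves a $k$-cycle with probability $2^{1-k}$; if $G$ contains a cycle, any directed cycle that arises must be preserved with that probability.
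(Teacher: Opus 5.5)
Your argument for item~3 goes in the wrong direction and is left unfinished. Item~3 is $\Q(\CycleLeqk)\in O(\Q(\CycleEqk))$, so you must detect a cycle of length \emph{at most} $k$ using an algorithm for length \emph{exactly} $k$. A single $k$-color cyclic construction cannot do this, because it destroys every cycle of $G$ of length $\ell<k$. What you then try to do is suppress short, bouncing cycles so that a $\leq k$ test only sees $k$-cycles. That is what the converse $\Q(\CycleEqk)\in O(\Q(\CycleLeqk))$ would require, and the paper explicitly says it does not know how to do this. Your deferred ``additional trick'' is therefore a genuinely missing idea, not a detail. Your intermediate claim that such cycles have length at least $k$ is also false, since two adjacent color classes already carry $4$-cycles.

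The statement as written needs no such trick: reuse the construction from your $\PromCycleLeqk\to\PromCycleEqk$ paragraph, without any promise. For each $\ell\in\{3,\dots,k\}$, form the $(\ell,k-\ell)$-color cyclic construction $G''(\ell)$ and run the $\CycleEqk$ algorithm on it. An $\ell$-cycle of $G$ becomes a $k$-cycle of $G''(\ell)$ with probability at least $\ell^{-\ell}$. In the \textsc{no} case, bouncing cycles are harmless. Contracting each inserted path $v,v_1,\dots,v_{k-\ell}$ back to $v$ maps the non-path edges of $G''(\ell)$ bijectively onto the edges of $G'\subseteq G$. So any $k$-cycle of $G''(\ell)$ becomes a nonempty closed trail in $G$ of length at most $k$, which contains a cycle of length at most $k$. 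Hence a \textsc{no} instance of $\CycleLeqk$ produces no $k$-cycle in any $G''(\ell)$, and \Cref{lem:1-sided_error} finishes.

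You also never prove the direction $\Q(\DirCycleEqk)\in O(\Q(\DirCycleLeqk))$ of item~1. Your ``stronger promise'' argument covers only the $=k\to\leq k$ directions of items~2 and~4. It fails without a promise: for $k\geq 4$, a digraph with a directed triangle but no directed $k$-cycle is a \textsc{yes} instance for $\leq k$ and a \textsc{no} instance for $=k$. Here, apply the plain $k$-color cyclic construction and run the $\leq k$ algorithm. Every arc raises the color by $1 \bmod k$, so every directed cycle of $G'$ has length divisible by $k$. Thus $G'$ has a directed cycle of length at most $k$ only if it has a directed $k$-cycle, which is also one of $G$. Conversely, a directed $k$-cycle of $G$ survives with probability at least $k^{-k}$.

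The same contraction observation is also the right justification for your claim that the $(\ell,t)$-construction maps forests to forests. The construction does not subdivide edges; it splits each vertex of color $i^*$ into a path whose far end receives the edges toward color $i^*+1$.

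The remaining items are correct and follow the paper. For the $=k\to\leq k$ directions of items~2 and~4, your observation that the $=k$ promise is stronger is simpler than the paper's color-coding argument.
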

    \begin{proof}
    We will prove all these statements one by one via reductions.
    \begin{enumerate}
        \item First, we prove $\Q(\DirCycleEqk) \in O(\Q(\DirCycleLeqk))$. We want to decide if a directed input graph $G=(V, E)$ has a directed cycle of length $=k$ using an algorithm $\mathcal{A}$ that decides whether a directed graph has a cycle of length $\leq k$. We construct a $k$-color cyclic construction of $G$ as sketched in \Cref{def:ColorCodingTechniqueForCycles}; let us denote the resultant graph (which is a directed graph) as $G'=(V',E')$. If $G$ had a directed cycle of length $k$ then $G'$ will also have a cycle of length $=k$ with probability at least $1/k^k \in \Omega(1)$, and if $G$ did not have a directed cycle of length $k$ then $G'$ will not have it either. Moreover, any cycle in $G'$ will be of length at least $k$. 
        Invoking \Cref{lem:1-sided_error} proves the desired direction.
        
        In the other direction, from \Cref{item:MonotonicityDirCycle} of \Cref{thm:MonotonicityCycles} we have a reduction that establishes $\Q(\mathtt{DirCycle}^{=\ell}) \in O(\Q(\DirCycleEqk))$ for any $\ell \in [k]\setminus \{1,2\}$. 
        %The $\ell=1$ case is not well defined so suppose that $\ell >1$. Let $G = (V,E)$ be the input of $\mathtt{DirCycle}^{= \ell}$, we construct a $(\ell,k-\ell)$-color cyclic construction corresponding to $G$ as sketched in \Cref{def:layerInsertionCycle}; let us denote the resultant directed graph by $G'' =(V'',E'')$. %Set $V'=V$. For each vertex $v \in V'$, assign a color from $[\ell-1]_0$ chosen uniformly at random. Then, choose color $i^* =\ell-1$ and make $k-\ell$ many copies of each vertex colored $i^*$. Additionally, for each $j \in [k-\ell]$, let $v_j$ denote the $j$th copy of the vertex $v$ with $c(v) = i^*$. For all vertices $v$ with $c(v) = i^*$ and $j \in [k-\ell-1]$, add edges $(v_j,v_{j+1})$ in $E'$ and also add the edge $(v,v_1)$ in $E'$. If $G$ has a directed cycle of length $=\ell$ then $G''$ will also have a directed cycle of length $=k$ with probability at least $1/\ell^{\ell}=\Omega(1)$. In any other case, $G''$ will not have any directed cycle of length $=k$. Invoking \Cref{lem:1-sided_error} proves that for any $\ell \in [k]\setminus \{1,2\}$ we have $\Q(\mathtt{DirCycle}^{=\ell})=O(\Q(\DirCycleEqk))$. 
        To decide if $G$ has a directed cycle of length $\leq k$, we loop over all $\ell \in [k]\setminus \{1,2\}$ and run the above procedure to decide if $G$ has a directed cycle of length $=\ell$, and output \textsc{yes} if it outputs \textsc{yes} for any $\ell \in [k]\setminus \{1,2\}$. Hence, proving that $\Q(\mathtt{DirCycle}^{\leq k}) \in O(\Q(\DirCycleEqk))$.
        
        \item The reductions to prove both these relations are similar to the arguments presented for \Cref{item:DirCycleEqLeqk}. What remains to be argued is that the arguments hold in the promise setting as well. To prove the relation $\Q(\PromDirCycleEqk)\in O(\Q(\PromDirCycleLeqk))$ we use the same idea of its non-promise counterparts. It is easy to see that when $G$ (the input for $\PromDirCycleEqk$) is a \textsc{no} instance then because of the promise there will be no cycle of any length in $G$ which means $G'$ (the input for $\PromDirCycleLeqk$) will also have no cycles as the construction doesn't create new cycles. When $G$ is a \textsc{yes} instance, $G'$ is also a \textsc{yes} instance with probability $\Omega(1)$. Even though the promise of $\PromDirCycleLeqk$ may not be satisfied for the rest of the times, invoking \Cref{lem:1-sided_error} concludes the argument.

        To show the other relation, i.e., $\Q(\PromDirCycleLeqk) \in O(\Q(\PromDirCycleEqk))$, we invoke \Cref{item:PromMonotonicityDirCycle} of \Cref{thm:MonotonicityCycles} and then use the same reduction used for its non-promise counterparts as stated in \Cref{item:DirCycleEqLeqk}.
        
        \item We will show that $\Q(\CycleLeqk) \in O(\Q(\CycleEqk))$. Let $G$ be the input to $\CycleLeqk$. The idea is to loop over $\ell \in [3,k]$ and construct an $(\ell,k-\ell)$-cyclic construction of $G$ (as sketched in \Cref{def:layerInsertionCycle}) for every value of $\ell \in [3,k]$; let the constructed instance be denoted by $G''(\ell)$. The idea is to now run an algorithm for $\CycleEqk$, let us denote by $\mathcal{A}$,  on every $G''(\ell)$ and output \textsc{yes} if $\mathcal{A}$ outputs \textsc{yes} on $G''(\ell)$ for any value of $\ell \in [3,k]$, otherwise output \textsc{no}. This reduction works because of the following reasons. Suppose that $G$ did not contain any cycle of length $\leq k$, i.e., $G$ is a \textsc{no} instance of $\CycleLeqk$. Then either $G$ contains cycles of length $>k$ or no cycle at all. In either of the cases none of the $G''(\ell)$s will not contain a cycle of length $=k$ because the construction enforces that no new cycles are created and the length of the cycles (if present) in $G$ doesn't decrease in length. This means a \textsc{no} instance of $\CycleLeqk$ maps to a \textsc{no} instances of $\CycleEqk$. On the other hand, if $G$ was a \textsc{yes} instance of $\CycleLeqk$ then $G''(\ell)$ will be a \textsc{yes} instance of $\CycleEqk$ with probability $>1/\ell^{\ell} \in \Omega(1)$ for at least one $\ell \in [3,k]$. Finally, invoking \Cref{lem:1-sided_error} concludes the proof.

        (It is beneficial to note that the reduction in the other direction doesn't hold; at least we don't know how to prove it yet. Mainly, because we don't know how to get rid of smaller cycles.)
        
        \item The relation $\Q(\PromCycleLeqk) \in O(\Q(\PromCycleEqk))$ follows from the definitions of the respective promise problems, along with the reduction we use for \Cref{item:CycleEqLeqk} and by invoking \Cref{lem:1-sided_error}.
        
        For the other direction, $\Q(\PromCycleEqk) \in O(\Q(\PromCycleLeqk))$, we run the $\PromCycleLeqk$ algorithm on the $k$-color cyclic construction of $G=(V,E)$, let us denote by $G'=(V',E')$ where $G$ is the input to the $\PromCycleEqk$ problem. The promise on our input $G$ ensures that there won't be any cycle in the \textsc{no} case so running algorithm for $\PromCycleLeqk$ on $G'$ will give \textsc{no}. And, in the \textsc{yes} case the construction $G'$ will have a cycle of length $k$ with probability $> 1/k^k \in \Omega(1)$. Invoking \Cref{lem:1-sided_error} gives us the desired result.
        \item We will now argue $\Q(\DirCycleEqk) \in O(\Q(\CycleEqk)$ holds for odd values of $k$. Let $G=(V,E)$ be an input to $\DirCycleEqk$ problem. We first construct a $k$-color cyclic construction of $G$, let us denote it by $G'=(V',E')$. Note that $G'$ will also be a directed graph and will contain a directed cycle of length $=k$ with probability at least $1/k^k \in \Omega(1)$ if $G$ contains one. Moreover, if at all $G'$ contains directed cycles then it will be of length at least $k$ and furthermore ``cycles'' with inconsistent edge directions won't exist in $G'$ because of the color-coding. We now construct an undirected simple graph from $G'$, let us denote by $G''=(V'',E'')$. We do so by forgetting the directions of all the edges in $E'$. We then run an algorithm for $\CycleEqk$ on $G''$. If $k$ is even, then forgetting the edge directions in $G'$ could induce undirected cycles of length $=k$ that were not present in the directed $G'$; for example - a cycle could be just a multipartite matching between a few adjacent colors.  But this is not a problem in the case when $k$ is odd because any multipartite matching would only induce cycles of even length. Invoking \Cref{lem:1-sided_error} gives us the desired relation.
    \end{enumerate}
    
    For \Cref{item:promCleqkNonprom,item:promDirCleqkNonprom}, the relations follow from a straightforward application of \Cref{lem:PromFindForCycle}. And, for \Cref{item:CEqKDir,item:ClEqKdir,item:PromCEqKPromDir}, the relations follow from \Cref{lem:undirected_to_directed}.
    \end{proof}

    \begin{theorem}[Implicit in {\cite[Theorem 8]{belovs2012span}}]
    \label{Thm:BR-TriangleVersusForest}
    Let $3\leq k \in \N$ be a constant. Then, there exists a $O(n)$-query quantum algorithm that, given a simple graph $G$ with $n$ vertices, accepts if $G$ contains a cycle of length $k$ and rejects if $G$ is a forest, except with error probability at most $1/3$.  
    \end{theorem}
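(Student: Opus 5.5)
We will obtain this as a span-program construction in the style of Belovs and Reichardt, and the plan is to rederive the relevant piece of their Theorem~8 rather than invent a new algorithm. Their key primitive is the $O(n\sqrt{k})$ span program for $\mathtt{PromPath}^{\leq k}_{s,t}$. It accepts when an $st$-path of length at most $k$ exists and rejects when $s$ and $t$ are disconnected. The accepting witness is a unit flow along the short path, of size $O(k)$. The rejecting witness is the indicator of the connected component of $s$, of size $O(n^2)$. Hence the complexity is $O(\sqrt{k \cdot n^2}) = O(n\sqrt{k})$.

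\emph{Step 1: reduce to a color-coded instance.} Apply the $k$-color cyclic construction of \Cref{def:ColorCodingTechniqueForCycles} to $G$, obtaining $G'$. Because color coding only deletes edges, a forest remains a forest, so the \textsc{no} case is preserved with certainty. A $k$-cycle survives and traverses the colors $0,1,\dots,k-1$ in order with probability at least $k^{-k}$. By \Cref{lem:1-sided_error}, it therefore suffices to handle $G'$.

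\emph{Step 2: cut the cycle into a path.} Form $G''$ from $G'$ by splitting each color-$0$ vertex $v$ into $v^{+}$ and $v^{-}$. The vertex $v^{+}$ keeps the edges to color $1$, and $v^{-}$ keeps the edges to color $k-1$. Also add $v^{-}$ to a layered ``return'' structure. In the spirit of Belovs--Reichardt, we then run a span program that, for each color-$0$ vertex $v$ simultaneously, looks for a $v^{+}$--$v^{-}$ path.

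Concretely, we use their OR-of-$st$-connectivity span program with a super-source $s^\ast$ and super-sink $t^\ast$. Each vertex $v$ of color $0$ gets an internal copy of the layered graph tagged by $v$. The source connects to $v^{+}$ in copy $v$, and $v^{-}$ in copy $v$ connects to the sink. The copies all share the same input edges, so every query to $G$ is reused across copies. We will verify the following two witness bounds.

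\emph{Positive witness.} A $k$-cycle through colors $0,\dots,k-1$ gives a unit flow of length $O(k)$ inside a single copy, so the positive witness size is $O(k)$.

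\emph{Negative witness.} If $G'$ is a forest, then in every copy $v^{+}$ and $v^{-}$ are disconnected, since a connecting path would close a cycle through $v$. The negative witness is then the potential that is $1$ on the component of $v^{+}$ in each copy. Its size counts queried edges crossing the cut, weighted by the number of copies they appear in.

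\emph{Main obstacle.} The step I expect to be hardest is bounding this negative witness by $O(n^2)$ rather than $O(n^3)$. If implemented naively, each edge appears in up to $n$ copies. Belovs and Reichardt avoid this by exploiting the forest structure. In a forest, an edge $e$ lies on the $v^{+}$-side boundary only for those $v$ whose tree component contains $e$ in a specific way, and the potential can be chosen so that each edge is charged $O(1)$ in total.

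My first attempt will be to use a single shared copy with edge weights. The potential on a tree vertex $u$ is set to the indicator of ``$u$ lies on the $+$ side of the unique path to the nearest color-$0$ vertex''. Summed over the forest, this gives $O(n^2)$ charge. Combining the two witness sizes then yields complexity $O(\sqrt{k\cdot n^2}) = O(n)$ for constant $k$. The $1/3$ error bound follows from the standard span-program evaluation algorithm, after amplifying the success probability of the color-coding step.
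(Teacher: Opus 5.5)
The paper does not prove this statement; it only attributes it to Belovs--Reichardt. Your architecture is workable: color coding, then one copy of $G''$ per color-$0$ vertex $v$, with $s^\ast$ attached to $v^+$ and $t^\ast$ to $v^-$ in copy $v$. Input bits are shared across copies, the positive witness is $O(k)$, and $v^+,v^-$ are disconnected when $G'$ is a forest. But there is a genuine gap at exactly the step you flag: you never produce a negative witness of size $O(n^2)$.

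\emph{The witness you write down is too expensive.} Potential $1$ on the component of $v^+$ in copy $v$ and $0$ elsewhere costs about $n\,|C(v^+)|$ per copy. That totals $\Theta(n^3)$ when $G$ is a star whose center gets color $1$: all the $v^+$ then lie in one large component.

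\emph{The remaining steps do not close the gap.} The claim that ``each edge is charged $O(1)$'' comes with no mechanism. Your concrete fix, collapsing to a single shared copy, fails outright. Take a forest containing a path $w, a_1, \dots, a_{k-1}, v$ with $w \neq v$ of color $0$ and $c(a_i)=i$. Then $s^\ast$ and $t^\ast$ are connected through $w^+$ and $v^-$, so no negative witness exists for any choice of edge weights. The ``layered return structure'' is also never specified.

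\emph{The missing idea is to amortize over the copies using a second forest.}
\begin{enumerate}
\item Let $\mathcal{F}$ be the multigraph whose nodes are the connected components of $G''$, with one edge joining $C(v^+)$ and $C(v^-)$ for each color-$0$ vertex $v$.
\item $G'$ arises from $G''$ by identifying each pair $v^+,v^-$, and $\mathcal{F}$ has as many components as $G'$. Counting vertices, edges and components then shows that the cycle rank ($|E|-|V|$ plus the number of components) of $G'$ is the sum of the cycle ranks of $G''$ and $\mathcal{F}$. Hence if $G'$ is a forest, so is $\mathcal{F}$.
\item Root each tree of $\mathcal{F}$, and let $Q_v$ be the child endpoint of the edge $v$. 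In copy $v$, give $Q_v$ its forced value and every other component the opposite value. This is a valid potential.
\item Only possible edges with exactly one endpoint in $Q_v$ are charged, so copy $v$ costs at most $2n|Q_v|$.
\item Each component is the child endpoint of at most one edge, so $\sum_v |Q_v| \le |V(G'')| \le 2n$.
\end{enumerate}
The total negative witness size is therefore $O(n^2)$, and the span program has complexity $O(\sqrt{k}\,n)$. With this replacing your last two paragraphs, the argument goes through.
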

    
    Combining \Cref{Item:PromCycleEqLeqk} of \Cref{Cycle-Equivalences-Part1} and \Cref{Thm:BR-TriangleVersusForest} we conclude the following.

    \begin{corollary}
    Let $3 \leq k \in \mathbb{N}$ be a constant. Then, $\Q(\mathtt{PromCycle}^{=k|\leq k}) \in \Theta(n)$. 
    \end{corollary}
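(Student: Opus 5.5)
The corollary asserts $\Q(\PromCycleEqk) \in \Theta(n)$ and $\Q(\PromCycleLeqk) \in \Theta(n)$. I will obtain the upper bound for both from \Cref{Thm:BR-TriangleVersusForest} together with \Cref{Item:PromCycleEqLeqk} of \Cref{Cycle-Equivalences-Part1}, and the lower bound from a direct embedding of unstructured search.

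\emph{Upper bound.} Let $G$ be a valid input to $\PromCycleEqk$. If $G$ is a \textsc{yes} instance, it contains a cycle of length exactly $k$. If $G$ is a \textsc{no} instance, the promise says $G$ contains no cycle at all, so $G$ is a forest. This is exactly the distinguishing task solved by \Cref{Thm:BR-TriangleVersusForest}, which therefore gives $\Q(\PromCycleEqk) \in O(n)$. By \Cref{Item:PromCycleEqLeqk} of \Cref{Cycle-Equivalences-Part1}, $\Q(\PromCycleLeqk) \in \Theta(\Q(\PromCycleEqk))$, so the $\leq k$ version is also $O(n)$.

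Alternatively, one can avoid that proposition for the $\leq k$ case. Run the algorithm of \Cref{Thm:BR-TriangleVersusForest} once for each $\ell \in \{3,\dots,k\}$ and accept if any run accepts. A \textsc{yes} instance contains a cycle of some length $\ell \leq k$, so the run for that $\ell$ accepts. A \textsc{no} instance is a forest, so every run rejects. Amplifying each run to error probability $1/(3k)$ costs only a constant factor, since $k$ is constant.

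\emph{Lower bound.} I embed $\mathtt{OR}$ on $\Theta(n^2)$ bits, in the spirit of the third case of \Cref{lem:subgraph-characterization}. Fix a path $P$ on $k-2$ edges with endpoints $a$ and $b$. Add $n$ candidate vertices $x_1,\dots,x_n$, each joined to $a$, and $n$ candidate vertices $y_1,\dots,y_n$, each joined to $b$. The unknown input bits are the $n^2$ potential edges $\{x_i,y_j\}$.

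If no such edge is present, the graph consists of the path $P$ with two stars attached to its endpoints, which is a tree. This is a valid \textsc{no} instance for both promise problems.

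If exactly one edge $\{x_i,y_j\}$ is present, its unique cycle is $a$--$x_i$--$y_j$--$b$ followed by $P$ back to $a$, which has $3 + (k-2) = k+1$ edges. To make this exactly $k$, use a path with $k-3$ edges instead. This requires $k \geq 4$; for $k = 3$, identify $a = b$ so that the cycle is the triangle $a$--$x_i$--$y_j$. With this adjustment, each single-edge input is a \textsc{yes} instance with a unique cycle of length $k$, so the promise holds.

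Restricting to inputs with at most one marked edge, these instances solve search over $n^2$ items, which requires $\Omega(n)$ quantum queries. Hence both problems have query complexity $\Omega(n)$.

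The only step needing care is checking that the lower-bound instances satisfy the promise. This holds because on every restricted input the graph is either a forest or has exactly one cycle, and that cycle has length $k$.
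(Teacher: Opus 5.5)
Your argument is correct and follows the paper's route. The upper bound is exactly the combination of \Cref{Thm:BR-TriangleVersusForest} (a \textsc{no} instance of either promise problem is a forest) with \Cref{Item:PromCycleEqLeqk} of \Cref{Cycle-Equivalences-Part1}, and your alternative OR-over-$\ell$ argument for the $\leq k$ case is also sound. For the lower bound, the paper relies only implicitly on the search embedding from the third case of \Cref{lem:subgraph-characterization}, which is stated for total problems. Your explicit check supplies the missing detail: with a path of $k-3$ edges (and $a=b$ when $k=3$), every instance with at most one marked edge is a tree or unicyclic with a $k$-cycle, so it satisfies both promises, and this makes the $\Omega(n)$ bound rigorous.
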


    We will now present a few other cycle-containment reductions where the cycles are required to pass through a specific vertex, let us denote by $s$, given as input. For these reductions we will use color-coding and layer insertion techniques as done earlier, but we will use the color-coding version that passes through $s$. It is interesting to note that for some of these problems we are able to prove stronger relation in contrast to its counterparts that don't have this additional requirement of the subgraph containing $s$. For example, \Cref{item:CycleEqLeqksPartB,item:dirCycleEqkCycleEqks} in \Cref{Cycle-Equivalences-Part2} are to name a few. See \Cref{fig:cycle-problems} for an illustrative and comparative summary of the results.

    \begin{proposition}\label{Cycle-Equivalences-Part2}
    Let $3 \leq k \in \N$ be a constant. Then,
    \begin{enumerate} 
        \item \label{item:DirCycleEqLeqks} $\Q(\DirCycleEqks) \in \Theta(\Q(\DirCycleLeqks))$;
        \item \label{item:PromDirCycleEqLeqks} $\Q(\PromDirCycleLeqks) \in \Theta(\Q(\PromDirCycleEqks))$;
        \item \label{item:CycleEqLeqks} 
        \begin{enumerate}
            \item \label{item:CycleEqLeqksPartA} $\Q(\CycleLeqks) \in O(\Q(\CycleEqks))$;
            \item \label{item:CycleEqLeqksPartB} $\Q(\CycleEqks) \in O(\Q(\CycleLeqks))$;
        \end{enumerate} 
        \item \label{Item:PromCycleEqLeqks} $\Q(\PromCycleEqks) \in \Theta(\Q(\PromCycleLeqks))$;
        \item \label{item:dirCycleEqkCycleEqks} $\Q(\DirCycleEqks) \in O(\Q(\CycleEqks))$;
        \item \label{item:PromDirCycleEqksDirCycleEqks} $\Q(\PromDirCycleEqks) \in  \Theta(\Q(\DirCycleEqks))$; 
        \item \label{item:promCleqkNonproms} $\Q(\PromCycleLeqks) \in O(\Q(\CycleLeqks))$;
        \item \label{item:promDirCleqkNonproms} $\Q(\PromDirCycleLeqks) \in O(\Q(\DirCycleLeqks)))$;
    
        \item \label{item:PromCEqKPromDirs} $\Q(\PromCycleEqks) \in O(\Q(\PromDirCycleEqks))$;
        \item \label{item:CEqKDirs} $\Q(\CycleLeqks) \in O(\Q(\DirCycleLeqks))$;
        \item \label{item:ClEqKdirs} $\Q(\CycleEqks) \in O(\Q(\DirCycleEqks))$.  
    \end{enumerate}   
    \end{proposition}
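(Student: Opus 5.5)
We prove the eleven items one at a time, largely reusing the arguments of \Cref{Cycle-Equivalences-Part1}. The difference is that we now use the $k$-color cyclic construction through $s$ (\Cref{def:ColorCodingTechniqueForCycles}), in which $s$ gets color $0$. Fixing $s$ only changes the survival probability of a fixed cycle to some $\Omega(1)$ quantity like $1/(k-1)^{k-1}$. Each reduction is concluded with \Cref{lem:1-sided_error}.

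\textbf{Routine items.}
\begin{enumerate}
\item[\ref{item:DirCycleEqLeqks}--\ref{item:PromDirCycleEqLeqks}] These follow exactly as in \Cref{Cycle-Equivalences-Part1}. For the direction from $=k$ to $\leq k$, the $k$-color cyclic construction through $s$ leaves only directed cycles of length at least $k$. For the reverse direction, we loop over $\ell$ and apply the monotonicity of \Cref{thm:MonotonicityCycles} in its $s$-version. In the promise case we note, as before, that the construction creates no new cycles through $s$, so \textsc{no} instances stay promise-satisfying \textsc{no} instances.
\item[\ref{item:CycleEqLeqksPartA}, \ref{Item:PromCycleEqLeqks}] These copy items \ref{item:CycleEqLeqk} and \ref{Item:PromCycleEqLeqk} of \Cref{Cycle-Equivalences-Part1} via $(\ell,k-\ell)$-constructions through $s$.
\item[\ref{item:promCleqkNonproms}, \ref{item:promDirCleqkNonproms}] These are the restriction arguments of \Cref{lem:PromFindForCycle}.
\item[\ref{item:PromCEqKPromDirs}, \ref{item:CEqKDirs}, \ref{item:ClEqKdirs}] These follow from random orientation, i.e.\ \Cref{lem:undirected_to_directed}. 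In the promise case we note that orientation creates no directed cycle from an acyclic-through-$s$ instance.
\end{enumerate}

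\textbf{Item \ref{item:PromDirCycleEqksDirCycleEqks}.} The direction $\PromDirCycleEqks \to \DirCycleEqks$ is trivial. For the converse, apply the $k$-color cyclic construction through $s$ to a $\DirCycleEqks$ instance. Any directed cycle through $s$ in $G'$ must advance the color by one along each arc, so its length is a multiple of $k$. Since $s$ is the unique vertex of color $0$, such a cycle must have length exactly $k$. Hence $G'$ either has a $k$-cycle through $s$ or none at all, and so always satisfies the promise. A $k$-cycle of $G$ survives with probability $\Omega(1)$. This is the same ``unique color-$0$ vertex'' trick as in \Cref{thm:prom_nonprom_dir_st_paths}-\ref{item:dirpathst-promdirpathst}.

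\textbf{Item \ref{item:dirCycleEqkCycleEqks}.} Take the directed $k$-color cyclic construction through $s$ and forget the directions. In the undirected graph, a cycle through $s$ of length $k$ is a closed walk on the color cycle $\mathbb{Z}_k$ starting and ending at color $0$. It visits color $0$ only at $s$, and it cannot immediately backtrack through $s$, since the two neighbors of $s$ on the cycle are distinct vertices. I would argue that the walk leaves $s$ into color $1$ (resp.\ $k-1$) and must return through color $k-1$ (resp.\ $1$) without touching color $0$ in between. Such a walk has net winding $\pm 1$, so its length is at least $k$, and length exactly $k$ forces it to be monotone. A monotone walk corresponds to a consistently oriented directed $k$-cycle in $G'$, reversed if necessary. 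Reversal is harmless: undirected edges of the construction only join consecutive colors, each arising from an arc $c\to c+1$, so monotone-increasing traversal uses arcs forward. Here, unlike the unrestricted case, no parity assumption is needed.

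\textbf{Item \ref{item:CycleEqLeqksPartB}.} This is the step I expect to be the main obstacle, since it has no unrestricted analogue. Given a $\CycleEqks$ instance, apply the undirected $k$-color cyclic construction through $s$. By the winding argument above, every cycle through $s$ in $G'$ has length at least $k$: a cycle avoiding $s$ is irrelevant, and one through $s$ cannot return to $s$ with winding $0$. So a $\CycleLeqks$ algorithm on $G'$ detects exactly the $k$-cycles through $s$. The subtle point to verify carefully is the winding-zero case, namely that the walk cannot go out through color $1$ and come back through color $1$. That would require two distinct color-$1$ neighbors of $s$ joined by a path avoiding color $0$, and such a path stays in colors $1,\dots,k-1$ on a path graph. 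This gives a winding-$0$ cycle through $s$ of length possibly less than $k$, which breaks the argument. I would fix this by additionally deleting all edges $\{s,v\}$ with $c(v)=k-1$ in half of the runs, or more simply by splitting the neighbors of $s$. Concretely, $s$ keeps only its edges to color-$1$ vertices, and its edges to color-$(k-1)$ vertices are handled through a second copy $s'$ merged back... I would first try to show that the winding-$0$ case is impossible by keeping $s$ adjacent only to color-$1$ vertices and color-$(k-1)$ vertices via distinct \emph{directions} in the coloring. Failing that, I would use the layered path reduction: split $s$ into $s$ (adjacent to color $1$) and $t$ (adjacent to color $k-1$), so that cycles correspond to $s$--$t$ paths. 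That reduction forces every such path to have length at least $k$.
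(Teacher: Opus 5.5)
Your handling of items 1, 2, 3a, 4 and 6--11 is fine and matches the paper: color coding through $s$, monotonicity, restriction, random orientation, and the unique-color-$0$ argument for item 6. The genuine gaps are items 5 and 3b, and they have a common cause.

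In item 5 you claim that a cycle through $s$ must leave $s$ into color $1$ and return from color $k-1$ (or vice versa). Nothing forces this. The vertex $s$ can have two distinct color-$1$ neighbors $u,u'$ joined by a path inside colors $1,\dots,k-1$. This is exactly the winding-zero case you raise yourself in 3b. Such closed walks have even length, so for even $k$ they can have length exactly $k$. For example, take $k=4$, arcs $(s,u),(s,u'),(u,w),(u',w)$ and colors $c(u)=c(u')=1$, $c(w)=2$. Every arc survives and the input has no directed cycle at all, yet after forgetting directions $s,u,w,u'$ is a $4$-cycle through $s$. So \textsc{no} instances can be mapped to \textsc{yes} instances, and \Cref{lem:1-sided_error} does not apply. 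Your argument proves item 5 only for odd $k$, exactly as in \Cref{Cycle-Equivalences-Part1}, and your remark that no parity assumption is needed is false.

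In 3b the same winding-zero cycles survive, e.g.\ $4$-cycles through $s$ when $k\ge 5$, and none of your repairs closes the gap:
\begin{itemize}
\item Removing the edges from $s$ to color $k-1$ leaves \emph{only} winding-zero cycles through $s$.
\item An undirected target has no directions to exploit.
\item Splitting $s$ into $s,t$ yields an $s$--$t$ path instance rather than a $\CycleLeqks$ instance, and merging $s$ and $t$ back reintroduces the problem.
\end{itemize}

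The idea you are missing is to make $s$ have exactly two neighbors. These problems are $\Omega(n)$ by \Cref{lem:subgraph-characterization}, so the paper can afford $n$ queries to learn $N(s)$. It then proceeds as follows:
\begin{itemize}
\item Split $N(s)$ randomly into three groups and contract two of them into single new vertices $r$ and $b$, dropping all other edges at $s$.
\item Give $s,r,b$ the unique colors $1,2,k$, and color-code the remaining vertices with the other $k-3$ colors.
\end{itemize}
Any cycle through $s$ must then contain $b,s,r$ and return from $r$ to $b$ across the middle layers. So it has length at least $k$, with equality only if it is color-monotone. This settles 3b. Item 5 follows by applying the same construction to the directed instance before forgetting directions.

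If you adopt this, be aware of one cost issue. An edge $\{r,v\}$ of the new graph is the OR of the edges $\{u,v\}$ over all contracted $u$. Simulating such a query is therefore not the $O(1)$-query simulation that \Cref{lem:1-sided_error} assumes. The paper does not address this cost, so you would need a separate argument for it.
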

    \begin{proof}
    The reductions for \Cref{item:DirCycleEqLeqks,item:PromDirCycleEqLeqks,item:CycleEqLeqksPartA,Item:PromCycleEqLeqks} follow (almost) exactly via the reductions used for their problem counterparts in \Cref{Cycle-Equivalences-Part1} where the subgraph doesn't have to go through vertex $s$. The only change needed here is that we use the color coding construction that goes through $s$ as described in \Cref{def:ColorCodingTechniqueForCycles}; this difference changes the exact probability of success for \textsc{yes} instances but is still $\Omega(1)$ as required. 

    For the item in \Cref{item:CycleEqLeqksPartB} we want to argue that $\Q(\CycleEqks) \in O(\Q(\CycleLeqks))$. Let $G=(V,E)$ and $s \in V$ be the input to $\CycleEqks$. The main idea for this reduction is to try and get rid of all the small (i.e., length $<k$) cycles in $G$. Towards establishing that, we first learn the neighborhood of $s$ with $n$ queries to $G$; let us denote its neighborhood by $N(s)$. As $\Q(\CycleEqks) \in \Omega(n)$ from \Cref{lem:subgraph-characterization} we can now treat our input $G,s$ as $G,s$ equipped with query-free access to $N(s)$. We will now construct an instance $G'=(V',E')$ from $G=(V,E)$ in the following way. For every $w \in N(s)$, we assign $w$ to one of the three groups, let us denote by $N^{\texttt{red}}(s),N^{\texttt{yellow}}(s),N^{\texttt{blue}}(s)$, uniformly at random. Consequently, $N(s)=N^{\texttt{red}}(s) \sqcup N^{\texttt{yellow}}(s) \sqcup N^{\texttt{blue}}(s)$. Set $V'=V\setminus (N^{\texttt{red}}(s) \sqcup N^{\texttt{blue}}(s))$. For every $\{u,v\} \in E$ add it to $E'$ only when $u,v \notin N(s)\cup\{s\}$. We now add vertices $r,b$ in $V'$; here $r,b$ corresponds to a vertex after ``contracting'' the vertices in $N^{\texttt{red}}(s),N^{\texttt{blue}}(s)$, respectively. Add $\{s,r\}, \{s,b\}$ to $E'$ and additionally for every $\{u,v\} \in E$ such that $u \in N^{\texttt{red}}(s)$ and $v \notin N^{\texttt{red}}(s) \sqcup N^{\texttt{blue}}(s) \sqcup s$ add $\{v,r\}$ to $E'$. Similarly, for every $\{u,v\} \in E$ such that $u \in N^{\texttt{blue}}(s)$ and $v \notin N^{\texttt{red}}(s) \sqcup N^{\texttt{blue}}(s) \sqcup s$ add $\{v,b\}$ to $E'$.  We now construct $G''=(V'',E'')$ by color-coding $G'=(V',E')$ by assigning colors $1,2,k$ to $s,r,b$, respectively and color coding the rest of the vertices into the rest of the $k-3$ colors as done in other reductions. We can now run an algorithm for $\CycleLeqks$ on $G'',s$. First note that any cycle in $G''$ going through $s$ (if present) has to pass through both $r,b$ and therefore must is of length $\geq k$. If there is a cycle of length $=k$ in $G$ (\textsc{yes} instance) then with probability $\Omega(1)$ there will be a cycle of length $=k$ through $s$ in $G''$ and the algorithm will output \textsc{yes}. If there was no cycle through $s$ in $G$ then $G''$ also has no cycle through $s$ because the construction doesn't create new cycles. Moreover, any smaller cycles in $G$ passing through $s$ don't survive in $G''$. And, cycles of length $>k$ in $G$ map to \textsc{no} instance of $G'',s$ either by not surviving or by being too long. Therefore, the reduction maps the \textsc{no} instance $G,s$ to a \textsc{no} instance $G'',s$. Invoking \Cref{lem:1-sided_error} concludes the argument.

    For the relation in \Cref{item:dirCycleEqkCycleEqks}, i.e., towards proving $\Q(\DirCycleEqks) \in O(\Q(\CycleEqks))$, we use a similar trick as done in \Cref{item:CycleEqLeqksPartB}. Let $G=(V,E)$ be a directed graph that is input to $\DirCycleEqks$. We first learn the neighborhood of $s$ using $n$ queries and that is fine because $\Q(\DirCycleEqks) \in \Omega(n)$; see \Cref{lem:subgraph-characterization}. Now we construct $G''=(V'',E'')$ exactly as we did in the previous argument except that the constructed graph is directed. We now forget the directions of the edges in the constructed graph and then running an algorithm for $\CycleEqks$ on this graph suffices. If $G$ was a \textsc{no} instance then the final constructed graph undirected-$G''$ is also a \textsc{no} instance because undirected-$G''$ will have cycles of length $>k$. And in the case when $G$ is a \textsc{yes} instance then the final constructed graph undirected-$G''$ is also a \textsc{yes} instance with probability $\Omega(1)$. Invoking \Cref{lem:1-sided_error} concludes the proof.
   
    For the relation in \Cref{item:PromDirCycleEqksDirCycleEqks}, one direction is straightforward application of \Cref{lem:PromFindForCycle}. We now argue the other direction, i.e., $\Q(\DirCycleEqks) \in O(\Q(\PromDirCycleEqks))$. Let the directed graph $G=(V,E)$ and vertex $s$ be the input to $\DirCycleEqks$ problem. We first construct a $k$-color cyclic construction of $G$ through $s$ using \Cref{def:ColorCodingTechniqueForCycles}. Let us denote this new  directed graph by $G'=(V',E')$. Note that, the construction ensures if $G'$ has any cycle then it will only have cycles of length $=k$ and not any other integral multiples of $k$ because there is only one single vertex of color $0$ in $G'$ (which is $s$) and to have any bigger cycle in $G'$ one needs to have another vertex $\neq s$ of color $0$, which is not possible because of the construction. Which means the graph $G'$ either has a directed cycle of length $=k$ through $s$, which happens with probability $\Omega(1)$ when the original graph $G$ has a length $=k$ cycle, or $G'$ has no directed cycle passing through $s$ when $G$ is a \textsc{no} instance. These requirements perfectly align with the promise of $\PromDirCycleEqks$. Therefore, it suffices to run an algorithm for $\PromDirCycleEqks$ on $G'$ to solve $\DirCycleEqks$ on $G$. And, invoking \Cref{lem:1-sided_error} gives us the desired relation.
    
    For \Cref{item:promCleqkNonproms,item:promDirCleqkNonproms}, the relations follow from a straightforward application of \Cref{lem:PromFindForCycle}. And, for \Cref{item:PromCEqKPromDirs,item:CEqKDirs,item:ClEqKdirs}, the relations follow from \cref{lem:undirected_to_directed}. 
    \end{proof}
    
    Consequently, we get the following.
    \begin{corollary}
    \label{cor:CycleSIsland}
    Let $3 \leq k \in \N$. Then, all the problems in $\mathtt{(|Prom)DirCycle}_{s}^{(=k|\leq k)}$ have the same, up to constants, quantum query complexity.
    \end{corollary}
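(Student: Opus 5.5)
The corollary follows by chaining the reductions of \Cref{Cycle-Equivalences-Part2} and checking that they connect the four problems $\DirCycleEqks$, $\DirCycleLeqks$, $\PromDirCycleEqks$ and $\PromDirCycleLeqks$ into one strongly connected cycle of $O(\cdot)$-relations. Since each reduction loses only a constant factor and $k$ is constant, the composed chain also loses only a constant factor. No new construction is needed. The only work is to select the right items and to note that $\Theta$ is transitive.

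Concretely, I would use the following items of \Cref{Cycle-Equivalences-Part2}:
\begin{itemize}
\item \Cref{item:DirCycleEqLeqks} gives $\Q(\DirCycleEqks) \in \Theta(\Q(\DirCycleLeqks))$.
\item \Cref{item:PromDirCycleEqLeqks} gives $\Q(\PromDirCycleLeqks) \in \Theta(\Q(\PromDirCycleEqks))$.
\item \Cref{item:PromDirCycleEqksDirCycleEqks} gives $\Q(\PromDirCycleEqks) \in \Theta(\Q(\DirCycleEqks))$.
\end{itemize}
These three equivalences already link all four problems through the hub $\DirCycleEqks$. Composing them yields pairwise $\Theta$-equivalence.

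As a sanity check, \Cref{item:promDirCleqkNonproms} gives $\Q(\PromDirCycleLeqks) \in O(\Q(\DirCycleLeqks))$ directly. This is consistent with the chain, but it is not needed.

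The only step that deserves care is the direction $\Q(\DirCycleEqks) \in O(\Q(\PromDirCycleEqks))$ inside \Cref{item:PromDirCycleEqksDirCycleEqks}. It relies on the $k$-color cyclic construction through $s$ producing a graph that genuinely satisfies the promise. I would re-verify the key point: because $s$ is the unique vertex of color $0$, every directed cycle through $s$ in the colored graph has length exactly $k$. Hence in the \textsc{no} case there is no cycle through $s$, and \Cref{lem:1-sided_error} applies.

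I would also confirm that the constant $k$ keeps the $1/k^k$ success probability in $\Omega(1)$. This ensures that the loop over $\ell \le k$ in \Cref{item:DirCycleEqLeqks} costs only a constant factor.
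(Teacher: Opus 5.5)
Your proposal is correct and matches the paper, which states this corollary as an immediate consequence of \Cref{Cycle-Equivalences-Part2}; chaining \Cref{item:DirCycleEqLeqks}, \Cref{item:PromDirCycleEqLeqks} and \Cref{item:PromDirCycleEqksDirCycleEqks} through the hub $\DirCycleEqks$ is exactly the intended argument. Your re-check of the key step is also sound: since $s$ is the unique color-$0$ vertex, every directed cycle in the colored graph passes through $s$ and has length exactly $k$, so the promise holds.
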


    Finally, we prove a connection between the restricted and unrestricted version of the cycle-containment problems.

    \begin{proposition}\label{thm:CrossArrow-DirCycleEqksToDirCycleEqk}
        Let $3 \leq k \in \N$ be a constant. Then, we have $\Q(\mathtt{PromDirCycle}_s^{=(k+1)}) \in % O(\Q(\PromDirCycleEqk))$.
        O(\Q(\PromDirCycleEqk))$. %\Subha{Amin and I think we can make this work without the promise as well because if T is empty then there was no cycle through s to begin with.}
    \end{proposition}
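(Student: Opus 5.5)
The plan is a one-sided randomized reduction: color-code $G$ through $s$, then delete $s$ and replace every route $w \to s \to u$ through $s$ by a single shortcut arc $(w,u)$. This turns a directed $(k+1)$-cycle through $s$ into a directed $k$-cycle. The coloring guarantees that every cycle of the new graph comes from a cycle through $s$ in $G$, so the promise of $\PromDirCycleEqk$ is respected on \textsc{no} instances, and \Cref{lem:1-sided_error} finishes the argument.

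\textbf{Construction.} Let $G=(V,E)$ with distinguished vertex $s$ be an input to $\mathtt{PromDirCycle}_s^{=(k+1)}$. First apply the $(k+1)$-color cyclic construction of $G$ through $s$ (\Cref{def:ColorCodingTechniqueForCycles}). Vertex $s$ gets color $0$, every other vertex gets a uniformly random color in $[k]$, and we keep only arcs $(u,v)$ with $c(v)=c(u)+1 \bmod (k+1)$. Call the result $G'$. Now define $G''$ on vertex set $V\setminus\{s\}$ as follows.
\begin{enumerate}
\item Keep every arc of $G'$ between two vertices different from $s$. By the coloring, each such arc goes from color $i$ to color $i+1$ with $1\le i\le k-1$.
\item For every $w$ of color $k$ and every $u$ of color $1$, add the arc $(w,u)$ if and only if both $(w,s)\in E$ and $(s,u)\in E$.
\end{enumerate}
Each query to $G''$ costs at most two queries to $G$, so the reduction is query-efficient. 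No neighborhood learning is needed; this also sidesteps the worry in the margin note about the promise.

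\textbf{Correctness.}
\begin{enumerate}
\item \emph{Yes case.} Suppose $G$ has a directed cycle $s\to u_1\to\cdots\to u_k\to s$. With probability at least $k^{-k}\in\Omega(1)$ we get $c(u_i)=i$ for all $i$. Then $u_1\to\cdots\to u_k\to u_1$ is a directed $k$-cycle in $G''$, using the shortcut arc $(u_k,u_1)$.
\item \emph{Structure of cycles in $G''$.} Arcs among the non-$s$ vertices only increase the color by one within $\{1,\dots,k\}$. The only arcs that go from color $k$ back to color $1$ are the new shortcut arcs. Hence every directed cycle in $G''$ has length $mk$ for some $m\ge1$ and uses exactly $m$ shortcut arcs.
\item \emph{Each cycle yields a cycle through $s$.} Pick one wrap of such a cycle in $G''$: a shortcut arc $(w,u)$ followed by the path $u=x_1\to x_2\to\cdots\to x_k=w$ with $c(x_i)=i$. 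The vertices $x_1,\dots,x_k$ have pairwise distinct colors, so they are distinct. Therefore $s\to x_1\to\cdots\to x_k\to s$ is a genuine directed $(k+1)$-cycle through $s$ in $G$.
\item \emph{No case.} If $G$ is a \textsc{no} instance, it has no directed cycle through $s$. By the previous point $G''$ is acyclic, so it is a promise-satisfying \textsc{no} instance of $\PromDirCycleEqk$.
\item \emph{Promise in the yes case.} Whenever $G''$ contains a $k$-cycle, the promise of $\PromDirCycleEqk$ holds trivially. In the remaining low-probability event of the yes case, \Cref{lem:1-sided_error} does not require the promise to hold.
\end{enumerate}

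Applying \Cref{lem:1-sided_error} then yields $\Q(\mathtt{PromDirCycle}_s^{=(k+1)})\in O(\Q(\PromDirCycleEqk))$.

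\textbf{Main obstacle.} The delicate step is the claim that $G''$ has no spurious cycles, so that \textsc{no} instances map to acyclic graphs. This fails without the coloring: cycles of $G - s$ would survive, and so would shortcuts closing walks that revisit vertices. The argument rests on two facts. First, giving $s$ the unique color $0$ forces every wrap of a $G''$-cycle to pass once through each of the colors $1,\dots,k$. Second, those colors are distinct, which makes each wrap a simple cycle through $s$ in $G$.
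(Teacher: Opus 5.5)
Your reduction is correct. It shares the paper's blueprint: delete $s$, replace each route $w\to s\to u$ by an arc $(w,u)$, and color-code so that every wrap of a cycle in the new graph is a genuine $(k+1)$-cycle through $s$ in $G$. The implementation differs, and the difference matters.

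The paper first spends $n$ queries learning the out- and in-neighborhoods $S$ and $T$ of $s$, absorbing this cost via the $\Omega(n)$ bound from \Cref{lem:subgraph-characterization}. It then deterministically gives all of $S$ color $1$ and all of $T$ color $k$, colors only the remaining vertices at random from $\{2,\dots,k-1\}$, and adds every arc from $T$ to $S$. You instead color every vertex other than $s$ uniformly from $[k]$ and evaluate shortcut arcs on the fly with two queries. So you need neither neighborhood learning nor a lower-bound argument.

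More importantly, your uniform coloring avoids a failure mode of the paper's deterministic one. Take an interior vertex $u_i$ ($2\le i\le k-1$) of a $(k+1)$-cycle through $s$ that also happens to be an out- or in-neighbor of $s$. The paper forces it to color $1$ or $k$, so that cycle can never survive. For example, take $k=3$ and the arcs $s\to a\to b\to c\to s$ together with $s\to b$, which is a promise-satisfying \textsc{yes} instance. The paper's graph has only the arcs $c\to a$ and $c\to b$ and is acyclic. Your graph contains the $3$-cycle $a\to b\to c\to a$ with probability $1/27$. The paper also leaves the color of vertices in $S\cap T$ unspecified, an issue your construction never meets.

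Two small remarks on your write-up.
\begin{enumerate}
\item In point 3, the color-$1,\dots,k$ path starting at the head $u$ of a shortcut ends at the tail of the \emph{next} shortcut. That tail equals $w$ only when $m=1$. This is harmless, since that tail is also an in-neighbor of $s$.
\item Your argument never uses the promise on $G$. Point 3 shows $G''$ is acyclic as soon as $G$ has no directed $(k+1)$-cycle through $s$. So you in fact obtain $\Q(\mathtt{DirCycle}_s^{=(k+1)})\in O(\Q(\PromDirCycleEqk))$.
\end{enumerate}
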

    
    \begin{proof}
        Let the directed graph $G=(V,E)$ and the vertex $s$ be the input to problem $\mathtt{PromDirCycle}_s^{=(k+1)}$. We spend $n$ queries learning the neighborhood of $s$, and let $S$ and $T$ be the sets of vertices that are connected to $s$ by arcs coming out from and going in to $s$, respectively. Next, we assign colors $1$ and $k$ to the vertices from $S$ and $T$, respectively, and we assign colors $\{2, \dots, k-1\}$ uniformly at random to all the other vertices in $V \setminus (S \cup T \cup \{s\})$. We let $G' = (V',E')$ be a graph with all vertices except $s$ (i.e., $V' = V \setminus \{s\}$), where we only keep the arcs of the old graph $G$ if in the new graph $G'$ whenever the arcs are outgoing from color $i$ to $i+1$ for $i \in \{1, \ldots, k-1\}$, and we put an outgoing arc from every vertex in $T$ to every vertex in $S$. If $G$ is a \textsc{NO} instance to the $\mathtt{PromDirCycle}_s^{=(k+1)}$ problem, then $G'$ remains a \textsc{NO} instance to the $\PromDirCycleEqk$ problem. On the other hand, if $G$ contains a cycle of length $k+1$ going through $s$, then $G'$ now contains a cycle of length $k$ with probability at least $1/(k-2)^{k-1} \in \Omega(1)$. Thus, we employ \Cref{lem:1-sided_error} to solve the $\mathtt{PromDirCycle}_s^{=(k+1)}$ problem using an algorithm for the $\PromDirCycleEqk$ problem. We conclude the proof by observing from \Cref{lem:subgraph-characterization} that the query complexity of $\Q(\PromDirCycleEqk) \in \Omega(n)$ for $k \geq 3$.\qedhere

        % We construct a $k$-color cyclic construction of $G$ through $s$ as described in \Cref{def:ColorCodingTechniqueForCycles}; let us denote the resultant graph as $G'=(V',E')$. The construction ensures that the only directed cycles of length $=k$ that go through $s$ has a chance to survive in $G'$. Hence, running an algorithm $\PromDirCycleEqk$ on $G'$ will detect such a cycle (if it existed in $G$) with probability $\Omega(1)$ and if there were no cycles of length $=k$ in $G$ through $s$ then $G'$ will also not have any cycle of length $k$. Invoking \Cref{lem:1-sided_error} concludes the proof. Thus, $\Q(\DirCycleEqks) \in O(\Q(\DirCycleEqk))$.
        
        % Observe that, the reduction described earlier can also be used to strengthen the result in their respective promise versions. Consequently, implying $\Q(\PromDirCycleEqks) \in O(\Q(\PromDirCycleEqk))$.
    \end{proof}

    % \begin{proposition}\label{thm:CrossArrow-DirCycleEqksToDirCycleEqk} Let $3 \leq k \in \N$ be a constant. Then, $\Q(\DirCycleEqks) \in O(\Q(\DirCycleEqk))$ and $\Q(\PromDirCycleEqks) \in O(\Q(\PromDirCycleEqk))$. \todo{Improve to $k-1$!}
    % \end{proposition}
    % \begin{proof}
    % Let the graph $G=(V,E)$ and vertex $s$ be the input to the $\DirCycleEqks$ problem. We construct a $k$-color cyclic construction of $G$ through $s$ as described in \Cref{def:ColorCodingTechniqueForCycles}; let us denote the resultant graph as $G'=(V',E')$. The construction ensures that the only directed cycles of length $=k$ that go through $s$ has a chance to survive in $G'$. Hence, running an algorithm $\DirCycleEqk$ on $G'$ will detect such a cycle (if it existed in $G$) with probability $\Omega(1)$ and if there were no cycles of length $=k$ in $G$ through $s$ then $G'$ will also not have any cycle of length $k$. Invoking \Cref{lem:1-sided_error} concludes the proof. Thus, $\Q(\DirCycleEqks) \in O(\Q(\DirCycleEqk))$.

    % Observe that, the reduction described earlier can also be used to strengthen the result in their respective promise versions. Consequently, implying $\Q(\PromDirCycleEqks) \in O(\Q(\PromDirCycleEqk))$.
    % \end{proof}

    \subsection{Path-cycle-containment reductions}
    \label{sec:PathCycleContainmentReductions}

    In this subsection, we present reductions between path and cycle problems (in both directions) to further refine the classification established in \Cref{sec:PathContainmentReductions,sec:CycleContainmentReductions} for path and cycle problems, respectively.

    \begin{proposition}\label{thm:CrossArrow-DirCycleEqksToDirPathstEqk} Let $3 \leq k \in \N$ be a constant. Then, $\Q(\DirCycleEqks) \in O(\Q(\mathtt{DirPath}_{s,t}^{=k}))$.
    \end{proposition}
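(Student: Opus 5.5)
We reduce $\DirCycleEqks$ to $\mathtt{DirPath}_{s,t}^{=k}$ by splitting the vertex $s$ into a source and a sink. Let $G=(V,E)$ be the directed input with distinguished vertex $s$. We form $G'=(V\cup\{t\},E')$, where $t$ is a fresh vertex. Every arc $(s,v)\in E$ is kept as $(s,v)$. Every arc $(u,s)\in E$ is redirected to become $(u,t)$. All arcs between vertices of $V\setminus\{s\}$ are kept unchanged. Each query to $G'$ costs at most one query to $G$: an arc $(u,t)$ is answered by querying $(u,s)$, arcs out of $t$ or into $s$ are declared absent, and all other arcs are queried directly. This is therefore a query-efficient reduction in the sense of the remark following \Cref{lem:1-sided_error}.

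\emph{Correctness.} A directed $s$--$t$ path $s=v_0\to v_1\to\cdots\to v_{k-1}\to t$ in $G'$ has all internal vertices distinct and different from $s$ and $t$. It therefore corresponds exactly to a directed cycle $s\to v_1\to\cdots\to v_{k-1}\to s$ of length $k$ in $G$. Conversely, every such cycle in $G$ yields such a path in $G'$. Because the correspondence is exact, this is a deterministic reduction: \textsc{yes} instances map to \textsc{yes} instances and \textsc{no} instances to \textsc{no} instances.

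\emph{Degenerate case $k=2$.} For $k=2$ the path would correspond to a pair of opposite arcs, which we do not count as a cycle. This case is excluded, since $k\geq 3$ in the statement.

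We would optionally phrase the argument through \Cref{lem:1-sided_error} with $p=1$, to stay uniform with the other reductions in this section. The only subtle step is the vertex-count bookkeeping: $G'$ has $n+1$ vertices. This is handled by the monotonicity of query complexity in $n$, as noted in the proof of \Cref{lem:find-detect-subgraph}, so there is essentially no obstacle. No color-coding is needed, since the simplicity of the path in $G'$ already forces the corresponding cycle to be simple.
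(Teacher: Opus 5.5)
Your reduction is the same as the paper's: keep the out-arcs of $s$, redirect every in-arc $(u,s)$ to a fresh sink $t$, and note that directed $s$--$t$ paths of length $k$ in $G'$ correspond exactly to directed $k$-cycles through $s$ in $G$, so the reduction is deterministic and query-efficient. One small correction to your bookkeeping remark: monotonicity in $n$ only says the complexity on $n$ vertices is at most that on $n+1$ vertices, which is the wrong direction for absorbing the extra vertex $t$; instead, first delete a uniformly random vertex other than $s$ from $G$, which never creates a cycle and preserves a fixed $k$-cycle through $s$ with probability $1-(k-1)/(n-1) \geq 1/2$ for large $n$, so \Cref{lem:1-sided_error} applies (the paper's proof does not address this point at all).
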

    \begin{proof}
    Let the graph $G=(V,E)$ and vertex $s$ be the input to $\DirCycleEqks$. Construct a new directed $G'=(V',E')$ in the following way:
    \begin{enumerate} 
        \item set $V'=V \sqcup \{t\}$ where $t$ denotes a new vertex that is not already present in $V$;
        \item for every $(u,v) \in E$ with $v\neq s$, keep $(u,v)$ in $E'$ and for all $(u,s) \in E$ create edge $(u,t)$ in $E'$. Basically, we are directing all the incoming edges of $s$ to $t$ in our new graph $G'$.
    \end{enumerate}
    Observe that, every length-$\ell$ directed cycle in $G$ passing through $s$ maps to a length-$\ell$ $s \rightarrow t$ path in $G'$. Moreover, no other $s \rightarrow t$ paths are created in $G'$ in this process. Therefore, with this construction, we can solve $\DirCycleEqks$ on $G,s$ by solving $\mathtt{DirPath}_{s,t}^{=k}$ on graph $G',s,t$.
    \end{proof}

    \begin{proposition}\label{thm:crossArrow-promDirPathEqkstToPromDirCycleEqks}
    Let $3 \leq k \in \N$ be a constant. Then, $\Q(\mathtt{PromDirPath}_{s,t}^{=k}) \in O(\Q(\PromDirCycleEqks))$.
    \end{proposition}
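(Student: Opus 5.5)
The plan is to merge $t$ into $s$, in the reverse direction of \Cref{thm:CrossArrow-DirCycleEqksToDirPathstEqk}, and use color-coding to ensure that every cycle through $s$ in the new graph comes from an $s\to t$ path of the original graph.

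Let $G=(V,E)$ with vertices $s,t$ be an input to $\mathtt{PromDirPath}_{s,t}^{=k}$. First color $s$ with $0$, $t$ with $k$, and every $v\in V\setminus\{s,t\}$ uniformly at random from $[k-1]$. Keep only the arcs $(u,v)$ with $c(v)=c(u)+1$, exactly as in item~\ref{item:dirpathst-promdirpathst} of \Cref{thm:prom_nonprom_dir_st_paths}. Then build $G'$ on $V\setminus\{t\}$ by redirecting every kept arc $(u,t)$ to $(u,s)$. Any arc query to $G'$ costs $O(1)$ queries to $G$, so the construction is query-efficient.

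The key structural claim concerns cycles through $s$ in $G'$. In the color-coded graph every arc increases the color by exactly one, and $s$ has color $0$ while the vertices feeding into $s$ (formerly into $t$) have color $k-1$. So any directed cycle through $s$ in $G'$ must leave $s$ to color $1$, climb through colors $1,\dots,k-1$, and return to $s$. It therefore has length exactly $k$ and corresponds to a directed $s\to t$ path of length $k$ in $G$. Moreover, every directed cycle of $G'$ passes through $s$: away from $s$ the colors strictly increase along arcs, so there are no other cycles.

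From the claim we get the two cases.
\begin{itemize}
\item If $G$ is a \textsc{no} instance, then under the promise there is no directed $s\to t$ path at all. Hence $G'$ has no directed cycle through $s$, and $G'$ is a promise-satisfying \textsc{no} instance of $\PromDirCycleEqks$.
\item If $G$ is a \textsc{yes} instance, a fixed length-$k$ path survives the coloring with probability at least $1/(k-1)^{k-1}\in\Omega(1)$, giving a $k$-cycle through $s$. Whenever $G'$ has any cycle through $s$, it has one of length exactly $k$, so the promise holds in this case too.
\end{itemize}
Invoking \Cref{lem:1-sided_error} then gives $\Q(\mathtt{PromDirPath}_{s,t}^{=k})\in O(\Q(\PromDirCycleEqks))$.

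The step needing most care is the structural claim: confirming that merging $t$ into $s$ cannot create spurious shorter cycles, or cycles avoiding the layered structure. This is exactly why color-coding is applied before merging, rather than merging the raw graph, where cycles through $s$ need not correspond to $s\to t$ paths.
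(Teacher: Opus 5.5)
Your proposal is correct and is essentially the paper's own proof. You colour-code with $s\mapsto 0$, $t\mapsto k$ and the remaining vertices drawn from $[k-1]$, keep only colour-increasing arcs, and then merge $t$ into $s$. This works because $s$ has only outgoing arcs and $t$ only incoming arcs, so every cycle through the merged vertex is a length-$k$ directed $s\to t$ path of $G$. As your structural claim already shows, you do not even need the input promise in the \textsc{no} case, since any \textsc{no} instance maps to a graph with no cycle through $s$ at all.
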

    \begin{proof}
    Let the graph $G=(V,E)$ along with the vertices $s,t$ be the input to $\mathtt{PromDirPath}_{s,t}^{=k}$. We will construct a directed graph $G'=(V',E')$ by first setting $V'=V$. Color $s$ with $0$ and color $t$ with $k$. Additionally, for all $v \in V'\setminus \{s,t\}$ assign a color from $[k-1]$ chosen uniformly at random. And, for each edge $(u,v) \in E$, add $(u,v)$ to $E'$ if $c(v)=c(u)+1$. We now construct $G''=(V'',E'')$ from $G'$ by merging the vertices $s,t$ into say $s''$ - basically means, for all edges $(s,u) \in E'$ add edge $(s'',u)$ to $E''$ and for all $(u,t) \in E'$ add edge $(u,s'')$ to $E''$. We will run the algorithm for $\PromDirCycleEqks$ on graph $G''$ and vertex $s''$.

    Let us now see why the reduction holds. The \textsc{yes} instance of $\mathtt{PromDirPath}_{s,t}^{=k}$ constructs a \textsc{yes} instance $G'$ to $\mathtt{PromDirPath}_{s,t}^{=k}$ with probability $1/(k-1)^{k-1} \in \Omega(1)$. And, a \textsc{yes} instance $G'$ maps to a \textsc{yes} instance $G''$ of $\PromDirCycleEqks$ with probability $1$. Moreover, the construction ensures that there are no directed cycles that go through only $s$ or only $t$ in graph $G'$ because $s$ only contains outgoing edges and $t$ only contains incoming edges in $G'$ - this means if algorithm for $\PromDirCycleEqks$ on $G'',s''$ outputs \textsc{yes} then there is corresponding $s \rightarrow t$ path of length $k$ in $G'$ (and hence in $G$). Additionally, the promise of $\mathtt{PromDirPath}^{=k}_{s,t}$ ensures that if there is no $s \rightarrow t$ path in $G$ then there is no path at all from $s \rightarrow t$ in $G$. This promise is satisfied even in the color coded version $G'$ and satisfies the promise for $\PromDirCycleEqks$ as well. Therefore, invoking \Cref{lem:1-sided_error} gives us the desired relation.
    \end{proof}

    Using \Cref{cor:path-island} from \Cref{sec:PathContainmentReductions}, \Cref{cor:CycleSIsland} from \Cref{sec:CycleContainmentReductions} and \Cref{thm:CrossArrow-DirCycleEqksToDirPathstEqk,thm:crossArrow-promDirPathEqkstToPromDirCycleEqks} we conclude the following.
    \begin{corollary}\label{ConnectingPathIslandsToCycleSisland}
    Let $3 \leq k \in \N$ be a constant. Then, the path problems $\mathtt{(|Prom)DirPath}_{s,t}^{(=k|\leq k)}$ and the cycle problems $\mathtt{(|Prom)DirCycle}_{s}^{(=k|\leq k)}$ have the same, up to constants, quantum query complexity. 
    \end{corollary}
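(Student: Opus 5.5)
The corollary follows by chaining reductions already established, so the plan is to exhibit a cycle of $O(\cdot)$-relations that passes through one representative of each island. By \Cref{cor:path-island}, all problems in $\mathtt{(|Prom)DirPath}_{s,t}^{(=k|\leq k)}$ share a common complexity, say $Q_P(k)$. By \Cref{cor:CycleSIsland}, all problems in $\mathtt{(|Prom)DirCycle}_{s}^{(=k|\leq k)}$ share a common complexity, say $Q_C(k)$. It therefore suffices to show $Q_C(k) \in O(Q_P(k))$ and $Q_P(k) \in O(Q_C(k))$ for the same $k$.

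For the first direction, I would invoke \Cref{thm:CrossArrow-DirCycleEqksToDirPathstEqk}, which gives $\Q(\DirCycleEqks) \in O(\Q(\mathtt{DirPath}_{s,t}^{=k}))$. Together with the two island corollaries, this yields $Q_C(k) \in O(Q_P(k))$.

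For the reverse direction, I would invoke \Cref{thm:crossArrow-promDirPathEqkstToPromDirCycleEqks}, which gives $\Q(\mathtt{PromDirPath}_{s,t}^{=k}) \in O(\Q(\PromDirCycleEqks))$. Since $\mathtt{PromDirPath}_{s,t}^{=k}$ lies in the path island and $\PromDirCycleEqks$ lies in the cycle island, this yields $Q_P(k) \in O(Q_C(k))$.

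The only point needing care is that both cross reductions are stated for the same length parameter $k$ and the same range $k \geq 3$ as the island corollaries, so no index shift arises. Also, all constants depend only on the fixed $k$, so chaining a constant number of $O(\cdot)$ relations keeps the overhead constant. With those checked, the equivalence of the two families up to constants follows immediately.
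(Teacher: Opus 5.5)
Your proposal is correct and is essentially the paper's own argument. The paper derives this corollary by combining \Cref{cor:path-island} and \Cref{cor:CycleSIsland} with the two cross reductions at the same length $k \geq 3$: \Cref{thm:CrossArrow-DirCycleEqksToDirPathstEqk} for cycle-to-path and \Cref{thm:crossArrow-promDirPathEqkstToPromDirCycleEqks} for path-to-cycle.
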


    \begin{proposition}\label{thm:crossArrow-promCycleEqksToPromPathEqkst}
    Let $3 \leq k \in \N$ be a constant. Then, $\Q(\PromCycleEqks) \in O(\Q(\mathtt{PromPath}_{s,t}^{=k}))$. 
    \end{proposition}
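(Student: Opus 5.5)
The plan is a ``vertex splitting'' reduction: split $s$ into two vertices $s'$ and $t$, and distribute the neighbors of $s$ randomly between them. A cycle through $s$ should become an $s'$--$t$ path of the same length, and no $s'$--$t$ path should be created unless a cycle through $s$ already existed.

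Let $G=(V,E)$ and $s$ be an input to $\PromCycleEqks$. First, spend $n$ queries to learn the neighborhood $N(s)$. This cost is affordable because $\mathsf{Q}(\mathtt{PromPath}^{=k}_{s,t}) \in \Omega(n)$. To see this bound, note that \cite{belovs2012span} reduces $\mathtt{Path}^{=(k-2)}$ to $\mathtt{PromPath}^{=k}_{s,t}$. Moreover, $\mathtt{Path}^{=(k-2)}$ needs $\Omega(n)$ queries by the third item of \Cref{lem:subgraph-characterization} when $k\ge 3$; for $k=3$ one can instead argue directly via the search embedding in that lemma. Next, assign each $w\in N(s)$ independently and uniformly to one of two sets $A$ or $B$. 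Build $G'$ on $(V\setminus\{s\})\sqcup\{s',t\}$. It keeps every edge of $G$ not incident to $s$, joins $s'$ to every vertex of $A$, and joins $t$ to every vertex of $B$. Each query to $G'$ costs at most one query to $G$, since edges at $s'$ and $t$ are read off the already-learned $N(s)$.

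The core of the argument is soundness in the \textsc{no} case. Suppose $G$ has no cycle through $s$. Any $s'$--$t$ path in $G'$ has the form $s',a,P,b,t$ with $a\in A$, $b\in B$, where $P$ is an $a$--$b$ path in $G-s$. Since $A\cap B=\varnothing$, we have $a\neq b$, so $s,a,P,b,s$ is a cycle through $s$ in $G$ of length at least $3$, a contradiction. Hence $s'$ and $t$ are disconnected in $G'$ with probability $1$. This is a valid, promise-satisfying \textsc{no} instance of $\mathtt{PromPath}^{=k}_{s,t}$.

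In the \textsc{yes} case, $G$ contains a $k$-cycle $s,a,\dots,b,s$. With probability at least $1/2$, $a\in A$ and $b\in B$, or the reverse; by symmetry of undirected paths, either orientation works. In that event $G'$ contains an $s'$--$t$ path of length exactly $k$, so $G'$ is a promise-satisfying \textsc{yes} instance. In the complementary event the promise may fail, for instance if longer cycles through $s$ create long $s'$--$t$ paths. This does not matter, because \Cref{lem:1-sided_error} only requires that \textsc{yes} maps to \textsc{yes} with probability $\Omega(1)$ and \textsc{no} maps to \textsc{no} with certainty.

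Applying \Cref{lem:1-sided_error} then gives $\mathsf{Q}(\PromCycleEqks) \in O(n + \mathsf{Q}(\mathtt{PromPath}^{=k}_{s,t})) = O(\mathsf{Q}(\mathtt{PromPath}^{=k}_{s,t}))$. The main point to verify carefully is the \textsc{no}-case claim: disjointness of $A$ and $B$ guarantees $a\neq b$, which rules out degenerate ``cycles'' of length $2$ through $s$. Unlike several earlier reductions, no color coding is needed here.
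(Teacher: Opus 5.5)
Your proof is correct and is essentially the paper's reduction: split $s$ into two vertices, send each edge at $s$ to one of them uniformly at random, so that a $k$-cycle through $s$ becomes an $s$--$t$ path of length $k$ with probability $1/2$, while any $s$--$t$ path in the new graph yields a cycle through $s$ in $G$. The one difference is that you first learn $N(s)$ with $n$ queries, which forces you to justify the $\Omega(n)$ bound; the paper applies the random split lazily, so each query to $G'$ costs one query to $G$ and no preprocessing is needed. Your explicit \textsc{no}-case argument via $A\cap B=\varnothing$ is a cleaner version of the paper's somewhat terse justification.
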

    \begin{proof}
    To prove $\Q(\PromCycleEqks) \in O(\Q(\mathtt{PromPath}_{s,t}^{=k}))$, we use the following argument. Let the graph $G=(V,E)$ and vertex $s$ be the input of $\PromCycleEqks$. We construct graph $G'=(V',E')$ from $G=(V,E)$ in the following way:
    \begin{enumerate} 
        \item set $V'=V \sqcup \{t\}$ where $t$ is a new vertex that was not present in $V$;
        \item for every $\{u,v\} \in E$ such that $u\neq s$ and $v \neq s$, add the edge $\{u,v\}$ in $E'$, and, for every $\{u,s\} \in E$, choose uniformly at random to either add $\{u,s\}$ to $E'$ or (exclusive or) to add $\{u,t\}$ to $E'$.
    \end{enumerate}
    We will run the algorithm for $\mathtt{PromPath}_{s,t}^{=k}$ on graph $G'=(V',E')$ and vertices $s,t$. Observe that, if there was a length-$k$ cycle through $s$ in graph $G$ then with probability $1/2 \in \Omega(1)$ there will be length $k$ path between $s,t$ in $G'$. In the case where the algorithm for $\mathtt{PromPath}^{=k}_{s,t}$ outputs \textsc{yes} then there are two distinct neighbors of $s$ in the original graph $G$ with a path of length $k-2$ which means there is a cycle through $s$ in $G$. Invoking \Cref{lem:1-sided_error} concludes the reduction.
    \end{proof}

    \begin{proposition}
    \label{thm:crossArrow-PromPathstToPromCycles}
    Let $3 \leq k \in \N$ be a constant. Then, $\Q(\mathtt{PromPath}_{s,t}^{=k}) \in O(\Q(\mathtt{PromCycle}_{s}^{=k}))$. 
    %\Arjan{Perhaps we can change this back to the version where $k$ remains the same? I think we will need this technique of learning the neighborhood of $s$ for the connection between $\mathtt{Cycle}^{=k}$ and $\mathtt{DirCycle}^{=k}$ as well anyway.}\Subha{yes, it works.}
    \end{proposition}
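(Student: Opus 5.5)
Given an instance $G=(V,E)$ with distinguished vertices $s,t$ of $\mathtt{PromPath}^{=k}_{s,t}$, I would build an instance of $\mathtt{PromCycle}^{=k}_s$ by merging $s$ and $t$. A naive merge can create a cycle through the merged vertex that uses only edges at $s$ (or only edges at $t$). To rule this out, I would first apply colour coding as in \Cref{thm:dir_undir_st_paths}-\ref{item:pathst-pathstle}: colour $s$ with $0$ and $t$ with $k$, give every other vertex a uniformly random colour in $[k-1]$, and keep an edge $\{u,v\}$ only if $|c(u)-c(v)|=1$. Call the result $G'$. Every edge at $s$ in $G'$ goes to colour $1$, every edge at $t$ goes to colour $k-1$, and every $s$--$t$ path in $G'$ has length at least $k$.

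Next I would form $G''$ by identifying $s$ and $t$ into a single vertex $s''$. In $G''$ the vertex $s''$ is adjacent to the former neighbours of $s$ (colour $1$) and of $t$ (colour $k-1$), and every other edge joins consecutive colours in $[k-1]$. Each query to $G''$ costs $O(1)$ queries to $G$. I would then run the $\mathtt{PromCycle}^{=k}_s$ algorithm on $(G'',s'')$ and invoke \Cref{lem:1-sided_error}.

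\textbf{Yes case.} A length-$k$ $s$--$t$ path in $G$ survives the colouring with probability at least $(k-1)^{-(k-1)}$. When it does, it becomes a $k$-cycle through $s''$. The promise must also hold with constant probability, but it is automatic: a $k$-cycle through $s''$ exists, so the promise for $\mathtt{PromCycle}^{=k}_s$ is met.

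\textbf{No case.} By the promise of $\mathtt{PromPath}^{=k}_{s,t}$, the vertices $s$ and $t$ are disconnected in $G$, hence also in $G'$. A cycle through $s''$ in $G''$ leaves $s''$ along two distinct edges and returns. If the two edges come from different sides (one originally at $s$, one at $t$), the cycle lifts to an $s$--$t$ path in $G'$, which is impossible. So I must exclude cycles whose two $s''$-edges both come from $s$, i.e.\ cycles through $s$ in $G'$ itself, which would make the output a yes-instance. This is the main obstacle: the colouring does not remove cycles that go through $s$ and avoid $t$.

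To handle it, I would use the trick from \Cref{Cycle-Equivalences-Part2}-\ref{item:CycleEqLeqksPartB}. Spend $n$ queries learning $N(s)$ and $N(t)$; this is affordable since $\Q(\mathtt{PromCycle}^{=k}_s)\in\Omega(n)$ by \Cref{lem:subgraph-characterization}. Then contract all of $N(s)$ (restricted to colour $1$) into a single vertex $r$ joined to $s$ by a single edge, and similarly contract $N(t)$ into a vertex $b$ joined to $t$. Each query to the contracted graph is then answered using at most $O(n)$ queries overall, or more simply, keep only one random neighbour of $s$ and one random neighbour of $t$: this preserves a fixed length-$k$ path only with probability about $1/\deg$, so contraction is the right choice. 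After contraction, $s$ has degree $1$ in $G'$, so no cycle in $G'$ passes through $s$ or through $t$. Every cycle through $s''$ must therefore use one edge from each side, and so it corresponds to an $s$--$t$ walk in the contracted graph.

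It remains to check that contraction creates no spurious $s$--$t$ connection in the no case. Contracting $N(s)$ merges only vertices in the component of $s$, and contracting $N(t)$ merges only vertices in the component of $t$. Since these components are distinct in the no case, connectivity between them is unchanged. Hence $G''$ has no cycle through $s''$, the promise holds, and the instance is a no-instance. In the yes case, the contracted path still has length $k$, since it visits exactly one vertex from each of $N(s)$ and $N(t)$. Applying \Cref{lem:1-sided_error} then gives $\Q(\mathtt{PromPath}^{=k}_{s,t})\in O(\Q(\mathtt{PromCycle}^{=k}_s)+n)=O(\Q(\mathtt{PromCycle}^{=k}_s))$.
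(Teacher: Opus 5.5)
\textbf{The gap is the query cost of the contracted graph.} In the contracted graph, $\{r,w\}$ is an edge if and only if $\bigvee_{u\in N_{G'}(s)}[\{u,w\}\in E]$, and likewise for $b$. Learning $N(s)$ and $N(t)$ tells you \emph{which} vertices are merged. It does not tell you how those vertices are adjacent to the rest of the graph. So one query to the contracted graph costs up to $|N_{G'}(s)|$ classical queries, or $\Theta(\sqrt{|N_{G'}(s)|})$ quantum queries, to $G$, not $O(1)$. \Cref{lem:1-sided_error} requires $O(1)$ queries per simulated query, so your final bound $O(\Q(\PromCycleEqks)+n)$ does not follow. The construction only gives something like $\widetilde{O}(n+\sqrt{n}\cdot\Q(\PromCycleEqks))$. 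The sentence ``each query to the contracted graph is then answered using at most $O(n)$ queries overall'' is exactly where this slips.

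The paper's proof contracts $N^{\mathit{red}}$ and $N^{\mathit{blue}}$ into $r$ and $b$ in the same way, so it has the same problem. Otherwise your construction is the paper's: learn the neighbourhoods, contract, and make the merged vertex adjacent only to $r$ and $b$. You use colour coding where the paper uses a random red/yellow/blue split, and your choice is the more robust one. Colour coding forces a surviving path to meet each contracted set exactly once and makes $N_{G'}(s)\cap N_{G'}(t)=\varnothing$. The paper's split never puts a vertex of $N(s)\cap N(t)$ on the $t$-side. For example, with $k=3$ and edges $su_1,u_1u_2,u_2t,su_2$, that yes-instance is always lost. Your yes-case and no-case combinatorics are correct.

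\textbf{A clean proof needs no reduction.} First, $\Q(\mathtt{PromPath}^{=k}_{s,t})\in O(n)$ by \cite{belovs2012span}, because its promise is stronger than that of $\mathtt{PromPath}^{\leq k}_{s,t}$. Second, $\Q(\PromCycleEqks)\in\Omega(n)$. \Cref{lem:subgraph-characterization} only states this for the non-promise problem, but its hard instances satisfy the promise. Take candidates $A$ all adjacent to $s$, and candidates $B$ all adjacent to the far end of a fixed path of length $k-3$ from $s$, with at most one $A$--$B$ edge. Each such instance is either a forest or contains exactly one cycle, of length $k$ through $s$. Unique search over $n^2$ items then gives $\Omega(n)$. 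This also justifies the $\Omega(n)$ you used to absorb the preprocessing. A genuinely query-efficient reduction in your spirit exists: add a fresh vertex adjacent only to $s$ and $t$. However, it lands in $\mathtt{PromCycle}^{=(k+2)}_s$ rather than $\PromCycleEqks$.
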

    \begin{proof}
    %Let graph $G=(V,E)$ and vertices $s,t$ be the input to $\mathtt{PromPath}_{s,t}^{=k}$. We construct a new graph $G'=(V',E')$ by setting $V'=V \sqcup \{v\}$ where $v$ is a vertex that was not originally present in $V$. For all edges $\{u,v\} \in E$ we add them in $E'$ and additionally we add edges $\{v,s\},\{v,t\}$ to $E'$; basically $v$ has only two neighbors, $s$ and $t$. We run the algorithm for $\mathtt{PromCycle}_{s}^{=(k+2)}$ on graph $G'$ and vertex $v$. Observe that if $G,s,t$ was a \textsc{yes} instance then there is length-$k$ path between $s,t$ and as essentially we insert $v$ between $s,t$ in our construction, it ensures that there is a cycle of length-$k+2$ passing through $v$ in $G'$. Hence, a \textsc{yes} instance of $G$ maps to a \textsc{yes} instance of $G'$. Moreover, given that $v$ has only two neighbors $s,t$, if $G',v$ is a \textsc{yes} instance then the cycle of length $k+2$ through $v$ is possible only with a length-$k$ path between $s,t$. This concludes our proof.  

    Let graph $G=(V,E)$ and vertices $s,t$ be the input to $\mathtt{PromPath}_{s,t}^{=k}$. Using $2n$ many queries to $G$, we learn the neighborhood of $s,t$ which we denote by $N(s), N(t)$, respectively. It is fine to do so because $\mathtt{PromPath}_{s,t}^{=k} \in \Omega(n)$. We will now construct an instance $G'=(V',E')$ from $G=(V,E)$ in the following way. For every $u \in N(s)$, we assign $u$ to one of the two groups, let us denote by $N^{\textit{red}},N^{\textit{yellow}}$, uniformly at random. Similarly, for every $v \in N(t)\setminus N(s)$, we assign $v$ to either $N^{\textit{yellow}}$  or to a new group, let us denote by $N^{\textit{blue}}$, uniformly at random. Similarly to the idea in \Cref{item:CycleEqLeqksPartB} of \Cref{Cycle-Equivalences-Part2} we construct a new graph $G'=(V',E')$ by contracting all the vertices in $N^{\textit{red}},N^{\textit{blue}}$ and denoting them as $r,b$, respectively and preserving all the edges from $E$ to $E'$ that are incident on $N^{\textit{red}},N^{\textit{blue}}$ (except the ones to $s,t$) as edges to/from $r,b$, respectively. We also add all the edges from $E$ in $E'$ that don't involve any of the vertices in $\{s,t\}\cup N^{\textit{red}} \cup N^{\textit{blue}}$ as either of its end points. Additionally, we add the edges $\{r,s'\}, \{b,s'\}$ to $E'$. To solve $\mathtt{PromPath}_{s,t}^{=k}$ on $G$ we now run an algorithm for $\PromCycleEqks$ on $G',s'$. If $G$ has a length $k$ path between $s,t$ then with probability $\Omega(1)$ there will be a length $k$ cycle through $s'$ in graph $G'$. If $G$ is a \textsc{no} instance then because of the promise there is no path between $s,t$ in $G$ and the construction ensures that there is no cycle (of any length) going through $s'$ in graph $G'$. Invoking \Cref{lem:1-sided_error} concludes the proof. \qedhere
    \end{proof}

    As $\mathtt{PromPath}_{s,t}^{=k}$ is known to have quantum query complexity of $\Theta(n)$ (see \Cref{fig:path-problems}), using \Cref{Item:PromCycleEqLeqks} of \Cref{Cycle-Equivalences-Part2} and \Cref{thm:crossArrow-promCycleEqksToPromPathEqkst,thm:crossArrow-PromPathstToPromCycles} we can conclude the following:

    \begin{corollary}\label{thm:ComplexityPromCycles}
    Let $3 \leq k \in \N$ be a constant. Then, $\Q(\mathtt{PromCycle}_{s}^{(=k|\leq k)}) \in \Theta(n)$.   
    %\Subha{I can perhaps add more problems in this statement.}
    \end{corollary}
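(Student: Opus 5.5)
We prove both directions of $\Q(\mathtt{PromCycle}_s^{(=k|\leq k)}) \in \Theta(n)$ by chaining results already established. The lower bound comes from \Cref{lem:subgraph-characterization}, the upper bound from the linear-query algorithm of Belovs and Reichardt for $\mathtt{PromPath}_{s,t}^{\leq k}$, and \Cref{Item:PromCycleEqLeqks} of \Cref{Cycle-Equivalences-Part2} transfers both bounds between the $=k$ and $\leq k$ versions.

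\textbf{Upper bound.} By \Cref{thm:crossArrow-promCycleEqksToPromPathEqkst}, we have $\Q(\PromCycleEqks) \in O(\Q(\mathtt{PromPath}_{s,t}^{=k}))$. The promise of $\mathtt{PromPath}^{=k}_{s,t}$ is stronger than that of $\mathtt{PromPath}^{\leq k}_{s,t}$, so an algorithm for the latter also solves the former. Belovs and Reichardt solve $\mathtt{PromPath}^{\leq k}_{s,t}$ with $O(n\sqrt{k}) = O(n)$ queries for constant $k$. Hence $\Q(\PromCycleEqks) \in O(n)$. For the $\leq k$ variant, \Cref{Item:PromCycleEqLeqks} of \Cref{Cycle-Equivalences-Part2} gives $\Q(\PromCycleLeqks) \in \Theta(\Q(\PromCycleEqks))$, so it is also $O(n)$.

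\textbf{Lower bound.} The cleanest route is to embed a search over $\Theta(n^2)$ items, as in the third case of \Cref{lem:subgraph-characterization}. For $k \geq 3$, the graph $C_k - \{s\}$ is a path with $k-2 \geq 1$ edges, so it contains an edge. The concern is that this embedding must respect the promise, whereas \Cref{lem:subgraph-characterization} is stated for the non-promise problem.

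To handle this, I will check the embedding explicitly. Fix all of the cycle except one edge $\{v,w\}$ that avoids $s$. Introduce $n$ candidate copies of $v$, each attached to its fixed neighbour on the $s$-side, and $n$ candidate copies of $w$, attached analogously. Then place the unknown search bits on the bipartite edges between the $v$-candidates and the $w$-candidates.

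If no bit is set, the graph is a tree hanging off the fixed path, so there is no cycle through $s$. If exactly one bit is set, there is a unique cycle, of length exactly $k$, through $s$. Both cases satisfy the promise of both $\PromCycleEqks$ and $\PromCycleLeqks$. Since unstructured search on $\Theta(n^2)$ items with at most one marked element requires $\Omega(n)$ queries, this gives the $\Omega(n)$ lower bound.

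As a fallback, one could instead chain the reductions the other way: \Cref{thm:crossArrow-PromPathstToPromCycles} together with the known $\Omega(n)$ bound for $\mathtt{PromPath}^{=k}_{s,t}$ from \Cref{fig:path-problems}.

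The main point to verify is that the lower-bound instances genuinely satisfy the promise. In the one-marked-element embedding above, this holds as argued. The rest is bookkeeping.
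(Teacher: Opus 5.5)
Your argument is correct. The upper bound matches the paper's chain exactly: \Cref{thm:crossArrow-promCycleEqksToPromPathEqkst}, then the stronger-promise reduction to $\mathtt{PromPath}^{\leq k}_{s,t}$ and Belovs--Reichardt, then \Cref{Item:PromCycleEqLeqks} of \Cref{Cycle-Equivalences-Part2} for the $\leq k$ variant. For the lower bound, the paper uses what you call your fallback: it combines $\Q(\mathtt{PromPath}^{=k}_{s,t}) \in \Omega(n)$ with \Cref{thm:crossArrow-PromPathstToPromCycles}. Your primary route instead re-runs the search embedding from the third case of \Cref{lem:subgraph-characterization} and checks the promise directly, and that check is sound. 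In fact, any nonzero assignment already creates a $k$-cycle through $s$, so the promise holds without restricting to Hamming weight at most one; you therefore get a reduction from the full $\mathtt{OR}$ on $n^2$ bits. The paper's route has the structural appeal of showing $\mathtt{PromCycle}_s$ and $\mathtt{PromPath}_{s,t}$ to be mutually reducible. Your direct argument is more self-contained, and also more robust. The reduction in \Cref{thm:crossArrow-PromPathstToPromCycles} first spends $2n$ queries learning $N(s)$ and $N(t)$. Absorbing that additive cost requires $\Q(\PromCycleEqks) \in \Omega(n)$, which is the very bound being derived, not $\Q(\mathtt{PromPath}^{=k}_{s,t}) \in \Omega(n)$, which is what the paper cites. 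In addition, contracting $N^{\textit{red}}$ and $N^{\textit{blue}}$ into single vertices makes one edge query to the new graph an OR of many queries to the old one. Your embedding avoids both issues, so it is arguably the cleaner way to obtain the lower bound.
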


    %\Arjan{We can still add monotonicity for $\mathtt{PromDirCycle}^{=k}$, $\mathtt{DirCycle}^{=k}$, and a result that for odd values of $k$, $\mathsf{Q}(\mathtt{Cycle}^{\leq(k-1)}) \in O(\mathsf{Q}(\mathtt{Cycle}^{\leq k}))$.}\Subha{Added it in the earlier subsection also added monotonicity for the going through s versions.}

    \section{Length-$k$ directed path-detection algorithm}
    \label{sec:dir-path-algo}

    In this section, we consider the subgraph containment problem $\mathtt{DirPath}^{=k}$, and we develop a novel algorithm for this problem. This implies a query upper bound for all of the problems displayed in red in \Cref{fig:islands}.

    We start by employing the color-coding technique, introduced by Alon, Yuster and Zwick~\cite{alon1995color}, to partition the graph into $k+1$ subsets.

    \begin{lemma}
        \label{lem:dir-path-color-coding}
        Let $k,n \in \N$, and let $G = (V,A)$ be a directed graph with $|V| = n$ and $k \in O(1)$. Let $V_0 \sqcup \cdots \sqcup V_k \subseteq V$ be a random partition of $V$ such that $|V_j| \in \Theta(n/k)$. We define $G' = (V,A')$, where
        \[A' = A_1 \sqcup \cdots \sqcup A_k, \qquad \text{with} \qquad A_j = \{(v,w) \in A : v \in V_{j-1}, w \in V_j\}.\]
        Now, if $\mathtt{DirPath}^{=k}(G) = 0$, then $\mathtt{DirPath}^{=k}(G') = 0$. On the other hand, if $\mathtt{DirPath}^{=k}(G) = 1$, then
        \[\underset{G'}{\P}\left[\mathtt{DirPath}^{=k}(G') = 1\right] \in \Omega(1),\]
        where the probability distribution is over the randomization in the procedure that constructs $G'$.
    \end{lemma}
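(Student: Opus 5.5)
The proof has two parts: the \textsc{no}-case follows from monotonicity, and the \textsc{yes}-case from a direct probability bound. I will first pin down the random partition. Each vertex $v \in V$ independently receives a label in $[k]_0$ uniformly at random (or, if exact sizes are wanted, we take a uniformly random partition into parts of prescribed sizes $\Theta(n/k)$). With independent labels, $|V_j| \in \Theta(n/k)$ with high probability by a Chernoff bound, since $k \in O(1)$. Conditioning on this event changes probabilities by only $o(1)$.

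\textbf{The \textsc{no}-case.} Observe that $A' \subseteq A$, so $G'$ is a subgraph of $G$ on the same vertex set. The property of containing a directed path of length $k$ is monotone under adding arcs: any injection $\varphi : V(L_k') \to V$ witnessing $\mathtt{DirPath}^{=k}(G') = 1$ also witnesses $\mathtt{DirPath}^{=k}(G) = 1$. Hence $\mathtt{DirPath}^{=k}(G) = 0$ implies $\mathtt{DirPath}^{=k}(G') = 0$ deterministically, regardless of the randomness.

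\textbf{The \textsc{yes}-case.} Fix a directed path $v_0 \to v_1 \to \cdots \to v_k$ in $G$ with distinct vertices $v_0, \dots, v_k$. Consider the event $\mathcal{E}$ that $v_j \in V_j$ for every $j \in [k]_0$. On $\mathcal{E}$, each arc $(v_{j-1}, v_j)$ lies in $A_j \subseteq A'$, so the same path survives in $G'$. For independent uniform labels, $\P[\mathcal{E}] = (k+1)^{-(k+1)} \in \Omega(1)$ because $k$ is constant.

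\textbf{The main technical point} is making this bound compatible with the size requirement $|V_j| \in \Theta(n/k)$. I would handle it in one of two ways:
\begin{itemize}
\item \emph{Balanced sampling.} For a uniformly random partition with fixed part sizes $n_0, \dots, n_k$, $\P[\mathcal{E}] = \prod_j n_j / \bigl(n(n-1)\cdots(n-k)\bigr)$. This is $\Omega\bigl((k+1)^{-(k+1)}\bigr)$ when $n_j \in \Theta(n/k)$ and $n$ is large.
\item \emph{Union bound.} With independent labels, $\P[\mathcal{E} \wedge \text{balanced}] \geq \P[\mathcal{E}] - \P[\text{unbalanced}] = (k+1)^{-(k+1)} - e^{-\Omega(n)}$. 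This is again $\Omega(1)$ for large $n$.
\end{itemize}
Either route gives $\P_{G'}[\mathtt{DirPath}^{=k}(G') = 1] \in \Omega(1)$, which completes the lemma.
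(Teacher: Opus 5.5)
Your proof is correct and follows the paper's argument: the \textsc{no}-case holds because $A' \subseteq A$, and the \textsc{yes}-case holds because the $k+1$ path vertices all land in their correct parts with probability $(k+1)^{-(k+1)} \in \Omega(1)$. You also handle the size requirement $|V_j| \in \Theta(n/k)$, either with fixed-size random partitions or with a Chernoff bound plus a union bound, which makes rigorous a point the paper's proof leaves implicit.
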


    \begin{proof}
        For the negative case, we observe that the new graph $G'$ only contains a subset of the arcs in $G$. Therefore, if the original graph $G$ does not contain a directed path of length $k$, then neither does $G'$.

        For the positive case, we observe that every vertex obtains any given color with probability $1/(k+1)$. Thus, the probability that every vertex in the directed path receives the correct color is $(k+1)^{-(k+1)} \in \Omega(1)$.
    \end{proof}

    We can combine the previous theorem with \Cref{lem:1-sided_error}, so without loss of generality we can consider the color-coded version of the problem, i.e., where the graph's vertices are already partitioned into $k+1$ disjoint sets and where we're looking for a path that traverses all the way from one end to the other, as displayed in \Cref{fig:layered-path}. To optimize the parameters in the recursion, we consider a version this problem where the first and last layer have a smaller size than all the intermediate layers, and setting $r \in \Theta(n)$ recovers the problem considered in \Cref{lem:dir-path-color-coding}.

    \begin{figure}[!ht]
        \centering
        \begin{tikzpicture}
            \draw (0,0) ellipse (.5 and .5);
            \draw (2,0) ellipse (.5 and 1);
            \draw (4,0) ellipse (.5 and 1);
            \node at (6,0) {$\cdots$};
            \draw (8,0) ellipse (.5 and 1);
            \draw (10,0) ellipse (.5 and .6);
            \node[above] at (0,.5) {$V_0$};
            \node[above] at (2,1) {$V_1$};
            \node[above] at (4,1) {$V_2$};
            \node[above] at (8,1) {$V_{k-1}$};
            \node[above] at (10,.6) {$V_k$};

            \node[fill,circle,minimum width=1.5,inner sep=1.5] (0) at (-.1,-.1) {};
            \node[fill,circle,minimum width=1.5,inner sep=1.5] (1) at (2.1,.1) {};
            \node[fill,circle,minimum width=1.5,inner sep=1.5] (2) at (4,-.4) {};
            \node[fill,circle,minimum width=1.5,inner sep=1.5] (4) at (7.9,.2) {};
            \node[fill,circle,minimum width=1.5,inner sep=1.5] (5) at (10,-.1) {};
            \draw[->] (0) to (1);
            \draw[->] (1) to (2);
            \draw (2) to (5,0);
            \draw[->] (7,0) to (4);
            \draw[->] (4) to (5);
        \end{tikzpicture}
        \caption{The layered-path problem. Every set $V_j$ for $j \in [k-1]$ represents a set of $\Theta(n)$ vertices, and the first and last layer might have a different size, say $|V_0| = |V_k| = r$. We are looking for a directed path that traverses all the layers from $V_0$ to $V_k$.}
        \label{fig:layered-path}
    \end{figure}

    % We define a slight generalization of this problem, which we will refer to as the ``layered path-finding problem''.

    % \begin{definition}[The layered path-finding problem]
    %     Let $k,n,r \in \N$, with $k \in O(1)$ and $r \leq n$. Let $V = V_0 \sqcup \cdots \sqcup V_k$, such that $|V_1|, \dots, |V_{k-1}| \in \Theta(n)$, and $|V_0|,|V_k| \in \Theta(r)$. Let $G = (V,A)$ be a directed graph such that
    %     \[A = A_1 \sqcup \cdots \sqcup A_k, \qquad \text{where} \qquad A_j \in V_{j-1} \times V_j.\]
    %     Then, we say that $G$ is an instance of the layered path-finding problem, with length $k$ and first and last layer size $r$.
    % \end{definition}

    We now provide a recursive nested walk construction that finds such a layered path, if it exists.

    \begin{lemma}
        \label{lem:dir-path-recurrence}
        Let $k,n,r \in \N$, with $3 \leq k \in O(1)$ and $r \leq n$. Let $n/2 \geq s \in \N$, and suppose that we can solve the layered path-finding problem with length $k-2$ and first and last layer size $s$ with $C_{n,s}^{(k-2)}$ queries. Then, we can solve the layered path-finding problem with complexity $C_{n,r}^{(k)}$, satisfying
        \[C^{(k)}_{n,r} \in \widetilde{O}\left(s\sqrt{r} + \left(\frac{n}{s}\right)\left(\sqrt{sr} + C^{(k-2)}_{n,s}\right)\right).\]
    \end{lemma}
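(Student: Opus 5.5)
We plan to use the MNRS framework (\Cref{thm:mnrs}) as the outer layer of a nested walk, with the recursive length-$(k-2)$ algorithm as the checking subroutine. Fix the layered instance with layers $V_0,\dots,V_k$, $|V_0|=|V_k|=r$ and $|V_j|\in\Theta(n)$ otherwise. The outer walk is a product of two Johnson graphs: its states are pairs $(S_1,S_{k-1})$ with $S_1\subseteq V_1$ and $S_{k-1}\subseteq V_{k-1}$, each of size $s$. Every state stores, for each $v\in S_1$, one bit recording whether $v$ has an incoming arc from $V_0$; similarly, for each $w\in S_{k-1}$, it records whether $w$ has an outgoing arc into $V_k$. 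We call these vertices \emph{active}. A state is marked when there is a directed path of length $k-2$ from an active vertex of $S_1$ to an active vertex of $S_{k-1}$ that passes through $V_2,\dots,V_{k-2}$. Any length-$k$ layered path $v_0\to v_1\to\cdots\to v_k$ in $G$ yields a marked state whenever $v_1\in S_1$ and $v_{k-1}\in S_{k-1}$. Conversely, a marked state certifies such a path, since activity supplies the first and last arcs. This gives $\varepsilon\in\Omega((s/n)^2)$.

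The three costs are estimated separately. For setup, we prepare the uniform superposition over states and compute the $2s$ activity bits. Each bit is an OR over $r$ arcs, so Grover search costs $\widetilde{O}(\sqrt r)$ per bit and $\mathsf{S}=\widetilde{O}(s\sqrt r)$ in total, with log factors from error reduction over the $2s$ searches. For the update, one step swaps one vertex in $S_1$ or $S_{k-1}$ and recomputes its bit, so $\mathsf{U}=\widetilde{O}(\sqrt r)$. The spectral gap of the product Johnson walk is $\delta\in\Omega(1/s)$, which gives $\mathsf{U}/\sqrt\delta=\widetilde{O}(\sqrt{sr})$.

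For checking, we form the layered instance with first layer the active part of $S_1$, last layer the active part of $S_{k-1}$ (each padded or truncated to size $s$ with isolated dummy vertices), and middle layers $V_2,\dots,V_{k-2}$ unchanged. This is a length-$(k-2)$ layered path problem with end-layer size $s\le n/2$. Its arcs are arcs of $G$, or dummy arcs that are absent, and the stored activity bits are query-free. The check therefore costs $\mathsf{C}=C^{(k-2)}_{n,s}$, up to the log factors needed to amplify its success probability so that errors inside the walk are controlled. Substituting into \Cref{thm:mnrs} gives $\widetilde{O}\bigl(s\sqrt r+\frac{n}{s}(\sqrt{sr}+C^{(k-2)}_{n,s})\bigr)$, which is the claimed bound. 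Finally, the algorithm accepts exactly when a marked state exists, which we have shown is equivalent to the existence of a length-$k$ layered path.

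We expect the main obstacle to be rigor in the nested-walk analysis. MNRS assumes exact data and a deterministic checker, but here the stored bits come from bounded-error Grover searches and the checker is itself a bounded-error quantum walk. We would handle this in the standard way (cf.\ \cite{childs2012quantum,jeffery2013nested}): amplify each subroutine to error $1/\mathrm{poly}(n)$ at a logarithmic cost, and bound the accumulated norm error over the $\widetilde{O}(n)$ invocations. The $\widetilde{O}$ notation absorbs these costs. A further point to check is that the interplay between the Grover error and the data structure stays coherent, meaning garbage registers are uncomputed; clean variable-time search or repeated amplification should suffice for this.
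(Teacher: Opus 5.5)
Your proposal is correct and follows the paper's proof essentially step for step. It uses the same walk on the Cartesian product of two Johnson graphs over $s$-subsets of $V_1$ and $V_{k-1}$, the same stored activity bits giving $\mathsf{S}\in\widetilde{O}(s\sqrt r)$ and $\mathsf{U}\in\widetilde{O}(\sqrt r)$, the same bounds $\delta\in\Theta(1/s)$ and $\varepsilon\in\Omega((s/n)^2)$, and the recursive length-$(k-2)$ instance as the checker with $\mathsf{C}=C^{(k-2)}_{n,s}$. Your extra care about padding the end layers and amplifying the bounded-error subroutines only makes explicit what the paper absorbs into the $\widetilde{O}$; truncation is never needed, since the active sets have size at most $s$.
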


    \begin{proof}
        We use a quantum walk. To that end, we first construct the graph that we will be walking over.

        Let $J_1$ be the Johnson graph of all subsets of size $s$ from $V_1$. Similarly, let $J_{k-1}$ be the Johnson graph of all subsets of size $s$ from $V_{k-1}$. Their spectral gaps are exactly equal to $n/(s(n-s))$~\cite[Section~12.3.2]{brouwer2011spectra}. We will be walking over the Cartesian product of these graphs, i.e., $G = J_1 \Box J_{k-1}$, and so we compute the spectral gap of $G$ as
        \[\delta(G) = \min\{\delta(J_1), \delta(J_{k-1})\} = \frac{n}{s(n-s)} \in \Theta\left(\frac{1}{s}\right).\]

        Now, we describe the marked set. We say that a vertex of $G$, denoted by $(W_1, W_{k-1})$ where $W_1 \subseteq V_1$ and $W_{k-1} \subseteq V_{k-1}$ is marked whenever there exits a directed path of length $k$ from $V_0$ to $V_k$ that passes through $W_1$ and $W_{k-1}$. If there exists one such path, then we easily compute that the fraction of marked vertices, i.e., $\varepsilon$, satisfies
        \[\varepsilon \geq \frac{\binom{|V_1|-1}{s-1}}{\binom{|V_1|}{s}} \cdot \frac{\binom{|V_{k-1}|-1}{s-1}}{\binom{|V_{k-1}|}{s}} = \frac{s}{|V_1|} \cdot \frac{s}{|V_{k-1}|} \in \Theta\left(\left(\frac{s}{n}\right)^2\right).\]

        Next, we desribe the data structure that we store along the way. At every vertex $(W_1,W_{k-1})$, we store for every vertex $v \in W_1$ whether it can be reached from at least one vertex in $V_0$, and similarly for every vertex $v \in W_{k-1}$ whether there exists an edge going from $v$ to a vertex in $V_k$. Since these are simple search problems, we observe that the setup cost becomes $\mathsf{S} \in O(s\sqrt{r})$, and since in the update routine we always only exchange a constant number of vertices, we have $\mathsf{U} \in O(\sqrt{r})$. Finally, the checking routine, is the recursive procedure that only considers the vertices in $W_1$ and $W_{k-1}$ that are marked in the database, which results in a checking cost $\mathsf{C} = C_{n,s}^{(k-2)}$.

        Now, plugging everything into the MNRS-framework, i.e., \Cref{thm:mnrs}, yields the claimed result.
    \end{proof}

    Now that we have described the recursive algorithm, we should work out what the resulting number of queries becomes, for a given value of $k \in O(1)$. This is the objective of the following lemma.

    \begin{lemma}
        \label{lem:dir-path-recurrence-solution}
        Let $\alpha_r \in [0,1]$. For all $k \in \N$, let $\alpha_{k,r}$ be defined by the recurrence relation
        \[\alpha_{k+2,r} := \min_{\alpha_s \in [0,1]} \max\left\{\alpha_s + \frac12\alpha_r, 1 - \frac12\alpha_s + \frac12\alpha_r, 1 - \alpha_s + \alpha_{k,s}\right\},\]
        with $\alpha_{1,r} = \alpha_r$, and $\alpha_{2,r} = \frac12 + \frac12\alpha_r$. Next, let $n \in \N$, and $r \in \Theta(n^{\alpha_r})$. Then, we can solve the layered-path problem with first and last layer size $r$ with a number of queries that satisfies $C_{n,r}^{(k)} \in \widetilde{O}(n^{\alpha_{k,r}})$. Moreover, the solution to the recurrence relation satisfies $\alpha_{k,n} \in 3/2 - \Theta(c^{-k})$ with $c \approx 1.33$.
    \end{lemma}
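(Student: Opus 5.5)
The proof has two parts: the algorithmic claim, by induction on $k$ using \Cref{lem:dir-path-recurrence}, and the asymptotic claim, by analysing the recurrence.

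\emph{Algorithmic claim.} The base cases are direct. For $k=1$, we search for a single arc among $V_0\times V_1$, where $|V_0|=r$ and $|V_1|\in\Theta(n)$. Grover search does this with $O(\sqrt{rn})$ queries; in exponents this is $\frac12+\frac12\alpha_r$, which is at most $\alpha_{2,r}$. The stated $\alpha_{1,r}=\alpha_r$ should be read with this in mind (or the recurrence only ever invokes it where it is dominated), and I would check which convention makes the induction consistent. For $k=2$, we search over middle vertices $v\in V_1$ for one having both an in-neighbour in $V_0$ and an out-neighbour in $V_2$. A nested Grover search costs $\widetilde{O}(\sqrt{n}\cdot\sqrt{r})=\widetilde{O}(n^{1/2+\alpha_r/2})$, which matches $\alpha_{2,r}$.

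For the inductive step, fix $k+2$ and take $\alpha_s$ attaining the minimum; this is a continuous function on a compact set, so the minimum exists. Set $s=n^{\alpha_s}$, capped at $n/2$ (the cap costs only constants). Then apply \Cref{lem:dir-path-recurrence} with $C^{(k)}_{n,s}\in\widetilde{O}(n^{\alpha_{k,s}})$ from the induction hypothesis. The three terms $s\sqrt r$, $(n/s)\sqrt{sr}$ and $(n/s)C^{(k)}_{n,s}$ become exactly the exponents $\alpha_s+\frac12\alpha_r$, $1-\frac12\alpha_s+\frac12\alpha_r$ and $1-\alpha_s+\alpha_{k,s}$. Since $k$ is constant, the polylog factors accumulate only a constant number of times.

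\emph{Asymptotic claim.} First I would show that $\alpha_{k,r}$ is affine in $\alpha_r$ with slope $\frac12$, writing $\alpha_{k,r}=a_k+\frac12\alpha_r$. This holds for $k=1,2$ if we use $\frac12+\frac12\alpha_r$ for the base. Inductively, $\alpha_{k,s}=a_k+\frac12\alpha_s$, so every term in the max has $\alpha_r$-coefficient $\frac12$, and $\alpha_r$ factors out of the minimisation. This leaves a scalar recurrence
\[a_{k+2}=\min_{x\in[0,1]}\max\left\{x,\ 1-\tfrac12x,\ 1-\tfrac12x+a_k\right\}.\]
This looks suspicious, because the third term dominates the second. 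It suggests that the factor in the checking term should instead be $n/s$ times a walk-speedup, or that $\alpha_{k,s}$ actually carries a different coefficient on $\alpha_s$. I would therefore redo the bookkeeping carefully, paying attention to the fact that the checking step uses only the marked portions of $W_1$ and $W_{k-1}$.

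The expected outcome is an equation of the form $x=$ (balance of the linear terms). Solving it gives a linear or fractional-linear recurrence for the gap $g_k=\frac32-\alpha_{k,n}$, of the form $g_{k+2}=\lambda g_k+\mu g_{k-2}$ or similar. Its characteristic polynomial should be $4c^4-12c^2+\dots=0$, i.e.\ $c^2=(3+\sqrt{17})/4$, which gives $c=\sqrt{3+\sqrt{17}}/2$. The dominant root then yields $g_k\in\Theta(c^{-k})$.

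\emph{Main obstacle.} The hardest step is identifying the exact linear regime, i.e.\ which two terms of the max are tight at the optimum. I would first guess that the first and third terms balance, verify that the second term is then slack for large $k$, and check that the resulting $x$ stays in $[0,1]$. After that, the exponential decay rate follows from solving the characteristic equation of the resulting constant-coefficient linear recurrence. I would prove matching upper and lower $\Theta$ bounds by showing the sequence eventually stays in this regime.
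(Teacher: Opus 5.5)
Your algorithmic half is the paper's argument: induction via \Cref{lem:dir-path-recurrence}, with the three exponents read off from $s\sqrt r$, $(n/s)\sqrt{sr}$ and $(n/s)C^{(k)}_{n,s}$. One misreading there: for $k=1$ the only layers are $V_0$ and $V_1=V_k$, and both have size $r$. So an $O(r)$-query search over $r^2$ arcs suffices, and $\alpha_{1,r}=\alpha_r$ is correct as stated; it should not be replaced by $\frac12+\frac12\alpha_r$.

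\textbf{The gap is in the asymptotic half.}
\begin{itemize}
\item Your ansatz $\alpha_{k,r}=a_k+\frac12\alpha_r$ is false. Already $\alpha_{1,r}=\alpha_r$ has slope $1$, and a direct computation gives $\alpha_{4,r}=1+\frac16\alpha_r$.
\item Even under the ansatz, the factoring step fails. The third term $1-\alpha_s+\alpha_{k,s}$ contains no $\alpha_r$, so $\frac12\alpha_r$ cannot be pulled out of the max.
\item The suspicious scalar recurrence signals an error in your reduction, not in the statement. With $x\in[0,1]$ it gives $a_{k+2}=a_k+\frac12$ as soon as $a_k\ge\frac12$, which already holds at $a_2$. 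So $a_k$ diverges and the exponents exceed the trivial $2$.
\item The second term being dominated by the third is harmless; it is slack in the true solution too.
\item No recurrence is actually derived after this point. The characteristic polynomial is fitted to the announced $c$, and it is not the right one: $c^2=\frac{3+\sqrt{17}}{4}$ is a root of $2u^2-3u-1=0$. So $\frac32-\Theta(c^{-k})$ is not established.
\end{itemize}

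\textbf{The missing idea.} You must track a two-parameter family, because the slope in $\alpha_r$ changes with $k$. The paper proves by induction that
\[
\alpha_{k+2,r}=\max\{x_k+y_k,\;x_{k+2}+y_{k+2}\alpha_r\}.
\]
Your guess that the first and third terms balance is correct, but $\alpha_{k,s}$ has slope $y_k$ in $\alpha_s$, not $\frac12$. Solving $\alpha_s+\frac12\alpha_r=1-\alpha_s+x_k+y_k\alpha_s$ gives $\alpha_s=\frac{1+x_k-\alpha_r/2}{2-y_k}$. This yields the coupled fractional-linear recurrences
\[
x_{k+2}=\frac{1+x_k}{2-y_k},\qquad y_{k+2}=\frac{1-y_k}{2(2-y_k)}.
\]
The slope $y_k$ converges to the fixed point $\varphi=\frac{5-\sqrt{17}}{4}$ with contraction factor $\frac{1}{2(2-\varphi)^2}$ per double step. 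The intercept $x_k$ converges to $\frac{1}{1-\varphi}$ at the slower factor $\frac{1}{2-\varphi}$ per double step. Since $\frac{1}{1-\varphi}+\varphi=\frac32$, the gap $\frac32-(x_k+y_k)$ is $\Theta((2-\varphi)^{-k/2})=\Theta(c^{-k})$ with $c^2=2-\varphi$.

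To close the argument you must also show that the other candidate values never dominate. These are the constant branch $x_k+y_k$, the first/second-term balance $\frac23+\frac12\alpha_r$, and the balance against the constant branch $\frac{1+x_k+y_k}{2}+\frac14\alpha_r$. The paper does this using monotonicity of $x_k+y_k$ and explicit error estimates, plus hand checks for small $k$. Your plan to show the sequence ``eventually stays in this regime'' needs exactly these estimates.
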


    \begin{proof}
        We first observe that if $k = 1$, then we simply need to search over $\binom{r}{2}$ possible edges, which requires $O(r) = O(n^{\alpha_r}) = O(n^{\alpha_{1,r}})$ queries.
        
        Next, for $k = 2$, then we first search for the middle vertex, and then check whether there is a connection from the middle vertex to both the first and last layer. This requires a number of queries that satisfies $O(\sqrt{nr}) = O(n^{\frac12 + \frac12\alpha_r}) = O(n^{\alpha_{2,r}})$.

        Then, for every $k$, we observe by induction and \Cref{lem:dir-path-recurrence} that the number of queries required to solve the layered-path problem with length $k+2$ and first and last layer size $r$ satisfies
        \begin{align*}
            C_{n,r}^{(k+2)} &\in \widetilde{O}\left(\min_{\substack{s \in \N \\ 1 \leq s \leq n}} s\sqrt{r} + \left(\frac{n}{s}\right)\left(\sqrt{sr} + C_{n,s}^{(k)}\right)\right) \\
            &\subseteq O\left(\min_{\alpha_s \in [0,1]} n^{\alpha_s + \frac12\alpha_r} + n^{1-\frac12\alpha_s+\frac12\alpha_r} + n^{1-\alpha_s+\alpha_{k,s}}\right) = O(n^{\alpha_{k+2,r}}).
        \end{align*}
        
        Thus, it remains to solve the recurrence relation. To that end, we prove the following claim:
        
        \begin{claim}
            We consider the recurrence relations
            \[x_{k+2} = \frac{1+x_k}{2-y_k}, \quad \text{and} \quad y_{k+2} = \frac{1-y_k}{2(2-y_k)}, \quad \text{with} \quad x_2 = \frac12, x_3 = \frac23, y_2 = \frac12, y_3 = \frac12.\]
            These recurrence relations are well-defined, $(x_k + y_k)_{k=2}^{\infty}$ is increasing, and for all $k \geq 2$, we have $0 \leq 3/2 - (x_k + y_k) \in \Theta(c^{-k})$ where $c = \sqrt{3+\sqrt{17}}/2 \approx 1.33$. Finally, we have $\alpha_{k+2,r} = \max\{x_k + y_k, x_{k+2} + y_{k+2}\alpha_r\}$ for all $k \geq 2$.
        \end{claim}

        For future reference, we compute the first few values for $x_k$ and $y_k$.

        \begin{table}[!ht]
            \centering
            \begin{tabular}{r|ccccccccccc}
                $k$ & $2$ & $3$ & $4$ & $5$ & $6$ & $7$ & $8$ & $9$ & $10$ & $11$ & $12$ \\\hline
                &&&& \\[-1em]
                $y_k$ & $\frac12$ & $\frac12$ & $\frac16$ & $\frac16$ & $\frac{5}{22}$ & $\frac{5}{22}$ & $\frac{17}{78}$ & $\frac{17}{78}$ & $\frac{61}{278}$ & $\frac{61}{278}$ & $\frac{217}{990}$ \\[.2em]
                $x_k$ & $\frac12$ & $\frac23$ & $1$ & $\frac{10}{9}$ & $\frac{12}{11}$ & $\frac{38}{33}$ & $\frac{46}{39}$ & $\frac{142}{117}$ & $\frac{170}{139}$ & $\frac{518}{417}$ & $\frac{206}{165}$ \\[.2em]\hline
                &&&& \\[-1em]
                $x_k + y_k$ & $1$ & $\frac76$ & $\frac76$ & $\frac{23}{18}$ & $\frac{29}{22}$ & $\frac{91}{66}$ & $\frac{109}{78}$ & $\frac{335}{234}$ & $\frac{401}{278}$ & $\frac{1219}{834}$ & $\frac{1453}{990}$
            \end{tabular}
        \end{table}

        \begin{proof}
            We first assert well-definedness. To that end, observe that $|y_k| \leq 1/2$ implies that $|y_{k+2}| \leq (1+|y_k|)/(2(2-|y_k|)) \leq 1/2$, and so $y_k \neq 2$ for all $k \geq 2$. Consequently, $1 - y_k \geq 1/2$, and so $x_k \leq 2$ implies that $x_{k+2} \leq 3/(2-1/2) = 2$, hence by induction $x_k \leq 2$ for all $k \geq 2$.
            
            Next, we can easily verify that for all $\ell \geq 1$,
            \[y_{2\ell} = y_{2\ell+1} = \frac{\sqrt{17}}{2(-(-\chi)^\ell + 1)} + \varphi, \qquad \text{where} \qquad \chi = \frac{13 + 3\sqrt{17}}{4}, \quad \text{and} \quad \varphi = \frac{5 - \sqrt{17}}{4}.\]

            Now, we assert that $(x_k + y_k)_{k=2}^{\infty}$ is increasing. To that end, we write $\alpha = 1/(1-\varphi)$, and we bound for all $k \geq 2$ that
            \[\alpha - x_{k+2} = \frac{\varphi - y_k}{\alpha(2 - y_k)} + \frac{\alpha - x_k}{2 - y_k}, \qquad \text{which implies} \qquad \left|\alpha - x_{k+2} - \frac{\alpha - x_k}{2 - \varphi}\right| < \frac{3}{\chi^{\lfloor k/2\rfloor}}.\]
            As $(2 - \varphi)/\chi < 1/3$, we obtain that for all $k \geq 2$ and $\ell \in \N$,
            \begin{align*}
                \left|\alpha - x_{k+2\ell} - \frac{\alpha - x_k}{(2 - \varphi)^\ell}\right| &< \frac{3}{\chi^{\lfloor k/2\rfloor}} \sum_{m = 0}^{\ell-1} \frac{1}{(2 - \varphi)^m\chi^{\ell-1-m}} \\
                &= \frac{3}{\chi^{\lfloor k/2\rfloor}(2 - \varphi)^{\ell-1}} \sum_{m=0}^{\ell-1} \left(\frac{2 - \varphi}{\chi}\right)^m < \frac{9}{2\chi^{\lfloor k/2\rfloor}(2-\varphi)^{\ell-1}}.
            \end{align*}
            Now, we let $k \in \{10,11\}$, and we check that $(x_{k+1} - x_k)/(2 - \varphi) > 1/1000 > 10/\chi^5$. Thus, we observe that for all $\ell \in \N$ that
            \begin{align*}
                &(x_{k+2\ell+1} - y_{k+2\ell+1}) - (y_{k+2\ell} + x_{k+2\ell}) \\
                &\qquad = -(\alpha - x_{k+2\ell+1}) + (\alpha - x_{k+2\ell}) + (y_{k+2\ell+1} - \varphi) - (y_{k+2\ell} - \varphi) \\
                &\qquad > \frac{\alpha - x_{k+1}}{(2 - \varphi)^{\ell}} - \frac{\alpha - x_k}{(2 - \varphi)^{\ell}} - \frac{9}{\chi^{\lfloor k/2\rfloor}(2 - \varphi)^{\ell-1}} - \frac{5}{\chi^{\lfloor k/2\rfloor + \ell}} \\
                &\qquad > \frac{1}{(2 - \varphi)^{\ell-1}}\left[\frac{x_{k+1} - x_k}{2 - \varphi} - \frac{10}{\chi^{\lfloor k/2\rfloor}}\right] > 0.
            \end{align*}
            Hence, $(x_k+y_k)_{k=10}^{\infty}$ is indeed increasing, and we manually check it is for the first few values of $k$ as well.

            From the above analysis we immediately observe that $y_k \to \varphi$ and $x_k \to \alpha$, as $k \to \infty$. Moreover, we find
            \[\frac32 = \alpha + \varphi > x_k + y_k \in \frac32 - \Theta\left(c^{-k}\right),\]
            where $c = \min\{\sqrt{\chi}, \sqrt{2-\varphi}\} = \sqrt{3+\sqrt{17}}/2$.

            Finally, we prove that for all $k \geq 2$, $\alpha_{k+2,r} = \max\{x_k + y_k, x_{k+2} + y_{k+2}\alpha_r\}$. To that end, we work out the first couple of steps to obtain that
            \[\alpha_{3,r} = \max\left\{1, \frac23 + \frac{\alpha_r}{6}\right\}, \qquad \alpha_{4,r} = 1 + \frac{\alpha_r}{6}, \qquad \text{and} \qquad \alpha_{5,r} = \max\left\{\frac76, \frac{10}{9} + \frac{\alpha_r}{6}\right\},\]
            and so the basis for induction is established. For the induction step, we suppose that the relation holds for a given $k \geq 2$, and then observe that
            \[\alpha_{k+4,r} = \min_{\alpha_s \in [0,1]} \max\left\{\alpha_s + \frac{\alpha_r}{2}, 1 - \frac{\alpha_s}{2} + \frac{\alpha_r}{2}, 1 - \alpha_s + x_k + y_k, 1 - \alpha_s + x_{k+2} + y_{k+2}\alpha_s\right\}.\]
            Minimizing the first term in combination with the second, third and fourth terms yields $\alpha_s = 2/3$, $\alpha_s = \min\{1,(1+x_k+y_k)/2 - \alpha_r/4\}$ and $\alpha_s = \min\{1,(1+x_{k+2})/(2-y_{k+2}) - \alpha_r/(2(2-y_{k+2}))\}$, respectively. Thus,
            \[\alpha_{k+4,r} = \max\left\{\frac23 + \frac{\alpha_r}{2}, \frac{1+x_k+y_k}{2} + \frac{\alpha_r}{4}, x_k + y_k, x_{k+4} + y_{k+4}\alpha_r, x_{k+2} + y_{k+2}\right\}.\]
            We obtain $x_{k+2}+y_{k+2} \geq 7/6$ and $(x_k + y_k)_{k=2}^{\infty}$ is increasing, so the first and third term are always dominated by the fifth. For the second term, we compare to the fourth. We let $k \in \{8,9\}$, and we observe that $\alpha - x_k > 1/20 > 54/\chi^4$. Let $\ell \in \N_0$, and using $y_{k+2\ell+4} < 1/4$, we observe that
            \begin{align*}
                &x_{k+2\ell+4} + y_{k+2\ell+4}\alpha_r - \frac{1+x_{k+2\ell}+y_{k+2\ell}}{2} - \frac{\alpha_r}{4} \\
                &\qquad \geq x_{k+2\ell+4} + y_{k+2\ell+4} - \frac{3 + 2(x_{k+2\ell}+y_{k+2\ell})}{4} \\
                &\qquad = \frac12\left(\frac32 - (x_{k+2\ell} + y_{k+2\ell})\right) - \left(\frac32 - (x_{k+2\ell+4} + y_{k+2\ell+4})\right) \\
                &\qquad > \left[\frac12 - \frac{1}{(2-\varphi)^2}\right]\left(\frac32 - (x_{k+2\ell}+y_{k+2\ell})\right) - \frac{8}{\chi^{\lfloor k/2\rfloor+\ell}} > \frac{\alpha - x_k}{6(2-\varphi)^{\ell}} - \frac{9}{\chi^{\lfloor k/2\rfloor+\ell}} \\
                &\qquad > \frac{1}{6(2-\varphi)^{\ell}}\left[\alpha - x_k - \frac{54}{\chi^{\lfloor k/2\rfloor}}\right] > 0.
            \end{align*}
            Thus, for all $k \geq 8$, the second term is dominated by the fourth, and we manually check this is the case for smaller values of $k$ as well.
        \end{proof}
        
        Finally, we plug in $r = n$, and hence $\alpha_r = 1$, to conclude that $\alpha_{k,n} = x_k + y_k \in 3/2 - \Theta(c^{-k})$, where $c \approx 1.33$.
    \end{proof}

    Putting everything together, we now obtain the following result.

    \begin{corollary}
        \label{cor:dir-path-detection}
        $\mathsf{Q}(\mathtt{DirPath}^{=k}) \in \widetilde{O}(n^{\frac32 - \alpha_k})$, where $\alpha_k \in \Omega(c^{-k})$ as $k \to \infty$, with $c = \sqrt{3+\sqrt{17}}/2 \approx 1.33$.
    \end{corollary}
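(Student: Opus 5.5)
The plan is to chain together the three ingredients already established in this section.

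First, reduce to the layered problem. Given an input $G$ to $\mathtt{DirPath}^{=k}$, apply the random partition of \Cref{lem:dir-path-color-coding} into $V_0 \sqcup \cdots \sqcup V_k$ with each $|V_j| \in \Theta(n/k) = \Theta(n)$. This produces a layered instance $G'$. Each arc query to $G'$ costs at most one query to $G$, since we only need to check the colors of the endpoints and then query the arc. Negative instances map to negative instances with certainty, and positive instances map to positive instances with constant probability. So \Cref{lem:1-sided_error} gives
\[\mathsf{Q}(\mathtt{DirPath}^{=k}) \in O\bigl(C^{(k)}_{n,r}\bigr), \qquad r \in \Theta(n).\]
Here $C^{(k)}_{n,r}$ is the cost of the layered path-finding problem whose first and last layers have size $r$.

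Second, bound $C^{(k)}_{n,n}$. I would use \Cref{lem:dir-path-recurrence-solution} with $\alpha_r = 1$. Its proof unrolls \Cref{lem:dir-path-recurrence} recursively from the base cases $k \in \{1,2\}$, choosing $s = n^{\alpha_s}$ optimally at each level, and gives $C^{(k)}_{n,n} \in \widetilde{O}(n^{\alpha_{k,n}})$. The claim inside that proof shows, for $k \geq 4$,
\[\alpha_{k,n} = \max\{x_{k-2}+y_{k-2},\, x_k + y_k\} = x_k + y_k,\]
using that $(x_k+y_k)$ is increasing. It also shows $3/2 - (x_k+y_k) \in \Theta(c^{-k})$. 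Setting $\alpha_k := 3/2 - \alpha_{k,n}$ therefore gives $\alpha_k \in \Theta(c^{-k}) \subseteq \Omega(c^{-k})$, which is the statement.

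Third, check the bookkeeping, which is the only genuinely delicate step. There are two points to verify.

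\emph{Logarithmic overhead.} The recursion depth is about $k/2 \in O(1)$. Each nested MNRS walk in \Cref{thm:mnrs} adds only polylogarithmic factors, including error reduction of the inner checking procedure so that it can be used coherently. A constant number of levels therefore accumulates only $\mathrm{polylog}(n)$ overhead, which is absorbed into $\widetilde{O}$.

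\emph{Admissible choice of $s$.} The optimal $\alpha_s$ must satisfy $s \leq n/2$ as \Cref{lem:dir-path-recurrence} requires. Because the lemma only needs $s \in \Theta(n^{\alpha_s})$, an optimum at $\alpha_s = 1$ is handled by taking $s = n/2$, which changes the cost only by constants.

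For small $k$, the statement is asymptotic in $k$, so the finitely many cases are covered by the constant in $\Omega$. Alternatively, they can be read off the table; for instance $k=5$ gives exponent $23/18 < 3/2$.
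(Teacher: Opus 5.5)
Your proposal is correct and follows the paper's own argument. You color-code via \Cref{lem:dir-path-color-coding}, amplify the one-sided reduction with \Cref{lem:1-sided_error}, and plug $r \in \Theta(n)$ (i.e., $\alpha_r = 1$) into \Cref{lem:dir-path-recurrence-solution}, which yields $\alpha_{k,n} = x_k + y_k$ with $3/2 - (x_k+y_k) \in \Theta(c^{-k})$; your extra bookkeeping (constant recursion depth giving only polylogarithmic overhead, and clamping $s$ at $n/2$ when the optimum is $\alpha_s = 1$) only makes explicit what the paper leaves implicit.
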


    \begin{proof}
        We combine \Cref{lem:dir-path-color-coding} with \Cref{lem:1-sided_error} to obtain that solving the layered path-finding problem with length $k$ and first and last layer size $n$ is sufficient for finding a directed path of length $k$ in a directed graph. The result then follows from plugging in $r \in \Theta(n)$ in \Cref{lem:dir-path-recurrence-solution}.
    \end{proof}

    We compute the resulting query complexity for the first few values of $k$ for good measure.

    \begin{table}[!ht]
        \centering
        \begin{tabular}{r|ccccccc}
            $k$ & $1$ & $2$ & $3$ & $4$ & $5$ & $6$ & $7$ \\\hline\\[-1em]
            $\mathsf{Q}(\mathtt{DirPath}^{=k})$ & $\widetilde{O}(n)$ & $\widetilde{O}(n)$ & $\widetilde{O}(n^{7/6})$ & $\widetilde{O}(n^{7/6})$ & $\widetilde{O}(n^{23/18})$ & $\widetilde{O}(n^{29/22})$ & $\widetilde{O}(n^{91/66})$
        \end{tabular}
        \caption{The query complexity of $\mathsf{Q}(\mathtt{DirPath}^{=k})$ for the first few values of $k$.}
        \label{tbl:dir-path-detection}
    \end{table}

    \section{Length-$\leq k$ cycle-detection algorithm}
    \label{sec:cycle-leq-k-algo}

    In this section, we modify the approach developed in \cite{childs2012quantum}, to obtain a novel quantum algorithm for the $\mathtt{Cycle}^{\leq k}$-problem, whenever $k$ is odd. The core result that we are using is the following theorem.

    \begin{theorem}[{\cite[Theorem~4.7]{childs2012quantum}}] \label{thm:vertex_cover}
        Let $\mathcal{P}$ be the property that an $n$-vertex graph either has more than $\overline{m}$ edges, where $\overline{m} \in \Omega(n)$, or contains a given subgraph $H$. Let $H'$
        be the graph obtained by deleting all degree-$1$ vertices of $H$ that are not part of an isolated edge. Then $\Q(\mathcal{P}) \in \widetilde{O}(\sqrt{\overline{m}}n^{1-\frac{1}{\text{vc}(H')+1}})$, where $\text{vc}(H')$ denotes the vertex cover of $H'$, i.e., the minimum number of vertices in $H'$ required to cover all edges.
    \end{theorem}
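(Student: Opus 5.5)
The plan is a nested quantum walk in the style of Childs and Kothari, after first disposing of the dense case. We first run \textsc{ApproxEdgeCount} (\Cref{cor:approx_counting}) with threshold $\overline{m}$, at cost $O(\sqrt{n^2/\overline{m}})$. This is at most the claimed bound because $\overline{m} \in \Omega(n)$. If the routine reports many edges we accept; otherwise we may assume $|E(G)| \in O(\overline{m})$. In the sparse case we must detect $H$. We write $c = \mathrm{vc}(H')$ and fix a minimum vertex cover $\{u_1,\dots,u_c\}$ of $H'$. By a color-coding step as in \Cref{lem:dir-path-color-coding}, combined with \Cref{lem:1-sided_error}, we can assume the vertices of $G$ are randomly partitioned into classes $V_{u_1},\dots,V_{u_c}$ together with classes for the remaining vertices of $H$. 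We then search for an embedding respecting this partition.

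The walk is over the Cartesian product of $c$ Johnson graphs, one on $r$-subsets $W_i \subseteq V_{u_i}$ for each $i$. Its spectral gap is $\Theta(1/r)$. A state is marked if the images of $u_1,\dots,u_c$ lie in $W_1,\dots,W_c$, which gives $\varepsilon \in \Omega((r/n)^c)$. The data structure stores the full neighborhood of every vertex in $\bigcup_i W_i$, which we learn with \Cref{lem:find_all_marked_elements} at cost $O(\sqrt{n(\deg v+1)})$ per vertex.

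\begin{itemize}
\item \emph{Setup cost.} Since a random $r$-set has total degree about $r\overline{m}/n$, Cauchy--Schwarz gives setup cost $\mathsf{S} \approx \sqrt{r n \cdot r\overline{m}/n} = r\sqrt{\overline{m}}$.
\item \emph{Update cost.} An update swaps one vertex, costing $\mathsf{U} \approx \sqrt{\overline{m}}$ on average.
\item \emph{Checking cost.} Every edge of $H'$ touches a cover vertex, and every non-cover vertex of $H'$ is adjacent only to cover vertices. So whether $H'$ embeds with its cover inside the stored sets can be read off the stored neighborhood lists with zero queries. The deleted degree-$1$ vertices of $H$ are re-attached by a Grover search over their possible images, costing $\mathsf{C} \in O(\sqrt{n})$.
\end{itemize}

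Plugging these into \Cref{thm:mnrs} gives
\[\widetilde{O}\Bigl(r\sqrt{\overline{m}} + (n/r)^{c/2}\bigl(\sqrt{r}\sqrt{\overline{m}} + \sqrt{n}\bigr)\Bigr).\]
Choosing $r = n^{c/(c+1)}$ balances the first two terms at $\sqrt{\overline{m}}\, n^{1-1/(c+1)}$. The $\sqrt{n}$ term is dominated by the middle one, again because $\overline{m} \in \Omega(n)$.

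The main obstacle is that the setup and update costs above are only averages. A single high-degree vertex entering some $W_i$ makes an update cost up to $\sqrt{n \cdot n}$, which breaks the worst-case MNRS accounting. I would handle this by degree bucketing. Using \Cref{cor:approx_degree_counting}, each vertex is assigned to one of $O(\log n)$ dyadic degree classes. For each guess of the classes of the $c$ cover images, we walk only over vertices of the chosen classes. A class of degree about $d$ contains at most $O(\overline{m}/d)$ vertices, so the smaller Johnson graph compensates in $\varepsilon$ for the larger update cost, and the same balancing goes through per class. The price is polylogarithmic overhead, which is absorbed by the $\widetilde{O}$. Membership in a class is only decided approximately, so the class boundaries must be made consistent, for example by storing the degree estimate alongside each vertex. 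I expect this bookkeeping, together with the $\varepsilon$ computation per class, to be the delicate part of the argument.
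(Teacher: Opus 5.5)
Your architecture matches the Childs--Kothari algorithm that the paper cites and paraphrases in the proof of \Cref{thm:sparse_cycle}. It has an approximate edge count, then a walk on products of Johnson graphs whose database stores neighbourhoods of candidate images of a minimum vertex cover of $H'$, so $H'$ is checked with zero queries, and degree buckets make the costs worst-case. But the step you defer as bookkeeping is the actual content of the proof, and your justification (``the same balancing goes through per class'') is not correct as stated.

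A bucket of vertices of degree about $q_i$ has $t_i \in O(\overline{m}/q_i)$ elements. Inserting one of them into the database costs $\widetilde{O}(\sqrt{n/t_i}\cdot\sqrt{n/q_i}+\sqrt{nq_i})\subseteq\widetilde{O}(\sqrt{n\overline{m}/t_i})$. The first term is Grover sampling from the bucket with an approximate-degree test; you do not account for it, and bounding it is where $\overline{m}\in\Omega(n)$ is really used. If you keep one common $r$, as in your uniform analysis, small buckets blow up the cost. For $c=2$ with buckets of sizes $n$ and $n^{2/3}$, the best common $r$ gives $n^{13/18}\sqrt{\overline{m}}$ instead of $n^{2/3}\sqrt{\overline{m}}$.

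What works is to couple the buckets: take $r_i=\sqrt{t_i}\,n^{1/2-1/(c+1)}$ (storing the whole bucket if this exceeds $t_i$) and $s_i=\sqrt{t_i/t_{\min}}$. Then every bucket contributes $\sqrt{\overline{m}}\,n^{1-1/(c+1)}$ to $\mathsf{S}$. The term $\varepsilon^{-1/2}\delta^{-1/2}\mathsf{U}$ is bounded by the same quantity using $\prod_j t_j\le t_{\min}n^{c-1}$. This is exactly the choice $r_i/r_j=s_i/s_j=\sqrt{t_i/t_j}$ in the paper's proof of \Cref{thm:sparse_cycle}. Without it, or an equivalent computation, the bound is not established.

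Second, $\mathsf{C}\in O(\sqrt{n})$ for re-attaching the deleted leaves is wrong in general. A deleted leaf typically hangs off a vertex $w$ of $H'$ that is not in the cover. For the path with $4$ edges, the cover of $H'$ is the middle vertex, and both leaves hang off its neighbours. So $\varphi(w)$ is not fixed by the stored data; it ranges over up to $\sum_{v\in W_i}\deg v$ candidates. The check is therefore a nested search with an inner $O(\sqrt{n})$ leaf search, costing about $\sqrt{r_iq_in}$. You must bound this and show it is dominated by $\delta^{-1/2}\mathsf{U}$; the parameters above allow this at least when $w$ has a single neighbour in $H'$.

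When $\mathrm{vc}(H')=0$ the argument fails outright. For $H=K_{1,2}$ your accounting gives $\widetilde{O}(\sqrt{\overline{m}})$. Yet with $\overline{m}=n$, the adversary method shows that distinguishing a perfect matching from a perfect matching plus one edge needs $\Omega(n)$ queries. So the statement as written must also assume $\mathrm{vc}(H')\ge 1$. In the paper's application $H=C_k$ has no leaves and $\mathsf{C}=0$, so this issue does not arise there.
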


    Applying this to $k$-cycles, we get the following corollary.

    \begin{corollary} \label{cor:sparse_cycle_alg}
        Let $k \geq 3$. Let $\mathcal{P}$ be the property that an $n$-vertex graph either has more than $\overline{m}$ edges, where $\overline{m} \in \Omega(n)$, or contains a $k$-cycle. Then, there is a quantum algorithm that computes $\mathcal{P}$ using $\widetilde{O}(\sqrt{\overline{m}}n^{1-\frac{1}{\ceil{k/2}+1}})$ queries.
    \end{corollary}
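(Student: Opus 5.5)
The statement is \Cref{thm:vertex_cover} specialized to $H = C_k$, so the plan is to apply that theorem directly and compute $\mathrm{vc}(H')$ for the $k$-cycle.

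\emph{Step 1: identify $H'$.} Take $H = C_k$ with $k \geq 3$. Every vertex of $C_k$ has degree exactly $2$. So there are no degree-$1$ vertices to delete, and $H' = C_k$. In particular the caveat about isolated edges in \Cref{thm:vertex_cover} never applies.

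\emph{Step 2: compute $\mathrm{vc}(C_k)$.} I will show that $\mathrm{vc}(C_k) = \ceil{k/2}$.

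For the upper bound, label the vertices $0,\dots,k-1$ cyclically and take the even-labelled vertices $\{0,2,4,\dots\}$. This set has $\ceil{k/2}$ elements. Each edge $\{i,i+1\}$ with $0 \le i \le k-2$ contains exactly one even label, so it is covered. The closing edge $\{k-1,0\}$ contains $0$, so it is covered too.

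For the lower bound, each vertex covers at most $2$ of the $k$ edges. Hence any vertex cover has size at least $k/2$, and since its size is an integer, at least $\ceil{k/2}$.

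\emph{Step 3: substitute.} Plugging $\mathrm{vc}(H') = \ceil{k/2}$ into the bound $\widetilde{O}(\sqrt{\overline{m}}\, n^{1-\frac{1}{\mathrm{vc}(H')+1}})$ from \Cref{thm:vertex_cover} gives exactly the claimed query complexity. The hypothesis $\overline{m} \in \Omega(n)$ is inherited unchanged from the theorem.

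There is no real obstacle here: the statement is a direct instantiation of \Cref{thm:vertex_cover}. The only care needed is the parity bookkeeping in Step 2, namely checking that the closing edge $\{k-1,0\}$ is covered when $k$ is odd.
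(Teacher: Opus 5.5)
Your proposal is correct and matches the paper's argument: the paper gives no written proof beyond ``applying this to $k$-cycles,'' i.e., instantiating \Cref{thm:vertex_cover} with $H = H' = C_k$ and $\mathrm{vc}(C_k) = \ceil{k/2}$. Your check of that vertex-cover number (alternate vertices for the upper bound, and each vertex covering at most two of the $k$ edges for the lower bound) supplies exactly the detail the paper leaves unstated.
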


    We modify the algorithm attaining the result in \Cref{thm:vertex_cover} (and \Cref{cor:sparse_cycle_alg}) to get a slight improvement for the case when $k$ is odd (making a stronger assumption on $\overline{m}$ which will also be satisfied for our applications). 

    \begin{theorem} \label{thm:sparse_cycle}
        Let $k \geq 3$. Let $\mathcal{P}$ be the property that an $n$-vertex graph either has more than $\overline{m}$ edges, where $\overline{m} \in \Omega(n^{1+1/(\ell+1)})$ and $\ell = \floor{k/2}$, or contains a $k$-cycle. Then, there is a quantum algorithm \textsc{SparseCycle} that computes $\mathcal{P}$ using $\widetilde{O}(\sqrt{\overline{m}}n^{1-\frac{1}{\ell+1}})$ queries.
    \end{theorem}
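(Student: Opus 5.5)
The plan is to follow the structure of the Childs--Kothari algorithm behind \Cref{thm:vertex_cover}, but to exploit the fact that an odd cycle $C_k$ with $k = 2\ell+1$ has a set of only $\ell$ vertices whose neighbourhoods cover all but \emph{one} edge of the cycle. Label the cycle $v_0, v_1, \dots, v_{2\ell}$. The vertices $v_1, v_3, \dots, v_{2\ell-1}$ cover every edge except $\{v_{2\ell}, v_0\}$. Both endpoints of that edge are neighbours of covered vertices ($v_0$ of $v_1$, $v_{2\ell}$ of $v_{2\ell-1}$). So we pay for a walk that only needs to catch $\ell$ vertices instead of $\mathrm{vc}(C_k)=\ell+1$, and we move the last edge into the checking step as a search over pairs of already-known neighbours.

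\textbf{Step 1: preprocessing.} First run \textsc{ApproxEdgeCount} (\Cref{cor:approx_counting}) with $O(\sqrt{n^2/\overline{m}})$ queries, and accept if $|E| \gtrsim \overline{m}$. Otherwise $|E| \in O(\overline{m})$. Fix a degree threshold $d$, to be set later around $\overline{m}/n$ times a power of $n$. Handle high-degree vertices as Childs and Kothari do. There are at most $O(\overline{m}/d)$ vertices of degree $\geq d$, located via \textsc{ApproxDegreeCount} (\Cref{cor:approx_degree_counting}) inside a Grover search. Search for a $k$-cycle through one of them using the restricted-cycle subroutine, nested in a search over the $O(\overline{m}/d)$ candidates. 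Then delete them, so the remaining graph has maximum degree $d$ and $O(\overline{m})$ edges.

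\textbf{Step 2: the walk.} Run MNRS (\Cref{thm:mnrs}) on the Johnson graph of $r$-subsets $R\subseteq V$. The data structure is the full neighbourhood of each $v\in R$, found with \Cref{lem:find_all_marked_elements} at cost $O(\sqrt{nd})$ per vertex. A state is marked if $R$ contains $\{v_1,v_3,\dots,v_{2\ell-1}\}$ for some $k$-cycle, so $\varepsilon \in \Omega((r/n)^{\ell})$ and $\delta\in\Theta(1/r)$. The costs are $\mathsf S = r\sqrt{nd}$ and $\mathsf U=\sqrt{nd}$. The checking step needs no further queries for the edges between $R$ and $N(R)$. It searches over pairs $(u,w)$ of stored neighbours, together with the constantly many choices of how the stored vertices and their neighbours form the path $v_0,\dots,v_{2\ell}$, for an edge $\{u,w\}$ closing the cycle. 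This is a Grover search over at most $(rd)^2$ pairs, so $\mathsf C \in O(rd)$. The total is
\[\widetilde O\!\left(r\sqrt{nd} + \left(\tfrac{n}{r}\right)^{\ell/2}\left(\sqrt{r}\sqrt{nd} + rd\right)\right).\]

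\textbf{Step 3: parameter choice, the main obstacle.} Choose $d\approx \overline{m}/n$ and $r \approx n^{\ell/(\ell+1)}$. The first two terms then balance to $\sqrt{\overline{m}}\,n^{1-1/(\ell+1)}$, matching the statement. The delicate part is the checking term $(n/r)^{\ell/2} r d$ together with the high-degree branch. Showing these are dominated is exactly where the stronger hypothesis $\overline{m}\in\Omega(n^{1+1/(\ell+1)})$ is used, since it makes $d\geq n^{1/(\ell+1)}$ large enough relative to $r$. If the plain Grover check $rd$ turns out too expensive, I would first refine $\mathsf C$. One option is to restrict the search to pairs where $u\in N(v_1)$ and $w \in N(v_{2\ell-1})$ for guessed stored vertices, costing $O(d)$ rather than $O(rd)$ after a nested search over the $O(r^2)$ choices, via a nested walk. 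Another is to retune $d$ so that the high-degree recursion and the check are balanced. I would then verify the exponent arithmetic, absorbing polylog factors from nesting and amplification into $\widetilde O$.
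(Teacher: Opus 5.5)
Your covering idea is the one the paper uses: for $k=2\ell+1$, store $\ell$ alternate cycle vertices with their neighbourhoods and find the single uncovered edge in the checking step. Your walk with $r=n^{\ell/(\ell+1)}$ also reaches $\sqrt{\overline{m}}\,n^{1-1/(\ell+1)}$ when every relevant vertex has degree about $\overline{m}/n$.

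\textbf{The genuine gap is Step 1, the treatment of high-degree vertices.} The ``restricted-cycle subroutine'' is never specified. What it must solve is $\mathtt{Cycle}^{=k}_s$. That problem needs $\Omega(n)$ queries in the worst case even on graphs with $O(n)$ edges, by embedding a search over $\binom{n}{2}$ possible closing edges as in \Cref{lem:subgraph-characterization}. So nesting it in a search over $\Theta(\overline{m}/d)$ candidates costs at least $n\sqrt{\overline{m}/d}$. This fits the budget only if $d\geq n^{2/(\ell+1)}$, whereas your walk needs $d\leq\overline{m}/n$. The two conditions are incompatible for $\overline{m}$ near $n^{1+1/(\ell+1)}$. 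They are also incompatible in the intended application $\overline{m}=\Theta(n^{1+1/\ell})$ whenever $\ell\geq 2$. With your actual choice $d\approx\overline{m}/n$, the ``high-degree'' set can contain $\Theta(n)$ vertices (take a $2d$-regular graph), so this branch is no easier than the original problem.

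\textbf{The missing idea is degree bucketing,} which removes any separate high-degree phase. Guess a dyadic degree class $q_i$ for each of the $\ell$ stored vertices, giving $O(\log^\ell n)$ guesses. Walk on a product of Johnson graphs, the $i$-th over the $t_i\in O(\overline{m}/q_i)$ vertices of degree about $q_i$, storing $r_i$ of them. Finding one such vertex by Grover search with approximate degree counting costs $\widetilde{O}(n/\sqrt{t_i})$. Learning its neighbourhood costs $\widetilde{O}(\sqrt{nq_i})$. Both are $\widetilde{O}(\sqrt{n\overline{m}/t_i})$, so rare high-degree vertices are paid for by a smaller search space. Taking $r_i,s_i\propto\sqrt{t_i}$ with $s_1=1$ and $r_1=\sqrt{t_1}\,n^{1/2-1/(\ell+1)}$ balances all buckets at the target cost.

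\textbf{A second, smaller error is in the checking term,} where the hypothesis works against you. With $\mathsf{C}=rd$, the bound $(n/r)^{\ell/2}\,rd\leq\sqrt{\overline{m}}\,n^{\ell/(\ell+1)}$ holds only when $\overline{m}\leq n^{1+1/(\ell+1)}$. In fact, under the hypothesis $rd\geq n$ always, so restricting the search to stored neighbours never helps. Instead, Grover-search over all $n^2$ pairs $(u,w)$. Deciding whether a pair closes a path in the database is query-free, so $\mathsf{C}\in O(n)$. Then $(n/r)^{\ell/2}\cdot n\leq\sqrt{\overline{m}}\,n^{\ell/(\ell+1)}$ is equivalent to $\overline{m}\geq n^{1+1/(\ell+1)}$, which is exactly where the paper uses the hypothesis. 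No nested walk is needed for the check.
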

    
    \begin{proof}
        Our algorithm is very similar to that of \Cref{thm:vertex_cover} by \cite{childs2012quantum}. The checking cost of their quantum walk procedure is $0$ since their walk database contains a $k$-cycle whenever a vertex is marked. In a nutshell, for the $k$-cycle problem, we improve the complexity of the setup and update steps in their walk at the cost of increasing the checking complexity (which will still be bounded by the complexity corresponding to the setup and update steps). We provide the algorithm and the complexity analysis for completeness.  
        
        We use \Cref{cor:approx_counting} to decide if the number of edges $m$ in the input graph is at least $\overline{m}$ or at most $\overline{m}/3$ edges with probability at least $1-1/n$ using $\widetilde{O}(\sqrt{n^2/\overline{m}}) \subseteq \widetilde{O}(\sqrt{n})$ queries. If this procedure accepts, we accept; otherwise, we suppose that $m \in O(\overline{m})$ and proceed as follows. 
        %The algorithm we use to prove this theorem is quite similar to the algorithm used by \cite{childs2012quantum} to prove \Cref{thm:vertex_cover}. We will explicitly state this algorithm for completeness.

        We will assume that $k$ is odd since for the case when $k$ is even, we get the desired result directly from \Cref{cor:sparse_cycle_alg}.

        We will partition the vertex set $V$ into $\ceil{\log n}$ many parts according to its degree. We will not explicitly compute this partition; instead, we will maintain the relevant information about this partition in our quantum walk (as we explain below).  
        
        For $i \in [\ell]$, fix $p_i \in [\ceil{\log n}]$ and $q_i = 2^{p_i}$. We will describe an algorithm for determining if there are vertices $v_i$ for $i \in [\ell]$ with degree close to $q_i$ (i.e. $[q_i/2, 3q_i/2]$) such that $v_1, \dots, v_\ell$ are non-adjacent vertices of a $k$-cycle. We can run this procedure for all $\ceil{\log n}^\ell$ tuples of length $\ell$ comprising of elements from $[\ceil{\log n}]$, incurring a polylogarithmic overhead.

        We will use a quantum walk procedure on a Johnson graph invoking \Cref{thm:mnrs}. In the walk database, for each $i \in [\ell]$, we will store $r_i$ vertices whose degree is close to $q_i$ and their neighborhoods. Note that the number of vertices whose degree is in $[q_i/2, 3q_i/2]$ is $t_i \in O(\overline{m}/q_i)$.
        %, and for each stored vertex $v$ among these $r_i$ vertices, we will store the whole neighborhood of $v$ if $\deg(v) \in [2^{p_i-1}, 2^{p_i}]$. 
        
        To determine if a vertex $v$ has degree at least $q_i/2$ or at most $q_i/6$ with probability at least $1-1/n^3$, we can use the approximate degree counting algorithm in  \Cref{cor:approx_degree_counting} using $\tilde{O}(\sqrt{n/q_i})$ queries. Similarly, to determine if a vertex $v$ has degree at least $9q_i/2$ or at most $3q_i/2$ with probability at least $1-1/n^3$, we can use the approximate degree counting algorithm in \Cref{cor:approx_degree_counting} using $\tilde{O}(\sqrt{n/q_i})$ queries. Thus, for a vertex $v$, if $\deg(v) \in [q_i/2, 3q_i/2]$, then it is accepted by the first check and rejected by the second check with probability at least $1-2/n^3$.
        Since there are $t_i$ vertices with degree in $[q_i/2, 3q_i/2]$, the probability that a randomly sampled vertex has such degree is $t_i/n$. Thus, via Grover's algorithm, the total cost of searching for a vertex with degree close to $q_i$ is $\widetilde{O}(\sqrt{n/t_i} \cdot \sqrt{n/q_i}) \subseteq \widetilde{O}(n/\sqrt{t_i})$. 
        %To determine if a vertex $v$ has degree close to $2^{p_i}$, we can use the approximate degree counting algorithm in  \Cref{cor:approx_degree_counting} with $\epsilon = 1/2$ and $\delta = 1/n^3$ and if the estimate $\tilde{t} \in [2^{p_i-2}, 9\cdot 2^{p_i-2}]$, we accept and otherwise, we reject. Note that $\tilde{t} \in [2^{p_i-2}, 9 \cdot 2^{p_i-2}]$ implies that $\deg(v) \in [2^{p_i-1}/3, 9 \cdot 2^{p_i-1}]$ with probability at least $1-1/n^3$ which is fine since $\deg(v) = \Theta(2^{p_i})$ and if $\deg(v) \in [2^{p_i-1}, 3 \cdot 2^{p_i-1}]$ (i.e. close to $2^{p_i}$), then $\tilde{t} \in [2^{p_i-2}, 9 \cdot 2^{p_i-2}]$ with probability at least $1-1/n^3$. Since there are $t_i$ vertices with degree close to $2^{p_i}$, the probability that a randomly sampled vertex have such degree is $n/t_i$. Thus, via Grover's algorithm \Amin{cite this?}, the total cost of searching for a vertex with degree close to $2^{p_i}$ is $\tilde{O}(\sqrt{n/t_i} \cdot \sqrt{n}) = \tilde{O}(n/\sqrt{t_i})$.   
        Computing the neighborhood for any $v$ with its degree close to $q_i$ will cost $\widetilde{O}(\sqrt{n \deg(v)}) = \widetilde{O}(\sqrt{n \cdot q_i})$ by \Cref{lem:find_all_marked_elements}.
        
        For each update step of our walk, for each $i \in [\ell]$, we randomly remove $s_i$ of the $r_i$ vertices (and their neighbors) and add $s_i$ other vertices with degree close to $q_i$ and their neighborhoods. We say that a vertex of our walk is marked if the database associated with this vertex contains $\ell$ non-adjacent vertices of a $k$-cycle and their neighborhood in $\ell$ different blocks. 

        We are now ready to compute the cost of our quantum walk. The setup (and update) cost involves, for each $i \in [\ell]$, searching for $r_i$ (respectively $s_i$) vertices with degree close to $q_i$ and computing their neighborhoods. We earlier noticed that for each vertex, the former step costs $\widetilde{O}(n/\sqrt{t_i})$ and the latter step costs $\widetilde{O}(\sqrt{n \cdot q_i})$ queries. Thus, the total cost for each vertex is $\widetilde{O}(n/\sqrt{t_i} + \sqrt{n \cdot q_i}) \subseteq \widetilde{O}(\sqrt{n \overline{m}/t_i})$ since $t_i \cdot q_i \in O(\overline{m})$ and $\overline{m} \in \Omega(n)$. It follows that the setup and update costs are $S \in \widetilde{O}(\sum_{i \in [\ell]} r_i \cdot \sqrt{n \overline{m}/t_i})$ and $U \in \widetilde{O}(\sum_{i \in [\ell]} s_i \cdot \sqrt{n \overline{m}/t_i})$ respectively.
        %The setup (and update) cost involves checking if for each $i \in [\ell]$, each of the $r_i$ (respectively $s_i$) vertices have their degrees falling in the appropriate region and computing their neighborhood if it does. For each vertex, the former step costs $\tilde{O}(\sqrt{n/2^{p_i}}) = \tilde{O}(\sqrt{n})$, and the latter step costs    
        % Let $v_1, \dots, v_k$ be vertices of a $k$-cycle.
        The checking procedure involves determining if the database contains $\ell$ non-adjacent vertices of a $k$-cycle in $\ell$ different blocks. Since we also store the neighborhoods of all the vertices, we need to decide if there are $\ell$ vertices in $\ell$ different blocks forming a length-$2\ell$ path that can be extended to a cycle of length $k = 2\ell+1$. This can be done by searching for an edge that connects a pair of endpoints of any such path in $O(\sqrt{n^2}) = O(n)$ queries using Grover's algorithm. The spectral gap $\delta$ of our walk is $\Omega(\min_{i \in [\ell]}s_i/r_i)$ and the probability $\epsilon$ of marked vertices of our walk is $\Omega(\prod_{j \in [\ell]} r_j/t_j)$. Therefore, the total cost of our walk is
        \begin{equation*}
            S + \frac{1}{\sqrt{\epsilon}} \left( \frac{1}{\sqrt{\delta}}U + C\right) \in \widetilde{O}\left(\sum_{i \in [\ell]} r_i \cdot \sqrt{\frac{n \overline{m}}{t_i}} + \prod_{j \in [\ell]} \sqrt{\frac{t_j}{r_j}} \left(\max_{i \in [\ell]} \sqrt{\frac{r_i}{s_i}} \left(\sum_{i \in [\ell]} s_i \cdot \sqrt{\frac{n \overline{m}}{t_i}}\right) + n \right)\right)
        \end{equation*}

        We will let 
        \begin{equation*}
            \frac{r_i}{r_j} = \frac{s_i}{s_j} = \sqrt{\frac{t_i}{t_j}}, \qquad \qquad s_1 = 1, \qquad \qquad r_1 = \sqrt{t_1} n^{\frac{1}{2}-\frac{1}{\ell+1}},
        \end{equation*}
        and argue that all the three terms in the above expression are bounded by $\widetilde{O}(\sqrt{\overline{m}}n^{1-\frac{1}{\ell+1}})$. First, notice that

        \begin{align*}
            \sum_{i \in [\ell]} r_i \cdot \sqrt{\frac{n \overline{m}}{t_i}} = \ell \cdot r_1 \cdot \sqrt{\frac{n \overline{m}}{t_1}} \in O (\sqrt{\overline{m}} n^{1-\frac{1}{\ell+1}}).
        \end{align*}

        For the second term, we first simplify
        \begin{align*}
            \prod_{j \in [\ell]} \frac{t_j}{r_j} = \left(\frac{\sqrt{t_1}}{r_1}\right)^\ell \cdot \prod_{j \in [\ell]}\sqrt{t_j} = \frac{t_1}{r_1} \cdot \left(\frac{\sqrt{t_1}}{r_1}\right)^{\ell-1} \prod_{j \in [2,\ell]}\sqrt{t_j} \leq \frac{t_1}{r_1} \cdot \left(\frac{\sqrt{t_1 n}}{r_1}\right)^{\ell-1} = \frac{t_1}{r_1} n^{\frac{\ell-1}{\ell+1}}
            %= n^{\frac{1}{2}-\frac{1}{\ell+1}} \cdot \prod_{j \in [\ell]}\sqrt{t_j} \leq \sqrt{t_1} \cdot n^{\frac{1}{2}-\frac{1}{\ell+1}} \cdot 
        \end{align*}
        so
        \begin{align*}
            \prod_{j \in [\ell]} \sqrt{\frac{t_j}{r_j}} = \sqrt{\frac{t_1}{r_1}} n^{\frac{\ell-1}{2(\ell+1)}}.
        \end{align*}
        
        We also simplify
        \begin{align*}
            \max_{i \in [\ell]} \sqrt{\frac{r_i}{s_i}} \left(\sum_{i \in [\ell]} s_i \cdot \sqrt{\frac{n \overline{m}}{t_i}}\right) =  \ell \cdot \sqrt{r_1 s_1} \cdot \sqrt{\frac{n \overline{m}}{t_1}} \in O\left(\sqrt{r_1 s_1} \cdot \sqrt{\frac{n \overline{m}}{t_1}} \right) = O\left(\sqrt{\frac{r_1} {t_1}} \cdot \sqrt{n \overline{m}} \right).
        \end{align*}
        
        Taking their products gives us
        \begin{align*}
            \prod_{j \in [\ell]} \sqrt{\frac{t_j}{r_j}} \left(\max_{i \in [\ell]} \sqrt{\frac{r_i}{s_i}} \left(\sum_{i \in [\ell]} s_i \cdot \sqrt{\frac{n \overline{m}}{t_i}}\right) \right) \in O\left(\sqrt{n \overline{m}} \cdot n^{\frac{\ell-1}{2(\ell+1)}}\right) = O (\sqrt{\overline{m}} n^{1-\frac{1}{\ell+1}}).
        \end{align*}

        It remains to argue that the term corresponding to the checking cost is bounded by $O (\sqrt{\overline{m}} n^{1-\frac{1}{\ell+1}})$. To show this, it is sufficient to argue that $C \in O(1/\sqrt{\delta} \cdot  U)$, which we do as follows.
        \begin{align*}
            \frac{1}{\sqrt{\delta}} U \geq \sqrt{\frac{r_1} {t_1}} \cdot \sqrt{n \overline{m}} = \sqrt{\frac{n^{\frac{1}{2}-\frac{1}{\ell+1}}}{\sqrt{t_1}}} \cdot \sqrt{n \overline{m}} \geq n^{\frac{1}{2}-\frac{1}{2(\ell+1)}} \cdot \sqrt{\overline{m}} \in \Omega(n) \subseteq \Omega(C).
        \end{align*}
        where the last step follows since $\overline{m} \in \Omega(n^{1+1/(\ell+1)})$.
    \end{proof}

    Now, we can combine this construction with a combinatorial result by Bondy and Simonovits~\cite{bondy1974cycles}, that connects the non-existence of an even-length cycle of a given length to a sparsity condition on the graph.

    \begin{theorem}[\cite{bondy1974cycles}] \label{thm:dense_cycles}
        Let $G$ be a graph on $n$ vertices. For any $\ell \geq 2$, if $|E(G)| > 100 \ell n^{1+1/\ell}$ then $G$ contains $\mathcal{C}_{2\ell}$ as a subgraph.
    \end{theorem}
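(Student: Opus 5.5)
This is the classical even-cycle theorem of Bondy and Simonovits, and I would prove it by the standard breadth-first-search expansion argument rather than anything specific to this paper.

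\textbf{Step 1: reduce to a subgraph of large minimum degree.} Set $d = 100\ell n^{1/\ell}$. Repeatedly deleting vertices of degree below half the current average degree yields a non-empty subgraph $H$ of minimum degree at least $|E(G)|/n > d$. Passing to a bipartite subgraph with at least half the edges and repeating the deletion gives a bipartite $H$ of minimum degree at least $d/4$. Since $H$ has at most $n$ vertices, it suffices to find a $\mathcal{C}_{2\ell}$ in $H$.

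\textbf{Step 2: set up the BFS layers.} Fix a root $v$ of $H$ and let $L_0=\{v\}, L_1, \dots, L_\ell$ be the breadth-first layers, with $T$ the BFS tree. Because $H$ is bipartite, every edge of $H$ among these vertices joins consecutive layers. The goal is the expansion inequality
\[
|L_{i+1}| \geq \frac{d}{c\,\ell}\,|L_i| \qquad \text{for all } 0 \leq i < \ell,
\]
for an absolute constant $c$, under the assumption that $H$ has no $\mathcal{C}_{2\ell}$. Iterating gives $|L_\ell| \geq (d/(c\ell))^\ell$. With the constant $100$ (after adjusting $c$) this exceeds $n$, which is a contradiction.

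\textbf{Step 3: prove the expansion inequality (the main obstacle).} Fix $i<\ell$ and suppose $|L_{i+1}| < \frac{d}{c\ell}|L_i|$. Every vertex of $L_i$ sends at least $d/4$ edges forward, since at most one of its edges goes back to its parent in $T$. Hence the bipartite graph $F$ between $L_i$ and $L_{i+1}$ has average degree $\Omega(d/\ell)\cdot\ell = \Omega(d)$ relative to $|L_i \cup L_{i+1}|$, up to the constant $c$.

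I would then invoke a Bondy--Simonovits/Verstra\"ete-type lemma. It says: if a bipartite graph $F$ has average degree at least $4k$, and its vertex set is partitioned into two parts each inducing a connected-through-ancestors structure (here, the branches of $T$ below a suitably chosen ancestor layer $L_j$), then $F$ contains a path of every even length up to $2k$. Such a path must join two vertices lying in different branches of $T$ from their last common ancestor $w \in L_j$.

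Concretely, the plan is as follows. First choose the largest $j$ such that the vertices of $L_i$ are not all in one branch below a single vertex of $L_j$. Then split $L_i$ into two nonempty "halves" $A, B$ according to branches, so that $A$ and $B$ have no common descendant-ancestor below $L_j$. A path $P$ in $F$ of length $2\ell - 2(i-j)$ from $A$ to $B$, closed up by the two tree paths of length $i-j$ up to $w$, gives a cycle of length exactly $2\ell$. The cycle is simple because the tree paths lie in layers $\leq i$ in disjoint branches, and $P$ meets them only at its endpoints. We need $2\ell-2(i-j) \leq 2\ell$ and at least $2$, which holds for $j<i$. The required length is at most $2\ell$, so average degree $\Omega(\ell)$ in $F$ suffices, and this is where the factor $\ell$ in the constant $100\ell$ enters.

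The delicate part is guaranteeing that the dense part of $F$ really straddles $A$ and $B$ with balanced endpoints. I would handle this by discarding one branch class if it carries fewer than half of the edges and iterating, losing only constant factors. I would follow Bondy--Simonovits' "$\Theta$-graph" argument or Verstra\"ete's path lemma directly for this step, since the rest is bookkeeping.
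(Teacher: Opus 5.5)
The paper does not prove this statement; it quotes it from Bondy and Simonovits and uses it as a black box. Your route is the standard one: a bipartite subgraph of large minimum degree, BFS layers, the $\Theta$-graph lemma, and closing a path with two tree paths. Step~1 and the length and simplicity check for the closed cycle are fine. But Step~3 rests on a false claim, namely that every vertex of $L_i$ has at most one edge back to $L_{i-1}$. Only one back-edge is a tree edge; a vertex of $L_i$ may have many non-tree neighbors in $L_{i-1}$. $\mathcal{C}_{2\ell}$-freeness does not forbid this, because two back-neighbors of $u\in L_i$ only close a cycle of length at most $2i<2\ell$. For instance, in $K_{2,m}$ rooted on its large side, which has no $\mathcal{C}_6$, every vertex of $L_2$ has two back-edges and no forward edge. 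So a failure of expansion at level $i$ does not make $F=H[L_i,L_{i+1}]$ dense, and the contradiction never starts.

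The repair reverses the logic. First prove from $\mathcal{C}_{2\ell}$-freeness that $e(L_m,L_{m+1})<C\ell(|L_m|+|L_{m+1}|)$ for every $m<\ell$. Then combine this with $\delta(H)|L_i|\le e(L_{i-1},L_i)+e(L_i,L_{i+1})$, where $\delta(H)$ is the minimum degree, and induct on $i$ to get $|L_{i+1}|\ge(\delta(H)/(C\ell)-3)|L_i|$. You also cannot ``adjust $c$'' at the end, since the constant $100$ is fixed. You have to check that the $C$ you obtain (one can take $C=2$), together with the factor $4$ lost in Step~1, still gives $|L_\ell|>n$; it does.

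The second gap is that the step which actually produces the $\mathcal{C}_{2\ell}$ is deferred, and the set-up you sketch for it fails. Your $j$ is defined backwards; you need the depth of a lowest common ancestor. More seriously, $w$ cannot be computed from all of $L_i$. The dense part of $F$ may lie entirely below one child of $w$, and then $F$ may contain no $A$--$B$ path at all. Discarding sparse branch classes does not repair this in a way that yields a path of one prescribed length. What is needed is the following:
\begin{itemize}
\item Take a subgraph $F'\subseteq F$ of minimum degree at least $2\ell$, and in it a $\Theta$-graph $\Theta$ on more than $2\ell$ vertices, found by a longest-path argument.
\item Let $w\in L_j$ be the lowest common ancestor of $X=V(\Theta)\cap L_i$. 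Let $A$ be the vertices of $X$ below one child of $w$, and $B=X\setminus A$.
\item Apply the Bondy--Simonovits lemma to $A'=A\cup(V(\Theta)\cap L_{i+1})$ and $B'=B$. The lemma says $\Theta$ has $A'$--$B'$ paths of every length below $|V(\Theta)|$ unless $(A',B')$ is its bipartition, which it is not since $A\neq\emptyset$.
\end{itemize}
An $A'$--$B'$ path of the even length $2(\ell-i+j)$ then has both ends in $L_i$. This matters, because an end in $L_{i+1}$ would have a tree path through $L_i$ that can meet $P$. That lemma, and choosing the partition relative to $\Theta$ rather than to $F$, are the substance of the theorem, not bookkeeping. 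Your paraphrase of the lemma (``a path of every even length up to $2k$'', with no control of the endpoints or of which classes they lie in) does not supply them. As written, the proposal is an outline of the known proof rather than a proof.
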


    Combining the previous two results now gives us our algorithm that solves the $\mathtt{Cycle}^{\leq k}$-problem.

    \begin{theorem} \label{thm:cycles<=k}
        Let $k \geq 4$. Then, $\mathsf{Q}(\mathtt{Cycle^{\leq k}}) \in \widetilde{O}(n^{\frac{3}{2}-\gamma(k)})$, where
        \begin{equation*}
            \gamma(k) = 
            \begin{cases}
                \frac{k-2}{k(k+2)}, & \text{if $k$ is even}, \\
                \frac{k-3}{(k-1)(k+1)}, & \text{if $k$ is odd}.
            \end{cases}
        \end{equation*}
    \end{theorem}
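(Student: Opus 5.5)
The approach is to combine the sparse-cycle algorithms (\Cref{cor:sparse_cycle_alg} and \Cref{thm:sparse_cycle}) with the Bondy--Simonovits density bound (\Cref{thm:dense_cycles}). Let $\ell := \lfloor k/2 \rfloor \geq 2$, which holds since $k \geq 4$, and set $\overline{m} := 100\ell n^{1+1/\ell}$. For every length $j \in \{3, \dots, k\}$ we run an algorithm for the property ``$G$ has more than $\overline{m}$ edges, or $G$ contains a $j$-cycle.'' We accept if any of these constantly many runs accepts. Each run is boosted to error $O(1/k)$ by constant-fold repetition, so a union bound keeps the overall error below $1/3$.

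\textbf{Correctness.} If $G$ contains a cycle of some length $j \leq k$, the $j$-th run accepts. Now suppose some run accepts while $G$ has no $j$-cycle for that $j$. Then $G$ has more than $\overline{m} = 100\ell n^{1+1/\ell}$ edges. By \Cref{thm:dense_cycles}, $G$ then contains $\mathcal{C}_{2\ell}$, and $2\ell \leq k$. So $G$ is a \textsc{yes} instance of $\mathtt{Cycle}^{\leq k}$ and accepting is correct. Conversely, if $G$ has no cycle of length at most $k$, every run must reject.

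\textbf{Complexity.} For even $j$, and for all $j$ when $k$ is even, we invoke \Cref{cor:sparse_cycle_alg}. Its cost $\widetilde{O}(\sqrt{\overline{m}}\, n^{1-1/(\lceil j/2\rceil+1)})$ is maximized at the largest admissible $j$. For even $k$ this is $j = k$, giving
\[\sqrt{\overline{m}}\, n^{1-\frac{2}{k+2}} = n^{\frac12+\frac1k+1-\frac{2}{k+2}} = n^{\frac32-\frac{k-2}{k(k+2)}}.\]
For odd $k$, the even lengths satisfy $j \leq k-1 = 2\ell$, so $\lceil j/2\rceil + 1 \leq \ell + 1$. For odd $j$ with $\ell' = \lfloor j/2 \rfloor \leq \ell$ we use \Cref{thm:sparse_cycle} instead. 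Its hypothesis $\overline{m} \in \Omega(n^{1+1/(\ell'+1)})$ holds because $1/\ell \geq 1/(\ell'+1)$. The resulting cost $\sqrt{\overline{m}}\, n^{1-1/(\ell'+1)}$ is at most $\sqrt{\overline{m}}\, n^{1-1/(\ell+1)}$. Substituting $\ell = (k-1)/2$ gives
\[n^{\frac12+\frac{1}{k-1}+1-\frac{2}{k+1}} = n^{\frac32-\frac{k-3}{(k-1)(k+1)}}.\]
Summing over the constantly many $j$ therefore yields the claimed bound $\widetilde{O}(n^{3/2-\gamma(k)})$.

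\textbf{Main obstacle.} The delicate point is the odd case. There we must check that the stronger sparsity requirement of \Cref{thm:sparse_cycle} is met by our chosen $\overline{m}$ for every odd $j \leq k$, including the triangle case $j = 3$, where it requires $\overline{m} \in \Omega(n^{3/2})$. We must also check that the Bondy--Simonovits threshold only certifies a cycle of length $2\lfloor k/2\rfloor \leq k$, rather than length $k$ itself. This is precisely why we target $\mathtt{Cycle}^{\leq k}$ and not $\mathtt{Cycle}^{=k}$.
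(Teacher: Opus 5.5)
Your overall route matches the paper's. You set $\overline m = 100\ell n^{1+1/\ell}$ so that, by \Cref{thm:dense_cycles}, exceeding it certifies a $\mathcal{C}_{2\ell}$ with $2\ell\le k$. You then run a sparse-cycle routine for each $j\in\{3,\dots,k\}$, with the cost dominated by the longest length. The paper performs the density test once up front with \textsc{ApproxEdgeCount} rather than folding it into every run, which is immaterial. Your correctness argument and the even-$k$ computation are fine.

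The step that fails is your check of the hypothesis of \Cref{thm:sparse_cycle} for odd $j<k$.
\begin{itemize}
\item The inequality $1/\ell\ge 1/(\ell'+1)$ is equivalent to $\ell'\ge \ell-1$. It therefore only covers $j\in\{k-2,k\}$, not all $\ell'\le\ell$.
\item Concretely, for odd $k\ge 7$ and $j=3$ you need $\overline m\in\Omega(n^{3/2})$. But $\overline m = 100\ell n^{1+1/\ell}\in O(n^{4/3})$, so the very case you singled out as delicate is not satisfied.
\item Without that hypothesis, the proof of \Cref{thm:sparse_cycle} no longer shows that the $O(n)$ checking cost is absorbed by $U/\sqrt{\delta}$. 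So you cannot quote its bound there.
\end{itemize}

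The fix is easy. For every $j\le 2\ell$, use \Cref{cor:sparse_cycle_alg} instead. It only needs $\overline m\in\Omega(n)$, and since $\lceil j/2\rceil\le\ell$ it costs at most $\widetilde O(\sqrt{\overline m}\,n^{1-1/(\ell+1)})$. Reserve \Cref{thm:sparse_cycle} for the single length $j=k=2\ell+1$ when $k$ is odd. There $\ell'=\ell$, and $\overline m\in\Omega(n^{1+1/(\ell+1)})$ genuinely holds.

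Alternatively, run \Cref{thm:sparse_cycle} with the raised threshold $\max\{\overline m, n^{1+1/(\ell'+1)}\}$. Acceptance still forces more than $\overline m$ edges or a $j$-cycle. The cost $n^{3/2-1/(2(\ell'+1))}$ stays within budget because $(\ell-1)(\ell'+1)<\ell(\ell+1)$.

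The paper's own proof runs \textsc{SparseCycle} for every $j$ and argues only via monotonicity of the cost expression. It therefore passes over the same point, and your explicit check is where it surfaces.
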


    \begin{proof}
        Let $\ell = \floor{k/2}$. The general strategy of our algorithm is to first check if $G$ is dense enough to already have a $\mathcal{C}_{2\ell}$ using \Cref{thm:dense_cycles}. If it is not, then we know that $E(G) \in O(n^{1+1/\ell})$, which we can use to efficiently compute if there is a $\mathcal{C}_j$ for any $j \in [3,k]$ using \Cref{thm:sparse_cycle}. See \Cref{alg:cycle<=k} for a formal description of this algorithm. 
    
        The correctness is easy to check. It remains to argue the desired complexity. For the density reduction stage, it follows from \Cref{cor:approx_counting} that $\widetilde{O}(\sqrt{n^2/\overline{m}}) = \widetilde{O}(\sqrt{n^{1-1/\ell}})$ queries would be sufficient. If the algorithm does not abort in the density reduction stage, then we can suppose that $E(G) \in O(\overline{m}) = O(n^{1+1/\ell})$. The complexity expression in \Cref{thm:sparse_cycle} is non-decreasing in $k$ so the complexity contribution for the iteration where $j=k$ will dominate the cost of the for loop in the sparse case stage. In fact, this step will dominate the cost of \Cref{alg:cycle<=k} since the cost of the density reduction stage is sublinear. From \Cref{cor:sparse_cycle_alg}, we deduce that the cost will be
        \begin{equation*}
            \widetilde{O}(\sqrt{\overline{m}} n^{1-\frac{1}{\ell+1}}) = \widetilde{O}(n^{\frac{3}{2} - \frac{\ell-1}{2\ell (\ell+1)}}) =
            \begin{cases}
                 \widetilde{O}(n^{\frac{3}{2} - \frac{k-2}{k(k+2)}}), & \text{if $k$ is even}, \\
                 \widetilde{O}(n^{\frac{3}{2} - \frac{k-3}{(k-1)(k+1)}}), & \text{if $k$ is odd}.
            \end{cases}\qedhere
        \end{equation*}
        
        \begin{algorithm}[!ht]
            \caption{Algorithm for $\mathtt{Cycle}^{\leq k}$ for $k \geq 4$}
            \begin{algorithmic}[1] 
            \label{alg:cycle<=k}
                \State $\ell \gets \lfloor k/2 \rfloor$
                \State $\overline m \gets 100 \ell \, n^{1+1/\ell}$
                \medskip
                \State \textbf{(Density reduction)}
                \State Run \textsc{ApproxEdgeCount} from \Cref{cor:approx_counting} to distinguish if
                $|E(G)| \ge \tfrac32 \overline m$ or $|E(G)| \le \tfrac12 \overline m$ with prob $1/n$.
                \If{$|E(G)| \ge \tfrac32 \overline m$}
                    \State \Return \textsc{yes}
                    %\Comment{By Theorem~4.10, $G$ contains $C_{2\ell}$, hence a cycle of length at most $k$}
                \EndIf
    
                \medskip
                \State \textbf{(Sparse case)}
                %\Comment{Now proceed under the promise $|E(G)| = O(\overline m)$}
                \For{$j \gets 3$ \textbf{to} $k$}
                \State Run \textsc{SparseCycle} from \Cref{thm:sparse_cycle}.
                \If{$\mathcal{A}$ outputs \textsc{yes}}
                    \State \Return \textsc{yes}
                \EndIf
                \EndFor
                \State \Return \textsc{no}
            \end{algorithmic}
        \end{algorithm}
    \end{proof}

    Finally, we summarize the best-known upper bounds on the query complexities for $\mathtt{Cycle}^{=k}$ and $\mathtt{Cycle}^{\leq k}$ in \Cref{fig:ub-cycle}. We observe that for all $k$, the $\mathtt{Cycle}^{\leq k}$-problem can be solved in $o(n^{3/2})$ queries, whereas for the $\mathtt{Cycle}^{=k}$-problem, we can only make that statement for even values of $k$. It would be a very interesting direction of future research to establish whether the $\mathtt{Cycle}^{=k}$-problem for odd values of $k$ can be solved in $\widetilde{O}(n^{3/2})$ queries as well.

    \begin{figure}[!ht]
        \centering
        \begin{tikzpicture}[vertex/.style={fill, inner sep = .15em, rounded corners = .15em}, yscale=3, xscale=.9]
            \draw[->] (2.5,.85) to (13,.85) node[right] {$k$};
            \draw[->] (2.5,.85) to (2.5,2);
            \draw[dotted] (13,1) to (2.5,1) node[left] {$n$};
            \draw[dotted] (13,{5/4}) to (2.5,{5/4}) node[left] {$n^{\frac54}$};
            \draw[dotted] (13,{3/2}) to (2.5,{3/2}) node[left] {$n^{\frac32}$};
            \draw[dotted] (13,{7/4}) to (2.5,{7/4}) node[left] {$n^{\frac74}$};
            \draw[dotted] (13,2) to (2.5,2) node[left] {$n^2$};

            \foreach \k in {3,...,12} {
                \draw[dotted] (\k,2) to (\k,.85) node[below] {$\k$};
                \node[vertex] at (\k,1) {};
            }
            
            \node[vertex,blue] at (3,{5/4}) {};
            \node[vertex,blue] at (4,{5/4}) {};
            \draw[dashed,blue] (3,{5/4}) to (4,{5/4});
            \draw[dashed,blue] (4,{5/4}) to (5,{55/36});
            \draw[dashed,red] (4,{5/4}) to (5,{17/12});
            \draw[dashed,teal] (3,{7/6}) to (4,1);
            \draw[dashed,teal] (4,1) to (5,{7/6});

            \node[vertex,teal] at (3,{7/6}) {};
            \draw[dashed] (3,1) to (13,1);

            \foreach \k in {6,8,10,12} {
                \node[vertex,blue] at (\k,{3/2-(\k-2)/(\k*(\k+2))}) {};
                \draw[dashed,blue] (\k,{3/2-(\k-2)/(\k*(\k+2))}) to ({\k+1},{2-3/(\k+2)+1/(\k+2)^2});
                \draw[dashed,red] (\k,{3/2-(\k-2)/(\k*(\k+2))}) to ({\k+1},{3/2-(\k-2)/(\k*(\k+2))});
                \draw[dashed,teal] (\k,1) to ({\k+1},{7/6});
            }
            \foreach \k in {5,7,9,11} {
                \node[vertex,blue] at (\k,{2-3/(\k+1)+1/(\k+1)^2}) {};
                \draw[dashed,blue] (\k,{2-3/(\k+1)+1/(\k+1)^2}) to ({\k+1},{3/2-(\k-1)/((\k+1)*(\k+3))});
                
                \node[vertex, red] at (\k,{3/2-(\k-3)/((\k-1)*(\k+1))}) {};
                \draw[dashed,red] (\k,{3/2-(\k-3)/((\k-1)*(\k+1))}) to ({\k+1},{3/2-(\k-1)/((\k+1)*(\k+3))});
                
                \node[vertex,teal] at (\k,{7/6}) {};
                \draw[dashed,teal] (\k,{7/6}) to ({\k+1},1);
            }

            \node[right,blue] at (13,{2-3/(12+2)+1/(12+2)^2}) {$\mathsf{Q}(\mathtt{Cycle}^{=k}) \in O(\cdot)$};
            \node[right,red] at (13,{3/2-(12-2)/(12*(12+2))}) {$\mathsf{Q}(\mathtt{Cycle}^{\leq k}) \in O(\cdot)$};
            \node[right,black] at (13,1) {$\mathsf{Q}(\mathtt{Cycle}^{(\leq k|=k)}) \in \Omega(\cdot)$};
            \node[right,teal,align=left] at (13,{7/6+.05}) {$\mathsf{Q}(\mathtt{GC}_n) \in \Omega(n^{2/3})$ \\
            $\Rightarrow \mathsf{Q}(\mathtt{Cycle}^{=k}) \in \Omega(\cdot)$};
        \end{tikzpicture}
        \caption{The best-known upper bounds on the query complexity of the $\mathtt{Cycle}^{=k}$-problem (shown in blue) and the $\mathtt{Cycle}^{\leq k}$-problem (shown in red). The black vertices represent the best-known unconditional lower bounds on both problems, and the green dots represent the best-known conditional lower bounds on the $\mathtt{Cycle}^{=k}$-problem, conditional on $\mathsf{Q}(\mathtt{GC}_n) \in \Omega(n^{2/3})$.}
        \label{fig:ub-cycle}
    \end{figure}
    
    We also observe that the expressions we obtain for general values of $k$ do not recover the best-known upper bounds for the cases where $k = 3$ and $k = 4$. This suggests that for higher values of $k$, the upper bounds we know for these cycle-containment problems are likely not tight. Therefore, we think it's a very nice future direction of research to improve the best-known algorithms for these problems.
    
    \section{Conditional lower bounds based on graph collision}
    \label{sec:lower-bounds}

    In this final section, we prove a novel super-linear conditional lower bound on the query complexities of several problems in \Cref{fig:islands}. Our lower bounds will be based on the graph-collision problem, introduced in \cite[Section~4.2]{magniez2007quantum}.

    \begin{definition}[{Graph collision problem~\cite[Section~4.2]{magniez2007quantum}}]
        Let $G = (V,E)$ be an undirected graph. We consider the boolean function $\mathtt{GC}_G = \mathtt{GraphCollision}_G : \{0,1\}^V \to \{0,1\}$, which evaluates to $1$ on input $x \in \{0,1\}^V$ if and only if there exists an edge $vw \in E$, such that $x_v = x_w = 1$. This is the \textit{graph collision problem} on $G$. For all $n \in \N$,
        \[\mathsf{Q}(\mathtt{GC}_n) := \max_{\substack{G = (V,E) \\ |V| = n}} \mathsf{Q}(\mathtt{GC}_G).\]
    \end{definition}

    We find by a straightforward reduction to search that $\mathsf{Q}(\mathtt{GC}_n) \in \Omega(\sqrt{n})$. Conversely, we have that $\mathsf{Q}(\mathtt{GC}_n) \in O(n^{2/3})$, see for instance~\cite[Theorem~3]{magniez2007quantum}.
    
    Balodis and Iraids~\cite{balodis2016quantum} showed that the quantum query complexity of triangle finding is related to the graph collision problem, which gives us a conditional lower bound on several of the equivalence classes displayed in \Cref{fig:islands}. We modify Balodis's and Iraids's approach to give the same lower bound to the cycle-finding problem through a vertex $s$.

    \begin{proposition}
        \label{prop:cycle-s-lb}
        $\mathsf{Q}(\mathtt{Cycle}_s^{=5}) \in \Omega(\sqrt{n} \cdot \mathsf{Q}(\mathtt{GC}_n))$.
        
        % Let $k \geq 5$ be an odd integer. Then, $\mathsf{Q}(\mathtt{Cycle}_s^{\leq k}) \in \Omega(\sqrt{n} \cdot \mathsf{Q}(\mathtt{GC}_n))$. \Arjan{Just proving it for $k = 5$ should be enough with the new reductions.}
    \end{proposition}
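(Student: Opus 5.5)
We embed an instance of $\mathtt{GC}_G \circ \mathtt{OR}_n$ into a single instance of $\mathtt{Cycle}^{=5}_s$ on $O(n)$ vertices, so that every adjacency-matrix query is answered with at most one query to the composed input. Composition is multiplicative for quantum query complexity~\cite{reichardt2011reflections}, so $\mathsf{Q}(\mathtt{GC}_G \circ \mathtt{OR}_n) \in \Theta(\mathsf{Q}(\mathtt{GC}_G)\cdot\sqrt{n})$. Taking $G$ to be a maximizing graph on $n$ vertices then gives the claim. The composed input is a matrix $x \in \{0,1\}^{n \times n}$, where the row $x_{v,\cdot}$ determines the bit $y_v = \mathtt{OR}_n(x_{v,1},\dots,x_{v,n})$ of vertex $v$ of $G$. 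The output is $1$ iff there is an edge $vw \in E(G)$ with $y_v = y_w = 1$.

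The construction is as follows.
\begin{itemize}
\item Take the distinguished vertex $s$, a layer $A = \{a_1,\dots,a_n\}$ and a layer $B = \{b_v : v \in V(G)\}$.
\item Connect $s$ to every $a_j$ unconditionally.
\item Put the edge $\{a_j, b_v\}$ in the input graph iff $x_{v,j} = 1$.
\item Put $\{b_v,b_w\}$ in iff $vw \in E(G)$; these edges are known and cost no queries.
\end{itemize}
A $5$-cycle through $s$ then has the shape $s, a_j, b_v, b_w, a_{j'}, s$. Such a cycle requires $j \neq j'$, $x_{v,j} = x_{w,j'} = 1$ and $vw \in E(G)$, which certifies a graph collision on $y$. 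Conversely, if $y_v = y_w = 1$ for an edge $vw$, pick witnesses $j, j'$ for the two rows.

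The main obstacle is the degenerate case $j = j'$. A single column index witnessing both rows gives only the closed walk $s,a_j,b_v,b_w,a_j$, which is not a cycle. I would fix this by duplicating $A$ into two copies $A^{(1)}, A^{(2)}$, both joined to $s$. Vertex $b_v$ is connected to $a^{(1)}_j$ and to $a^{(2)}_j$ whenever $x_{v,j}=1$. Each query to such an edge still costs one query to $x$, and a cycle $s, a^{(1)}_j, b_v, b_w, a^{(2)}_j, s$ now exists whenever $j = j'$.

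Soundness must also be checked: there should be no other $5$-cycles through $s$.
\begin{itemize}
\item $s$ is adjacent only to $A$-vertices.
\item There are no $A$–$A$ edges.
\item $A$-vertices connect only to $s$ and $B$.
\end{itemize}
A $5$-cycle $s,u_1,u_2,u_3,u_4,s$ therefore has $u_1,u_4 \in A$ and $u_2,u_3 \in B$, so it is exactly of the intended shape, with $u_2 u_3$ an edge of $G$. (A path $u_1,u_2,u_3,u_4$ alternating $A,B,A$ would need length consistent with $u_2\in B,u_3\in A,u_4$ adjacent to $s$ and to $u_3$, impossible since $A$ is independent.) Hence $\mathtt{Cycle}_s^{=5}$ on the $O(n)$-vertex graph equals $\mathtt{GC}_G\circ\mathtt{OR}_n(x)$. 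The lower bound transfers by simulating each query with one query to $x$ plus known data, and padding to $n$ vertices rescales $n$ only by a constant factor.
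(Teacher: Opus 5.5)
Your reduction is correct and is essentially the paper's. Both build an $O(n)$-vertex instance of $\mathtt{Cycle}^{=5}_s$ from $\mathtt{GC}_G\circ\mathtt{OR}_n$ in which every $5$-cycle through $s$ runs through an index vertex, a $G$-vertex, another $G$-vertex, another index vertex and back to $s$, and then use multiplicativity of quantum query complexity under composition. The only difference is cosmetic: the paper uses two copies of $V(G)$ joined by the bipartite double cover of $G$, which gives a cyclically $5$-layered graph whose soundness follows from the odd-layer argument, whereas you keep one copy of $V(G)$ with $G$'s own edges, duplicate only the index layer, and check soundness directly from $N(s)=A$ being independent with all other neighbours of $A$ in $B$.
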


    \begin{proof}
        %First, we suppose that $k=5$. 
        % Recall that the inputs to $\mathtt{PromOR}_n$ are promised to have hamming weight at most $1$, and \textsc{yes} instances comprise of strings with hamming weight $1$ and \textsc{no} instances comprise of strings with hamming weight $0$.
        
        Let $G = (V,E)$ be any graph on $n$ vertices, where we write $V = \{v_1, \dots, v_n\}$. We will reduce $\mathtt{GC}_G \circ \mathtt{OR}_n$ to $\mathtt{Cycle}_s^{\leq k}$. That is, given query access to individual bits of $n$ strings $x^{(i)} \in \{0,1\}^n$ with $x^{(i)}$ associated with $v_i$, which forms an instance of $\mathtt{GC}_G \circ \mathtt{OR}_n$, we construct a graph instance $G' =(V',E')$ of $\mathtt{Cycle}_s^{=5}$ as follows. Let $V' = \{v_j^{(\ell)} : j \in [n], \ell \in [4]\} \cup \{s\}$. We define
        \[\overline{E} = \{\{s,v_j^{(1)}\},\{s,v_j^{(4)}\} : j \in [n]\} \cup \{\{v_j^{(2)}, v_k^{(3)}\}, \{v_k^{(2)}, v_j^{(3)}\} : \{v_j,v_k\} \in E\},\]
        and
        \[E_x = \{\{v_j^{(1)},v_k^{(2)}\},\{v_k^{(3)}, v_j^{(4)}\} : x_j^{(k)} = 1\}.\]
        
        We now argue that $G' = (V',\overline{E} \cup E_x)$ is a positive input to $\mathtt{Cycle}^{=5}$ if and only if $x$ is a positive input to $\mathtt{GC}_G \circ \mathtt{OR}_n$. To that end, suppose that $x$ is a positive input for $\mathtt{GC}_G \circ \mathtt{OR}_n$. Then, there exist $j,k \in [n]$ such that $\{v_j,v_k\} \in E$, and that there are $\ell_j, \ell_k \in [n]$ such that $x_{\ell_j}^{(j)} = x_{\ell_k}^{(k)} = 1$, and so we have a cycle $s$ --- $v_{\ell_j}^{(1)}$ --- $v_j^{(2)}$ --- $v_k^{(3)}$ --- $v_{\ell_k}^{(4)}$ --- $s$ of length $5$ in $G'$.

        For the reverse direction, suppose that $G'$ has a cycle of length $5$ passing through $s$. As $G'$ is a color-coded graph with an odd number of layers, any cycle of length $5$ must traverse through all the layers, which means its path can be written as $s$ --- $v_{\ell_j}^{(1)}$ --- $v_j^{(2)}$ --- $v_k^{(3)}$ --- $v_{\ell_k}^{(4)}$ --- $s$, for suitable choices of $\ell_j, j, k, \ell_k \in [n]$. This then implies that $\{v_j,v_k\} \in E$, and $x_{\ell_j}^{(j)} = x_{\ell_k}^{(k)} = 1$, which means that $x$ is a positive instance for $\mathtt{GC}_G \circ \mathtt{PromOR}_n$.

        Thus, we find that $\mathsf{Q}(\mathtt{Cycle}_s^{=5}) \geq \mathsf{Q}(\mathtt{GC}_G \circ \mathtt{OR}_n) \in \Theta(\sqrt{n} \cdot \mathsf{Q}(\mathtt{GC}_G))$, for all graphs $G$ with $n$ vertices, using that the quantum query complexity multiplies under composition~\cite{reichardt2011reflections}. Finally, we can take the maximum over all such $G$ on the right-hand side of the equation to obtain the claimed result.
    \end{proof}

    \section*{Acknowledgments}
    
    We would like to thank Andrew Childs and Matthew Coudron for many useful discussions. A.C.\ acknowledges support by a Simons-CIQC postdoctoral fellowship through NSF QLCI Grant No.\ 2016245. A.S.G.\ acknowledges support from the U.S. Department of Energy, Office of Science, Accelerated Research in Quantum Computing, Fundamental Algorithmic Research toward Quantum Utility (FAR-Qu) and the National Institute of Standards and Technology (NIST). S.P.\ acknowledges the support from the Dutch Ministry of Education, Culture, and Science through Gravitation project ``Challenges in Cyber Security - 024.006.037'' for this work.
    
    \bibliographystyle{alphaurl}

    \newpage
    
    \bibliography{references}

@article{childs2012quantum,
  title={Quantum query complexity of minor-closed graph properties},
  author={Childs, Andrew M and Kothari, Robin},
  journal={SIAM Journal on Computing},
  volume={41},
  number={6},
  pages={1426--1450},
  year={2012},
  publisher={SIAM}
}

@article{lee2012learning,
  title={Learning graph based quantum query algorithms for finding constant-size subgraphs},
  author={Lee, Troy and Magniez, Fr{\'e}d{\'e}ric and Santha, Miklos},
  journal={Chicago Journal of Theoretical Computer Science},
  volume={10},
  pages={1--21},
  year={2012}
}

@inproceedings{le2014improved,
  title={Improved quantum algorithm for triangle finding via combinatorial arguments},
  author={Le Gall, Fran{\c{c}}ois},
  booktitle={2014 IEEE 55th Annual Symposium on Foundations of Computer Science},
  pages={216--225},
  year={2014},
  organization={IEEE}
}

@article{carette2020extended,
  title={Extended learning graphs for triangle finding},
  author={Carette, Titouan and Lauri{\`e}re, Mathieu and Magniez, Fr{\'e}d{\'e}ric},
  journal={Algorithmica},
  volume={82},
  number={4},
  pages={980--1005},
  year={2020},
  publisher={Springer}
}

@inproceedings{jeffery2013nested,
  title={Nested quantum walks with quantum data structures},
  author={Jeffery, Stacey and Kothari, Robin and Magniez, Fr{\'e}d{\'e}ric},
  booktitle={Proceedings of the twenty-fourth annual ACM-SIAM symposium on Discrete algorithms},
  pages={1474--1485},
  year={2013},
  organization={SIAM}
}

@article{jeffery2022quantum,
  title={Quantum subroutine composition},
  author={Jeffery, Stacey},
  journal={arXiv preprint arXiv:2209.14146},
  year={2022}
}

@article{cornelissen2025quantum,
  title={Quantum walks through generalized graph composition},
  author={Cornelissen, Arjan},
  journal={arXiv preprint arXiv:2510.04973},
  year={2025}
}

@article{bondy1974cycles,
title = {Cycles of even length in graphs},
journal = {Journal of Combinatorial Theory, Series B},
volume = {16},
number = {2},
pages = {97-105},
year = {1974},
author = {Bondy, J. Adrian and Simonovits, Mikl\'{o}s},
}

@article{balodis2016quantum,
  title={Quantum Lower Bound for Graph Collision Implies Lower Bound for Triangle Detection.},
  author={Balodis, Kaspars and Iraids, J{\=a}nis},
  journal={Baltic Journal of Modern Computing},
  volume={4},
  number={4},
  year={2016}
}

@article{magniez2007quantum,
  title={Quantum algorithms for the triangle problem},
  author={Magniez, Fr{\'e}d{\'e}ric and Santha, Miklos and Szegedy, Mario},
  journal={SIAM Journal on Computing},
  volume={37},
  number={2},
  pages={413--424},
  year={2007},
  publisher={SIAM}
}

@article{brassard00,
author = {Brassard, Gilles and Hoyer, Peter and Mosca, Michele and Tapp, Alain},
year = {2000},
title = {Quantum Amplitude Amplification and Estimation},
volume = {305},
journal = {AMS Contemporary Mathematics Series}
}

@article{zhu2011quantum,
  title={Quantum query complexity of subgraph containment with constant-sized certificates},
  author={Zhu, Yechao},
  journal={arXiv preprint arXiv:1109.4165},
  year={2011}
}

@inproceedings{belovs2012span,
  title={Span programs and quantum algorithms for st-connectivity and claw detection},
  author={Belovs, Aleksandrs and Reichardt, Ben W},
  booktitle={European Symposium on Algorithms},
  pages={193--204},
  year={2012},
  organization={Springer}
}

@article{ambainis2007quantum,
  title={Quantum walk algorithm for element distinctness},
  author={Ambainis, Andris},
  journal={SIAM Journal on Computing},
  volume={37},
  number={1},
  pages={210--239},
  year={2007},
  publisher={SIAM}
}

@article{zhang2005power,
  title={On the power of Ambainis lower bounds},
  author={Zhang, Shengyu},
  journal={Theoretical Computer Science},
  volume={339},
  number={2-3},
  pages={241--256},
  year={2005},
  publisher={Elsevier}
}

@article{vspalek2006all,
  title={All Quantum Adversary Methods are Equivalent},
  author={{\v{S}}palek, Robert and Szegedy, Mario},
  journal={Theory of Computing},
  volume={2},
  number={1},
  pages={1--18},
  year={2006},
  publisher={Theory of Computing Exchange}
}

@article{cade2018time,
  title={Time and space efficient quantum algorithms for detecting cycles and testing bipartiteness},
  author={Cade, Chris and Montanaro, Ashley and Belovs, Aleksandrs},
  journal={Quantum Information and Computation},
  volume={18},
  number={1-2},
  pages={18--50},
  year={2018},
  publisher={Sciendo}
}

@inproceedings{magniez2007search,
  title={Search via quantum walk},
  author={Magniez, Fr{\'e}d{\'e}ric and Nayak, Ashwin and Roland, J{\'e}r{\'e}mie and Santha, Miklos},
  booktitle={Proceedings of the thirty-ninth annual ACM symposium on Theory of computing},
  pages={575--584},
  year={2007}
}

@article{bentley1980general,
  title={A general method for solving divide-and-conquer recurrences},
  author={Bentley, Jon Louis and Haken, Dorothea and Saxe, James B},
  journal={ACM SIGACT News},
  volume={12},
  number={3},
  pages={36--44},
  year={1980},
  publisher={ACM New York, NY, USA}
}

@article{durr2006graph,
author = {D\"{u}rr, Christoph and Heiligman, Mark and Hoyer, Peter and Mhalla, Mehdi},
title = {Quantum Query Complexity of Some Graph Problems},
journal = {SIAM Journal on Computing},
volume = {35},
number = {6},
pages = {1310-1328},
year = {2006}
}

@inproceedings{buhrman01elementdistinctness,
author = {Buhrman, Harry and de Wolf, Ronald and D\"{u}rr, Christoph and Heiligman, Mark and H"yer, Peter and Magniez, Fr\'{e}d\'{e}ric and Santha, Miklos},
title = {Quantum Algorithms for Element Distinctness},
booktitle = {Proceedings of the 16th Annual Conference on Computational Complexity},
year = {2001}
}

@inproceedings{hoyer07adversary,
author = {Hoyer, Peter and Lee, Troy and Spalek, Robert},
title = {Negative weights make adversaries stronger},
year = {2007},
booktitle = {Proceedings of the Thirty-Ninth Annual ACM Symposium on Theory of Computing},
pages = {526–535}
}

@inproceedings{beals98polynomial,
author = {Beals, Robert and Buhrman, Harry and Cleve, Richard and Mosca, Michele and de Wolf, Ronald},
title = {Quantum Lower Bounds by Polynomials},
year = {1998},
booktitle = {Proceedings of the 39th Annual Symposium on Foundations of Computer Science}
}

@inproceedings{childs03weldedtree,
author = {Childs, Andrew M. and Cleve, Richard and Deotto, Enrico and Farhi, Edward and Gutmann, Sam and Spielman, Daniel A.},
title = {Exponential algorithmic speedup by a quantum walk},
year = {2003},
booktitle = {Proceedings of the Thirty-Fifth Annual ACM Symposium on Theory of Computing}
}

@InProceedings{childs23weldedtree,
  author =	{Childs, Andrew M. and Coudron, Matthew and Gilani, Amin Shiraz},
  title =	{{Quantum Algorithms and the Power of Forgetting}},
  booktitle =	{14th Innovations in Theoretical Computer Science Conference (ITCS 2023)},
  year =	{2023}
}

@book{nielsen2010quantum,
  title={Quantum computation and quantum information},
  author={Nielsen, Michael A and Chuang, Isaac L},
  year={2010},
  publisher={Cambridge university press}
}

@inproceedings{reichardt2011reflections,
  title={Reflections for quantum query algorithms},
  author={Reichardt, Ben W},
  booktitle={Proceedings of the Twenty-Second Annual ACM-SIAM Symposium on Discrete Algorithms},
  pages={560--569},
  year={2011},
  organization={SIAM}
}

@article{alon1995color,
  title={Color-coding},
  author={Alon, Noga and Yuster, Raphael and Zwick, Uri},
  journal={Journal of the ACM (JACM)},
  volume={42},
  number={4},
  pages={844--856},
  year={1995},
  publisher={ACM New York, NY, USA}
}

@article{belovs2014power,
  title={On the power of non-adaptive learning graphs},
  author={Belovs, Aleksandrs and Rosmanis, Ansis},
  journal={computational complexity},
  volume={23},
  number={2},
  pages={323--354},
  year={2014},
  publisher={Springer}
}

@article{beigi2020quantum,
  title={Quantum speedup based on classical decision trees},
  author={Beigi, Salman and Taghavi, Leila},
  journal={Quantum},
  volume={4},
  pages={241},
  year={2020},
  publisher={Verein zur F{\"o}rderung des Open Access Publizierens in den Quantenwissenschaften}
}

@inproceedings{lin2015upper,
  title={Upper bounds on quantum query complexity inspired by the elitzur-vaidman bomb tester},
  author={Lin, Cedric Yen-Yu and Lin, Han-Hsuan},
  booktitle={Proceedings of the 30th Conference on Computational Complexity},
  pages={537--566},
  year={2015}
}

@book{brouwer2011spectra,
  title={Spectra of graphs},
  author={Brouwer, Andries E and Haemers, Willem H},
  year={2011},
  publisher={Springer Science \& Business Media}
}
\end{document}